\documentclass[a4paper,11pt, fleqn]{article}
\usepackage[a4paper, total={6.5in, 10in}]{geometry}
\usepackage{algorithmicx}
\usepackage{amsmath}
\usepackage{amssymb}
\usepackage{upgreek}
\usepackage{multirow}
\usepackage{dsfont}
\usepackage{amsthm}
\usepackage{amstext}
\usepackage{hyperref}
\usepackage{cleveref}
\usepackage{enumerate}
\usepackage{array}
\usepackage{caption}
\usepackage{graphicx}
\usepackage[round]{natbib}
\usepackage[title]{appendix}
\usepackage{authblk}
\usepackage[ruled,vlined]{algorithm2e}

\usepackage{enumitem}

\DeclareMathOperator*{\argmin}{arg\,min}

\newcommand{\abs}[1]{\left\lvert#1\right\rvert}
\newcommand{\norm}[1]{\left\lVert#1\right\rVert}
\newcommand{\E}{\mathbb{E}}
\renewcommand{\P}{\mathbb{P}}
\newcommand{\smaxN}{\bar{s}_r}
\newcommand{\boundCT}{\frac{h}{\delta_T}}
\newcommand{\setEP}[1]{\mathcal{E}_{#1}} 
\newcommand{\setCC}{\mathcal{CC}_T} 
\newcommand{\setLL}{\mathcal{L}_T} 

\newcommand{\setILcons}{\mathcal{P}_{T,las}} 
\newcommand{\setNWcons}{\mathcal{P}_{T,nw}} 
\newcommand{\setEPvuj}[1]{\setEP{#1, uv}^{(j)}}

\newcommand{\setEPl}[1]{\setEP{T}^l}
\newcommand{\setUVcons}{\mathcal{P}_{T,uv}}
\newcommand{\setTail}[2]{\mathcal{E}_{T, {#1}} (#2)}

\newcommand{\bzero}{\boldsymbol{0}}
\newcommand{\by}{\boldsymbol{y}}
\newcommand{\bx}{\boldsymbol{x}}
\newcommand{\bq}{\boldsymbol{q}}
\newcommand{\bu}{\boldsymbol{u}}
\newcommand{\bv}{\boldsymbol{v}}

\newcommand{\bb}{\boldsymbol{b}}
\newcommand{\bB}{\boldsymbol{B}}

\newcommand{\bz}{\boldsymbol{z}}
\newcommand{\bs}{\boldsymbol{s}}
\newcommand{\br}{\boldsymbol{r}}
\newcommand{\bX}{\boldsymbol{X}}

\newcommand{\bR}{\boldsymbol{R}}
\newcommand{\bw}{\boldsymbol{w}}

\newcommand{\bg}{\boldsymbol{g}}

\newcommand{\bI}{\boldsymbol{I}}
\newcommand{\bm}{\boldsymbol{m}}

\newcommand{\bbeta}{\boldsymbol{\beta}}

\newcommand{\bXi}{\boldsymbol{\Xi}}

\newcommand{\bgamma}{\boldsymbol{\gamma}}
\newcommand{\bPhi}{\boldsymbol{\Phi}}
\newcommand{\bphi}{\boldsymbol{\phi}}

\newcommand{\bSigma}{\boldsymbol{\Sigma}}
\newcommand{\bOmega}{\boldsymbol{\Omega}}
\newcommand{\bTheta}{\boldsymbol{\Theta}}
\newcommand{\bUpsilon}{\boldsymbol{\Upsilon}}
\newcommand{\bGamma}{\boldsymbol{\Gamma}}

\newcommand{\diag}{\text{diag}}
\newcommand{\ba}{\boldsymbol{a}}

\newcommand{\bA}{\boldsymbol{A}}

\newcommand{\bbf}{\boldsymbol{f}}

\newcommand{\bnu}{\boldsymbol{\nu}}

\newcommand{\bLambda}{\boldsymbol{\Lambda}}

\newcommand{\Ntwo}{\left\lceil{N/2}\right\rceil}
\newcommand{\mcH}{\mathcal{H}}

\title{Lasso Inference for High-Dimensional Time Series}
\author[a]{Robert Adamek\footnote{Present Address: Department of Economics and Business Economics, Aarhus University, Fuglesangs All\'e 4, 8210 Aarhus V, Denmark.}}
\author[a]{Stephan Smeekes\thanks{Correspondence to: S.~Smeekes at Maastricht University, Department of Quantitative Economics, P.O. Box 616, 6200 MD Maastricht, The Netherlands. E-mail addresses: 
\href{mailto:r.adamek@econ.au.dk}{r.adamek@econ.au.dk} (R.~Adamek), \href{mailto:s.smeekes@maastrichtuniversity.nl}{s.smeekes@maastrichtuniversity.nl} (S. Smeekes), \href{mailto:i.wilms@maastrichtuniversity.nl}{i.wilms@maastrichtuniversity.nl} (I.~Wilms)}}
\author[a]{Ines Wilms}
\affil[a]{Department of Quantitative Economics,
Maastricht University, The Netherlands}
\date{\today}

\newtheorem{definition}{Definition}
\newtheorem{theorem}{Theorem}
\newtheorem{corollary}{Corollary}
\newtheorem{lemma}{Lemma}
\crefname{lemma}{Lemma}{Lemmas}
\crefname{corollary}{Corollary}{Corollaries}
\crefname{theorem}{Theorem}{Theorems}
\crefname{definition}{Definition}{Definitions}
\crefname{example}{Example}{Examples}
\crefname{remark}{Remark}{Remarks}
\crefname{assumption}{Assumption}{Assumptions}

\theoremstyle{definition}
\newtheorem{assumption}{Assumption}
\newtheorem*{assumption*}{Assumption}
\newtheorem{remark}{Remark}
\newtheoremstyle{example}{3pt}{3pt}{}{\parindent}{}{.}{}{}
\newtheorem{example}{Example}

\usepackage[nodisplayskipstretch]{setspace}
\setlength{\belowdisplayskip}{2.5pt} \setlength{\belowdisplayshortskip}{2.5pt}
\setlength{\abovedisplayskip}{2.5pt} \setlength{\abovedisplayshortskip}{2.5pt}

\begin{document}
\onehalfspacing
\maketitle

\begin{abstract}
In this paper we develop valid inference for high-dimensional time series. We extend the desparsified lasso to a time series setting under Near-Epoch Dependence (NED) assumptions allowing for non-Gaussian, serially correlated and heteroskedastic processes, where the number of regressors can possibly grow faster than the time dimension. We first derive an error bound 
under weak sparsity, which, coupled with the NED assumption, means this inequality can also be applied to the (inherently misspecified) nodewise regressions performed in the desparsified lasso. This allows us to establish the uniform asymptotic normality of the desparsified lasso under general conditions, including for inference on parameters of increasing dimensions. 
Additionally, we show consistency of a long-run variance estimator, thus providing a complete set of tools for performing inference in high-dimensional linear time series models. Finally, we perform a simulation exercise to demonstrate the small sample properties of the desparsified lasso in common time series settings.
\bigskip
    
\noindent \textbf{Keywords:} honest inference, lasso, time series, high-dimensional data\\
\textbf{JEL codes:} C22, C55
    
\end{abstract}

\section{Introduction}
    
In this paper we propose methods for performing uniformly valid inference on high-dimensional time series regression models.
Specifically, we establish the uniform asymptotic normality of the desparsified lasso method \citep{vandeGeer14} under very general conditions, thereby allowing for inference in high-dimensional time series settings that encompass many econometric applications.
That is, we establish validity for potentially misspecified time series models, where the regressors and errors may exhibit serial dependence, heteroskedasticity and fat tails. In addition, as part of our analysis we derive new error bounds for the lasso \citep{Tibshirani96}, on which the desparsified lasso is based.

Although traditionally approaches to high-dimensionality in econometric time series have been dominated by factor models \citep[cf.]{BaiNg08survey,StockWatson11}, shrinkage methods have rapidly been gaining ground. Unlike factor models where dimensionality is reduced by assuming common structures underlying regressors, shrinkage methods assume a certain structure on the parameter  vector. Typically, sparsity is assumed, where only a small, unknown subset of the variables is thought to have ``significantly non-zero'' coefficients, and all the other variables have negligible -- or even exactly zero -- coefficients. The most prominent among shrinkage methods exploiting sparsity is the lasso proposed by \cite{Tibshirani96}, which adds a penalty on the absolute value of the parameters to the least squares objective function. This penalty ensures that many of the coefficients will be set to zero and thus variable selection is performed, an attractive feature that helps to make the results of a high-dimensional analysis interpretable. Due to this feature, the lasso and its many extensions are now standard tools for high-dimensional analysis \citep[see e.g.,][for reviews]{Hesterberg08,Vidaurre13,Hastie15}.

Much effort has been devoted to establish error bounds for lasso-based methods to guarantee consistency for prediction (e.g., \citealp{greenshtein2004persistence, buehlmann2006boosting}) and estimation of a high-dimensional parameter (e.g., \citealp{bunea2007sparsity, zhang2008sparsity, bickel2009simultaneous, meinshausen2009lasso, Huang08}). While most of these advances have been made in frameworks with independent and identically distributed (IID) data, early extensions of lasso-based methods to the time series case can be found in \cite{Wang07}, \cite{Hsu08}. These authors, however, only consider the case where the number of variables is smaller than the sample size. Various papers (e.g., \citealp{NardiRinaldo11}; \citealp{KockCallot2015} and \citealp{BasuMichailidis15}) let the number of variables increase with the sample size, but often require restrictive assumptions (for instance Gaussianity) on the error process when investigating theoretical properties of lasso-based estimators in time series models. 
    
Exceptions are \cite{MedeirosMendes16}, \cite{WuWu2016}, \cite{masini2019regularized}, and \cite{wong2020lasso}. 
\cite{MedeirosMendes16}
consider the adaptive lasso for sparse, high-dimensional time series models and show that it is model selection consistent and has the oracle property, even when the errors are non-Gaussian and conditionally heteroskedastic. 
\cite{WuWu2016} consider high-dimensional linear models with dependent non-Gaussian errors and/or regressors and provide asymptotic theory for the lasso with deterministic design. To this end,
they adopt the functional dependence framework of \cite{Wu05}.
\cite{masini2019regularized} focus on weakly sparse high-dimensional vector autoregressions for a class of potentially heteroskedastic and serially dependent errors, which encompass many multivariate volatility models. The authors derive finite sample estimation error bounds for the parameter vector and establish consistency properties of lasso estimation.
\cite{wong2020lasso} derive nonasymptotic inequalities for estimation error and prediction error of the lasso
without assuming any specific parametric form of the DGP. 
The authors assume the series to be either $\alpha$-mixing Gaussian processes or $\beta$-mixing processes with sub-Weibull marginal distributions thereby accommodating settings with
heavy-tailed non-Gaussian errors.

While one of the attractive feature of lasso-type methods is their ability to perform variable selection, this also causes serious issues when performing inference on the estimated parameters. In particular, performing inference on a (data-driven) selected model, while ignoring the selection, causes the inference to be invalid. This has been discussed by, among others, \cite{LeebPoetscher05} in the general context of model selection and \cite{LeebPoetscher08} for shrinkage estimators. As a consequence, recent statistical literature has seen a surge in the development of so-called \emph{post-selection inference} methods that circumvent the problem induced by model selection; see for example the literature on selective inference \citep[cf.][]{FST15,LSST16} and simultaneous inference \citep{PoSI13,bachoc2020}.

In the context of lasso-type estimation, methods have been developed based on the idea of orthogonalizing the estimation of the parameter of interest to the estimation (and potential incorrect selection) of the other parameters. \cite{BCH14,CHS15ARE} propose a \emph{post-double-selection} approach that uses a Frisch-Waugh partialling out strategy to achieve this orthogonalization by selecting important covariates in initial selection steps on both the dependent variable and the variable of interest, and show this approach yields uniformly valid and standard normal inference for independent data. In a related approach, \cite{JavanmardMontanari14}; \cite{vandeGeer14} and \cite{ZhangZhang14} 
introduce debiased or desparsified versions of the lasso that achieve uniform validity based on similar principles for IID Gaussian data.
Extensions to the time series case include \cite{chernozhukov2021timeandspace} who provide desparsified simultaneous inference on the parameters in a high-dimensional regression model allowing for temporal and cross-sectional dependency in covariates and error processes, 
\cite{Krampe18} who introduce bootstrap-based inference for autoregressive time series models based on the desparsification idea, \cite{HMS19} who use the post-double-selection procedure of \cite{BCH14} for constructing uniformly valid Granger causality test in high-dimensional VAR models, and \cite{ghysels20} who use a debiased sparse group lasso  for inference on a low dimensional group of parameters.

In this paper, we contribute to the literature on shrinkage methods for high-dimensional time series models by providing novel theoretical results for both point estimation and inference via the desparsified lasso.
We consider a very general time series-framework where the regressors and errors terms are allowed to be 
non-Gaussian, serially correlated and heteroskedastic, and the number of variables can grow faster than the time dimension. Moreover, our assumptions allow for both correctly specified and misspecified models, thus providing results relevant for structural interpretations if the overall model is specified correctly, but not limited to this.

We derive error bounds for the lasso in high-dimensional, linear time series models under mixingale assumptions and a weak sparsity assumption on the parameter vector. Our  setting generalizes the one from \cite{MedeirosMendes16}, who require a martingale difference sequence (m.d.s.) assumption -- and hence correct specification -- on the error process. Moreover, we relax the traditional sparsity assumption to allow for weak sparsity, thereby recognizing that the true parameters  are likely not exactly zero.
The error bounds are used to establish estimation and prediction consistency even when the number of parameters grows faster than the sample size. 
    
We extend the error bounds to the \emph{nodewise regressions} performed in the desparsified lasso, where each regressor (on which inference is performed) is regressed on all other regressors. Note that, contrary to the setting with independence over time, these nodewise regressions are inherently misspecified in dynamic models with temporal dependence. As such our error bounds are specifically derived under potential misspecification.
We then establish the asymptotic normality of the desparsified lasso under general conditions.  As such, we ensure uniformly valid inference over the class of weakly sparse models. 
This result is accompanied by a consistent estimator for the long run variance, thereby providing a complete set of tools for performing inference in high-dimensional, linear time series models. As such, our theoretical results accommodate various financial and macro-economic applications encountered by applied researchers.

The remainder of this paper is structured as follows. 
\Cref{sec:Model} introduces the time series setting and assumptions thereof. 
In \Cref{sec:Lasso}, 
we derive an error bound for the lasso (\Cref{cor:separateResults}) that forms the basis for the nodewise regressions performed for the desparsfied lasso. 
In \Cref{sec:desparsifiedLasso}, we establish the theory that allows for uniform inference with the desparsified lasso.
\Cref{sec:simulations} contains a simulation study examining the small sample performance of the desparsified lasso, and  
\Cref{sec:conclusion} concludes. 
The main proofs and preliminary lemmas needed for \Cref{sec:Lasso}
are contained in Appendix \ref{app:A1}, while Appendix \ref{app:A2} contains the results and proofs on \Cref{sec:desparsifiedLasso}. Appendix C contains supplementary material.

A word on notation. For any $N$ dimensional vector $\boldsymbol{x}$, $\left\Vert \boldsymbol{x}\right\Vert_r=\left(\sum\limits_{i=1}^{N}\left\vert x_i\right\vert^r\right)^{1/r}$ 
denotes the $L_r$-norm, with the familiar convention that $\norm{\boldsymbol{x}}_0 = \sum_{i} 1 (\abs{x_i}>0)$ and $\left\Vert \boldsymbol{x}\right\Vert_{\infty}=\max\limits_{i}\left\vert
x_i\right\vert$. For a matrix $\bA$, we let $\norm{\bA}_r = \max_{\norm{\bx}_r = 1} \norm{\bA \bx}_r$ for any $r \in [0, \infty]$ and $\norm{\bA}_{\max}=\max\limits_{i,j}\left\vert a_{i,j}\right\vert$.
We use $\overset{p}{\to}$ and $\overset{d}{\to}$ to denote convergence in probability and distribution respectively. Depending on the context, $\sim$ denotes equivalence in order of magnitude of sequences, or equivalence in distribution. 
We frequently make use of arbitrary positive finite constants $C$ (or its sub-indexed version $C_i$) whose values may change from line to line throughout the paper, but they are always independent of the time and cross-sectional dimension.
Similarly, generic sequences converging to zero as $T\to\infty$ are denoted by  $\eta_T$  (or its sub-indexed version $\eta_{T,i}$).
We say a sequence $\eta_T$ is of size $-x$ if $\eta_T=O\left(T^{-x-\varepsilon}\right)$ for some $\varepsilon>0$.

\section{The High-Dimensional Linear Model}\label{sec:Model} 
Consider the linear model
\begin{equation}\label{eq:DGP}
y_t=\boldsymbol{x}_t'\boldsymbol{\beta}^0+u_t, \qquad t = 1,\ldots, T,\ 
\end{equation}
where $\boldsymbol{x}_t=\left(x_{1,t},\dots, x_{N,t}\right)'$ is a $N\times 1$ vector of explanatory variables, $\boldsymbol\beta^0$ is a $N\times 1$ parameter vector  and $u_t$ is an error term. Throughout the paper, we examine the high-dimensional time series model where $N$ can be larger than $T$. 
    
We impose the following assumptions on the processes $\{\boldsymbol{x}_t\}$ and $\{u_t\}$.

\begin{assumption}\label{ass:dgp}
Let $\boldsymbol{z}_t = (\boldsymbol{x}_t^\prime, u_t)^\prime$,
and let there exist some constants $\bar m>m>2$, and $d\geq \max\{1,(\bar m/m-1)/(\bar m-2)\}$ such that
\begin{enumerate}[label=(\roman*)]
\item\label{ass:dgpStationary}  Let $\E\left[\bz_t\right]=\bzero$, $\E\left[\boldsymbol{x}_t u_t\right]=\boldsymbol{0}$, and $\max\limits_{1\leq j\leq N+1,\ 1\leq t\leq T}E\abs{z_{j,t}}^{2\bar m} \leq C$.
\item\label{ass:dgpNED}Let $\boldsymbol{s}_{T,t}$ denote a $k(T)$-dimensional triangular array that is $\alpha$-mixing of size $-d/(1/m-1/\bar{m})$ with $\sigma\text{-field}$ $\mathcal{F}^{\boldsymbol{s}}_t:=\sigma\left\lbrace\boldsymbol{s}_{T,t},\boldsymbol{s}_{T,t-1},\dots\right\rbrace$ such that $\boldsymbol{z}_t$ is $\mathcal{F}^{\boldsymbol{s}}_t$-measurable. The process $\left\lbrace z_{j,t}\right\rbrace$ is $L_{2m}$-near-epoch-dependent (NED) of size 
$-d$ on $\boldsymbol{s}_{T,t}$ with positive bounded NED constants, uniformly over $j=1,\ldots,N + 1$.
\end{enumerate}
\end{assumption}
\cref{ass:dgp}\ref{ass:dgpStationary} ensures that the error terms are contemporaneously uncorrelated with each of the regressors, and that the process has finite and constant unconditional moments. 
One can think of $\boldsymbol{s}_{T,t}$ in \cref{ass:dgp}\ref{ass:dgpNED} 
as an underlying shock process driving the regressors and errors in $\boldsymbol{z}_t$, where we assume $\boldsymbol{z}_t$ to depend almost entirely on the ``near epoch'' of $s_{T,t}$.\footnote{Since $\boldsymbol{z}_t$ grows asymptotically in dimension, it is natural to let the dimension of $\boldsymbol{s}_{T,t}$ grow with $T$, though this is not theoretically required. Although, like $\boldsymbol{s}_{T,t}$, technically our stochastic process $\boldsymbol{z}_t$ is a triangular array due to dimension $N$ increasing with $T$, in the remainder of the paper we suppress the dependence on $T$ for notational convenience.} 

Near epoch dependence of $\bz_t$ can be interpreted as $\bz_t$ being ``approximately'' mixing, in the sense that it can be well-approximated by a mixing process. The NED framework in \cref{ass:dgp} therefore allows for very general forms of dependence that are often encountered in econometrics applications including, but not limited to,
strong mixing processes \citep{McLeish75}, linear processes including ARMA models, various types of stochastic volatility and GARCH  models \citep{hansen1991garch},
and nonlinear 
processes \citep{davidson2002establishing}. Moreover, NED holds in cases where mixing has well-known failures for common processes, such as the AR(1) process discussed in \cite{andrews1984non}. These properties have made NED a very popular tool for modelling dependence in econometrics \citep[Sections 14, 17]{Davidson02}.\footnote{To make the paper self-contained, we include  formal definitions on NED and mixingales in Appendix \ref{sec:definitions}.}

To our knowledge, our paper is the first to utilize the NED framework for establishing uniformly valid high-dimensional inference.
\cite{wong2020lasso} consider time series models with $\beta$-mixing errors, which has the advantage of allowing for general forms of dynamic misspecification resulting in serially correlated error terms, but, as discussed above, rules out several relevant data generating processes, and is in addition typically difficult to verify. Alternative approaches that avoid mixing assumptions are found in \cite{ghysels20}, who consider $\tau-$dependence, as well as \cite{WuWu2016} and \cite{chernozhukov2021timeandspace}, who use functional dependence for modeling the dependence allowed in regressors and innovations. Finally, \cite{masini2019regularized}
use an m.d.s.~assumption on the innovations in combination
with sub-Weibull tails and a mixingale assumption on the conditional covariance matrix. The m.d.s.~assumption of \cite{MedeirosMendes16} and \cite{masini2019regularized} however does not allow for dynamic misspecification of the full model.
Importantly, the NED assumption on $u_t$ does allow for misspecified models as well, in which case we view $\bbeta_0$ as the coefficients of the pseudo-true model when restricting the class of models to those linear in $\bx_t$. In particular, it allows one to view \eqref{eq:DGP} as simply the linear projection of $y_t$ on \emph{all} the variables in $\boldsymbol{x}_t$, with $\boldsymbol{\beta}^0$ in that case representing the corresponding best linear projection coefficients. In such a case $\E\left[u_t\right]=0$ and $\E\left[u_t  x_{j,t}\right]=0$ hold by construction, and the additional conditions of \cref{ass:dgp} can be shown to hold under weak further assumptions. On the other hand, $u_t$ is not likely to be an m.d.s.~in that case. As will be explained later, allowing for misspecified dynamics is crucial for developing the theory for the nodewise regressions underlying the desparsified lasso. 

It is important to note that we do not consider $\bbeta^0$ as the projection coefficients of the (lasso) selected model, but only of the full, pseudo-true, model. 
Our approach simply allows for the possibility of the full model being misspecified, for instance if the econometrician has missed relevant confounders in the initial dataset. This does not imply a ``failure'' of our lasso inference method, but rather a failure of the econometrician in setting up the initial model.\footnote{Of course, the misspecification may be intentional, as even in dynamically misspecified models, the parameter of interest can still have a structural meaning. One example is the local projections of \cite{jorda2005estimation}, where $h$-step ahead predictive regressions with generally serially correlated error terms are performed.} 
Allowing for such misspecification 
is crucial for the nodewise regressions we consider in Section \ref{sec:desparsifiedLasso} 
which are simply projections of one explanatory variable on all the others, and therefore inherently misspecified.

We further elaborate on misspecification in \Cref{ex:misspecified}, after we present two examples of correctly specified common econometric time series DGPs. 

\begin{remark}
The NED-order $m$ and sequence size $-d$ play a key role in later theorems where they enter the asymptotic rates. 
In \cref{ass:dgp}\ref{ass:dgpStationary},  we require $\bz_t$ to have $\bar m$ moments, with $\bar m$ being slightly larger than $m$. The more moments, the tighter the error bounds and the weaker conditions on the tuning parameter are,
but a high $\bar m$ implies stronger restrictions on the model (see e.g.,\ the GARCH parameters in the to be discussed Example \ref{ex:exampleARDLplusGARCH}).
Additionally, there is a tradeoff between the thickness of the tails allowed for and the amount of dependence -- measured through the mixing rate in \cref{ass:dgp}\ref{ass:dgpNED}.
Under strong dependence, fewer moments are needed;
the reduction from $\bar m$ to $m$ then reflects the price one needs to pay for allowing more dependence through a smaller mixing rate.
\end{remark}

\begin{example}[ARDL model with GARCH errors]\label{ex:exampleARDLplusGARCH}
Consider the autoregressive distributed lag (ARDL) model with GARCH errors 
\begin{equation*}\label{exampleDGP2}\begin{split}
    & y_t=\sum\limits_{i=1}^p \rho_i y_{t-i}+\sum\limits_{i=0}^q\boldsymbol{\theta}_i' \boldsymbol{w}_{t-i}+u_t=\boldsymbol{x}_t'\boldsymbol{\beta}^0+u_t, \\ & u_t=\sqrt{h_t}\varepsilon_t, \qquad \varepsilon_t \sim IID(0,1),\\
    & h_t=\pi_0+\pi_1 h_{t-1}+\pi_2u^2_{t-1},
\end{split}\end{equation*}
where the roots of the lag polynomial $\rho(z) = 1-\sum\limits_{i=1}^{p}\rho_i z^{i}$ are outside the unit circle. Take $\varepsilon_t$, $\pi_1$ and $\pi_2$ such that $\E\left[\ln(\pi_1 \varepsilon_t^2 + \pi_2)\right] <0$, then $u_t$ is a strictly stationary geometrically $\beta$-mixing process \citep[][Theorem 3.4]{FrancqZakoian10}, and additionally such that $\E\left[\abs{u_t}^{2\bar m}\right] < \infty$ for some $\bar m\in \mathds{N}$ (the number of moments depends on $\pi_1$, $\pi_2$ and the moments of $\epsilon_t$, \citealp[cf.][Example 2.3]{FrancqZakoian10}). Also assume that the vector of  exogenous variables $\boldsymbol{w}_t$ is stationary and geometrically $\beta$-mixing as well with finite $2\bar m$ moments. Given the invertibility of the lag polynomial, we may then write $y_t = \rho^{-1} (L) v_t$, where $v_t = \sum_{i=0}^q \boldsymbol{\theta}_i^\prime \boldsymbol{w}_{t-i} + u_t$ and the inverse lag polynomial $\rho^{-1}(z)$ has geometrically decaying coefficients. Then it follows directly that $y_t$ is NED on $v_t$, where $v_t$ is strong mixing of size $-\infty$ as its components are geometrically $\beta$-mixing, and the sum inherits the mixing properties. Furthermore, if $\norm{\theta_i}_1 \leq C$ for all $i=0, \ldots, q$, it follows directly from Minkowski that $E \abs{v_t}^{2\bar m} \leq C$ and consequently $E\abs{y_t}^{2\bar m} \leq C$. Then $y_t$ is NED of size $-\infty$ on $(\boldsymbol{w}_t, u_t)$, and consequently $\boldsymbol{z}_t = (y_{t-1}, \boldsymbol{w}_t, u_t)$ as well.
\end{example}

\begin{example}[Equation-by-equation VAR]\label{ex:eqbyeqVAR}
    Consider the vector autoregressive model
    \begin{equation*}
        \boldsymbol{y}_t=\sum\limits_{i=1}^{p}\boldsymbol{\Phi}_i\boldsymbol{y}_{t-i}+\boldsymbol{u}_t,
    \end{equation*}
    where $\boldsymbol{y}_t$ is a $K\times1$ vector of dependent variables, $\E\abs{u_t}^{2\bar m}\leq C$ , and the $K\times K$ matrices $\boldsymbol{\Phi}_i$ satisfy appropriate stationarity and $2\bar m$-th order summability conditions. 
    The equivalent equation-by-equation representation is
    \begin{equation*}
        y_{k,t}=\sum\limits_{i=1}^p\left[\Phi_{k,1,i},\dots,\Phi_{k,K,i}\right]\boldsymbol{y}_{t-i}+u_{k,t}=\left[\boldsymbol{y}'_{t-1},\dots,\boldsymbol{y}'_{t-p}\right]\boldsymbol{\beta}_k+u_{k,t},\qquad k\in(1,\dots,K).
    \end{equation*}
    Assuming a well-specified model with $\E\left[\boldsymbol{u_t}\vert\boldsymbol{y}_{t-1},\dots,\boldsymbol{y}_{t-p}\right]=\boldsymbol{0}$, the conditions of \cref{ass:dgp} are then satisfied trivially.
\end{example}

\cref{ex:exampleARDLplusGARCH,ex:eqbyeqVAR} demonstrate that \cref{ass:dgp} is sufficiently general to include common time series models in econometrics. While these examples are equally well covered by other commonly used assumptions such as the martingale difference sequence (m.d.s) framework chosen in \cite{MedeirosMendes16} or \cite{masini2019regularized}, we opt for the more general NED framework, as it additionally covers many relevant cases -- in particular for our nodewise regressions -- where properties such as m.d.s.\ fail. The following examples provide simple illustrations of these cases.

\begin{example}[Misspecified AR model]\label{ex:misspecified}
Consider an autoregressive (AR) model of order 2
    \begin{equation*}
        y_t=\rho_1y_{t-1}+\rho_2y_{t-2}+v_t,\qquad v_t\sim IID(0,1),
    \end{equation*}
    where $E\vert v_t\vert^{2\bar m}\leq C$ and the roots of
    $1-\rho_1L-\rho_2L^2$ are outside the unit circle. 
    Define the misspecified model
$y_t=\tilde\rho y_{t-1}+u_t$,
where $\tilde\rho=\argmin\limits_{\rho}\E\left[(y_t-\rho y_{t-1})^2\right]=\frac{\E\left[y_t y_{t-1}\right]}{\E\left[y_{t-1}^2\right]}=\frac{\rho_1}{1-\rho_2}$ and $u_t$ is autocorrelated. An m.d.s.~assumption would be inappropriate in this case, as
\begin{equation*}
\E\left[u_t\vert \sigma\left\lbrace y_{t-1},y_{t-2},\dots\right\rbrace\right]=\E\left[y_t-\tilde\rho y_{t-1}\vert \sigma\left\lbrace y_{t-1},y_{t-2},\dots\right\rbrace\right] = -\frac{\rho_1\rho_2}{1-\rho_2}y_{t-1}+\rho_2y_{t-2}\neq 0. 
\end{equation*}
However, it can be shown that $(y_{t-1}, u_t)'$ satisfies \cref{ass:dgp}\ref{ass:dgpNED} by considering the moving average representation of $y_t$ and by extension, of $u_t=y_{t}-\tilde\rho y_{t-1}$. As the coefficients are geometrically decaying, $u_t$ is clearly NED on $v_t$ and \cref{ass:dgp}\ref{ass:dgpNED} is satisfied.
\end{example}

The key condition to apply the lasso successfully is that the parameter vector $\boldsymbol{\beta}_0$ is (at least approximately) sparse. We formulate this in \Cref{ass:sparsity} below.
\begin{assumption}
\label{ass:sparsity}
For some $0\leq r<1$ and sparsity level $s_r$, define the $N$-dimensional sparse compact parameter space
\begin{equation*}
\boldsymbol{B}_N(r, s_r) :=\left\lbrace \boldsymbol{\beta}\in \mathds{R}^N: \norm{\boldsymbol{\beta}}_r^r \leq s_r, \; \norm{\boldsymbol{\beta}}_{\infty} \leq C, \, \exists C < \infty \right\rbrace ,
\end{equation*}
and assume that $\boldsymbol{\beta}^0\in\boldsymbol{B}_N(r, s_r)$.
\end{assumption}
    
\cref{ass:sparsity} implies that 
${\boldsymbol{\beta}}^0$ is sparse with the degree of sparsity governed by both $r$ and $s_r$. Without further assumptions on $r$ and $s_r$, \cref{ass:sparsity} is not binding, but as will be seen later, the allowed rates will interact with other DGP parameters creating binding conditions. \cref{ass:sparsity} generalizes the common assumption of exact sparsity taking $r=0$ (see e.g., \citealp{MedeirosMendes16}; \citealp{vandeGeer14}; \citealp{chernozhukov2021timeandspace}; \citealp{ghysels20}), which assumes that there are only a few (at most $s_0$) non-zero components in $\boldsymbol\beta^0$, to weak sparsity (see e.g., \citealp{vandeGeer19}). 
This allows us to have many non-zero elements in the parameter vector, as long as they are sufficiently small. 
It follows directly from the formulation in \Cref{ass:sparsity} that, given the compactness of the parameter space, exact sparsity of order $s_0$ implies weak sparsity with $r >0$ of the same order (up to a fixed constant). In general, the smaller $r$ is, the more restrictive the assumption.
The relaxation
to weak sparsity is 
straightforward and follows from elementary inequalities (see e.g., Section 2.10 of \citealp{vandeGeer2016book} and
the proof of \cref{lma:errorBoundonSets}).

\begin{example}[Infinite order AR]\label{ex:infiniteOrderAR}
Consider an infinite order autoregressive model
\begin{equation*}
y_t = \sum_{j=1}^\infty \rho_j y_{t-j} + \varepsilon_t,
\end{equation*}
where $\varepsilon_t$ is a stationary m.d.s.~with sufficient moments existing, and the lag polynomial 
$1 - \sum_{j=1}^\infty\rho_j L^j$
is invertible and satisfies the summability condition $\sum_{j=1}^\infty j^{a} \abs{\rho_j} < \infty$ for some $a\geq 0$. One might consider fitting an autoregressive approximation of order $P$ to $y_t$,
\begin{equation*}
y_t = \sum_{j=1}^P \beta_j y_{t-j} + u_t,
\end{equation*}
as it is well known that if $P$ is sufficiently large, the best linear predictors $\beta_j$ will be close to the true coefficients $\rho_j$ \citep[see e.g.,][Lemma 2.2]{KPP11}. To relate the summability condition above to the weak sparsity condition, note that by H\"older's inequality we have that
\begin{equation*}
\begin{split}
\norm{\boldsymbol{\beta}}_r^r = \sum_{j=1}^P \left(j^a \abs{\beta_j} \right)^r j^{-ar} \leq \left(\sum_{j=1}^P j^a \abs{\beta_j} \right)^r \left( \sum_{j=1}^P j^{-\frac{ar}{1-r}} \right)^{1-r} \leq C \max\{P^{1-(a+1)r}, 1\}.
\end{split}
\end{equation*}
The constant comes from bounding the first term by the convergence of $\beta_j$ to $\rho_j$ plus the summability of the latter, while the second term involving $P$ follows from Lemma 5.1 of \cite{PhillipsSolo92}.\footnote{As the same lemma shows, one should in fact treat the case $r=1/(a+1)$ separately, in which a bound of order $\left(\ln P\right)^{\frac{a}{a+1}}$ holds.} As such, summability conditions on lag polynomials imply weak sparsity conditions, where the strength of the summability condition (measured through $a$) and the required strictness of the sparsity (measured through $r$) determine the order $s_r$ of the sparsity. Therefore, weak sparsity -- unlike exact sparsity -- can accommodate sparse sieve estimation of infinite-order, appropriately summable, processes, providing an alternative to least-squares estimation of lower order approximations. For VAR models we can apply the same reasoning, with the addition that appropriate row sparsity is needed for the coefficients in the row of interest of the VAR if the number of series increases with the sample size.
\end{example}
	
For $\lambda\geq0$, define the weak sparsity index set
\begin{equation} \label{eq:weaksparsity}
S_\lambda:=\left\lbrace j:\abs{\beta_j^0}>\lambda \right\rbrace \quad \text{with cardinality } \vert S_\lambda\vert,
\end{equation}
and complement set $S^c_\lambda=\left\lbrace1,\dots,N\right\rbrace\setminus S_\lambda$. With an appropriate choice of $\lambda$, this set contains all `sufficiently large' coefficients; for $\lambda=0$ it contains all non-zero parameters. We need this set in the following condition, which formulates the standard compatibility conditions needed for lasso consistency \citep[see e.g.,][Chapter 6]{SHD11}.

\begin{assumption}
\label{ass:compatibility}
Let $\boldsymbol\Sigma:=\frac{1}{T}\sum\limits_{t=1}^T\E\left[\boldsymbol{x}_t\boldsymbol{x}'_t\right]$.
For a general index set $S$ with cardinality $\vert S\vert$, define the compatibility constant
\begin{equation*}
	\phi_{\boldsymbol{\Sigma}}^2(S):=\min\limits_{\left\lbrace \boldsymbol{z}\in{\mathds{R}}^{N}\setminus \boldsymbol{0}:\Vert \boldsymbol{z}_{S^c}\Vert_1\leq 3\Vert \boldsymbol{z}_{S}\Vert_1\right\rbrace}\left\lbrace\frac{\vert S\vert \boldsymbol{z}'{\boldsymbol{\Sigma}} \boldsymbol{z}}{\Vert \boldsymbol{z}_{S}\Vert^2_1}\right\rbrace.
	\end{equation*}
	Assume that $\phi_{{\boldsymbol{\Sigma}}}^2(S_\lambda)\geq 1/C$, which implies that
	\begin{equation*}
	\Vert \boldsymbol{z}_{S_\lambda}\Vert^2_1\leq\frac{\vert S_\lambda\vert \boldsymbol{z}'{\boldsymbol{\Sigma}} \boldsymbol{z}}{\phi_{{\boldsymbol{\Sigma}}}^2(S_\lambda)} \leq C\vert S_\lambda\vert \boldsymbol{z}'{\boldsymbol{\Sigma}} \boldsymbol{z},
	\end{equation*}
	for all $\boldsymbol{z}$ satisfying $\Vert \boldsymbol{z}_{S^c_\lambda}\Vert_1\leq 3\Vert \boldsymbol{z}_{S_\lambda}\Vert_1\neq0$.
\end{assumption}

The compatibility constant in \Cref{ass:compatibility} is an upper bound on the minimum eigenvalue of ${\boldsymbol{\Sigma}}$, so this condition is considerably weaker than assuming ${\boldsymbol{\Sigma}}$ to be positive definite. 
We formulate the compatibility condition in \Cref{ass:compatibility} on the population covariance matrix 
rather than directly on the sample covariance matrix $\hat{\boldsymbol{\Sigma}}:=\bX^\prime\bX/T$, see e.g., the restricted eigenvalue condition in \cite{MedeirosMendes16} or Assumption (A2) in \cite{chernozhukov2021timeandspace}. Verifying this assumption on the population covariance matrix is generally more straightforward than directly on the sample covariance matrix.\footnote{Though note that \cite{BasuMichailidis15} 
show in their Proposition 3.1 that the restricted eigenvalue condition holds with high probability under general time series conditions when $\bx_t$ is a stable process with full-rank spectral density and $T$ is sufficiently large. Their Proposition 4.2 includes a stable VAR process as an example.}

Finally, note that the compatibility assumption for the weak sparsity index set $S_\lambda$ is weaker than (and implied by) its equivalent for $S_0$, see Lemma 6.19 in \cite{SHD11}, and that the strictness of this assumption depends on the choice of the tuning parameter $\lambda$. 

\section{Error Bound and Consistency for the Lasso}\label{sec:Lasso}
In this section, we derive a new error bound for the lasso in a high-dimensional time series model.
The lasso estimator \citep{Tibshirani96} of the parameter vector $\boldsymbol\beta^0$ in Model \eqref{eq:DGP} is given by
\begin{equation}\label{eq:LassoDef}
\hat{\boldsymbol{\beta}}:=\argmin_{{\boldsymbol{\beta}}\in\mathds{R}^N} \left\lbrace \frac{\Vert \boldsymbol{y} -\boldsymbol{X}{\boldsymbol{\beta}}\Vert_2^2}{T}+2\lambda\Vert{\boldsymbol{\beta}}\Vert_1 \right\rbrace,
\end{equation}	
where $\boldsymbol{y}=(y_1, \ldots, y_T)^\prime$ is the $T\times 1$ response vector, $\boldsymbol{X}=\left(\boldsymbol{x}_1,\dots,\boldsymbol{x}_T\right)'$ the $T\times N$ design matrix and $\lambda>0$ a tuning parameter. Optimization problem \eqref{eq:LassoDef} adds a penalty term to the least squares objective  to penalize parameters that are different from zero.

When deriving this error bound, one typically requires that $\lambda$ is chosen sufficiently large to exceed the empirical process $\max\limits_{j}\abs{\frac{1}{T}\sum_{t=1}^Tx_{j,t}u_t}$ with high probability. 
To this end, we define the set $\setEP{T}(z):=\left\lbrace\max\limits_{j\leq N,l\leq T}\abs{\sum\limits_{t=1}^{l}u_t x_{j,t}}\leq z\right\rbrace$, and establish the conditions under which $\P\left(\setEP{T}(T\lambda/4)\right)\to 1$. 
In addition, since we formulate the compatibility condition in \Cref{ass:compatibility} on the population covariance matrix, we need to show that $\bSigma$ and $\hat\bSigma$ are sufficiently close under the DGP assumptions. To this end, we define the set $\setCC(S):=\left\lbrace\norm{\hat\bSigma-\bSigma}_{\max}\leq C/\abs{S}\right\rbrace$, and show that
 $\P\left(\setCC(S_{\lambda})\right)\to1$.
\cref{thm:ourContribution} then presents both results.

\begin{theorem}\label{thm:ourContribution} 
Let \cref{ass:dgp,ass:sparsity,ass:compatibility} hold, and assume that
\begin{equation} \label{eq:condition_lambda}
\begin{split}
0<r<1:&\quad\lambda\geq  C\ln(\ln( T))^{\frac{d+m-1}{r(dm+m-1)}}\left[s_r\left(\frac{N^{\left(\frac{2}{d}+\frac{2}{m-1}\right)}}{\sqrt{T}}\right)^{\frac{1}{\left(\frac{1}{d}+\frac{m}{m-1}\right)}}\right]^{\frac{1}{r}}\\
r=0:&\quad s_0\leq C \ln(\ln( T))^{-\frac{d+m-1}{dm+m-1}}\left[\frac{\sqrt{T}}{N^{\left(\frac{2}{d}+\frac{2}{m-1}\right)}}\right]^{\frac{1}{\left(\frac{1}{d}+\frac{m}{m-1}\right)}},\\
&\quad \lambda\geq C{ \ln(\ln( T))}^{1/m}\frac{N^{1/m}}{\sqrt{T}}
\end{split}
\end{equation}
When $N, T$ are sufficiently large, $\P\left(\setEP{T}(T\lambda/4)\cap\setCC(S_\lambda)\right)\geq 1-C \ln(\ln( T))^{-1}$.
\end{theorem}

\cref{thm:ourContribution} thus establishes that
the sets $\setEP{T}(T\lambda/4)$ and $\setCC(S_\lambda)$ hold with high probability.
Each set has a condition under which its probability converges to 1, which follow from \cref{lma:CovarianceCloseness,lma:empiricalProcess} respectively.
For the set $\setEP{T}(T\lambda/4)$, the condition $\lambda\geq{ C\ln(\ln( T))}^{1/m}\frac{N^{1/m}}{\sqrt{T}}$ is required.
The $\ln(\ln( T))$ appearing throughout the theorem 
is chosen arbitrarily as a sequence which grows slowly as $T\to\infty$; we only need some sequence tending to infinity sufficiently slowly. The details can be found in the proof of \cref{thm:ourContribution}. For the set $\setCC(S_\lambda)$, we need to distinguish the cases $0<r<1$ and $r=0$ due to the way the size of the sparsity index set in \cref{eq:weaksparsity} is bounded.
For $0<r<1$, a lower bound on $\lambda$ is imposed which is stricter than the one for the empirical process, hence only that bounds appears in \cref{thm:ourContribution}.
For $r=0$, the conditions do not depend on $\lambda$ hence both bounds appear in \cref{thm:ourContribution}.

\cref{thm:ourContribution} directly yields an error bound for the lasso in high-dimensional time series models by standard arguments in the literature, see e.g.,~Chapter 2 of \cite{vandeGeer2016book}. The proofs of Lemmas A.6 and A.7 in the Supplementary Appendix C.1 provide details.

\begin{corollary}\label{cor:separateResults}
Under \cref{ass:dgp,ass:sparsity,ass:compatibility} and the conditions of \cref{thm:ourContribution}, when $N,T$ are sufficiently large, 
the following
holds with probability at least $1-C\ln \ln T^{-1}$:
\begin{enumerate}[label=(\roman*)]
\item\label{item:prediction} 
$\quad \frac{1}{T} \norm{\boldsymbol{X}(\hat{\boldsymbol{\beta}}-{\boldsymbol{\beta}}^0)}_{2}^2
\leq C\lambda^{2-r}s_r,
$
\item\label{item:estimation} 
$\quad\norm{\hat{\boldsymbol{\beta}}-{\boldsymbol{\beta}}^0}_1
\leq C\lambda^{1-r}s_r.
$
\end{enumerate}
\end{corollary}

Under the additional assumption that $\lambda^{1-r}s_{r}\to0$, these error bounds directly establish prediction and estimation consistency. 
The bounds in \cref{thm:ourContribution} thereby put implicit limits on the divergence rate of $N$, and $s_{r}$ relative to $T$. In particular, the term offsetting the divergence in $N$, and $s_{r}$ is of polynomial order in $T$. The order of the polynomial, and therefore the restriction on the growth of $N$ and $s_{r}$, is determined by the moments $m$ and dependence parameter $d$; the higher the number of moments $m$ and the larger the dependence parameter $d$, the fewer restrictions one has on the allowed polynomial growth of $N$ and $s_r$. In the limit, if $m$ and $d$ tend to infinity (all moments exist and the data are mixing), the order of the polynomial restriction on $N$ tends to infinity, thereby approaching exponential growth. A similar trade off between the allowed growth of $N$ and the existence of moments was found in \cite{MedeirosMendes16}. In Example C.1 we study in greater detail how the different rates interact, thereby providing an overview of the restrictions under different scenarios.

While \cref{cor:separateResults}
is a useful result in its own right, it is vital to derive the theoretical results for  the desparsified lasso, which we turn to next. 

\section{Uniformly Valid Inference via the Desparsified Lasso}\label{sec:desparsifiedLasso}
We use the desparsified lasso to perform uniformly valid inference in general high-dimensional time series settings.
After briefly reviewing the desparsified lasso, we formulate the assumptions needed in \Cref{assumptions_DL}. The asymptotic theory is then derived in \Cref{Inference_DL} for inference on low-dimensional parameters of interest, and \Cref{Inference_HD} for inference on a high-dimensional parameters. 

The desparsified lasso \citep{vandeGeer14} is defined as 
\begin{equation}\label{eq:bhatDef}
	\hat{\boldsymbol{b}}:=\hat{{\boldsymbol{\beta}}}+\frac{\hat{{\boldsymbol{\Theta}}}\boldsymbol{X}'({\boldsymbol{y}}-\boldsymbol{X}\hat{{\boldsymbol{\beta}}})}{T},
\end{equation}	
where $\hat{\boldsymbol{\beta}}$ is the lasso estimator from \cref{eq:LassoDef} and $\hat{{\boldsymbol{\Theta}}}:=\hat{\boldsymbol{\Upsilon}}^{-2}\hat{\boldsymbol{\Gamma}}$ is a reasonable approximation for the inverse of 
$\hat\bSigma$. 
By de-sparsifying the initial lasso, the bias in the lasso estimator is removed and uniformly valid inference can be obtained.
The matrix $\hat{\boldsymbol{\Gamma}}$ is constructed using nodewise regressions; regressing each column of $\boldsymbol{X}$ on all other explanatory variables using the lasso. Let the lasso estimates of the $j=1,\dots,N$ nodewise regressions be
	\begin{equation}\label{eq:NodewiseLassoDef}
	\hat{\boldsymbol{\gamma}}_j:=\argmin_{\boldsymbol{\gamma}_j\in\mathds{R}^{N-1}} \left\lbrace \frac{\Vert \boldsymbol{x}_j-\boldsymbol{X}_{-j}\boldsymbol{\gamma}_j\Vert_2^2}{T}+2\lambda_j\Vert\boldsymbol{\gamma}_j\Vert_1 \right\rbrace,
	\end{equation}
where the $T\times(N-1)$ matrix $\boldsymbol{X}_{-j}$ is $\boldsymbol{X}$ with its $j$th column removed. Their components are given by $\hat{\boldsymbol{\gamma}}_j=\left\lbrace\hat\gamma_{j,k}:k=\{1,\dots,N\}\setminus j\right\rbrace$. 
Stacking these estimated parameter vectors row-wise with ones on the diagonal gives the matrix
\begin{equation*}\label{ChatDef} 
\hat{\boldsymbol{\Gamma}}:=\begin{bmatrix}
1      & -\hat{\gamma}_{1,2} & \dots & -\hat{\gamma}_{1,N} \\
-\hat{\gamma}_{2,1}      & 1 & \dots & -\hat{\gamma}_{2,N} \\
\vdots & \vdots &\ddots & \vdots\\
-\hat{\gamma}_{N,1}      & -\hat{\gamma}_{N,2}  & \dots & 1
\end{bmatrix}.
\end{equation*}
We then take $\hat{\boldsymbol{\Upsilon}}^{-2}:=\text{diag}\left(1/\hat{\tau}_1^2,\dots,1/\hat{\tau}_N^{2}\right)$, where
$\hat{\tau}_j^2:=\frac{1}{T} \norm{ \boldsymbol{x}_j-\boldsymbol{X}_{-j} \hat{\boldsymbol{\gamma}}_j}_2^2 + 2\lambda_j \norm{\hat{\boldsymbol{\gamma}}_j}_1$.
	
We use the index set $H\subseteq \left\lbrace 1,\dots,N\right\rbrace$ with cardinality $h=\abs{H}$ to denote the set of variables whose coefficients we wish to perform inference on. In this case computational gains can be obtained with respect to the nodewise regressions, as we only need to obtain the sub-vector of the  desparsified lasso  corresponding to
$\hat{\boldsymbol{b}}_H:=\hat{\boldsymbol{\beta}}_H + \hat{\boldsymbol{\Theta}}_H \boldsymbol{X}({\boldsymbol{y}}-\boldsymbol{X}\hat{{\boldsymbol{\beta}}})$,
with the subscript $H$ indicating that we only take the respective rows of $\hat{\boldsymbol{\beta}}$ and $\hat{\boldsymbol{\Theta}}$. To compute $ \hat{\boldsymbol{\Theta}}_H$, one only needs to compute  $h$ nodewise regressions instead of $N$, which can be a considerable reduction for small $h$ relative to large $N$.

\subsection{Assumptions}\label{assumptions_DL}
Consider the population nodewise regressions defined by the linear projections 
\begin{equation}\label{eq:populationGammaj}
	x_{j,t}=\boldsymbol{x}'_{-j,t}{\boldsymbol{\gamma}}^0_j+v_{j,t} \qquad \boldsymbol{\gamma}^0_j:= \argmin_{\boldsymbol{\gamma}} \left\lbrace \E\left[\frac{1}{T}\sum\limits_{t=1}^T\left(x_{j,t}-\boldsymbol{x}_{-j,t}'\boldsymbol{\gamma}\right)^2\right]\right\rbrace, \qquad j=1,\ldots, N,
\end{equation}	
with $\tau_j^2:=\frac{1}{T}\sum\limits_{t=1}^T\E\left[v_{j,t}^2\right]$. 
Note that by construction, it holds that $\E\left[v_{j,t}\right]=0,\ \forall t, j$ and $\E\left[v_{j,t}x_{k,t}\right]=0,\ \forall t, k\neq j$. 
We first present 
\cref{ass:statandvmoments,ass:nodewise},
which allow us to extend \cref{cor:separateResults} to the nodewise lasso regressions. 

\begin{assumption}\label{ass:statandvmoments}
Let $\max\limits_{1\leq j\leq N,\ 1\leq t\leq T} \E \abs{v_{j,t}}^{2\bar m} \leq C$.
\end{assumption}

\begin{assumption}\label{ass:nodewise}\hfill
\begin{enumerate}[label=(\roman*)]
\item\label{ass:nodewiseSparsity} 
For some $0\leq r<1$ and sparsity levels $s_{r}^{(j)}$, let $\gamma_j^0\in\boldsymbol{B}_{N-1}(r,s_{r}^{(j)})$, $\forall j\in H$.
\item\label{ass:nodewiseCompatibility} Let $\max\limits_{1\leq j\leq N}\sigma_{j,j}\leq C$ and $\Lambda_{\min} \geq 1/C$, where $\Lambda_{\min}$ is the smallest eigenvalue of $\boldsymbol{\Sigma}$.
\end{enumerate}
\end{assumption}

\cref{ass:statandvmoments}  requires the errors $v_{j,t}$ from the nodewise linear projections to have bounded moments of an order greater than fourth.
By the properties of NED processes, we use \cref{ass:dgp,ass:statandvmoments} to establish mixingale properties of the products $v_{j,t}u_t=:w_{j,t}$ and $w_{j,t}w_{k,t-l}$ in \cref{ass:CLT}, which are used extensively in the derivation of the desparsified lasso's asymptotic distribution. 

\cref{ass:nodewise}\ref{ass:nodewiseSparsity}, similar to \cref{ass:sparsity}, requires weak sparsity of the nodewise regressions, not exact sparsity. 
The latter could be problematic, as it would imply many of the regressors to be uncorrelated. In contrast, weak sparsity is a plausible alternative, see e.g.,\ \cref{ex:infiniteOrderAR}.
Importantly, the weak sparsity of the nodewise regressions is fully determined by the model and hence should be verified. Below, we provide concrete examples 
where the weak sparsity assumption holds.

\cref{ass:nodewise}\ref{ass:nodewiseCompatibility} requires the population covariance matrix to be positive definite, with its smallest eigenvalue bounded away from zero, and to have finite variances. \cref{ass:nodewise}\ref{ass:nodewiseCompatibility} implies the compatibility condition and thus replaces \Cref{ass:compatibility} in \Cref{sec:Lasso}, with $\Lambda_{min}$ fulfilling the role of $\phi_{{\boldsymbol{\Sigma}}}^2$. It also implies that the explanatory variables, including the irrelevant ones, cannot be linear combinations of each other even as we let the number of variables tends to infinity. Although this is a considerable strengthening of \cref{ass:compatibility}, it is important to realize this assumption is still made on the population matrix instead of the sample version, and may therefore still hold in fairly general, high-dimensional models. For example, \cite{BasuMichailidis15} provide a lower bound for $\Lambda_{\min}$ in VAR models on their Proposition 2.3, which can be shown to be bounded away from zero under realistic conditions, see also
\citeauthor{masini2019regularized} (\citeyear{masini2019regularized}, p.\ 6). 
Similarly, this assumption can be shown to hold in factor models under minimal assumptions on the idiosyncratic errors (see \Cref{ex:SparseFactor} below).

\begin{example}(Sparse factor model) \label{ex:SparseFactor}
Consider the factor model
\begin{equation*}\begin{split}
y_t&=\boldsymbol{\beta^0}'\boldsymbol{x}_{t}+u_t,\ u_t\sim IID(0,1)\\
\boldsymbol{x}_t&=\underset{N\times k}{\bLambda}\underset{k\times 1}{\bbf_t}+\boldsymbol{\nu}_t,\ \boldsymbol{\nu}_t \sim IID(\boldsymbol{0},\bSigma_{\bnu}),\qquad
\bbf_t \sim IID(\bzero,\bSigma_{\bbf}),
\end{split}\end{equation*}
where $\bLambda$ has bounded elements, $\bSigma_{\bbf}$ and $\bSigma_{\bnu}$ are positive definite with bounded eigenvalues, and $\bnu_t$ and $\bbf_t$ are uncorrelated. In this DGP,
\begin{equation*}
   \bSigma=\bLambda\bSigma_{\bbf}\bLambda^\prime+\bSigma_{\bnu}\Longrightarrow\bTheta=\bSigma_{\bnu}^{-1}-\bSigma_{\bnu}^{-1}\bLambda\left(\bSigma_{\bbf}^{-1}+\bLambda^\prime\bSigma_{\bnu}^{-1}\bLambda\right)^{-1}\bLambda^{\prime}\bSigma_{\bnu}^{-1}.
\end{equation*}
As shown in Supplementary Appendix C.4, the sparsity of the nodewise regression parameters can be bounded as
\begin{equation*}
\max\limits_{j}\norm{\bgamma_j^0}_r^r\leq \norm{\bSigma_{\bnu}^{-1}}_r^r\left(1+C \norm{\bSigma_{\bnu}^{-1}}_r^r \norm{\bLambda}_r^r k^{2-r/2} N^{-ar} \right),
\end{equation*}
where $N^a$ is the rate at which the $k$-th largest eigenvalue of $\bSigma$ diverges. This result allows for weak factor models where $a<1$, which have been proposed for providing a theoretical explanation for the often observed empirical phenomenon where the separation between the eigenvalues of the Gram matrix is not as large as the strong factor model with $a=1$ implies  \citep[cf.][]{DeMol2008forecasting,Onatski2012asymptotics,uematsu2022estimation,uematsu2022inference}.

The bound of the nodewise regressions further depends on the number of factors, the sparsity of the factor loadings and the sparsity of $\bSigma_{\bnu}^{-1}$. Sparse factor loadings are intimately linked to weak factor models, and may provide accurate descriptions of the data in various economic and financial applications, see  \citet{uematsu2022estimation,uematsu2022inference} and Supplementary Appendix C.4 for details.

Sparsity in $\bSigma_{\bnu}^{-1}$ holds when the idiosyncratic components are not too strongly cross-sectionally dependent, which is a standard assumption in factor models. It occurs for instance for
block diagonal structures of $\bSigma_{\bnu}$, in which case
$\norm{\bSigma_{\bnu}^{-1}}_r^r\leq Cb$  where $b$ is the size of the largest $b\times b$ block matrix  with $b^2$ nonzero elements, or for Toeplitz structures ${\sigma_{\bnu}}_{i,j}=\rho^{\abs{i-j}}, \abs{\rho}<1$, in which case  $\norm{\bSigma_{\bnu}^{-1}}_r^r\leq C$. 
Note that to satisfy the minimum eigenvalue condition  (\Cref{ass:nodewise}\ref{ass:nodewiseCompatibility}), we only need the minimum eigenvalue of $\bSigma_{\bnu}$ to be bounded away from 0.
\end{example}

\begin{example}[Sparse VAR(1)]\label{ex:sparseVAR1}
    Consider a stationary VAR(1) model for $\bz_t=(y_t,\bx_t^\prime)^\prime$
    \begin{equation*}
        \bz_t=\bPhi \bz_{t-1}+\bu_t,\ \E\bu_t\bu_t^\prime:=\bOmega,\ \E\bu_t\bu_{t-l}^\prime=\bzero,\ \forall l\neq0, 
    \end{equation*}
    with our regression of interest being the first line of the VAR, that is $y_t=\bphi_1\bz_{t-1}+u_{1,t}$,
    where $\bphi_j$ is the $j$th row of $\bPhi$. Under this DGP, the nodewise regression parameters $\bgamma_j^0$ are determined entirely by $\bPhi$ and $\bOmega$, and we now consider two cases for which we derive explicit results in Supplementary Appendix C.4.
\begin{enumerate}
    \item[(a)]  Let $\bPhi$ be symmetric and block diagonal with largest block of size $b$. Assume that $\bPhi$ has eigenvalues strictly between 0 and 1, and $\norm{\bPhi}_{\max}\leq C$.  Furthermore, let $\bOmega=\bI$. 
    Then the nonzero entries of $\bgamma_j^0$ follow the block structure of $\bPhi$, such that $\max\limits_{j}\norm{\bgamma^0_j}_{0}\leq Cb$.
    \item[(b)] Let $\bPhi=\phi \bI$ with $\abs{\phi}<1$, and let $\bOmega$ have a Toeplitz structure $\omega_{i,j}=\rho^{\abs{i-j}},\ \abs{\rho}<1$. 
    Then $\bgamma_j^0$ is only weakly sparse, in the sense that it contains no zeroes, but its entries follow a geometrically decaying pattern, meaning that
    $\max\limits_{j}\norm{\bgamma_j^0}_r^r\leq C$. 
\end{enumerate}
More generally, sparsity of $\bgamma_j^0$ requires that the autoregressive coefficient matrix $\bPhi$ and the error covariance matrix $\bOmega$ are row- and column-sparse in such a way that matrix multiplication preserves this sparsity. For case (a), we may  relax the assumption on $\bOmega$ to block-diagonality, provided the block structure is similar to that of $\bPhi$. For case (b), the result holds even when we let $\bPhi$ have a similar Toeplitz structure as $\bOmega$, as we numerically investigate in Supplementary Appendix C.4. To verify the minimum eigenvalue condition in \Cref{ass:nodewise}\ref{ass:nodewiseCompatibility}, we may apply the bound derived in \cite[p.~6]{masini2019regularized}, which gives $\bLambda_{\min}\geq \bLambda_{\min}(\bOmega) \left[1+\left(\norm{\bPhi}_1+\norm{\bPhi}_{\infty}\right)/2\right]^2$, where $\bLambda_{\min}(\bOmega)$ is the smallest eigenvalue of $\bOmega$.
\end{example}
\begin{remark}
Alternative approaches exist that circumvent the need to directly impose weak sparsity assumptions on the nodewise regressions. \cite{Krampe18} use the desparsified lasso for inference in the context of stationary VARs with IID errors, but do not use nodewise regressions to build an estimator of $\bTheta$ as we do. Instead, they use the VAR model structure to derive an estimator based on regularized estimates of the VAR coefficients and the error covariances. Such an approach requires knowledge of the full model underlying the covariates to provide an analytical expression for the nodewise projections. While this is a natural approach in a VAR model, this approach is considerably more difficult to apply in a more general setting, where the structure underlying the covariates is typically unknown. Moreover, they still require conditions on sparsity, which are similar to those found for the VAR model of \cref{ex:sparseVAR1}, i.e.~row- and column-sparsity of the VAR coefficient matrices in addition to sparsity of the inverse error covariance matrix.

\cite{deshpande2020online} use an online debiasing strategy for inference in VAR models with IID Gaussian errors, among other settings. Rather than using a single estimate of $\bTheta$, they use a sequence of precision matrix estimates based on an episodic structure, which can be seen as a generalization of sample-splitting. In addition, they use the precision matrix estimator as in \cite{JavanmardMontanari14}, which does not require sparsity of $\bTheta$. It is an interesting topic for future research to investigate whether these techniques can be leveraged in our setting allowing for misspecification and with potentially serially correlated/heteroskedastic errors.
\end{remark}

\cref{ass:statandvmoments,ass:nodewise} allow us to apply \cref{cor:separateResults} to the nodewise regressions. Specifically, if the conditions on $\lambda$ formulated in \eqref{eq:condition_lambda} hold for both $\underset{\bar{}}{\lambda}:=\min\limits_{j\in H}\lambda_j$ and $\bar{\lambda}:=\max\limits_{j\in H}\lambda_j$, the error bounds -- with $\bar{s}_r:=\max\limits_{j\in H} s^{(j)}_{r}$ substituted for $s_r$ -- apply to the nodewise regressions as well. As we generally need the error bounds to hold uniformly over all relevant nodewise regressions as well as the initial regression, we combine these bounds and state our results on the quantities
\begin{equation} \label{eq:s_l_max}
\lambda_{\min}=  \min\{\lambda, \underset{\bar{}}{\lambda}\}, \qquad \lambda_{\max}=  \max\{\lambda, \bar{\lambda}\}, \qquad  s_{r,\max} =  \max\{s_r, \bar{s}_r\},
\end{equation}
which simplifies many of the final expressions. While some conditions could be weakened if we keep them in terms of $\bar{\lambda}$ or $\bar{s}_r$ explicitly, this would be at the expense of more conditions and readability, and therefore we opt against it.

\subsection{Inference on low-dimensional parameters
} \label{Inference_DL}
In this section we establish the uniform asymptotic normality of the desparsified lasso focusing on low-dimensional parameters of interest. 
We consider testing $P$ joint hypotheses of the form $\boldsymbol{R}_{N}\boldsymbol{\beta}^0=\boldsymbol{q}$ via a Wald statistic, where $\boldsymbol{R}_{N}$ is an appropriate $P\times N$ matrix whose non-zero columns are indexed by the set $H:=\left\lbrace j:\sum_{p=1}^P\vert r_{N,p,j}\vert>0\right\rbrace$ of cardinality $h:=\vert H\vert$. As can be seen from the lemmas in Appendix \ref{app:A2}, all our results up to application of the central limit theorem allow for $h$ to increase in $N$ (and therefore $T$). 
In \Cref{thm:CLT} we first focus on inference on a finite set of parameters, such that we can apply a standard central limit theorem under the assumptions listed above. An alternative, high-dimensional approach under more stringent conditions is considered in \Cref{Inference_HD}.

Given our time series setting, the long-run covariance matrix
\begin{equation*}
    {\boldsymbol{\Omega}}_{N,T} = \E\left[\frac{1}{T}\left(\sum\limits_{t=1}^T \boldsymbol{w}_t\right)\left(\sum\limits_{t=1}^T \boldsymbol{w}'_t\right)\right],
\end{equation*} where $\boldsymbol{w}_t=(v_{1,t}u_t,\dots,v_{N,t}u_t)'$, enters the asymptotic distribution in \Cref{thm:CLT}.  $\boldsymbol{\Omega}_{N,T}$ can equivalently be written as $\boldsymbol{\Omega}_{N,T}=\boldsymbol{\Xi}(0)+\sum\limits_{l=1}^{T-1}(\boldsymbol{\Xi}(l)+\boldsymbol{\Xi}^\prime(l))$, where $\boldsymbol{\Xi}(l) = \frac{1}{T}\sum\limits_{t=l+1}^T\E\left[ \boldsymbol{w}_t \boldsymbol{w}_{t-l}^\prime\right]$.

\begin{theorem}\label{thm:CLT}
Let \cref{ass:dgp,ass:sparsity,ass:compatibility,ass:statandvmoments,ass:nodewise} hold, and assume that the smallest eigenvalue of $\boldsymbol{\Omega}_{N,T}$ is bounded away from 0. 
Furthermore, 
assume that $\lambda_{\max}^2\leq(\ln\ln T) \lambda_{\min}^{r}\left[\sqrt{T}s_{r,\max}\right]^{-1}$, 
and
\begin{equation*}
\begin{split}
0<r<1:&\quad\lambda_{\min}\geq  {(\ln \ln T)}\left[s_{r,\max}\left(\frac{N^{\left(\frac{2}{d}+\frac{2}{m-1}\right)}}{\sqrt{T}}\right)^{\frac{1}{\left(\frac{1}{d}+\frac{m}{m-1}\right)}}\right]^{\frac{1}{r}}\\
r=0:&\quad s_{0,\max}\leq {(\ln \ln T)^{-1}}\left[\frac{\sqrt{T}}{N^{\left(\frac{2}{d}+\frac{2}{m-1}\right)}}\right]^{\frac{1}{\left(\frac{1}{d}+\frac{m}{m-1}\right)}},\quad \lambda_{\min}\geq {(\ln \ln T)}\frac{N^{1/m}}{\sqrt{T}}.
\end{split}
\end{equation*}

Let $\boldsymbol{R}_N\in\mathds{R}^{P\times N}$ satisfy $\max\limits_{1\leq p\leq  P} \norm{\boldsymbol{r}_{N,p}}_1\leq C$,
where $\boldsymbol{r}_{N,p}$ denotes the $p$-th row of $\boldsymbol{R}_N$, and $P, h\leq C$. Then we have that 
\begin{equation*}
\sqrt{T}\boldsymbol{R}_{N}(\hat{{\boldsymbol{b}}}-{\boldsymbol{\beta}}^0)\overset{d}{\to}N\left(\boldsymbol{0},\boldsymbol{\Psi}\right),
\end{equation*}
uniformly in ${\boldsymbol{\beta}}^0\in\boldsymbol{B}_N(r, s_r)$, where
\begin{equation*}
\boldsymbol{\Psi}:=\lim\limits_{N,T\to\infty}\boldsymbol{R}_{N}{\boldsymbol{\Upsilon}}^{-2}{\boldsymbol{\Omega}_{N,T}}{\boldsymbol{\Upsilon}}^{-2}{\boldsymbol{R}'_{N}} \text{ and } \boldsymbol{\Upsilon}^{-2}:=\text{diag}(1/\tau^2_1,\dots,1/\tau^2_N).
\end{equation*}  
\end{theorem}

\begin{remark}
Unlike  \cite{vandeGeer14}, we do not require the regularization parameters $\lambda_j$ to have a uniform growth rate. We only control the slowest and fastest converging $\lambda_j$ (covered by $\lambda_{\max}$ and $\lambda_{\min}$ respectively) through convergence rates that also involve $N,T$, and the sparsity $s_{r,\max}$. We provide a specific example of a joint asymptotic setup for these quantities in \cref{cor:usefulResult}. 
\end{remark}

\begin{remark}\label{rem:SampleSplitting}
\cite{BCCH12} and \cite{chernozhukov2018double}, among others,  show that sample splitting can improve the convergence rates for the desparsified lasso in IID settings. 
The idea is to estimate the initial and nodewise regressions with two independent parts of the sample, and exploit this independence to efficiently bound certain 
terms in the proofs. Efficiency loss is then avoided by so-called cross-fitting and  combining two estimators in which the roles of the two sub-samples are swapped. 
However, with time series data naive sample splitting will not yield (asymptotically) independent subsamples. Instead, subsamples must carefully be chosen to leave sufficiently large `gaps' in-between to ensure (at least asymptotic) independence. These ideas are explored
in \cite{lunde2019sample} and \cite{BHS21}, though for different purposes and dependence concepts. They could however provide a useful starting point for future research on investigating the potential of sample-splitting in the NED framework. 
\end{remark}

In order to estimate the asymptotic variance $\boldsymbol{\Psi}$, we suggest to estimate $\boldsymbol{\Omega}_{N,T}$ with the long-run variance kernel estimator
\begin{equation}\label{eq:LRC}
\hat{\boldsymbol{\Omega}}=\hat{\boldsymbol{\Xi}}(0)+\sum\limits_{l=1}^{Q_T-1} K\left(\frac{l}{Q_T}\right) \left(\hat{\boldsymbol{\Xi}}(l)+\hat{\boldsymbol{\Xi}}^\prime(l)\right),
\end{equation}
where $\hat{\boldsymbol{\Xi}}(l)=\frac{1}{T-l}\sum\limits_{t=l+1}^{T}\hat{\boldsymbol{w}}_t \hat{\boldsymbol{w}}_{t-l}^\prime$ with 
$\hat{w}_{j,t}=\hat{v}_{j,t}\hat{u}_t$, the kernel $K(\cdot)$ can be taken as the Bartlett kernel $K(l/Q_T) =  \left(1-\frac{l}{Q_T}\right)$ \citep{NeweyWest87} and the bandwidth $Q_T$ should increase with the sample size at an appropriate rate. A similar heteroskedasticity and autocorrelation consistent (HAC) estimator was considered by \cite{ghysels20}, though under a different framework of dependence. In \cref{thm:LRVconsistency}, we show that $\hat{\boldsymbol{\Psi}} = \boldsymbol{R}_{N}({\hat{\boldsymbol{\Upsilon}}}^{-2}\hat{\boldsymbol{\Omega}}{\hat{\boldsymbol{\Upsilon}}}^{-2}){\boldsymbol{R}'_{N}}$ is a consistent estimator of $\boldsymbol{\Psi}$ in our NED framework.
\begin{theorem}\label{thm:LRVconsistency} 
Take $\hat{\boldsymbol{\Omega}}$ with  $Q_T\to\infty$ as $T\to\infty$, such that
$Q_Th^2(\sqrt{T}h^2)^{-\frac{1}{1/d+m/(m-2)}}\to0$. Assume that 
\begin{equation*}\begin{split}
&\lambda_{\max}^{2-r}\leq (\ln \ln T)^{-1} \min\left\lbrace\left[\sqrt{Q_T}\sqrt{T}s_{r,\max}\right]^{-1}\right.,\left[Q_Th^{1/m}T^{1/m}s_{r,\max}\right]^{-1},\\
&\qquad\qquad\qquad\qquad\quad\quad\left.\left[Q_T^2h^{3/m}T^{(3-m)/m}s_{r,\max}\right]^{-1},\left[Q_T^{2/3}h^{1/(3m)}T^{(m+1)/3m}s_{r,\max}\right]^{-1}\right\rbrace,\\
&\lambda_{\max}^2\leq (\ln \ln T)^{-1} \lambda_{\min}^{r}\left[\sqrt{T}h^{2/m}s_{r,\max}\right]^{-1},\text{ and }\\
\end{split}\end{equation*}
\begin{equation*}
\begin{split}
0<r<1:&\quad\lambda_{\min}\geq  (\ln \ln T)\left[s_{r,\max}\left(\frac{(hN)^{\left(\frac{2}{d}+\frac{2}{m-1}\right)}}{\sqrt{T}}\right)^{\frac{1}{\left(\frac{1}{d}+\frac{m}{m-1}\right)}}\right]^{\frac{1}{r}},\\
r=0:&\quad s_{0,\max}\leq (\ln \ln T)^{-1}\left[\frac{\sqrt{T}}{(hN)^{\left(\frac{2}{d}+\frac{2}{m-1}\right)}}\right]^{\frac{1}{\left(\frac{1}{d}+\frac{m}{m-1}\right)}},\quad \lambda_{\min}\geq (\ln \ln T)\frac{(hN)^{1/m}}{\sqrt{T}}.
\end{split}
\end{equation*}

Furthermore, let $\boldsymbol{R}_N\in\mathds{R}^{P\times N}$ satisfy $\max\limits_{1\leq p\leq P}\norm{\boldsymbol{r}_{N,p}}_1\leq C$ and $P\leq Ch$. Then under \Cref{ass:dgp,ass:sparsity,ass:compatibility,ass:statandvmoments,ass:nodewise}, uniformly in ${\boldsymbol{\beta}}^0\in\boldsymbol{B}_N(r,s_r)$,
\begin{equation*}
\norm{\boldsymbol{R}_{N}({\hat{\boldsymbol{\Upsilon}}}^{-2}\hat{\boldsymbol{\Omega}}{\hat{\boldsymbol{\Upsilon}}}^{-2}-\bUpsilon^{-2}\bOmega_{N,T}\bUpsilon^{-2}){\boldsymbol{R}'_{N}}}_{\max}\overset{p}{\to}0.
\end{equation*}
\end{theorem}

Note that here we restrict $\boldsymbol{R}_N$ such that the number of hypotheses $P$ may not grow faster than the number of parameters of interest $h$, but $h$ may grow with $T$ at a controlled rate. \cref{thm:LRVconsistency} therefore allows for variance estimation of an increasing number of estimators. We believe the restrictions on $P$ are reasonable, as they apply to the most commonly performed hypothesis tests in practice, such as joint significance tests (where $\boldsymbol{R}_N$ is the identity matrix), or tests for the equality of parameter pairs.

As a natural implication of \cref{thm:CLT,thm:LRVconsistency}, \cref{cor:usefulResult} gives an asymptotic distribution result for a quantity composed exclusively of estimated components.
\begin{corollary}\label{cor:usefulResult}
Let \Cref{ass:dgp,ass:sparsity,ass:compatibility,ass:statandvmoments,ass:nodewise} hold, and assume that the smallest eigenvalue of $\boldsymbol{\Omega}_{N,T}$ is bounded away from 0, and $Q_TT^{-\frac{1}{2/d+2m/(m-2)}}\to0$ for some $Q_T\to\infty$.
Further, assume that $\lambda\sim\lambda_{\max}\sim\lambda_{\min}$, and
\begin{equation*}
\begin{split}
0<r<1:&\quad (\ln \ln T)^{-1} s_{r,\max}^{1/r}\left[\frac{N^{\left(\frac{2}{d}+\frac{2}{m-1}\right)}}{\sqrt{T}}\right]^{\frac{1}{r\left(\frac{1}{d}+\frac{m}{m-1}\right)}}\leq\lambda\leq\ \ln \ln T \left[Q_T^2\sqrt{T}s_{r,\max}\right]^{-1/(2-r)},\\
r=0:&\quad\ (\ln \ln T)^{-1} \frac{N^{1/m}}{\sqrt{T}}\leq\lambda\leq \ln \ln T \left[Q_T^2\sqrt{T}s_{0,\max}\right]^{-1/2}.
\end{split}
\end{equation*}
These bounds are feasible when $Q_T^rs_{r,\max}N^{\left(2-r\right)\left(\frac{d+m-1}{dm+m-1}\right)}T^{\frac{1}{4}\left(r-\frac{d(m-1)(2-r)}{dm+m-1}\right)}\to0$, and additionally when $Q_T^2s_{0,\max}\frac{N^{2/m}}{\sqrt{T}}\to0$  if $r=0$.
Under these conditions, for ${\boldsymbol{R}_{N}}\in\mathds{R}^{P\times N}$ with $\max\limits_{1\leq p\leq P}\norm{\boldsymbol{r}_{N,p}}_1\leq C$ and $P, h\leq C$,
we have that
\begin{align}
&\sup_{\underset{1\leq p\leq P, z\in\mathds{R}} {\boldsymbol{\beta}^0 \in \boldsymbol{B}_N(r,s_r)}} \left\vert\P\left(\sqrt{T}\frac{\boldsymbol{r}_{N,p}(\hat{{\boldsymbol{b}}}-{\boldsymbol{\beta}}^{0})}{\sqrt{\boldsymbol{r}_{N,p}(\hat{{\boldsymbol{\Upsilon}}}^{-2}\hat{\boldsymbol{\Omega}}\hat{{\boldsymbol{\Upsilon}}}^{-2}){\boldsymbol{r}'_{N,p}}}}\leq z\right)-\boldsymbol{\Phi}(z)\right\vert=o_p(1)
\\
&\sup\limits_{\underset{z\in\mathds{R}}{{\boldsymbol{\beta}}^0\in\boldsymbol{B}_N(r,s_r)}}\abs{\P\left( \left[\boldsymbol{R}_{N}\hat{\boldsymbol{b}} - \boldsymbol{q} \right]' \left[\frac{\bR_N\hat{\boldsymbol{\Upsilon}}^{-2}\hat{\boldsymbol{\Omega}}{\hat{\boldsymbol{\Upsilon}}}^{-2}\bR_N^\prime}{T}\right]^{-1}\left[\boldsymbol{R}_{N}\hat{\boldsymbol{b}} - \boldsymbol{q} \right]\leq z \right)-F_P(z)}=o_p(1)
\label{eq:Wald}
\end{align}
where $\boldsymbol{\Phi}(\cdot)$ is the CDF of $N(0,1)$, $F_P(z)$ is the $CDF$ of $\chi^2_P$, and $\bq\in \mathds{R}^P$ is chosen to test a null hypothesis of the form $\bR_N\bbeta^0=\bq$.
\end{corollary}

\cref{cor:usefulResult} allows one to perform a variety of hypothesis tests.
For a significance test on a single variable $j$, for instance, take 
$\boldsymbol{R}_{N}$ as the $j$th basis vector. Then, inference on $\beta^0_j$ of the form
$\P\left(\frac{\sqrt{T}(\hat b_j-{{\beta}}_j^0)}{\sqrt{\hat\omega_{j,j}/\hat\tau^4_j}}\leq z\right)-\boldsymbol{\Phi}(z)=o_p(1),\quad \forall z\in\mathds{R}$,
can be obtained where $\boldsymbol{\Phi}(\cdot)$ is the standard normal CDF. One can then obtain standard confidence intervals
$CI(\alpha):=\left[\hat{b}_j-z_{\alpha/2}\sqrt{\frac{\hat{\omega}_{j,j}/\hat{\tau}_j^4}{T}}, \ \hat{b}_j+z_{\alpha/2}\sqrt{\frac{\hat{\omega}_{j,j}/\hat{\tau}_j^4}{T}}\right]$,
where $z_{\alpha/2}:=\boldsymbol{\Phi}^{-1}(1-\alpha/2)$, with the property that
$\sup\limits_{{\boldsymbol{\beta}}^0\in\boldsymbol{B}(s_r)}\left\vert\P\left(\beta_j^0\in CI(\alpha)\right)-(1-\alpha)\right\vert=o_p(1)$.
For a joint test with $P$ restrictions on $h$ variables of interest of the form $\boldsymbol{R}_{N} \boldsymbol{\beta}^0 = \boldsymbol{q}$, one can construct a Wald type test statistic based on \cref{eq:Wald}, and compare it to the critical value $F_P^{-1}(1-\alpha)$. Note that these results can also be used to test for nonlinear restrictions of parameters via the Delta method \citep[e.g.,][Theorems 5.5.23,28]{CB2002}.

As the bounds and convergence rates as displayed in full generality in \cref{cor:usefulResult} may be hard to interpret, we investigate in \cref{cor:simplifiedRates} how the conditions of \cref{cor:usefulResult} can be satisfied in a simplified asymptotic setup, thereby illustrating how the different growth rates interact.
As for \cref{cor:separateResults}, the conditions on $\lambda$ effectively require that $Q_T$, $N$, and $s_{r,\max}$ grow at a polynomial rate of $T$, which we exploit in \cref{cor:simplifiedRates} to simplify the conditions.

\begin{example}\label{cor:simplifiedRates}
    The requirements of \cref{cor:usefulResult} are satisfied when 
    $N\sim T^{a}$ for $a>0$, $s_{r,\max}\sim T^{b}$ for $b>0$, $Q_T\sim T^{\mathcal{Q}}$ for an arbitrarily small $\mathcal{Q}>0$, and $\lambda\sim T^{-\ell}$ for \begin{equation*}
    \begin{split}
    0<r<1:&\quad \frac{b+1/2}{2-r}<\ell<\frac{1}{r(\frac{1}{d}+\frac{m}{m-1})}\left[\frac{1}{2}-b\left(\frac{1}{d}+\frac{m}{m-1}\right)-2a\left(\frac{1}{d}+\frac{1}{m-1}\right)\right],\\
    r=0:&\quad\frac{b+1/2}{2}<\ell<\frac{1}{2}-\frac{a}{m}
    .
    \end{split}
\end{equation*}
    This choice of $\ell$ is feasible if 
    \begin{equation}\label{eq:exampleRequirements}
        \left(\frac{4b+r}{2-r}\right)\left(\frac{1}{d}+\frac{m}{m-1}\right)+4a\left(\frac{1}{d}+\frac{1}{m-1}\right)<1.
    \end{equation}
There is thus a limit on how fast $s_{r,\max}$ and $N$ can grow relative to $T$, and 
there exists a trade-off between both:
$s_{r,\max}$ can grow faster if we limit the growth rate of $N$, and vice versa. 
Besides,
for larger $r$, the conditions on the growth rate of $s_{r,\max}$ are more strict. 
The strictness of these bounds is 
additionally influenced by the number of moments $m$ and the size of the NED $-d$:
the bounds
become easier to satisfy when $m$ and $d$ are large.

Depending on the growth rates of $s_{r,\max}$ and $N$, inequality \eqref{eq:exampleRequirements} may put stricter requirements on $m$ and $d$ than those in \cref{ass:dgp}. For example, if we assume that $s_{r,\max}$ is asymptotically bounded $(b=0)$, and $N$ grows proportionally to $T$ ($a=1$), then $m$ and $d$ should satisfy $\frac{1}{d}+\frac{1}{m-1}<\frac{1}{4}$.  
If, on the other hand, 
$m$ and $d$ are allowed to be arbitrarily large, such as when the data are mixing and sub-exponential, then we only need $b<\frac{1-r}{2}$,
and we do not have an effective upper bound on $a$, implying that $N$ can grow at any polynomial rate of $T$. For a more general understanding of the restrictions imposed by \cref{eq:exampleRequirements}, Figure \ref{fig:momentsRequired} shows feasible regions for different combinations of $a$, $b$, $d$, and $r$, as well as how many moments $m$ are needed in those cases.

\begin{figure}
\caption{Required moments $m$ implied by \cref{eq:exampleRequirements}. Contours mark intervals of 10 moments, and values above $m=100$ are truncated to 100. Non-shaded areas indicate infeasible regions.}\label{fig:momentsRequired}
\centering
\includegraphics[width=\textwidth]{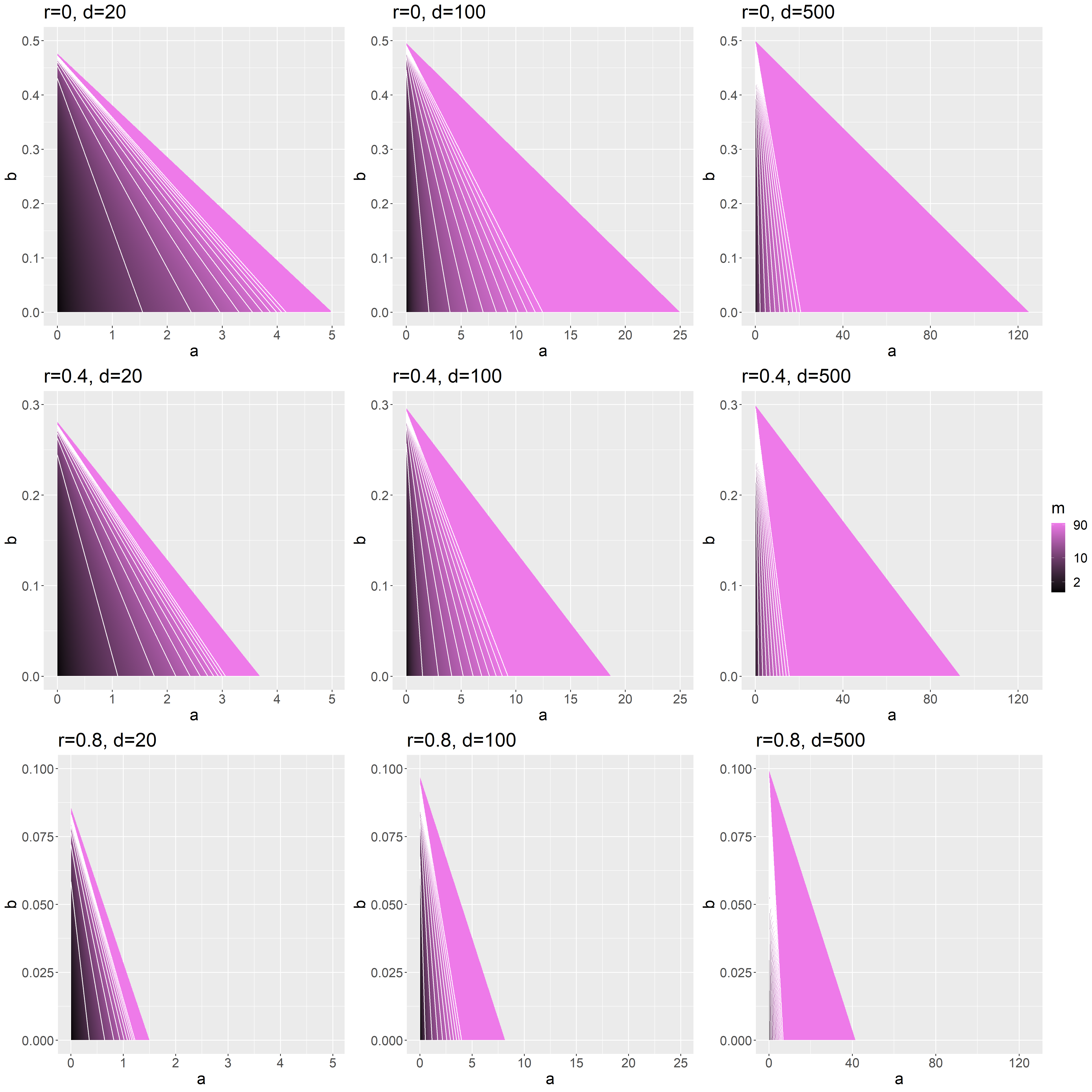}
\end{figure}
\end{example}

\subsection{Inference on high-dimensional parameters} \label{Inference_HD} 
The reason for considering $h\leq C$ in \cref{thm:CLT} lies entirely in the application of the central limit theorem. However, while inference on a finite set of parameters covers many cases of interest in practice, it does not allow for simultaneous inference on all parameters. We therefore next consider inference on a growing number of parameters (or hypotheses). We follow the approach pioneered by \cite{chernozhukov2013gaussian} to consider tests which can be formulated as a maximum over individual tests, and apply a high-dimensional CLT for the maximum of a random vector of increasing length. \cite{zhang2017gaussian} and \cite{zhang2018gaussian} provide such a CLT for high-dimensional time series, with serial dependence characterized through the functional dependence framework of \cite{Wu05}, while \cite{chernozhukov2019inference} derive a similar result under general $\beta$-mixing conditions.  In more recent work, \cite{chang2021central} derive a high-dimensional CLT for $\alpha$-mixing processes, that we base our result on. 
Recalling that a process which is NED on an $\alpha$-mixing process can be well-approximated by a mixing process, this mixing condition remains conceptually close to, if more stringent than, our NED framework.\footnote{Ideally one would directly have a high-dimensional CLT available for NED processes, such that it would directly fit to our assumptions. However, such a result is, to our knowledge, currently not available in the literature. While such a result would clearly be very interesting to obtain, this is left for future research given the intricacies needed to derive it.} We therefore build on their results to provide
distributional results for high-dimensional  inference
in \cref{thm:HDCLT}. 
While the core of the proof directly follows by applying the CLT of \cite{chang2021central}, one still needs to integrate this with the results from \cref{thm:LRVconsistency} on the consistency of the covariance matrix, as well as adapting the CLT to our estimators. We therefore believe it is worthwhile to state this as a formal result in \cref{thm:HDCLT}. Correspondingly, we now strengthen our assumptions as follows.
\begin{assumption}\label{ass:HDCLT}\hfill
\begin{enumerate}[label=(\roman*)]
    \item\label{ass:HDLCT:DGP} Let $\bz_t$ be uniformly $\alpha$-mixing with mixing coefficients satisfying $\alpha_T(q)\leq C_1\exp\left(-C_2q^K\right)$ for some $K>0$ and all $q\geq 1$.
    \item\label{ass:HDLCT:moments} Let there exist sequences $d_{u,T}$, $d_{v,T}$, $D_T=d_{u,T}d_{v,T}\geq 1$ such that $\norm{u_t}_{\psi_{2}}\leq d_{u,T},\ \norm{\bm^\prime \bv_t}_{\psi_{2}}\leq d_{v,T},\ \forall \bm\in\mathds{R}^N:\norm{\bm}_{1}\leq C,$ where  $\norm{x}_{\psi_{2}}:=\inf\left[c>0:\E\left\lbrace\exp\left[\left(x/c\right)^{2}\right]-1\right\rbrace\leq 1\right]$.
\end{enumerate}
\end{assumption}
\cref{ass:HDCLT}\ref{ass:HDLCT:DGP} 
implies \cref{ass:dgp}\ref{ass:dgpNED}. \cref{ass:dgp}\ref{ass:dgpNED} states that the NED process $\bz_t$ can be well-approximated by an $\alpha$-mixing process; clearly this holds when it is itself $\alpha$-mixing. More specifically, the sequence is NED on itself, such that \cref{ass:dgp}\ref{ass:dgpNED} is satisfied for any positive $d$.
Furthermore, the exponential decay of the $\alpha$-mixing coefficients is 
stricter than our restrictions on $\bs_{T,t}$. Similarly, the sub-gaussian moments in
\cref{ass:HDCLT}\ref{ass:HDLCT:moments} imply that all finite moments 
in \cref{ass:dgp}\ref{ass:dgpStationary} and \cref{ass:statandvmoments} exist, so $m$ may be arbitrarily large.

\begin{corollary}\label{thm:HDCLT}
Let \Cref{ass:dgp,ass:sparsity,ass:compatibility,ass:statandvmoments,ass:nodewise,ass:HDCLT} hold, and let $h\sim T^{\mcH}$ for $\mcH>0$,  $N\sim T^a$ for $a>0$, $s_{r,\max}\sim T^b$ for $0<b<\frac{1-r}{2}$, $Q_T\sim T^{\mathcal{Q}}$ for $0<\mathcal{Q}<2/3$ and  $\lambda_{\min}\sim\lambda_{\max}\sim\lambda\sim T^{-\ell}$ where
\begin{equation*}
\begin{split}
0<r<1:&\quad \frac{b+1/2}{2-r}<\ell<\frac{1/2-b}{r},\\
r=0:&\quad \frac{b+1/2}{2}<\ell<1/2.
\end{split}
\end{equation*}

Additionally, let the smallest eigenvalue of $\bOmega_{N,T}$ be bounded away from 0, and  $\frac{D_T^{2/3}(\ln T)^{(1+2K)/(3K)}}{T^{1/9}}+\frac{D_T(\ln T)^{7/6}}{T^{1/9}}\to 0$.
Then, for $1/C\leq\max\limits_{1\leq p\leq P}\norm{\br_{N,p}}_{1}\leq C$, $P\leq Ch$,
\begin{equation*}
\sup\limits_{z\in\mathds{R}, \bbeta^0 \in \bB_N(r,s_r)}\abs{\P\left(\max\limits_{1\leq p\leq P}\sqrt{T}\boldsymbol{r}_{N,p}\left(\hat{\bb}-\bbeta^0\right)\leq z\right)-\P^*\left(\max\limits_{1\leq p\leq P}\hat g_p\leq z\right)}=o_p(1),
\end{equation*}
where ${\hat\bg}$ is a $P$-dimensional vector which is distributed as $N(\bzero, \bR_N\hat\bUpsilon^{-2}\hat\bOmega\hat\bUpsilon^{-2}\bR_N^\prime)$ conditionally on the data, and $\P^*$ is the corresponding conditional probability.
\end{corollary}

Unlike \cref{cor:usefulResult}, \cref{thm:HDCLT} allows one to simultaneously test a growing number of hypotheses, while controlling for family-wise error rate, for example by the stepdown method described in Section 5 of \cite{chernozhukov2013gaussian}. One such test is an overall test of significance, with the null hypothesis $\bbeta^0=\bzero$; in this case $P=h=N$ and $\bR_N=\bI$. Note that although $\P\left(\max\limits_{1\leq p\leq P}\hat g_p\leq z\right)$ cannot be calculated analytically, it can easily be approximated with arbitrary accuracy by simulation.

Due to the stronger assumptions in \cref{thm:HDCLT}, we can relax the conditions on the growth rates of $N$ and $s_{r,\max}$ compared to \cref{cor:usefulResult} and \cref{cor:simplifiedRates}. In particular, the size of $a$ and $\mathcal{H}$ are not restricted, meaning that $N$ and $h$ can grow at an arbitrarily large polynomial rate of $T$. The conditions on $s_{r,\max}$ can also be relaxed so it can grow up to a rate of $\sqrt{T}$, depending on $r$. This corresponds to our analysis in \cref{cor:simplifiedRates} when we let $m$ and $d$ tend to infinity.

\section{Analysis of Finite-Sample Performance}\label{sec:simulations}
We analyze the finite sample performance of the desparsified lasso by means of simulations. 
We start by discussing tuning parameter selection in \Cref{subsec:lambdaselection}. We then discuss
three simulation settings:  a high-dimensional autoregressive model with exogenous variables (in \Cref{subsec:blockdiag}),  a factor model (in \Cref{subsec:factor}), and a weakly sparse VAR model (in  \Cref{subsec:var}). In \Cref{subsec:blockdiag} and \Cref{subsec:factor}, we compute coverage rates of confidence intervals for single hypothesis tests. In \Cref{subsec:var}, we perform a multiple hypothesis test for Granger causality. 

\subsection{Tuning parameter selection \label{subsec:lambdaselection}}
While the previous sections give some theoretical restrictions on the tuning parameter choice, these results cannot be used in practice since its value depends on properties of the underlying model that are unobservable. In this section, we provide a feasible recommendation to select the tuning parameters (in both the original regression and nodewise regressions) in a data-driven way.

In particular, we adapt the iterative plug-in procedure (PI) used in, for instance, \cite{BCCH12,BCH14,belloni2017program}  
to a time series setting.
We build on the theoretical relation between the tuning parameter and the empirical process in \cref{thm:ourContribution},
namely the restriction that 
$\frac{1}{T}\norm{\bX^\prime \bu}_{\infty}\leq C\lambda$ 
needs to hold with high probability, to guide the choice of $\lambda$. For large $N$ and $T$, 
$\frac{1}{T}\norm{\bX^\prime \bu}_{\infty}$ can be approximated by the maximum over an $N$-dimensional multivariate Gaussian distribution with covariance matrix $\Omega_{N,T}^{(\mathcal{E})} = \E \left[ \frac{1}{T} \bX^\prime \bu \bu^\prime \bX \right]$.\footnote{Under minimal extra assumptions (sub-Gaussian moments for $\bx_t$, and minimum eigenvalue of the long-run covariance matrix bounded away from 0),
\cref{thm:HDCLT} substantiates the validity of this approximation.} One may therefore approximate its quantiles by simulating from a multivariate Gaussian with covariance matrix a consistent estimate $\hat{\Omega}^{(\mathcal{E})}$ of $\Omega_{N,T}^{(\mathcal{E})}$.

Our time series setting requires the usage of a consistent long-run variance estimator, which is provided by \cref{thm:LRVconsistency}. We therefore take $\hat{\Omega}^{(\mathcal{E})}$ as in \cref{eq:LRC} with $\hat\bXi^{(\mathcal{E})}(l)=\frac{1}{T-l}\sum\limits_{t=l+1}^T \bx_t\hat u_t\hat u_{t-l}\bx_{t-l}^\prime$. We set the number of lags in the long-run covariance estimator as the automatic bandwidth estimator in \cite{andrews1991heteroskedasticity}, specifically $Q_T=\left\lceil1.1447(\hat\alpha(1)T)^{1/3}\right\rceil$, with $\hat\alpha(1)$ computed based on an AR(1) model, as detailed in eq. (6.4) therein. As the estimates $\hat{u}_t$ require a choice of $\lambda$, we iterate the algorithm until the chosen $\lambda$ converges. Full details are provided in Supplementary Appendix C.5. Throughout all simulations, the lasso estimates are obtained through the coordinate descent algorithm  \citep{GLMnet} applied to standardized data.

\begin{remark}
We opt to only base our empirical choice for $\lambda$ on its relation to the empirical process and hence the set $\mathcal{E}_T(\cdot)$ in \cref{thm:ourContribution}, not on its relation to the set $\mathcal{CC}_{S_\lambda}$ which also implies a lower hound $\lambda$. The latter bound, however, requires one to approximate $\Vert\hat\bSigma-\bSigma\Vert_{\max}$ which is considerably more difficult as it cannot be approximated by plugging in estimated quantities directly. With eigenvalue assumptions typically stated in terms of the sample rather than the population, this kind of additional restriction may be avoided, but such assumptions often still need to be justified by showing that the sample covariance matrix is close to the population matrix. As the additional bound only appears under weak sparsity ($r>0$), it can also be avoided by assuming exact sparsity. However, given that weak sparsity may often be the more relevant concept in practice, it may well be that the extra restriction on $\lambda$ from  bounding $\Vert\hat\bSigma-\bSigma\Vert_{\max}$ is relevant beyond our paper. Investigating ways to incorporate this in the tuning parameter selection therefore seems an interesting avenue for future research.
\end{remark}

\subsection{Autoregressive model with exogenous variables} \label{subsec:blockdiag}
Inspired by the simulation studies in \cite{KockCallot2015} (Experiment B) and \cite{MedeirosMendes16}, we take the following DGP
\begin{equation*}\begin{split}
y_t&=\rho y_{t-1}+\boldsymbol{\beta}'\boldsymbol{x}_{t-1}+u_t,\qquad
\boldsymbol{x}_t=\boldsymbol{A}_1\boldsymbol{x}_{t-1}+\boldsymbol{A}_4\boldsymbol{x}_{t-4}+\boldsymbol{\nu}_t,
\end{split}\end{equation*}
where $\boldsymbol{x}_t$ is a $(N-1)\times 1$ vector of exogenous variables.
In this simulation design (and the following ones), we consider different values of the time series length $T=\left\lbrace100,200,500,1000\right\rbrace$ and number of regressors $N=\left\lbrace101,201,501,1001\right\rbrace$. 
For this data generating process,
we take $\rho=0.6$, $\beta_j=\frac{1}{\sqrt{s}}(-1)^j$ for $j=1,\dots,s$, and zero otherwise.  For $N=101,201$ we set $s=5$ and $s=10$ for $N=501,1001$.  The autoregressive parameter matrices $\boldsymbol{A}_1$ and $\boldsymbol{A}_4$ are block-diagonal with each block of dimension $5\times5$. Within each matrix, all blocks are identical with typical elements of 0.15 and -0.1 for $\boldsymbol{A}_1$ and $\boldsymbol{A}_4$ respectively.  
Due to the misspecification of nodewise regressions, there is induced autocorrelation in the nodewise errors $v_{j,t}$.
However, the block diagonal structure of $\boldsymbol{A}_1$ and $\boldsymbol{A}_4$ keeps the sparsity of nodewise regressions constant asymptotically. 

We consider different processes for the error terms $u_t$ and $\boldsymbol{\nu}_t$:
\begin{enumerate}[label=(\roman*)]
    \item[(A)] IID errors: 
    $
    (u_t,\boldsymbol{\nu}_t^\prime)^\prime\sim\ IID\ N(\boldsymbol{0},I)$. Since all moments of the Normal distribution are finite, all moment conditions are satisfied. 
    \item[(B)] GARCH(1,1) errors: $u_t=\sqrt{h_t}\varepsilon_t,\ h_t=5\times10^{-4}+0.9h_{t-1}+0.05u_{t-1}^2,\ \varepsilon_t\sim IID\ N(0,1)$, 
    $\nu_{j,t}\sim u_t$ for $j=1,\dots,N-1$. Under this choice of GARCH parameters, not all moments of $u_t$ are guaranteed to exist, but $\E\left[ u_t^{24}\right]<\infty$.
    \item[(C)] Correlated errors: $\boldsymbol{\nu}_t\sim IID\  N(\bzero,\boldsymbol{S})$, where $\boldsymbol{S}$ has a Toeplitz structure $S_{j,k}=(-1)^{\abs{j-k}}\rho^{\abs{j-k}+1}$, with $\rho=0.4.$
\end{enumerate}
For all designs,
we evaluate whether the 95\% confidence intervals
corresponding to $\rho$ and $\beta_1$ cover their true values at the correct rates.  
The intervals are constructed as $\left[\hat{\rho}\pm z_{0.025}\sqrt{\frac{\hat{\omega}_{1,1}/\hat{\tau}_1^4}{T}}\right]$ and $\left[\hat{\beta}_1\pm z_{0.025}\sqrt{\frac{\hat{\omega}_{2,2}/\hat{\tau}_2^4}{T}}\right]$.
These results are obtained based on 2,000 replications. The rates at which the intervals contain the true values are reported in \Cref{tab:HDR}.

\begin{table}[t]
\caption{Autoregressive model with exogenous variables: 95\% confidence interval coverage. The mean interval widths are reported in parentheses.
}
\label{tab:HDR}
\centering
\begin{tabular}{@{\extracolsep{3pt}}cccccccccc@{}}
\hline
\hline
 & & \multicolumn{4}{c}{$\rho$} & \multicolumn{4}{c}{$\beta_1$}\\
 \cline{3-6}\cline{7-10}
\small{Model} & $N\backslash T$ & 100 & 200 & 500 & \multicolumn{1}{c}{1000} & 100 & 200 & 500 & 1000\\ 
\cline{1-2}\cline{3-6}\cline{7-10} 
\multirow{6}{*}{A} & 101 & $\underset{(0.366)}{0.958}$ & $\underset{(0.220)}{0.953}$ & $\underset{(0.113)}{0.951}$ & $\underset{(0.070)}{0.948}$ & $\underset{(0.383)}{0.809}$ & $\underset{(0.257)}{0.731}$ & $\underset{(0.152)}{0.751}$ & $\underset{(0.102)}{0.843}$ \\ 
 & 201 & $\underset{(0.387)}{0.965}$ & $\underset{(0.224)}{0.955}$ & $\underset{(0.116)}{0.959}$ & $\underset{(0.071)}{0.955}$ & $\underset{(0.388)}{0.790}$ & $\underset{(0.258)}{0.720}$ & $\underset{(0.154)}{0.721}$ & $\underset{(0.103)}{0.802}$ \\  
 &  501 & $\underset{(0.418)}{0.937}$ & $\underset{(0.238)}{0.950}$ & $\underset{(0.129)}{0.955}$ & $\underset{(0.081)}{0.952}$ & $\underset{(0.399)}{0.850}$ & $\underset{(0.260)}{0.786}$ & $\underset{(0.165)}{0.773}$ & $\underset{(0.113)}{0.770}$ \\ 
 & 1001 & $\underset{(0.429)}{0.936}$ & $\underset{(0.244)}{0.950}$ & $\underset{(0.130)}{0.944}$ & $\underset{(0.083)}{0.946}$ & $\underset{(0.388)}{0.819}$ & $\underset{(0.260)}{0.777}$ & $\underset{(0.164)}{0.780}$ & $\underset{(0.114)}{0.821}$ \\
\cline{1-2}\cline{3-6}\cline{7-10} 
\multirow{6}{*}{B} & 101 & $\underset{(0.374)}{0.961}$ & $\underset{(0.219)}{0.957}$ & $\underset{(0.115)}{0.953}$ & $\underset{(0.071)}{0.941}$ & $\underset{(0.390)}{0.797}$ & $\underset{(0.261)}{0.735}$ & $\underset{(0.153)}{0.760}$ & $\underset{(0.102)}{0.839}$ \\
 & 201 & $\underset{(0.387)}{0.949}$ & $\underset{(0.227)}{0.959}$ & $\underset{(0.117)}{0.954}$ & $\underset{(0.073)}{0.959}$ & $\underset{(0.398)}{0.810}$ & $\underset{(0.260)}{0.726}$ & $\underset{(0.156)}{0.721}$ & $\underset{(0.103)}{0.817}$ \\  
 & 501 & $\underset{(0.425)}{0.951}$ & $\underset{(0.241)}{0.960}$ & $\underset{(0.130)}{0.953}$ & $\underset{(0.082)}{0.954}$ & $\underset{(0.400)}{0.838}$ & $\underset{(0.263)}{0.796}$ & $\underset{(0.165)}{0.759}$ & $\underset{(0.114)}{0.775}$ \\  
 & 1001 & $\underset{(0.434)}{0.937}$ & $\underset{(0.246)}{0.960}$ & $\underset{(0.131)}{0.947}$ & $\underset{(0.084)}{0.942}$ & $\underset{(0.394)}{0.820}$ & $\underset{(0.261)}{0.787}$ & $\underset{(0.165)}{0.769}$ & $\underset{(0.115)}{0.806}$ \\   
 \cline{1-2}\cline{3-6}\cline{7-10} 
\multirow{6}{*}{C} & 101 & $\underset{(0.410)}{0.964}$ & $\underset{(0.231)}{0.960}$ & $\underset{(0.121)}{0.956}$ & $\underset{(0.080)}{0.943}$ & $\underset{(0.628)}{0.936}$ & $\underset{(0.394)}{0.887}$ & $\underset{(0.232)}{0.902}$ & $\underset{(0.166)}{0.911}$ \\
 & 201 & $\underset{(0.421)}{0.975}$ & $\underset{(0.239)}{0.965}$ & $\underset{(0.123)}{0.968}$ & $\underset{(0.081)}{0.964}$ & $\underset{(0.646)}{0.917}$ & $\underset{(0.398)}{0.899}$ & $\underset{(0.233)}{0.901}$ & $\underset{(0.166)}{0.900}$ \\ 
 & 501 & $\underset{(0.457)}{0.969}$ & $\underset{(0.260)}{0.965}$ & $\underset{(0.129)}{0.951}$ & $\underset{(0.081)}{0.948}$ & $\underset{(0.665)}{0.950}$ & $\underset{(0.420)}{0.935}$ & $\underset{(0.243)}{0.892}$ & $\underset{(0.168)}{0.903}$ \\ 
 &1001 & $\underset{(0.475)}{0.974}$ & $\underset{(0.265)}{0.960}$ & $\underset{(0.132)}{0.957}$ & $\underset{(0.082)}{0.960}$ & $\underset{(0.669)}{0.947}$ & $\underset{(0.421)}{0.938}$ & $\underset{(0.244)}{0.895}$ & $\underset{(0.168)}{0.894}$ \\ 
\hline 
\end{tabular}
\end{table}

We start by discussing the results for the model with Gaussian errors (Model A). 
Coverage for $\rho$ is close to the nominal level of 95\%  for all combinations of $N$ and $T$, with some combinations producing slightly  conservative results.
The coverage rates for $\beta_1$ are worse than for $\rho$.
This is likely due to the fact that the exogenous variables $\bx_t$ within the same block are strongly correlated to each other which negatively impacts the performance of the lasso.

Turning to the results for the model with GARCH errors (Model B), 
similar finite sample coverage rates are obtained. 
We do see a small increase in the mean interval width, which is to be expected given the heteroskedastic error structure. 
With correlated errors (Model C), we  again observe  consistent coverage rates near the nominal level for $\rho$. Interestingly, the coverage rates for $\beta_1$ appear 
considerably better than in Models A and B, though in most cases still remaining below the nominal rate at around 90\%. 
We 
also observe higher mean interval widths than Model A, which is due to larger variance of $\bx_t$ induced by the cross-sectional covariance of the errors.

In Supplementary Appendix C.6 we provide details on an examination of various selection methods for tuning parameters through heat maps for the coverage levels, which also shed some further light on the relatively poor performance for $\beta_1$ compared to $\rho$ visible for models A and B. In addition to selection by our PI method, we indicate selection by the BIC, the AIC, and the EBIC as in \cite{chen2012extended}, with $\gamma=1$.\footnote{For additional stability in the high-dimensional settings, we restrict the BIC, AIC, and EBIC to only select models with at most $T/2$ nonzero parameters, though this restriction appears to  be binding for the AIC only.} We summarize the main findings below. First, notice that there are regions with coverage close to the nominal level in nearly all scenarios and combinations of $N$ and $T$, suggesting that good coverage could be achieved by selecting the tuning parameters well. Second, across all scenarios, PI generally tends to result in coverage rates closest to the nominal coverage of 95\%. As expected, the AIC produces, overall, the least sparse solutions, the EBIC the sparsest and BIC lies in between. PI lies mostly between the BIC and EBIC. Third, there is a region of relatively low coverage for large values of the tuning parameter in the initial and nodewise regressions (see the top right corner of the heat maps). This occurs more pronouncedly for $\beta_1$ than for $\rho$ and especially for $T=1000$. Since PI tends to select near this region, it partly explains why its coverage is worse for $\beta_1$. The relatively better coverage of $\beta_1$ in Model C is matched by this region being much less prominent. Given that the regions of good coverage are in different places for $\rho$ and $\beta_1$, using the BIC or EBIC for generally smaller or larger $\lambda$ would not lead to consistently better coverage across scenarios.\footnote{To confirm this analysis, we also performed the simulations results for all three setups using selection of $\lambda$ by BIC (the best performing information criterion); in line with the heat maps, the coverage rates for BIC are generally somewhat worse than for PI. Results are available upon request.}

\subsection{Factor model} \label{subsec:factor}
We take the following factor model
\begin{equation*}\begin{split}
y_t&=\boldsymbol{\beta}'\boldsymbol{x}_{t}+u_t,\ u_t\sim IID\ N(0,1)\\
\boldsymbol{x}_t&=\bLambda f_t+\boldsymbol{\nu}_t,\ \boldsymbol{\nu}_t\sim IID\ N(\boldsymbol{0},\boldsymbol{I}),\qquad
f_t =0.5f_{t-1}+\varepsilon_t,\ \varepsilon_t\sim IID\ N(0,1),
\end{split}\end{equation*}
where $\boldsymbol{x}_t$ is a $N\times 1$ vector generated by the AR(1) factor $f_t$.
We take  $\boldsymbol{\beta}$ as in \Cref{subsec:blockdiag} with $s$ increased by one to match the number of non-zero parameters. The $N \times 1$ vector of factor loadings $\boldsymbol{\Lambda}$ is chosen with the first $s$ entries (corresponding to the variables with non-zero entries in $\bbeta$) set to 0.5, and the remaining entries $\Lambda_i=(i-s+1)^{-1}$. This choice of weakly sparse factor loadings ensures that the nodewise regressions are weakly sparse too, as shown in \Cref{ex:SparseFactor}. By letting the large loadings coincide with the non-zero entries in $\bbeta$, we ensure that there is a large potential for incurring (omitted variable) bias in the estimates, and thus that this DGP provides a serious test for the desparsified lasso.  

We investigate whether the confidence interval for $\beta_1$, $\left[\hat{\beta}_1\pm z_{0.025}\sqrt{\frac{\hat{\omega}_{1,1}/\hat{\tau}_2^4}{T}}\right]$, covers the true value at the correct rate. Results are reported in \Cref{tab:factor}.
Coverage rates 
improve with growing values of $N$ and $T$,  
with empirical coverages of approximately 85\% for small $N$ and $T$, and increasing towards the nominal level when either $N$ or $T$ increases. 
This result is therefore in line with our theoretical framework, and provides a relevant practical setting in which the desparsified lasso is appropriate to use even if exact sparsity is not present.

\begin{table}
\centering
\caption{Factor model: 95\% confidence interval coverage for $\beta_1$. The mean interval widths are reported in parentheses. 
}
\label{tab:factor}
\begin{tabular}{@{\extracolsep{3pt}}ccccc@{}}
\hline
\hline
$N\backslash T$ & 100 & 200 & 500 & 1000\\ 
\cline{1-1}\cline{2-5} 
 101 & $\underset{(0.480)}{0.890}$ & $\underset{(0.299)}{0.851}$ & $\underset{(0.163)}{0.889}$ & $\underset{(0.112)}{0.907}$ \\ 
  201 & $\underset{(0.490)}{0.873}$ & $\underset{(0.307)}{0.849}$ & $\underset{(0.165)}{0.879}$ & $\underset{(0.112)}{0.897}$ \\ 
  501 & $\underset{(0.489)}{0.956}$ & $\underset{(0.327)}{0.940}$ & $\underset{(0.180)}{0.890}$ & $\underset{(0.117)}{0.910}$ \\ 
  1001 & $\underset{(0.498)}{0.951}$ & $\underset{(0.331)}{0.943}$ & $\underset{(0.184)}{0.881}$ & $\underset{(0.117)}{0.896}$ \\ 
\cline{1-5} 
\end{tabular}
\end{table}

\subsection{Weakly sparse VAR(1)}\label{subsec:var}
Inspired by  \cite{KockCallot2015} (Experiment D), we consider the VAR(1) model
\begin{equation*}\begin{split}
\boldsymbol{z}_t= (y_t, x_{t}, \boldsymbol{w}_t)^\prime =\boldsymbol{A}_1 \boldsymbol{z}_{t-1}+\boldsymbol{u}_t,\qquad \boldsymbol{u}_t\sim IID\ N(0,1),
\end{split}\end{equation*}
with $\boldsymbol{z}_{t}$  a $(N/2)\times 1$ vector. We focus on testing whether $x_t$ Granger causes $y_t$ by fitting a a VAR(2) model, such that we have a total of $N$ explanatory variables per equation.
The $(j,k)$-th element of the autoregressive matrix $A_1^{(j,k)}=(-1)^{\vert j-k\vert}\rho^{\vert j-k\vert+1}$, with $\rho=0.4$. 
To measure the size of the test, we set $A_1^{(1,2)}=0$; to measure the power of the test, we keep its regular value of $-\rho^2$. Weak sparsity holds\footnote{The weak sparsity measure is $\sum\limits_{j=1}^{N}\vert\rho^{j}\vert^{r}$ with asymptotic limit $\frac{\rho^{r}}{1-\rho^{r}}<\infty$, trivially satisfying $B=0$.} under our choice of the autoregressive parameters, but exact sparsity is violated by having half of the parameters non-zero. 
Note that the desparsified lasso is convenient for estimating the full VAR equation-by-equation, since all equations share the same regressors, and $\hat{\boldsymbol\Theta}$ needs to be computed only once. For our Granger causality test, however, only a single equation needs to be estimated.

We test whether $x_t$ Granger causes $y_t$ by regressing $y_t$ on the first and second lag of $\boldsymbol{z}_t$. To this end, we test the null hypothesis $A^{(1,2)}_{1}=A^{(1,2)}_2=0$ by using the Wald test statistic in \cref{eq:Wald}, with $\hat{\boldsymbol{b}}_H=\left(0,\hat{A}^{(1,2)}_{1},0\dots0,\hat{A}^{(1,2)}_{2},0\dots0\right)'$, $H=\left\lbrace2,N/2+1\right\rbrace$, and $\hat{A}^{(1,2)}_{1}$, $\hat{A}^{(1,2)}_{2}$ obtained by regressing $y_t$ on $\left(\boldsymbol{z}_{t-1}',\boldsymbol{z}_{t-2}'\right)'$. We reject the null hypothesis when the statistic exceeds $\chi^2_{2,0.05}\approx5.99$.

\begin{table}[t]
\caption{Weakly sparse VAR: Joint test rejection rates for a nominal size of $\alpha=5\%$.}
\label{tab:VAR}
\centering
\begin{tabular}{@{\extracolsep{3pt}}ccccccccc@{}}
\hline\hline
& \multicolumn{4}{c}{Size} & \multicolumn{4}{c}{Power}\\
\cline{2-5}\cline{6-9}
$N\backslash T$  & 100 & 200 & 500 & 1000  & 100 & 200 & 500 & 1000 \\
\cline{1-1}\cline{2-5}\cline{6-9}
102 & 0.050 & 0.070 & 0.070 & 0.073 & 0.415 & 0.751 & 0.982 & 1.000 \\ 
  202 & 0.062 & 0.075 & 0.081 & 0.078 & 0.411 & 0.775 & 0.987 & 1.000 \\ 
  502 & 0.051 & 0.067 & 0.106 & 0.076 & 0.401 & 0.776 & 0.990 & 1.000 \\ 
  1002 & 0.059 & 0.083 & 0.101 & 0.091 & 0.407 & 0.769 & 0.995 & 1.000 \\ 
\hline
\end{tabular}
\end{table}

We start by discussing the size of the test in \Cref{tab:VAR}.
Overall, the empirical sizes exceed the nominal size of 5\%, with performance generally not improving for larger sample sizes.
In particular,
rejection rates 
slightly deteriorate
for larger $N$.
However, the observed changes in performance across $N$ and $T$ are rather small
and may be due to simulation randomness. 
The power of the test 
increases
with both $N$ and $T$, reaching 
1 at $T=1000$ 
regardless of the value for $N$.

To improve the finite-sample performance of the method, a natural extension would be to consider the bootstrap for constructing confidence
intervals as opposed to 
asymptotic theory. Bootstrap-based inference 
for desparsified lasso methods in high dimensions has already been explored by several authors, for example \cite{dezeure2017high} in the IID setting, and 
in time series by \cite{Krampe18}, \cite{chernozhukov2019inference} and \cite{chernozhukov2021timeandspace}. 
In particular, block or block multiplier bootstrap methods, which would allow one to capture serial dependence nonparametrically, would fit our setup well. The block bootstrap has the additional advantage of correcting the finite-sample performance of statistics based on long-run variance estimators, which might be a factor for our tests as well \citep{GoncalvesVogelsang11}. However, due to the lack of theory about such bootstrap methods, and the associated selection of tuning parameters like the block length, for high-dimensional NED processes, we do not consider such methods here. The development of such theory would be a highly relevant and
interesting topic for future research.

\section{Conclusion}\label{sec:conclusion} 
We provide a complete set of tools for uniformly valid inference in high-dimensional stationary time series settings, where the number of regressors $N$ can possibly grow at a faster rate than the time dimension $T$.
Our main results include
(i) an error bound for the lasso under a weak sparsity assumption on the parameter vector, thereby establishing parameter and prediction consistency;
(ii) the asymptotic normality of the desparsified lasso under a general set of conditions,  
leading to uniformly valid inference for finite subsets of parameters; 
(iii) asymptotic normality of a maximum-type statistic of a growing, high-dimensional, number of tests, valid under more stringent conditions, thereby also permitting simultaneous inference over a potentially large number of parameters, and  
(iv) a consistent Bartlett kernel Newey-West long-run covariance estimator to conduct inference in practice.

These results are established under very general conditions, thereby allowing for typical settings encountered in many econometric applications where the errors may be non-Gaussian, autocorrelated, heteroskedastic and weakly dependent.  
Crucially, this allows for certain types of misspecified time series models, such as omitted lags in an AR model. 

Through a small simulation study, we examine the finite sample performance of the desparsified lasso in popular types of time series models. 
We perform both single and joint hypothesis tests and examine the desparsified lasso's robustness to, amongst others, regressors and error terms exhibiting serial dependence and conditional heteroskedasticity, and a violation of the sparsity assumption in the nodewise regressions. 
Overall our results show that good coverage rates are obtained even when $N$ and $T$ increase jointly. 
The factor model design shows that the desparsified lasso remains applicable when the exact sparsity assumption of the nodewise regressions is violated.
Finally, Granger causality tests in the VAR are slightly oversized, but empirical sizes generally remain close to the nominal sizes, and the test's power increases with both $N$ and $T$.

There are several extensions to our approach that are interesting 
to consider. The development of a high-dimensional central limit theorem for NED processes would allow to weaken the dependence conditions needed for establishing simultaneous, high-dimensional inference. Similarly, using sample splitting would likely allow for weakening sparsity assumptions. Finally, improvements in finite sample performance may be achieved by bootstrap procedures. All of these extensions would require the development of novel theory, and thus provide challenging but worthwhile avenues for future research.

\section*{Acknowledgements}
We thank the editor, associate editor and three referees for their thorough review and highly appreciate their constructive comments which substantially improved the quality of the manuscript.

The first and second author were financially supported by the Netherlands Organization for Scientific Research (NWO) under grant number 452-17-010. The third author was supported by the European Union's Horizon 2020 research and innovation programme under the Marie Sk\l{}odowska-Curie grant agreement  No 832671. Previous versions of this paper were presented at CFE-CM Statistics 2019, NESG 2020,  Bernoulli-IMS One World Symposium 2020, (EC)$^2$ 2020, and the 2021 Maastricht Workshop on Dimensionality Reduction and Inference in High-Dimensional Time Series. We gratefully acknowledge the comments by participants at these conferences. In addition, we thank Etienne Wijler for helpful discussions. All remaining errors are our own.

\bibliographystyle{chicago}
\bibliography{bibliography}

\numberwithin{lemma}{section}
\numberwithin{equation}{section}
\numberwithin{example}{section}
\numberwithin{definition}{section}

\begin{appendices}

\section{Proofs for Section \ref{sec:Lasso}}\label{app:A1}
This section provides the theory for the lasso consistency established in \Cref{sec:Lasso}. 
We first provide some 
definitions in Appendix \ref{sec:definitions} and
preliminary lemmas in Appendix \ref{sec:prelSec3} which are proved in Supplementary Appendix C.1. The proofs of the main results are then provided in Appendix \ref{sec:mainSec3}.

\subsection{Definitions}\label{sec:definitions}
\begin{definition}[Near-Epoch Dependence, \cite{Davidson02}, ch. 17]\label{def:NED}
Let there exist non-negative NED constants $\{c_t\}_{t=-\infty}^{\infty}$, an NED sequence $\{\psi_q\}_{q=0}^{\infty}$ such that $\psi_q\to0$ as $q\to\infty$, and a (possibly vector-valued) stochastic sequence $\{\bs_{t}\}_{t=-\infty}^{\infty}$ with $\mathcal{F}_{t-l-q}^{t-l+q}=\sigma\{\bs_{t-q},\dots,\bs_{t+q}\}$, such that $\{\mathcal{F}_{t-l-q}^{t-l+q}\}_{q=0}^{\infty}$ is an increasing sequence of $\sigma$-fields. For $p>0$, the random variable $\{X_t\}_{t=-\infty}^{\infty}$ is $L_p$-NED on $\bs_t$ if
\begin{equation*}
    \left(\E\left[\abs{X_t-\E\left(X_t\vert \mathcal{F}_{t-l-q}^{t-l+q}\right)}^p\right]\right)^{1/p}\leq c_t\psi_q.
\end{equation*}
for all $t$ and $q\geq0$. Furthermore, we say $\{X_t\}$ is $L_p$-NED of size $-d$ on $\bs_t$ if $\psi_q=O(q^{-d-\varepsilon})$ for some $\varepsilon>0$.
\end{definition}
\begin{definition}[Mixingale, \cite{Davidson02}, ch. 16]\label{def:mixingale}
Let there exist non-negative mixingale constants $\{c_t\}_{t=-\infty}^{\infty}$ and mixingale sequence $\{\psi_q\}_{q=0}^{\infty}$ such that $\psi_q\to0$ as $q\to\infty$. For $p\geq1$, the random variable $\{X_t\}_{t=-\infty}^{\infty}$ is an $L_p$-mixingale with respect to the $\sigma$-algebra $\{\mathcal{F}_{t}\}_{t=-\infty}^{\infty}$ if
\begin{equation*}
    \left(\E\left[\abs{\E\left(X_t\vert \mathcal{F}_{t-q}\right)}^p\right]\right)^{1/p}\leq c_t\psi_q,
\end{equation*}
\begin{equation*}
    \left(\E\left[\abs{X_t-\E\left(X_t\vert \mathcal{F}_{t+q}\right)}^p\right]\right)^{1/p}\leq c_t\psi_q,
\end{equation*}
for all $t$ and $q\geq0$. Furthermore, we say $\{X_t\}$ is an $L_p$-mixingale of size $-d$ with respect to $\{\mathcal{F}_{t}\}$ if $\psi_q=O(q^{-d-\varepsilon})$ for some $\varepsilon>0$. Note that the latter condition holds automatically when $X_t$ is $\mathcal{F}_t$-measurable, as is the case in this paper. We 
use the same notation for the constants $c_t$ and sequence $\psi_q$ as with near-epoch dependence, since they play the same role in both types of dependence.
\end{definition}
\subsection{Preliminary results} \label{sec:prelSec3}

\begin{lemma}\label{lma:NEDimpliesdgp} Under \cref{ass:dgp}, for every $j=1,\ldots,N$,
$ \left\lbrace u_{t}x_{j,t}\right\rbrace$ is an $L_m$-Mixingale 
with respect to 
$\mathcal{F}_{t}=\sigma\left\lbrace \boldsymbol{z}_{t},\boldsymbol{z}_{t-1},\dots\right\rbrace$, with non-negative mixingale constants $c_t \leq C$ and sequence $\psi_{q}$ satisfying $\sum\limits_{q=1}^\infty \psi_q <\infty$. 
\end{lemma}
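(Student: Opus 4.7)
The plan is to proceed in two stages: first establish that the product $u_t x_{j,t}$ inherits a (slightly weakened) NED property from the component NED assumptions on $\boldsymbol{z}_t$, and then convert this NED property on the $\alpha$-mixing base into a mixingale property on $\mathcal{F}_t$.

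For the NED step, Assumption \ref{ass:dgp}\ref{ass:dgpNED} gives that both $u_t$ and $x_{j,t}$ are $L_{2m}$-NED of size $-1$ on $\boldsymbol{s}_{T,t}$, while Assumption \ref{ass:dgp}\ref{ass:dgpStationary} supplies uniformly bounded $2(m+c)$-moments. Letting $\tilde{u}_t$ and $\tilde{x}_{j,t}$ denote the $q$-step near-epoch projections, the identity
\begin{equation*}
u_t x_{j,t} - \tilde{u}_t \tilde{x}_{j,t} = (u_t - \tilde{u}_t)\, x_{j,t} + \tilde{u}_t\, (x_{j,t} - \tilde{x}_{j,t})
\end{equation*}
combined with H\"{o}lder's inequality (applying $L_{2m}$ on the NED-deviation factor and $L_{2(m+c)}$ on the remaining factor) shows that $u_t x_{j,t}$ is $L_r$-NED on $\boldsymbol{s}_{T,t}$ of size $-1$ for some $r > m$, and in particular is $L_m$-NED of size $-1$. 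This is essentially the standard product rule for NED processes (see, e.g., Theorem 17.9 of Davidson, 1994).

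For the mixingale step, Assumption \ref{ass:dgp}\ref{ass:dgpNED} further gives that $\boldsymbol{s}_{T,t}$ is $\alpha$-mixing of size $-m(m+c)/c$. Applying the standard NED-on-mixing conversion result (e.g., Theorem 17.5 of Davidson, 1994) with moment index $m+c>m$ yields
\begin{equation*}
\psi_q \lesssim \nu_q + \alpha(q)^{1/m - 1/(m+c)} \, \lVert u_t x_{j,t}\rVert_{m+c},
\end{equation*}
where $\nu_q$ is the NED coefficient from the previous step. The NED piece is of size $-1$ by Step 1, while the mixing piece is of size $\tfrac{m(m+c)}{c}\cdot\!\left(\tfrac{1}{m} - \tfrac{1}{m+c}\right) = 1$, giving $\psi_q = O(q^{-1-\varepsilon})$ for some $\varepsilon>0$ and hence $\sum_{q=1}^\infty \psi_q < \infty$. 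The uniform bound $c_t \leq C$ on the mixingale constants follows from Cauchy-Schwarz and Assumption \ref{ass:dgp}\ref{ass:dgpStationary}, $\lVert u_t x_{j,t}\rVert_m \leq \lVert u_t\rVert_{2m}\lVert x_{j,t}\rVert_{2m} \leq C$, with weak stationarity removing dependence on $t$.

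The main obstacle is bookkeeping the moment exponents in the H\"{o}lder step so that the $L_{2m}$-NED property combined with the $L_{2(m+c)}$ moment bound yields a product whose moments exceed $m$ by a strict margin, since the subsequent mixingale conversion relies on exactly such a margin to convert mixing decay into mixingale decay. The slack $c>0$ in Assumption \ref{ass:dgp} is precisely calibrated to provide this margin and to make the arithmetic of sizes come out to $-1$, so once the indices are matched the remaining computations are a mechanical application of standard NED--mixingale machinery.
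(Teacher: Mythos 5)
Your proposal is correct and follows essentially the same route as the paper: Cauchy--Schwarz on the $L_{2(m+c)}$ moment bounds for the mixingale constants, the NED product rule (Davidson's Theorem 17.9, whose H\"older decomposition you spell out) to get that $\{u_t x_{j,t}\}$ is $L_m$-NED of size $-1$, and the NED-to-mixingale conversion (Theorem 17.5) with the identical size arithmetic $\frac{m(m+c)}{c}\left(\frac{1}{m}-\frac{1}{m+c}\right)=1$, yielding $\psi_q=O(q^{-1-\varepsilon})$ and hence summability. The one step you leave implicit is that Theorem 17.5 delivers the mixingale with respect to $\mathcal{F}^{\boldsymbol{s}}_t=\sigma\left\lbrace\boldsymbol{s}_{T,t},\boldsymbol{s}_{T,t-1},\dots\right\rbrace$ rather than $\mathcal{F}_t$, so you still need the inclusion $\mathcal{F}_t\subset\mathcal{F}^{\boldsymbol{s}}_t$ (from the $\mathcal{F}^{\boldsymbol{s}}_t$-measurability of $\boldsymbol{z}_t$ in \cref{ass:dgp}) to transfer the property to $\mathcal{F}_t$, a routine but necessary step the paper carries out explicitly.
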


\begin{lemma}\label{lma:xxmixingale}
Under \cref{ass:dgp}, $\{x_{i,t}x_{j,t}-\E x_{i,t}x_{j,t}\}$ is $L_{\bar{m}}$-bounded and an $L_m$-mixingale with respect to $\mathcal{F}_{t}=\sigma\left\lbrace \boldsymbol{z}_{t},\boldsymbol{z}_{t-1},\dots\right\rbrace$, with non-negative mixingale constants $c_t \leq C$, and mixingale sequences of size $-d$. 
\end{lemma}

\begin{lemma}\label{lma:CovarianceCloseness}
Recall the set $\setCC(S):=\left\lbrace\norm{\hat\bSigma-\bSigma}_{\max}\leq C/\abs{S}\right\rbrace$ and $S_\lambda=\{j:\abs{\beta_j^0}>\lambda\}$.
Under \cref{ass:dgp,ass:sparsity,ass:compatibility}, for a sequence $\eta_T\to0$ such that $\eta_T\leq\frac{N^2}{e}$, if the following is satisfied
\begin{equation*}
    \lambda^{-r}s_r\leq C\eta_T^{\frac{d+m-1}{dm+m-1}}\left[\frac{\sqrt{T}}{N^{\left(\frac{2}{d}+\frac{2}{m-1}\right)}}\right]^{\frac{1}{\frac{1}{d}+\frac{m}{m-1}}}.
\end{equation*}
then $\P\left(\setCC(S_\lambda)\right)\geq 1-3\eta_T\to1$ as $N,T\to\infty$.
\end{lemma}

\begin{lemma}\label{lma:empiricalProcess} Let  $\setEP{T}(z):=\left\lbrace\max\limits_{j\leq N,l\leq T}\left[\left\vert\sum\limits_{t=1}^{l}u_t x_{j,t}\right\vert\right]\leq z\right\rbrace$.
Under \cref{ass:dgp}, we have for $z>0 $ that
\begin{equation*}
\P\left(\setEP{T}(z)\right)\geq1-CN\left(\frac{\sqrt{T}}{z}\right)^{m}.
\end{equation*}
\end{lemma}

\begin{lemma}\label{lma:compatibilityApplication} Take an index set $S$ with cardinality $\vert S\vert$.
Assuming that 
$\Vert \boldsymbol{\beta}_S\Vert_1^2\leq C\vert S\vert \boldsymbol{\beta}'{\boldsymbol{\Sigma}} \boldsymbol{\beta}$ holds for $\left\lbrace \boldsymbol{\beta}\in\mathds{R}^{N}:\Vert \boldsymbol{\beta}_{S^c}\Vert_1\leq3\Vert \boldsymbol{\beta}_{S}\Vert_1\right\rbrace$, then on the set $\setCC(S)
=\left\lbrace  \Vert\hat{\boldsymbol{\Sigma}}-{\boldsymbol{\Sigma}}\Vert_{\max}\leq C/\vert S\vert
	\right\rbrace$
\begin{equation*}
\Vert \boldsymbol{\beta}_S\Vert_1\leq C\sqrt{\vert S\vert \boldsymbol{\beta}'\hat{{\boldsymbol{\Sigma}}}\boldsymbol{\beta}},
\end{equation*}
for $\left\lbrace \boldsymbol{\beta}\in\mathds{R}^{N}:\Vert \boldsymbol{\beta}_{S^c}\Vert_1\leq3\Vert \boldsymbol{\beta}_{S}\Vert_1\right\rbrace$.
\end{lemma}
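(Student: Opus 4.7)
The plan is to pass from the population-level compatibility bound to the sample-level bound via a simple perturbation argument, using that $\|\hat{\boldsymbol{\Sigma}}-\boldsymbol{\Sigma}\|_{\infty}$ is small on $\setCC(S)$.

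First, I would write the trivial identity
\begin{equation*}
\boldsymbol{\beta}'\hat{\boldsymbol{\Sigma}}\boldsymbol{\beta} = \boldsymbol{\beta}'\boldsymbol{\Sigma}\boldsymbol{\beta} + \boldsymbol{\beta}'(\hat{\boldsymbol{\Sigma}}-\boldsymbol{\Sigma})\boldsymbol{\beta},
\end{equation*}
and control the perturbation term by H\"older's inequality:
\begin{equation*}
\abs{\boldsymbol{\beta}'(\hat{\boldsymbol{\Sigma}}-\boldsymbol{\Sigma})\boldsymbol{\beta}} \leq \norm{\hat{\boldsymbol{\Sigma}}-\boldsymbol{\Sigma}}_\infty \norm{\boldsymbol{\beta}}_1^2 .
\end{equation*}
On $\setCC(S)$ the first factor is at most $C\phi_{\boldsymbol{\Sigma}}^2(S)/\vert S\vert$, while the cone restriction $\norm{\boldsymbol{\beta}_{S^c}}_1\leq 3\norm{\boldsymbol{\beta}_S}_1$ gives $\norm{\boldsymbol{\beta}}_1\leq 4\norm{\boldsymbol{\beta}_S}_1$, so $\norm{\boldsymbol{\beta}}_1^2\leq 16\norm{\boldsymbol{\beta}_S}_1^2$.

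Next, I would invoke the assumed population compatibility inequality in the equivalent form $\boldsymbol{\beta}'\boldsymbol{\Sigma}\boldsymbol{\beta}\geq \phi_{\boldsymbol{\Sigma}}^2(S)\norm{\boldsymbol{\beta}_S}_1^2/\vert S\vert$ and combine everything to get
\begin{equation*}
\boldsymbol{\beta}'\hat{\boldsymbol{\Sigma}}\boldsymbol{\beta} \geq \frac{\phi_{\boldsymbol{\Sigma}}^2(S)}{\vert S\vert}\norm{\boldsymbol{\beta}_S}_1^2 - 16C\,\frac{\phi_{\boldsymbol{\Sigma}}^2(S)}{\vert S\vert}\norm{\boldsymbol{\beta}_S}_1^2 = (1-16C)\,\frac{\phi_{\boldsymbol{\Sigma}}^2(S)}{\vert S\vert}\norm{\boldsymbol{\beta}_S}_1^2.
\end{equation*}
Provided the generic constant $C$ in the definition of $\setCC(S)$ is chosen small enough that $1-16C\geq 1/2$ (equivalently $C\leq 1/32$), rearranging yields
\begin{equation*}
\norm{\boldsymbol{\beta}_S}_1^2 \leq \frac{2\vert S\vert\,\boldsymbol{\beta}'\hat{\boldsymbol{\Sigma}}\boldsymbol{\beta}}{\phi_{\boldsymbol{\Sigma}}^2(S)},
\end{equation*}
and taking square roots gives the claim. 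The only subtlety is the implicit constant calibration: the $C$ appearing in $\setCC(S)$ is the same generic constant the paper allows ``to change from line to line'', so one must simply note that it can be absorbed into the fixed numerical factor $1/2$ above without loss.

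The main obstacle is purely bookkeeping rather than conceptual---making sure the constant $C$ in $\setCC(S)$ is sufficiently small so that the perturbation does not dominate the quadratic form $\boldsymbol{\beta}'\boldsymbol{\Sigma}\boldsymbol{\beta}$, and keeping the factor $4$ from the cone condition (and hence the $16$ after squaring) straight. No probabilistic argument is needed here; the lemma is entirely deterministic on $\setCC(S)$ once one has the population-level compatibility property to start from.
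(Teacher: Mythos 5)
Your proof is correct, but it is worth noting that the paper does not actually spell out an argument here: its entire proof is a citation of Corollary 6.8 in \cite{SHD11}. What you have written is, in substance, exactly the proof that lies behind that citation. The decomposition $\boldsymbol{\beta}'\hat{\boldsymbol{\Sigma}}\boldsymbol{\beta}=\boldsymbol{\beta}'\boldsymbol{\Sigma}\boldsymbol{\beta}+\boldsymbol{\beta}'(\hat{\boldsymbol{\Sigma}}-\boldsymbol{\Sigma})\boldsymbol{\beta}$, the H\"older bound $\abs{\boldsymbol{\beta}'(\hat{\boldsymbol{\Sigma}}-\boldsymbol{\Sigma})\boldsymbol{\beta}}\leq\norm{\hat{\boldsymbol{\Sigma}}-\boldsymbol{\Sigma}}_\infty\norm{\boldsymbol{\beta}}_1^2$, and the cone bound $\norm{\boldsymbol{\beta}}_1\leq 4\norm{\boldsymbol{\beta}_S}_1$ (the factor $(1+L)^2=16$ with cone constant $L=3$) are precisely the ingredients of that corollary, and your constant calibration matches it exactly: Corollary 6.8 requires $\norm{\hat{\boldsymbol{\Sigma}}-\boldsymbol{\Sigma}}_\infty\leq \phi^2_{\boldsymbol{\Sigma}}(S)/\left(32\abs{S}\right)$, i.e., your condition $C\leq 1/32$, and concludes that the sample compatibility constant satisfies $\phi^2_{\hat{\boldsymbol{\Sigma}}}(S)\geq\phi^2_{\boldsymbol{\Sigma}}(S)/2$, which is the factor $2$ inside the square root of the lemma's conclusion. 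The one subtlety you flag is genuine and deserves emphasis: the $C$ in the definition of $\setCC(S)$ cannot be an arbitrary generic constant. If $C$ is large the statement is simply false --- take $\hat{\boldsymbol{\Sigma}}=\boldsymbol{0}$, which can lie in $\setCC(S)$ for $C$ large enough, yet then $\boldsymbol{\beta}'\hat{\boldsymbol{\Sigma}}\boldsymbol{\beta}=0$ while $\norm{\boldsymbol{\beta}_S}_1>0$. So the set $\setCC(S)$ in \cref{ass:covarianceCloseness} must be read with $C\leq 1/32$ fixed, a restriction the paper's changing-constants convention silently absorbs and the citation's hypothesis encodes. In short: your proof is a correct, self-contained derivation equivalent to the cited result, not a different route; its added value over the paper's one-line proof is that it makes the constant bookkeeping explicit.
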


\begin{lemma}\label{lma:generalOracle} 
Let \cref{ass:compatibility} hold for an index set S, i.e. $\phi^2_{\bSigma}(S)\geq 1/C \implies \norm{\bz_S}_{1}^2\leq C\abs{S}\bz^\prime\bSigma\bz$.
On the set $\setEP{T}(T\lambda/4)\cap\setCC(S)$:
\begin{equation*}\begin{split}
\frac{\Vert \boldsymbol{X}(\hat{\boldsymbol{\beta}}-{\boldsymbol{\beta}}^0)\Vert_2^2}{T}+\frac{\lambda}{4}\Vert\hat{\boldsymbol{\beta}}-{\boldsymbol{\beta}}^0\Vert_1\leq&C\lambda^2\vert S\vert+\frac{8}{3}\lambda\Vert{\boldsymbol{\beta}}^0_{S^c}\Vert_1.\\
\end{split}\end{equation*}
\end{lemma}

\begin{lemma}\label{lma:errorBoundonSets}
Under \cref{ass:sparsity,ass:compatibility}, on the set $\setCC(S_\lambda)\cap\setEP{T}(T\lambda/4)$,
\begin{align*}
\frac{\Vert \boldsymbol{X}(\hat{\boldsymbol{\beta}}-{\boldsymbol{\beta}}^0)\Vert_2^2}{T}+\frac{\lambda}{4}\Vert\hat{\boldsymbol{\beta}}-{\boldsymbol{\beta}}^0\Vert_1 &\leq C\lambda^{2-r}s_{r}.
\end{align*}
\end{lemma}

\subsection{Proofs of the main results} \label{sec:mainSec3}

\begin{proof}[\bf Proof of \cref{thm:ourContribution}] 
In this proof we combine the results of \cref{lma:CovarianceCloseness,lma:empiricalProcess}. By applying \cref{lma:empiricalProcess} to the set $\setEP{T}(T\lambda/4)$, we have that $\P\left(\setEP{T}(T\lambda/4)\right)\geq 1-CN(\lambda \sqrt{T})^{-m}$. Choose $\eta_T$  such that $N(\lambda \sqrt{T})^{-m}\leq \eta_T$, meaning that
\begin{equation*}
    \P\left(\setEP{T}(T\lambda/4)\right)\geq 1-\eta_T\quad\text{when}\quad \lambda\geq C\eta_T^{-1/m}\frac{N^{1/m}}{\sqrt{T}}.
\end{equation*} 
For \cref{lma:CovarianceCloseness}, we need that $\eta_T\leq\frac{N^2}{e}$, which is true for sufficiently large $N,T$, since $N$ diverges, and $\eta_T$ converges with $T\to\infty$. Then
\begin{equation*}
    \P\left(\setCC(S_\lambda)\right)\geq 1-\eta_T\quad\text{when}\quad\lambda^{-r}s_r\leq C\eta_T^{\frac{d+m-1}{dm+m-1}}\left[\frac{\sqrt{T}}{N^{\left(\frac{2}{d}+\frac{2}{m-1}\right)}}\right]^{\frac{1}{\frac{1}{d}+\frac{m}{m-1}}}.
\end{equation*}
When $0<r<1$ , the required bound for the set $\setEP{T}(T\lambda/4)$ is dominated by the bound for $\setCC(S_\lambda)$ when $s_r$ does not converge to 0, i.e. $s_r\geq1/C$ (when $s_r\to0$ these results are trivial). To show this, note that for $m>2$, $d\geq1$,
$\left(\frac{N^{\left(\frac{2}{d}+\frac{2}{m-1}\right)}}{\sqrt{T}}\right)^{\frac{1}{\left(\frac{1}{d}+\frac{m}{m-1}\right)}}\geq\frac{N^{1/m}}{\sqrt{T}}$, $\eta_T^{-\frac{d+m-1}{r(dm+m-1)}}\geq\eta_T^{-1/-m}$, and $1/r>1$.
The result then follows by the union bound, $\P\left(\setCC(S_\lambda)\bigcap\setEP{T}(T\lambda/4)\right)\geq 1-(1-\P(\setCC(S_\lambda)))-(1-\P(\setEP{T}(T\lambda/4)))\geq 1-C\eta_T\to1$ as $N,T\to\infty$. The result of the theorem follows from choosing $\eta_T=C (\ln \ln T)^{-1}$.
\end{proof}

\begin{proof}[\bf Proof of \cref{cor:separateResults}] 
By \cref{thm:ourContribution}, the set  $\setCC(S_\lambda)\cap\setEP{T}(T\lambda/4)$ holds with probability at least $1-C\eta_T$, and so the error bound of \cref{lma:errorBoundonSets} holds with the same probability. With the error bound, items (i) and (ii) follow straightforwardly.
\end{proof}

\section{Proofs for Section \ref{sec:desparsifiedLasso}}\label{app:A2}
This section provides the theory for the desparsified lasso established in \Cref{sec:desparsifiedLasso}. We first provide some preliminary lemmas in Appendix \ref{sec:prelSec4} which are proved in Supplementary Appendix C.2. The proofs of the main results are then provided in Appendix \ref{sec:mainSec4}.

\subsection{Preliminary results} \label{sec:prelSec4}

\begin{lemma}\label{ass:nodewisedgp} Under \cref{ass:dgp,ass:statandvmoments}, the following holds:
\begin{enumerate}[label=(\roman*)]
\item $\E\left[v_{j,t}\right]=\boldsymbol{0},\ \forall j$, $\E\left[v_{j,t}x_{k,t}\right]=0,\ \forall k\neq j,t$.
\item\label{ass:CLTmoments} $\max\limits_{1\leq j\leq N,\ 1\leq t\leq T}\E\left[\left\vert v_{j,t}x_{j,t}\right\vert^{m}\right] \leq C.$ 
\item\label{ass:NDGPmixingale} $\{v_{j,t}  x_{k,t}\}$ is an $L_{m}$-Mixingale 
with respect to $\mathcal{F}_{t}^{(j)}=\sigma\left\lbrace v_{j,t},\boldsymbol{x}_{-j,t},v_{j,t-1},\boldsymbol{x}_{-j,t-1},\dots\right\rbrace$,
$\forall k\neq j$, with non-negative mixingale constants $c_t\leq C$ and sequences $\psi_q$ satisfying $\sum\limits_{q=1}^\infty\psi_q\leq C$.
\end{enumerate}
\end{lemma}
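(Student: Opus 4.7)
The overall strategy is to verify each of the three claims by combining the linear projection characterization of $v_{j,t}$ from \eqref{eq:populationGammaj} with standard NED/mixingale calculus (Davidson's Theorems 17.5, 17.8, 17.9), closely mirroring the template already used in \cref{lma:NEDimpliesdgp} for the products $u_t x_{j,t}$. The three claims split naturally into an easy part (i)--(ii) and a more delicate part (iii).

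For part (i), weak stationarity of $\{v_{j,t}\}$ is inherited from the weak stationarity of $\{\boldsymbol{x}_t\}$ in \cref{ass:dgp}\ref{ass:dgpStationary}, since $v_{j,t} = x_{j,t} - \boldsymbol{x}_{-j,t}'\boldsymbol{\gamma}_j^0$ is a time-invariant linear combination of a weakly stationary process. The identities $\E[v_{j,t}]=0$ and $\E[v_{j,t}x_{k,t}]=0$ for $k\neq j$ then follow directly from the first-order conditions for the population projection defining $\boldsymbol{\gamma}_j^0$. For part (ii), I would invoke Cauchy--Schwarz: $\E|v_{j,t}x_{j,t}|^m \leq (\E v_{j,t}^{2m})^{1/2}(\E x_{j,t}^{2m})^{1/2}$, where both factors are bounded uniformly in $j$ and $t$ by \cref{ass:statandvmoments}\ref{ass:vmoments} and \cref{ass:dgp}\ref{ass:dgpStationary}, since $L_{2(m+c)}$-boundedness implies $L_{2m}$-boundedness.

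Part (iii) is the substantive step and would be executed in three stages. First, I would establish that $\{v_{j,t}\}$ itself is $L_{2m}$-NED on $\{\boldsymbol{s}_{T,t}\}$ of size $-1$ uniformly over $j$. Writing the approximation error $v_{j,t} - \E[v_{j,t}\mid\boldsymbol{s}_{T,t-q}^{t+q}]$ as the same linear combination of the individual errors for $x_{k,t}$, the triangle inequality yields an NED bound of size $-1$, controlled by $\|\boldsymbol{\gamma}_j^0\|_1$ times the uniform NED coefficients of the components plus the NED measure of $x_{j,t}$ itself. Second, I would apply Davidson's Theorem 17.9 for products, using the $L_{2(m+c)}$ moment bounds on both $v_{j,t}$ and $x_{k,t}$, to conclude that $\{v_{j,t}x_{k,t}\}$ is $L_m$-NED of size $-1$ on $\{\boldsymbol{s}_{T,t}\}$. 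Third, I would invoke Davidson's Theorem 17.5 to translate the NED property into an $L_m$-mixingale with respect to $\mathcal{F}^{\boldsymbol{s}}_t$, with mixingale constants uniformly bounded by a constant multiple of $\|v_{j,t}x_{k,t}\|_m$ and a sequence $\psi_q$ of size $-1$; summability $\sum_q \psi_q<\infty$ then follows from the $p$-series criterion, just as in \cref{lma:NEDimpliesdgp}. Passing from $\mathcal{F}^{\boldsymbol{s}}_t$ to the coarser $\sigma$-field $\mathcal{F}_t^{(j)}$ is harmless because $v_{j,t}$ and $\boldsymbol{x}_{-j,t}$ are $\mathcal{F}^{\boldsymbol{s}}_t$-measurable, so $\mathcal{F}_t^{(j)}\subseteq\mathcal{F}^{\boldsymbol{s}}_t$ and the mixingale property is preserved under coarsening.

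The main obstacle is the uniform bound on $\|\boldsymbol{\gamma}_j^0\|_1$ needed in the first stage, since \cref{ass:dgp,ass:statandvmoments} alone do not obviously deliver it: $\boldsymbol{\gamma}_j^0 = \boldsymbol{\Sigma}_{-j,-j}^{-1}\boldsymbol{\Sigma}_{-j,j}$ can in principle have growing $\ell_1$-norm as $N$ increases. The cleanest resolution is to implicitly borrow from \cref{ass:nodewise}: the bounded-eigenvalue condition in \ref{ass:nodewiseCompatibility} together with the weak sparsity of \ref{ass:nodewiseSparsity} yields $\|\boldsymbol{\gamma}_j^0\|_1 \leq \|\boldsymbol{\gamma}_j^0\|_\infty^{1-r}\, s_r^{(j)} \leq C$ when $s_r^{(j)}$ is uniformly bounded. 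An alternative that avoids an explicit $\ell_1$-bound is to exploit \cref{ass:statandvmoments}\ref{ass:vmoments} directly: since $v_{j,t}$ has finite $L_{2(m+c)}$ moments uniformly in $j$, one can build the NED approximation by truncating the projection and absorbing the tail into the NED constants, at the cost of a slightly more intricate bookkeeping argument but without any auxiliary hypothesis on $\boldsymbol{\gamma}_j^0$.
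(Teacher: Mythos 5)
Your proposal follows essentially the same route as the paper's proof: part (i) from the projection first-order conditions (the paper derives weak stationarity by viewing $v_{j,t}$ as a time-constant function of the fourth-order stationary process, citing \cite{wecker1978note}, though your simpler observation that $v_{j,t}$ is a fixed linear combination of the weakly stationary $\boldsymbol{x}_t$ suffices for (i)); part (ii) by Cauchy--Schwarz exactly as in the paper; and part (iii) via the chain NED of $\{v_{j,t}\}$ $\Rightarrow$ product NED (Theorem 17.9 of \cite{Davidson02}) $\Rightarrow$ $L_m$-mixingale (Theorem 17.5) $\Rightarrow$ coarsening to $\mathcal{F}_t^{(j)}$ plus $p$-series summability, which is precisely what the paper means by ``the remainder of the proof follows as in the proof of \cref{lma:NEDimpliesdgp}.''

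The one divergence is the first stage of (iii). The paper disposes of the $L_{2m}$-NED property of $\{v_{j,t}\}$ in a single line by citing Theorem 17.8 of \cite{Davidson02}, with no discussion of the fact that $v_{j,t}=x_{j,t}-\boldsymbol{x}_{-j,t}'\boldsymbol{\gamma}_j^0$ involves a number of summands growing with $N$; you instead construct the NED bound by hand and correctly observe that a triangle-inequality argument needs uniform control of $\norm{\boldsymbol{\gamma}_j^0}_1$, which \cref{ass:dgp,ass:statandvmoments} alone do not supply. That diligence is to your credit --- the paper is silent on this bookkeeping --- but neither of your proposed repairs is fully satisfactory as stated. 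The bound $\norm{\boldsymbol{\gamma}_j^0}_1 \leq \norm{\boldsymbol{\gamma}_j^0}_\infty^{1-r} s_r^{(j)} \leq C^{1-r} s_r^{(j)}$ delivers a constant only if $s_r^{(j)}$ is uniformly bounded, which is strictly stronger than \cref{ass:nodewise}\ref{ass:nodewiseSparsity}: elsewhere the theory explicitly allows $\bar{s}_r$ to grow, subject only to rate conditions such as $\sqrt{T}\lambda_{\max}^{2-r}s_{r,\max}\to 0$. And ``absorbing the tail into the NED constants'' cannot work as sketched, because the lemma's conclusion requires mixingale constants $c_t\leq C$ uniformly over $j$ and $t$, and Theorem 17.5 makes $c_t$ proportional to the NED constants, so an $N$-dependent factor has nowhere to hide. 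In short, your proof matches the paper's at every step, and at the single genuinely delicate point you are more explicit than the paper itself --- but your repair either imports a hypothesis the lemma does not make or remains a sketch, so as a complete standalone argument it is no tighter than the paper's one-line citation.
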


\begin{lemma}\label{ass:CLT} 
Let $\boldsymbol{w}_t=(w_{1,t}, \ldots, w_{N,t})'$ with $w_{j,t}=v_{j,t}u_t$. Under \cref{ass:dgp,ass:statandvmoments} the following holds: 
    \begin{enumerate}[label=(\roman*)]
		\item\label{ass:CLTmixingales1}
		$\{ w_{j,t}\}$ is $L_{\bar m}$-bounded 
		and an $L_{m}$-Mixingale of size $-d$ uniformly over $j\in\{1,\ldots,N\}$ with respect to $\mathcal{F}_{t}=\sigma\left\lbrace u_{t},\boldsymbol{v}_{t},u_{t-1},\boldsymbol{v}_{t-1},\dots\right\rbrace$, with non-negative mixingale constants $C_1\leq c_t\leq C_2$.
		\item\label{ass:CLTsummability} 
	    $\max\limits_{q\leq j,k\leq N,\ 1\leq t\leq T} \abs{\E\left[w_{j,t}w_{k,t-l}\right]}\leq C \phi_{l}$, where $\phi_{l}$ is a sequence of size $-d$, and the covariances are therefore absolutely summable.
		\item\label{ass:CLTmixingales2}
 		For all $l$, $\{w_{j,t}w_{k,t-l}-\E\left[w_{j,t}w_{k,t-l}\right]\}$ is $L_{m/2}$-bounded 
 		and an $L_1$-Mixingale of size $-d$ uniformly over $j,k\in\{1,\ldots,N\}$ with respect to 
        $\mathcal{F}_t$, with non-negative mixingale constants $c_t\leq C$.
	\end{enumerate}
\end{lemma}

\begin{lemma}
\label{lma:nodewiseCovarianceCloseness}
Recall the sets $\setCC(S):=\left\lbrace\norm{\hat\bSigma-\bSigma}_{\max}\leq C/\abs{S}\right\rbrace$, $S_\lambda=\{j:\abs{\beta^0_j}>\lambda\}$, and $S_{\lambda,j}:=\{k:\abs{\gamma^0_{j,k}}>\lambda_j\}$.  Under \cref{ass:dgp,ass:sparsity,ass:compatibility}, for a sequence $\eta_T\to0$ such that $\eta_T\leq\frac{N^2}{e}$, if the following is satisfied 
    \begin{equation*}
    \lambda_{\min}^{-r}s_{r,\max}\leq C\eta_T^{\frac{d+m-1}{dm+m-1}}\left[\frac{\sqrt{T}}{N^{\left(\frac{2}{d}+\frac{2}{m-1}\right)}}\right]^{\frac{1}{\frac{1}{d}+\frac{m}{m-1}}},
\end{equation*}
    $\P\left(\setCC(S_\lambda)\bigcap\limits_{j\in H}\setCC(S_{\lambda,j})\right)\geq 1-3(1+h)\eta_T$. 
\end{lemma}

\begin{lemma}\label{lma:nodewiseEmpiricalProcess}
Under \cref{ass:dgp,ass:statandvmoments}, for $x_j>0$ the following holds
\begin{equation*}\begin{split}
\P\left(\bigcap\limits_{j\in H}\setEP{T}^{(j)}(x_j) \right)\geq 1- C \frac{h N T^{m/2}}{\min\limits_{j\in H} x_j^m}.
\end{split}\end{equation*}
\end{lemma}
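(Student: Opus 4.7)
The proof will parallel that of \cref{lma:empiricalProcess}, with an additional outer union bound over $j \in H$. The essential ingredient is that, thanks to \cref{ass:nodewisedgp}\ref{ass:NDGPmixingale}, the product $\{v_{j,t} x_{k,t}\}$ is an $L_m$-mixingale of appropriate size, uniformly in $j, k$, with uniformly bounded mixingale constants and a summable coefficient sequence. This is exactly the structural property that was used for $\{u_t x_{j,t}\}$ in \cref{lma:empiricalProcess}, so the same concentration tools apply.

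The plan is as follows. First, rewrite the target probability via complementation,
\begin{equation*}
\P\!\left(\bigcap_{j\in H}\setEP{T}^{(j)}(x_j)\right)=1-\P\!\left(\bigcup_{j\in H}\left\{\max_{k\neq j,\, l\leq T}\abs[\Big]{\sum_{t=1}^{l} v_{j,t} x_{k,t}}>x_j\right\}\right).
\end{equation*}
Apply the union bound twice, first over $j\in H$ and then over $k\neq j$, to reduce the problem to controlling $\P(\max_{l\leq T}|\sum_{t=1}^{l} v_{j,t} x_{k,t}|>x_j)$ uniformly. Markov's inequality at order $m$ gives
\begin{equation*}
\P\!\left(\max_{l\leq T}\abs[\Big]{\sum_{t=1}^{l} v_{j,t} x_{k,t}}>x_j\right)\leq x_j^{-m}\,\E\!\left[\max_{l\leq T}\abs[\Big]{\sum_{t=1}^{l} v_{j,t} x_{k,t}}^{m}\right].
\end{equation*}

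Next, invoke the mixingale maximal inequality of \citet[Lemma 2]{Hansen91} applied to the sequence $\{v_{j,t} x_{k,t}\}$. By \cref{ass:nodewisedgp}\ref{ass:NDGPmixingale} this sequence is an $L_m$-mixingale with uniformly bounded constants $c_t\leq C$ and a sequence $\psi_q$ with $\sum_{q=1}^{\infty}\psi_q\leq C$, both bounds independent of $j,k$. Hansen's inequality then yields
\begin{equation*}
\E\!\left[\max_{l\leq T}\abs[\Big]{\sum_{t=1}^{l} v_{j,t} x_{k,t}}^{m}\right]\leq C_1^{m}\left(\sum_{t=1}^{T} c_t^{2}\right)^{m/2}\leq C\,T^{m/2},
\end{equation*}
uniformly over $j\in H$ and $k\neq j$. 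Combining the three bounds and taking $\min_{j\in H}x_j^m$ in the denominator gives
\begin{equation*}
\P\!\left(\bigcup_{j\in H}\{\setEP{T}^{(j)}(x_j)\}^c\right)\leq \sum_{j\in H}\sum_{k\neq j}\frac{C\,T^{m/2}}{x_j^m}\leq \frac{C\,hN\,T^{m/2}}{\min_{j\in H}x_j^m},
\end{equation*}
which is the stated bound.

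There is no substantive obstacle: the only thing to verify is that the mixingale constants and coefficient sequences in \cref{ass:nodewisedgp}\ref{ass:NDGPmixingale} are genuinely uniform in $j$ and $k$, so that the Hansen bound produces a constant independent of the pair $(j,k)$. This uniformity was already built into \cref{ass:nodewisedgp} (its hypotheses are uniform over $j$, inherited from the uniform NED property in \cref{ass:dgp}\ref{ass:dgpNED} and the uniform moment bound in \cref{ass:statandvmoments}\ref{ass:vmoments}), so the double union bound carries through cleanly.
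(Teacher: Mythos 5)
Your proof is correct and follows essentially the same route as the paper: the paper simply invokes \cref{lma:empiricalProcess} (which already packages the union bound over $k$, Markov's inequality at order $m$, and the mixingale maximal inequality of \citet[Lemma 2]{Hansen91}) together with the uniform mixingale properties from \cref{ass:nodewisedgp}, and then applies the outer union bound over $j\in H$ exactly as you do. Your version merely unpacks the inner bound explicitly, so there is nothing substantive to add.
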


\begin{lemma}\label{ass:tailBound}
Define the set $\setLL:= \left\lbrace\max\limits_{j\in H} \abs{\frac{1}{T} \sum\limits_{t=1}^T v_{j,t}^2-\tau_j^2}\leq \frac{h} {\delta_T} \right\rbrace$, and let \cref{ass:statandvmoments} hold.
When
\begin{equation*}
    \delta_T\leq  C\eta_T(\sqrt{T}h)^{\frac{1}{1/d+m/(m-1)}},
\end{equation*}
$\P\left(\setLL\right)\geq 1-3\eta_T^{\frac{dm+m-1}{d+m-1}}\to 1$ as $N,T\to\infty$.
\end{lemma}

\begin{lemma}\label{lma:hatTauConsistency}
Under \cref{ass:nodewise}\ref{ass:nodewiseCompatibility}
\begin{equation}\label{eq:trueTauBounded}
\frac{1}{C} \leq \tau^2_j \leq C,\text{ uniformly over } j= 1,\dots,N.
\end{equation} 
Furthermore, define the set $\setNWcons:=\bigcap\limits_{j\in H} \setEP{T}^{(j)}(T\frac{\lambda_j}{4})  \bigcap\limits_{j\in H} \setCC{(S_{\lambda,j})}$ and let
\cref{ass:nodewise}\ref{ass:nodewiseSparsity} hold. On the set $\setNWcons \cap\setLL$, we have
\begin{equation*}
\max_{j\in H}\abs{\hat\tau_j^2-\tau_j^2}\leq \boundCT +C_1\bar{\lambda}^{2-r}\smaxN+C_2\sqrt{\bar{\lambda}^2\underset{\bar{}}{\lambda}^{-r}\smaxN},
\end{equation*}
and
\begin{equation*}
\max_{j\in H} \abs{\frac{1}{\hat\tau_j^2}-\frac{1}{\tau_j^2}}\leq \frac{ \boundCT +C_1\bar{\lambda}^{2-r}\smaxN+C_2\sqrt{\bar{\lambda}^2\underset{\bar{}}{\lambda}^{-r}\smaxN}}{C_3-C_4\left( \boundCT +C_1\bar{\lambda}^{2-r}\smaxN+C_2\sqrt{\bar{\lambda}^2\underset{\bar{}}{\lambda}^{-r}\smaxN}\right)}.
\end{equation*}
\end{lemma}

\begin{lemma}\label{lma:inverseQuality}
Under \cref{ass:nodewise}\ref{ass:nodewiseSparsity}--\ref{ass:nodewiseCompatibility},
it holds for a sufficiently large $T$ that on the set $\bigcap\limits_{j\in H} \setEP{T}^{(j)}(T\frac{\lambda_j}{4})
\cap\setLL$,
\begin{equation*}
\max\limits_{j\in H}\left\lbrace\Vert \boldsymbol{e}'_j-\hat{\boldsymbol{\Theta}}_j\hat{\boldsymbol{\Sigma}}\Vert_{\infty}\right\rbrace\leq \frac{\bar{\lambda}}{C_1- \boundCT -C_2\bar{\lambda}^{2-r}\smaxN},
\end{equation*}
where $\hat{\boldsymbol{\Theta}}_j$ is the $j$th row of $\hat{\boldsymbol{\Theta}}$.
\end{lemma}

\begin{lemma}\label{lma:DeltaNegligible} 
Define $\Delta:=\sqrt{T}\left(\hat{\boldsymbol{\Theta}}\hat{\boldsymbol{\Sigma}}-I\right)\left(\hat{\boldsymbol{\beta}}-{\boldsymbol{\beta}}^0\right)$, and $\setILcons:=\setEP{T} (T\frac{\lambda}{4}) \cap \setCC{(S_\lambda)}$ Under
\cref{ass:dgp,ass:sparsity,ass:nodewise}\ref{ass:nodewiseSparsity}--\ref{ass:nodewiseCompatibility}, 
on the set $\setILcons \cap \setNWcons \cap \setLL$ we have that
\begin{equation*}
\max\limits_{j\in H} \vert\Delta_j\vert \leq\sqrt{T}\lambda^{1-r}{s}_{r} \frac {\bar{\lambda}}{C_1- \boundCT -C_2\bar{\lambda}^{2-r}\smaxN}.
\end{equation*}
\end{lemma}

\begin{lemma}\label{lma:vuConsistency}
Under \cref{ass:nodewise}\ref{ass:nodewiseSparsity}--\ref{ass:nodewiseCompatibility}, on the set $\setEP{T}(T\lambda)\cap \setNWcons$,
\begin{equation*}
\max_{j\in H}\frac{1}{\sqrt{T}}\left\vert\hat{\boldsymbol{v}}_j'\boldsymbol{u}-\boldsymbol{v}_j'\boldsymbol{u}\right\vert\leq C\sqrt{T}\lambda_{\max}^{2-r}\smaxN.
\end{equation*}
\end{lemma}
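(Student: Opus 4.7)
The plan is to express $\hat{\boldsymbol{v}}_j'\boldsymbol{u}-\boldsymbol{v}_j'\boldsymbol{u}$ as a linear functional of the nodewise estimation error and then apply the dual-norm (H\"older) inequality, splitting the bound into a piece controlled by the empirical process event $\setEP{T}(T\lambda)$ and a piece controlled by the nodewise oracle inequality \eqref{eq:nodewiseOracle} which holds on $\setNWcons$.

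Concretely, using the definitions $\hat{\boldsymbol{v}}_j = \boldsymbol{x}_j - \boldsymbol{X}_{-j}\hat{\boldsymbol{\gamma}}_j$ and $\boldsymbol{v}_j = \boldsymbol{x}_j - \boldsymbol{X}_{-j}\boldsymbol{\gamma}_j^0$, the first step is to write
\begin{equation*}
\hat{\boldsymbol{v}}_j'\boldsymbol{u} - \boldsymbol{v}_j'\boldsymbol{u} = -(\hat{\boldsymbol{\gamma}}_j - \boldsymbol{\gamma}_j^0)'\boldsymbol{X}_{-j}'\boldsymbol{u}.
\end{equation*}
Applying the dual-norm inequality, this is bounded in absolute value by $\norm{\boldsymbol{X}_{-j}'\boldsymbol{u}}_\infty \norm{\hat{\boldsymbol{\gamma}}_j - \boldsymbol{\gamma}_j^0}_1$. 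On $\setEP{T}(T\lambda)$, by the definition of this set with $l=T$, the first factor is at most $T\lambda$ uniformly in $j$. On $\setNWcons$ the nodewise oracle inequality \eqref{eq:nodewiseOracle} yields
\begin{equation*}
\norm{\hat{\boldsymbol{\gamma}}_j - \boldsymbol{\gamma}_j^0}_1 \leq C\,\lambda_j^{1-r} s_r^{(j)} \leq C\,\lambda_{\max}^{1-r}\,\smaxN
\end{equation*}
uniformly in $j\in H$, using $\lambda_j\leq\lambda_{\max}$ and $s_r^{(j)}\leq\smaxN$.

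Combining the two bounds and dividing by $\sqrt{T}$ gives
\begin{equation*}
\max_{j\in H}\frac{1}{\sqrt{T}}\abs{\hat{\boldsymbol{v}}_j'\boldsymbol{u}-\boldsymbol{v}_j'\boldsymbol{u}} \leq \frac{1}{\sqrt{T}}\cdot T\lambda \cdot C\,\lambda_{\max}^{1-r}\,\smaxN = C\sqrt{T}\,\lambda\,\lambda_{\max}^{1-r}\,\smaxN \leq C\sqrt{T}\,\lambda_{\max}^{2-r}\,\smaxN,
\end{equation*}
where the final inequality uses $\lambda\leq\lambda_{\max}$ from the definition \eqref{eq:s_l_max}. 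There is no real obstacle here beyond being careful to keep track of which tuning parameter appears where; the only subtlety is that the event $\setEP{T}(\cdot)$ is formulated in terms of the initial-regression parameter $\lambda$ (not $\lambda_j$), which is exactly why the statement requires intersection with $\setEP{T}(T\lambda)$ rather than the nodewise sets $\setEP{T}^{(j)}$.
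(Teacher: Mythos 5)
Your proof is correct and matches the paper's own argument essentially line for line: the same decomposition $\hat{\boldsymbol{v}}_j'\boldsymbol{u}-\boldsymbol{v}_j'\boldsymbol{u}=\boldsymbol{u}'\boldsymbol{X}_{-j}(\boldsymbol{\gamma}_j^0-\hat{\boldsymbol{\gamma}}_j)$, the same dual-norm bound, the bound $T\lambda$ from $\setEP{T}(T\lambda)$, and the $\ell_1$-error bound $C\lambda_j^{1-r}s_r^{(j)}$ from the nodewise oracle inequality \eqref{eq:nodewiseOracle}, concluding via $\lambda\leq\lambda_{\max}$ and $s_r^{(j)}\leq\smaxN$. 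Your closing observation about why the event is stated with the initial-regression $\lambda$ rather than the nodewise $\lambda_j$ is also accurate.
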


\begin{lemma}\label{lma:vuempirical}
Define the set
$\setEPvuj{T}( x):=\left\lbrace\max\limits_{s\leq T}\abs{
\sum\limits_{t=1}^{s}v_{j,t}u_t}\leq x \right\rbrace$. Under \cref{ass:dgp,ass:statandvmoments}, for $x>0$ it follows that 
$\P\left(\bigcap\limits_{j\in H}\setEPvuj{T}( x)\right)\geq 1-\frac{C hT^{m/2}}{x^{m}}.$
\end{lemma}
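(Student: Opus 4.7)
The plan is to mirror the proof of \Cref{lma:empiricalProcess}, replacing the role of $\{u_t x_{j,t}\}$ with that of the products $\{w_{j,t}\} := \{v_{j,t} u_t\}$, whose mixingale properties are supplied by \Cref{ass:CLT}\ref{ass:CLTmixingales1}. The only structural difference is that here the inner maximum is taken only over the partial-sum index $s \leq T$, whereas the outer union is over $j \in H$; this is why we expect the final bound to scale with $h$ (rather than with $N$) times the $T^{m/2}$ arising from the maximal inequality.

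First, I would apply Boole's inequality to the complement and write
\begin{equation*}
\P\left(\bigcup_{j\in H}\{\setEPvuj{T}(x)\}^c\right) \leq \sum_{j\in H}\P\left(\max_{s\leq T}\abs{\sum_{t=1}^{s} v_{j,t}u_t} > x\right).
\end{equation*}
Next, Markov's inequality with exponent $m$ bounds each summand by $x^{-m}\E[\max_{s\leq T}|\sum_{t=1}^{s} w_{j,t}|^m]$. At this point the key ingredient is \Cref{ass:CLT}\ref{ass:CLTmixingales1}: uniformly in $j$, $\{w_{j,t}\}$ is an $L_m$-mixingale of size $-1/2$ with respect to $\mathcal{F}_t$, with mixingale constants bounded by a universal $C$. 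This is exactly the setting in which the maximal inequality of \citet[Lemma 2]{Hansen91} applies, yielding
\begin{equation*}
\E\left[\max_{s\leq T}\abs{\sum_{t=1}^{s}w_{j,t}}^{m}\right] \leq C_1^{m}\left(\sum_{t=1}^{T}c_t^{2}\right)^{m/2} \leq C T^{m/2},
\end{equation*}
with a constant that does not depend on $j$.

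Putting these together gives
\begin{equation*}
\P\left(\bigcup_{j\in H}\{\setEPvuj{T}(x)\}^c\right) \leq \frac{ChT^{m/2}}{x^m},
\end{equation*}
whence the stated lower bound on $\P(\bigcap_{j\in H}\setEPvuj{T}(x))$ follows by taking complements. There is no real obstacle here beyond invoking the correct results; the only point that needs care is to note that the uniformity of the mixingale constants across $j$ (already established in \Cref{ass:CLT}) is what allows the per-$j$ bound to be combined via the union bound without introducing an additional $j$-dependent factor.
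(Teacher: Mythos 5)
Your proposal is correct and takes essentially the same route as the paper's proof: a union bound over $j\in H$, Markov's inequality with exponent $m$, and the mixingale maximal inequality of \citet[Lemma 2]{Hansen91} applied to $\{w_{j,t}\}=\{v_{j,t}u_t\}$ via the mixingale properties established in \Cref{ass:CLT}\ref{ass:CLTmixingales1}, with the uniform boundedness of the constants $c_t^{(j)}$ delivering the $j$-independent $CT^{m/2}$ bound. Nothing further is needed.
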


\begin{lemma}\label{lma:nodewiseConsistency} 
Under \cref{ass:dgp,ass:compatibility,ass:statandvmoments,ass:nodewise}\ref{ass:nodewiseSparsity}--\ref{ass:nodewiseCompatibility}, on the set\\
$\setEP{T}(T\lambda) \cap \setNWcons \cap \setLL 
\bigcap\limits_{j\in H}\setEPvuj{T}(h^{1/m}T^{1/2}
\eta_T^{-1})$ with $\eta_T^{-1}\leq C \sqrt{T}$, we have 
\begin{equation*}\begin{split}
\max\limits_{j\in H}\left\vert\frac{1}{\sqrt{T}}\frac{\hat{\boldsymbol{v}}_{j}'\boldsymbol{u}}{\hat\tau_j^2}-\frac{1}{\sqrt{T}}\frac{\boldsymbol{v}_{j}'\boldsymbol{u}}{\tau_j^2}\right\vert
\leq\frac{h^{1/m}\eta_T^{-1}\boundCT+C_1h^{1/m}\eta_T^{-1}\sqrt{T}\lambda_{\max}^{2-r}\smaxN+C_2h^{1/m}\eta_T^{-1}\sqrt{\bar{\lambda}^2\underset{\bar{}}{\lambda}^{-r}\smaxN}}{C_3-C_4\left(\boundCT +C_1\bar{\lambda}^{2-r}\smaxN+C_2\sqrt{\bar{\lambda}^2\underset{\bar{}}{\lambda}^{-r}\smaxN}\right)}.
\end{split}\end{equation*}
\end{lemma}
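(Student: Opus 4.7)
The plan is to decompose the difference in a way that separates the two sources of error: the replacement of $\boldsymbol{v}_j$ by $\hat{\boldsymbol{v}}_j$, and the replacement of $\tau_j^2$ by $\hat\tau_j^2$. Using the algebraic identity
\begin{equation*}
\frac{\hat{\boldsymbol{v}}_j'\boldsymbol{u}}{\hat\tau_j^2}-\frac{\boldsymbol{v}_j'\boldsymbol{u}}{\tau_j^2}
= \frac{1}{\hat\tau_j^2}\left(\hat{\boldsymbol{v}}_j'\boldsymbol{u}-\boldsymbol{v}_j'\boldsymbol{u}\right)
+\boldsymbol{v}_j'\boldsymbol{u}\left(\frac{1}{\hat\tau_j^2}-\frac{1}{\tau_j^2}\right),
\end{equation*}
dividing by $\sqrt{T}$, applying the triangle inequality, and taking maxima over $j\in H$, it suffices to control four factors uniformly in $j$: (a) $1/\hat\tau_j^2$, (b) $|\hat{\boldsymbol{v}}_j'\boldsymbol{u}-\boldsymbol{v}_j'\boldsymbol{u}|/\sqrt T$, (c) $|1/\hat\tau_j^2-1/\tau_j^2|$, and (d) $|\boldsymbol{v}_j'\boldsymbol{u}|/\sqrt T$.

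For (a), I would combine the lower bound $\tau_j^2\geq 1/C$ from \eqref{eq:trueTauBounded} with \Cref{lma:hatTauConsistency} to obtain
\begin{equation*}
\hat\tau_j^2 \geq \tau_j^2 - \max_{j\in H}|\hat\tau_j^2-\tau_j^2| \geq C_3 - C_4\bigl(\boundCT + C_1\lambda_{\max}^{2-r}\smaxN + C_2\sqrt{\lambda_{\max}^{2-r}\smaxN}\bigr),
\end{equation*}
which, for large enough $T$, is positive and yields the denominator appearing in the claim. For (b), I would invoke \Cref{lma:vuConsistency} directly (which requires $\setEP{T}(T\lambda)\cap\setNWcons$), giving $\max_{j\in H}|\hat{\boldsymbol{v}}_j'\boldsymbol{u}-\boldsymbol{v}_j'\boldsymbol{u}|/\sqrt T \leq C\sqrt{T}\lambda_{\max}^{2-r}\smaxN$. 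For (c), \Cref{lma:hatTauConsistency} directly supplies the required bound. For (d), on the set $\bigcap_{j\in H}\setEPvuj{T}(h^{1/m}T^{1/2}\eta_T^{-1})$ we have $\max_{j\in H}|\boldsymbol{v}_j'\boldsymbol{u}|/\sqrt T \leq h^{1/m}\eta_T^{-1}$.

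Assembling these pieces, the first summand contributes $C\sqrt{T}\lambda_{\max}^{2-r}\smaxN$ to the numerator, and the second summand contributes $h^{1/m}\eta_T^{-1}\bigl(\boundCT + C_1\lambda_{\max}^{2-r}\smaxN + C_2\sqrt{\lambda_{\max}^{2-r}\smaxN}\bigr)$, all divided by the common lower bound on $\hat\tau_j^2$. Using the auxiliary hypothesis $\eta_T^{-1}\leq C\sqrt{T}$ together with $h^{1/m}\geq 1$, the pure $C\sqrt{T}\lambda_{\max}^{2-r}\smaxN$ term from (b) can be absorbed into $C_1 h^{1/m}\sqrt{T}\lambda_{\max}^{2-r}\smaxN$, and the cross-term from (c)$\times$(d) of order $\lambda_{\max}^{2-r}\smaxN$ can likewise be folded into the same factor, yielding exactly the three-term numerator in the statement.

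The proof is conceptually straightforward because the ingredients (lower bound on $\hat\tau_j^2$, consistency of $\hat{\boldsymbol{v}}_j'\boldsymbol{u}$, and the deterministic tail bound for $\boldsymbol{v}_j'\boldsymbol{u}$) are all already available; the only delicate step is bookkeeping. In particular, one must verify that the uniform-in-$j$ lower bound on $\hat\tau_j^2$ agrees up to constants with the denominator claimed, and that after invoking $\eta_T^{-1}\leq C\sqrt{T}$ no term with a worse rate than $h^{1/m}\sqrt{T}\lambda_{\max}^{2-r}\smaxN$ is introduced. This absorption step, which requires $h^{1/m}\geq 1$ and the stated assumption on $\eta_T^{-1}$, is where the argument is most prone to constant-mismatch errors and is the main obstacle to a clean write-up.
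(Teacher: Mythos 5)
Your proposal is correct and follows essentially the same route as the paper's proof: the same splitting into $\bigl(\hat{\boldsymbol{v}}_j'\boldsymbol{u}-\boldsymbol{v}_j'\boldsymbol{u}\bigr)/\hat\tau_j^2$ plus $\boldsymbol{v}_j'\boldsymbol{u}\,\bigl(1/\hat\tau_j^2-1/\tau_j^2\bigr)$, with the identical ingredients --- \Cref{lma:vuConsistency} for the first piece, \Cref{lma:hatTauConsistency} together with \cref{eq:trueTauBounded} for the $\tau$-terms, and the set $\bigcap_{j\in H}\setEPvuj{T}(h^{1/m}T^{1/2}\eta_T^{-1})$ for $\abs{\boldsymbol{v}_j'\boldsymbol{u}}/\sqrt{T}$. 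Your explicit absorption step using $\eta_T^{-1}\leq C\sqrt{T}$ and $h^{1/m}\geq 1$ is exactly how the paper's two bounds $R_{(\text{i})}$ and $R_{(\text{ii})}$ combine into the stated three-term numerator, so there is no gap.
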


\begin{lemma}\label{lma:tailbounds}
For any process $\{d_t\}_{t=1}^T$ and constant $x>0$, define the set
$\setTail{d}{x} := \left\{\norm{\boldsymbol{d}}_{\infty} \leq x\right\}.$ Let $\max_t \E \abs{d_t}^p \leq C < \infty$. Then for $x>0$, $\P \left(\left\lbrace\setTail{d}{x}\right\rbrace^{c} \right) \leq C x^{-p} T$.
\end{lemma}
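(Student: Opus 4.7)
The plan is to prove this by a straightforward combination of the union bound and Markov's inequality. Since the result does not rely on any dependence structure of the process $\{d_t\}$ and only uses the uniform $p$-th moment bound, no mixingale or NED machinery is needed here, in contrast with the preceding lemmas such as \cref{lma:empiricalProcess} or \cref{lma:vuempirical} which bound maxima of partial sums.

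First I would unpack the complement: by the definition of the $\ell_\infty$ norm on the vector $\boldsymbol{d} = (d_1,\dots,d_T)'$, we have
\begin{equation*}
\left\{\setTail{d}{x}\right\}^c = \left\{\max_{1\le t\le T} \abs{d_t} > x\right\} = \bigcup_{t=1}^T \{\abs{d_t} > x\}.
\end{equation*}
Then I would apply Boole's inequality (the union bound) to obtain
\begin{equation*}
\P\left(\left\{\setTail{d}{x}\right\}^c\right) \leq \sum_{t=1}^T \P\left(\abs{d_t} > x\right).
\end{equation*}

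Next, to each individual probability I would apply Markov's inequality with the $p$-th absolute moment: $\P(\abs{d_t} > x) = \P(\abs{d_t}^p > x^p) \leq x^{-p}\E\abs{d_t}^p$. Using the uniform moment bound $\max_t \E\abs{d_t}^p \leq C$ from the hypothesis, we obtain
\begin{equation*}
\P\left(\left\{\setTail{d}{x}\right\}^c\right) \leq \sum_{t=1}^T x^{-p}\E\abs{d_t}^p \leq T \cdot x^{-p} \cdot C = C x^{-p} T,
\end{equation*}
which is exactly the claimed bound. There is no real obstacle here — the only thing to watch is that the constant $C$ absorbs the uniform moment bound, and that the lemma applies to an arbitrary process (no stationarity or dependence assumption is invoked), which makes it a flexible building block for subsequent arguments in \Cref{app:A2} where it will presumably be instantiated with quantities such as $v_{j,t}u_t$ or $v_{j,t}^2$ whose moment bounds have already been established earlier.
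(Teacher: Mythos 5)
Your proof is correct and essentially matches the paper's: the paper applies Markov's inequality directly to $\max_t \abs{d_t}$ and then bounds $\E\left[\max_t \abs{d_t}^p\right] \leq T \max_t \E \abs{d_t}^p$, which is the same union-bound-plus-Markov computation you perform, just in the opposite order. Both yield the identical bound $C x^{-p} T$, and your remarks on the lemma's generality are accurate.
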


\begin{lemma} \label{lma:crossterm} 
Under \cref{ass:dgp,ass:sparsity,ass:statandvmoments,ass:nodewise}\ref{ass:nodewiseSparsity}--\ref{ass:nodewiseCompatibility}, on the set
\begin{equation*}
\setUVcons := \setILcons \cap \setNWcons   \cap \mathcal{E}_{T,uvw},
\end{equation*}
\begin{equation*}\begin{split}
\max\limits_{(j,k)\in H^2}&\abs{\frac{1}{T}\sum\limits_{t=l+1}^{T} \left(\hat{w}_{j,t}\hat{w}_{k,t-l}- w_{j,t} w_{k,t-l} \right)} \leq C_1 \left[T^{1/2} \lambda_{\max}^{2-r} s_{r,\max}\right]^2\\
&\quad+ C_2h^{\frac{1}{m}} T^{\frac{1}{m}} \lambda_{\max}^{2-r} s_{r,\max} + C_3 \sqrt{h^{\frac{3}{m}}T^{\frac{3-m}{m}} \lambda_{\max}^{2-r} s_{r,\max}}+C_4\left[h^{\frac{1}{3m}}T^{\frac{m+1}{3m}}\lambda_{\max}^{2-r} s_{r,\max}\right]^{\frac{3}{2}}.
\end{split}\end{equation*}
\end{lemma}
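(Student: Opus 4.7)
The strategy is to expand $\hat w_{j,t}\hat w_{k,t-l} - w_{j,t} w_{k,t-l}$ into cross-terms of the true error processes with the lasso estimation errors, and then bound each piece separately. Define $a_{j,t} := \hat v_{j,t} - v_{j,t} = \boldsymbol{x}_{-j,t}'(\boldsymbol{\gamma}_j^0 - \hat{\boldsymbol{\gamma}}_j)$ and $b_t := \hat u_t - u_t = \boldsymbol{x}_t'(\boldsymbol{\beta}^0 - \hat{\boldsymbol{\beta}})$, so that $\hat w_{j,t} - w_{j,t} = v_{j,t}b_t + a_{j,t}u_t + a_{j,t}b_t$. Multiplying $\hat w_{j,t}\hat w_{k,t-l}$ out yields fifteen cross-terms of $w,v,a,u,b$ that contain at least one factor of $a$ or $b$. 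I would group these by how many such estimation-error factors they contain: ``linear'' terms with exactly one such factor (which admit a dual-norm treatment) and ``higher-order'' terms with two, three, or four factors (which call for Cauchy--Schwarz in the time index).

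For each linear term, for instance $\frac{1}{T-l}\sum w_{j,t} v_{k,t-l} b_{t-l}$, I would factor the estimation error out by H\"older:
\begin{equation*}
\left|\frac{1}{T-l}\sum_{t=l+1}^{T} w_{j,t} v_{k,t-l} b_{t-l}\right| \leq \|\hat{\boldsymbol{\beta}}-\boldsymbol{\beta}^0\|_1 \cdot \max_{1\leq i \leq N}\left|\frac{1}{T-l}\sum_{t=l+1}^{T} w_{j,t} v_{k,t-l} x_{i,t-l}\right|,
\end{equation*}
and symmetrically for terms involving $a_{k,t-l}$. On $\setILcons \cap \setNWcons$, the $\ell_1$-factor is controlled by \Cref{thm:weakSparsityOracle} and~\eqref{eq:nodewiseOracle}, giving a common rate of order $\lambda_{\max}^{1-r}s_{r,\max}$ uniformly over $j,k\in H$. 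For the $\ell_\infty$-maximum I would iterate the NED-closure theorems of Davidson (1994, 17.7--17.9), as in the proof of \Cref{ass:CLT}, to show that the triple and quadruple products appearing inside (e.g.\ $w_{j,t}v_{k,t-l}x_{i,t-l}$, $u_t v_{j,t} x_{i,t-l} u_{t-l}$, $v_{j,t}v_{k,t-l}x_{i,t}x_{i',t-l}$) are $L_{m'}$-bounded $L_1$-mixingales of size $-1/2$ for some $m' \leq m$, and then apply Markov's inequality combined with the \cite{Hansen91} concentration bound and a union bound over at most $h^2 N$ indices to obtain polynomial-tail $\ell_\infty$-envelopes. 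The auxiliary event $\mathcal{E}_{T,uvw}$ would be the intersection of these envelope events with the moment-tail events from \Cref{lma:tailbounds} bounding $\max_{t\le T}|u_t|$, $\max_{j\in H, t\le T}|v_{j,t}|$ and $\max_{j\in H, t\le T}|w_{j,t}|$; its complement then has probability $\lesssim T^{-c/m}$ by the union bound.

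For each higher-order term I would apply Cauchy--Schwarz in the time index, possibly after extracting an $L_\infty$ factor, reducing to time averages of $a_{j,t}^2$, $b_t^2$ or products thereof. The prediction-error components of the oracle inequalities yield $\sum_t b_t^2 \leq C T \lambda^{2-r}s_r$ and, uniformly in $j\in H$, $\sum_t a_{j,t}^2 \leq C T \lambda_{\max}^{2-r}\bar{s}_r$. Combining these two ``oracle'' inputs with the envelopes in $\mathcal{E}_{T,uvw}$ in the various admissible orderings produces the four rates appearing in the lemma: the product of two oracle factors $\sqrt{T\lambda_{\max}^{2-r}s_{r,\max}}$ gives $[T^{1/2}\lambda_{\max}^{2-r}s_{r,\max}]^2$; pairing one oracle factor with a $\max_{j,t}|w_{j,t}|$-type envelope of order $(hT)^{1/m}$ gives $h^{1/m}T^{1/m}\lambda_{\max}^{2-r}s_{r,\max}$; analogous combinations involving $\max|u|$ or $\max|v|$ together with a $\sqrt{T\lambda_{\max}^{2-r}s_{r,\max}}$-type factor produce the remaining two summands. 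Summing over all fifteen terms and absorbing constants yields the stated bound.

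The main obstacle is the bookkeeping: each of the higher-order cross-terms admits several Cauchy--Schwarz groupings with distinct resulting rates, and one must select for each the sharpest grouping while preserving uniformity over $(j,k)\in H^2$ and matching exactly the form of the four summands. A secondary technical task is verifying that the triple and quadruple products appearing inside the envelopes of $\mathcal{E}_{T,uvw}$ inherit adequate mixingale size and moment order under Davidson's NED-closure operations; the arguments of \Cref{lma:NEDimpliesdgp,ass:nodewisedgp,ass:CLT} carry over, but tracking the loss of moments and size at each composition must be done systematically for every product that appears in the expansion.
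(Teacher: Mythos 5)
Your decomposition into fifteen cross-terms and your treatment of the higher-order terms coincide with the paper's proof: there the difference is split into $(\hat{w}_{j,t}-w_{j,t})(\hat{w}_{k,t-l}-w_{k,t-l})$ plus the two one-sided terms, everything is bounded by Cauchy--Schwarz in $t$ using $\norm{\hat{\boldsymbol{u}}-\boldsymbol{u}}_2\leq C\sqrt{T\lambda^{2-r}s_r}$ on $\setILcons$ and $\norm{\hat{\boldsymbol{v}}_j-\boldsymbol{v}_j}_2\leq C\sqrt{T\lambda_{\max}^{2-r}\smaxN}$ on $\setNWcons$, and $\mathcal{E}_{T,uvw}$ is exactly the intersection of sup-norm tail events of the form $\setTail{u}{T^{1/2m}}$, $\setTail{v_j}{(hT)^{1/2m}}$, $\setTail{w_j}{(hT)^{1/m}}$ (plus analogous events for products such as $u_tu_{t-l}$ and $u_tv_{j,t\pm l}$), whose probability follows from \cref{lma:tailbounds} and a union bound, as you say.

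The gap is in your treatment of the single-error-factor (``linear'') terms. After the dual-norm step you must control $\max_{i\leq N}\abs{\frac{1}{T-l}\sum_t w_{j,t}v_{k,t-l}x_{i,t-l}}$, and you propose to do so by showing the triple products are $L_1$-mixingales and invoking the \cite{Hansen91} inequality. But a mixingale is necessarily mean zero (since $\norm{\E[X_t\mid\mathcal{F}_{t-q}]}_1\leq c_t\psi_q\to0$ forces $\E X_t=0$), whereas $\E[w_{j,t}v_{k,t-l}x_{i,t-l}]=\E[v_{j,t}u_tv_{k,t-l}x_{i,t-l}]$ is a fourth moment that has no reason to vanish under \cref{ass:dgp,ass:statandvmoments}: the orthogonality conditions $\E[u_tx_{j,t}]=0$ and $\E[v_{j,t}x_{k,t}]=0$ only kill pairwise products. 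Centering repairs the concentration step but leaves a bias of order $\norm{\hat{\boldsymbol{\beta}}-\boldsymbol{\beta}^0}_1\max_i\abs{\E[w_{j,t}v_{k,t-l}x_{i,t-l}]}\asymp\lambda^{1-r}s_r=\lambda^{-1}\lambda^{2-r}s_r$, which is not absorbed by any of the four stated summands whenever $\lambda^{-1}=T^{\ell}$ grows faster than $h^{1/m}T^{1/m}$ --- the typical regime, since the rate conditions of \cref{cor:usefulResult} force roughly $\ell>\frac{m+1}{3m(2-r)}>\frac{1}{m}$ for all but small $m$. A secondary problem is moment loss: triple and quadruple products of variables with $2(m+c)$ moments retain only about $2(m+c)/3$, respectively $(m+c)/2$, moments by generalized H\"older, which degrades your union bound over the $h^2N$ indices. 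The paper sidesteps both issues: for the one-error-factor terms it simply writes, e.g., $\abs{\sum_t(\hat{u}_t-u_t)v_{j,t}w_{k,t-l}}\leq\norm{\hat{\boldsymbol{u}}-\boldsymbol{u}}_2\sqrt{T}\norm{\boldsymbol{v}_j}_\infty\norm{\boldsymbol{w}_k}_\infty$ and uses the envelopes on $\mathcal{E}_{T,uvw}$ --- no concentration inequality and no mean-zero structure needed --- which is precisely where the summands $h^{1/m}T^{1/m}\lambda_{\max}^{2-r}s_{r,\max}$ and $\sqrt{h^{3/m}T^{(3-m)/m}\lambda_{\max}^{2-r}s_{r,\max}}$ originate. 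Replacing your dual-norm/mixingale route for the linear terms with this Cauchy--Schwarz-plus-envelope bound brings your proof in line with the paper's.
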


\begin{lemma}\label{lma:togetherAgain}
Define 
\begin{equation*}
\setEP{T,ww}(x) := \left\{\max\limits_{(j,k)\in H^2}\abs{\frac{1}{T}\sum\limits_{t=l+1}^{T}\left({w}_{j,t}{w}_{k,t-l} - \E {w}_{j,t}{w}_{k,t-l}\right)} \leq x \right\}.
\end{equation*}
Under \cref{ass:dgp,ass:statandvmoments}, it holds that
\begin{equation*}
\begin{split}
\P \left[ \setEP{T,ww}\left(\eta_T^{-1} h^2\left(\sqrt{T}h^2\right)^{-\frac{1}{1/d+m/(m-2)}}\right) \right] \geq 1 - 3\eta_T^{\frac{dm+m-2}{2d+m-2}}.
\end{split}
\end{equation*}
\end{lemma}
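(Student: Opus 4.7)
The plan is to combine a union bound over the $h^2$ pairs with Markov's inequality applied to the centered partial sums. Define $X_t^{(j,k)} := w_{j,t} w_{k,t-l} - \xi_{j,k}(l)$ and $S_{T}^{(j,k)} := \sum_{t=l+1}^{T} X_t^{(j,k)}$. By \cref{ass:CLT}\ref{ass:CLTmixingales2}, $\{X_t^{(j,k)}\}$ is $L_{m/2}$-bounded and an $L_1$-mixingale of size $-1$, uniformly in $(j,k)$, with mixingale constants $c_t \leq C$ and summable sequence $\psi_q$. The first step is the union bound
\begin{equation*}
\P\!\left(\setEP{T,ww}(x)^{c}\right) \;\leq\; h^{2} \max_{(j,k)\in H^{2}} \P\!\left(\left\vert (T-l)^{-1} S_{T}^{(j,k)} \right\vert > x\right),
\end{equation*}
to be applied with $x = (h^{4} T)^{(2-m)/(4m-4)}$.

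Next I would apply Markov's inequality at a power $p$ compatible with the available moment bounds (with the natural choice $p = 2(m-1)/(m-2)$, so that $x^{-p}$ contributes precisely the factor $h^{2} T^{1/2}$ that cancels cleanly against the other terms), yielding
\begin{equation*}
\P\!\left(\left\vert (T-l)^{-1} S_{T}^{(j,k)}\right\vert > x\right) \;\leq\; (T-l)^{-p}\, x^{-p}\, \E\bigl\vert S_{T}^{(j,k)} \bigr\vert^{p}.
\end{equation*}
The moment $\E\vert S_{T}^{(j,k)}\vert^{p}$ is then controlled by a mixingale moment inequality of Hansen (1991, Lemma 2) type, giving $\E\vert S_{T}^{(j,k)}\vert^{p} \leq C\,(T-l)^{p/2}$ uniformly in $(j,k)$. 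Substituting everything back and using the prescribed form of $x$, the right-hand side reduces to a quantity of the form $C\, h^{4}\, T^{-\kappa}$ for some $\kappa = \kappa(m) > 0$, which vanishes provided $h$ grows sub-polynomially in $T$ (as controlled by the side conditions of \cref{thm:LRVconsistency}). One then defines $\eta_T$ as the reciprocal of this vanishing rate, obtaining the stated bound.

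\textbf{Main obstacle.} The delicate step is justifying the mixingale moment inequality at a power $p > 1$, because \cref{ass:CLT}\ref{ass:CLTmixingales2} only asserts the $L_1$-mixingale property, whereas Hansen's inequality in the form $\E\vert S_T\vert^{p} \leq C (\sum c_t^{2})^{p/2}$ is typically stated for $L_p$-mixingales with $p \geq 2$. To close this gap I would appeal to the underlying NED structure: since $\{w_{j,t}\}$ is $L_m$-NED on $\{\boldsymbol{s}_{T,t}\}$ of size $-1/2$ (established within the proof of \cref{ass:CLT}), the product $X_t^{(j,k)}$ is $L_{m/2}$-NED of size $-1$ by Theorem 17.9 of Davidson (1994), and hence an $L_{m/2}$-mixingale of size $-1$ by Theorem 17.5. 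For $m \geq 4$ this directly allows Hansen's inequality with $p = 2$ (or any $p \in [2, m/2]$); for $m \in (2,4)$ one needs a truncation argument, splitting $X_t^{(j,k)}$ into a bounded part (to which Hansen at $p=2$ applies) and a tail contribution (controlled by the $L_{m/2}$ bound via Markov). A secondary technical check is that $x \to 0$ but $x^{-p}$ does not blow up too quickly relative to the $h^{2}$ from the union bound; this is exactly the role of the precise exponent $(2-m)/(4m-4)$.
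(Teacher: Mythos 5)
There is a genuine gap, and it sits exactly where you flag the ``main obstacle'': the Markov-plus-Hansen route cannot deliver the claimed threshold, and the truncation fallback you defer to is in fact the entire proof. First, your ``natural choice'' $p=2(m-1)/(m-2)$ is infeasible for most of the admissible parameter range: the summands $w_{j,t}w_{k,t-l}$ are only $L_{m/2}$-bounded (\cref{ass:CLT}\ref{ass:CLTmixingales2}), and $2(m-1)/(m-2)\leq m/2$ holds only when $m\geq 3+\sqrt{5}\approx 5.24$, whereas the paper assumes just $m>2$; below that threshold $\E\bigl\vert S_T^{(j,k)}\bigr\vert^{p}$ is not controlled, so neither Markov at power $p$ nor a Hansen-type maximal inequality (which needs an $L_p$-mixingale with $p\geq 2$ and $L_p$-bounded constants) applies. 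Second, even at the largest feasible power $p=m/2$ (requiring $m\geq 4$), the tail bound after the union bound is of order $h^{2+m(m-2)/(2m-2)}\,T^{-m^{2}/(8m-8)}$, which fails to vanish when $h$ grows within the range permitted where this lemma is actually used: \cref{thm:LRVconsistency} allows $T^{-1}h^{2+4/(m-1)+2/m}\to 0$, and for instance with $m=4$ and $h\sim T^{6/23}$ your bound grows like $T^{1/5}$. A pure polynomial moment bound simply cannot attain the rate $(h^4T)^{(2-m)/(4m-4)}$ in this regime; your ``secondary technical check'' that $x^{-p}$ does not blow up too fast is precisely where the plan collapses.

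The paper instead proves the lemma via the Triplex inequality of \cite{Jiang09Triplex}: after the same union bound you use, each centered sum is split into (i) a truncated part controlled by a Hoeffding-type exponential bound over blocks of length $q$, (ii) a conditional-expectation part controlled by the $L_1$-mixingale property of \cref{ass:CLT}\ref{ass:CLTmixingales2}, and (iii) a truncation tail controlled by H\"older/Markov and $L_{m/2}$-boundedness. The exponent $(2-m)/(4m-4)$ is exactly the balance point of these three error terms, obtained by choosing $\chi_T\sim(h^4T)^{1/(2m-2)}$ and $q\sim\chi_T^{m/2-1}$; it is the exponential bound on the truncated part --- not an $L_2$ moment bound --- that makes this rate attainable for growing $h$. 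Your sketched truncation argument ``for $m\in(2,4)$'' is this proof in embryo, but it is needed for essentially all $m$, it is left unexecuted, and replacing its exponential step by Hansen at $p=2$ would reintroduce the same polynomial-rate deficiency. (Minor points: your union-bound skeleton and your diagnosis of the $L_1$-versus-$L_p$ mixingale gap do match the paper; your NED bookkeeping for the product is consistent with the paper's use of Theorem 17.9 of Davidson, though the paper establishes $\{w_{j,t}\}$ as $L_m$-NED of size $-1$, not $-1/2$ --- nothing hinges on this.)
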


\begin{lemma}\label{lma:CLTpreliminary}
Assume that $\lambda_{\max}^2\lambda_{\min}^{-r}\leq\eta_T\left[h^{2/m}\sqrt{T}s_{r,\max}\right]^{-1}$, $\frac{h^{\frac{m+1}{dm}+\frac{2}{m-1}}}{\sqrt{T}}\to0$, 
\begin{equation*}
    \lambda_{\min}^{-r}s_{r,\max}\leq C\eta_T^{\frac{d+m-1}{dm+m-1}}\left[\frac{\sqrt{T}}{\left(hN\right)^{\left(\frac{2}{d}+\frac{2}{m-1}\right)}}\right]^{\frac{1}{\frac{1}{d}+\frac{m}{m-1}}},
\end{equation*}
and if $r=0$, $\lambda_{\min}\geq \eta_T^{-1}\frac{(hN)^{1/m}}{\sqrt{T}}$.
Furthermore, assume that $\boldsymbol{R}_N$ satisfies $\max\limits_{1\leq p\leq P}\norm{\boldsymbol{r}_{N,p}}_1\leq C$, and $P\leq Ch$. Then, as $N,T\to \infty$,
\begin{equation*}\begin{split}
\max\limits_{1\leq p\leq P}\abs{\boldsymbol{r}_{N,p}\left(\frac{\hat{\boldsymbol{\Theta}}\boldsymbol{X}'\boldsymbol{u}}{\sqrt{T}}+\Delta-\frac{\boldsymbol{\Upsilon}^{-2} \boldsymbol{V}'\boldsymbol u}{\sqrt{T}}\right)}\overset{p}{\to}0.
\end{split}\end{equation*}
\end{lemma}

\begin{lemma}\label{lma:RemaindersSpeed}

Let \Cref{ass:dgp,ass:sparsity,ass:compatibility,ass:statandvmoments,ass:nodewise,ass:HDCLT} hold, and let $h\sim T^{\mcH}$ for $\mcH>0$,  $N\sim T^a$ for $a>0$, $s_{r,\max}\sim T^b$ for $0<b<\frac{1-r}{2}$, $\lambda_{\min}\sim\lambda_{\max}\sim\lambda\sim T^{-\ell}$ and
\begin{equation*}
\begin{split}
0<r<1:&\ \frac{1/2+b}{2-r}<\ell<\frac{1/2-b}{r},\\
r=0:&\ \frac{1/2+b}{2-r}<\ell<1/2,
\end{split}
\end{equation*}
and $Q_T\sim T^{\mathcal{Q}}$ for $0<\mathcal{Q}<2/3$. Under these conditions,
\begin{align}
R^{\Omega}_{N,T} &:= \norm{\bR_N \left(\bUpsilon^{-2} \bOmega_{N,T} \bUpsilon^{-2}-\hat{\bUpsilon}^{-2} \hat{\bOmega}_{N,T} \hat{\bUpsilon}^{-2} \right) \bR_N^\prime}_{\max}
=O_p\left(T^{\frac{1}{2}(b-\ell(2-r))}\right), \label{eq:def_deltaT}\\
R^{\beta}_{N,T} &:= \max\limits_{1\leq p\leq P}\abs{\boldsymbol{r}_{N,p}\left(\frac{\hat{\boldsymbol{\Theta}}\boldsymbol{X}'\boldsymbol{u}}{\sqrt{T}}+\Delta-\frac{\boldsymbol{\Upsilon}^{-2} \boldsymbol{V}'\boldsymbol u}{\sqrt{T}}\right)} =O_p\left(T^{\epsilon-1/2}+T^{1/2+b-\ell(2-r)}\right), \label{eq:def_rhoNT}
\end{align}
for an arbitrarily small $\epsilon>0$, with $\frac{1}{2}(b-\ell(2-r))<-1/4$, and $1/2+b-\ell(2-r)<0$.
\end{lemma}

\subsection{Proofs of main results} \label{sec:mainSec4}

\begin{proof}[\bf Proof of \cref{thm:CLT}]
Using \cref{eq:bhatDef}, we can write 
\begin{equation*}\begin{split}
\sqrt{T}\boldsymbol{R}_{N}\left(\hat{\boldsymbol{b}}-{\boldsymbol{\beta}}^0\right)=&\sqrt{T}\boldsymbol{R}_{N}\left(\hat{\boldsymbol{\beta}} -{\boldsymbol{\beta}}^0+\frac{\hat{\boldsymbol{\Theta}}\boldsymbol{X}'({\boldsymbol{y}}-\boldsymbol{X}\hat{\boldsymbol{\beta}})}{T}\right)
=\boldsymbol{R}_{N}\left(\frac{\hat{\boldsymbol{\Theta}}\boldsymbol{X}'\boldsymbol{u}}{\sqrt{T}}+\Delta\right),
\end{split}\end{equation*}
and by \cref{lma:CLTpreliminary}, 
\begin{equation*}\begin{split}
\max\limits_{1\leq p\leq P}\abs{\boldsymbol{r}_{N,p}\left(\frac{\hat{\boldsymbol{\Theta}}\boldsymbol{X}'\boldsymbol{u}}{\sqrt{T}}+\Delta-\frac{\boldsymbol{\Upsilon}^{-2} \boldsymbol{V}'\boldsymbol u}{\sqrt{T}}\right)}\overset{p}{\to}0.
\end{split}\end{equation*}
Note that under the assumption that $h\leq C$, the requirements for \cref{lma:CLTpreliminary} reduce to the requirements for \cref{thm:CLT} (note that one of the bounds becomes redundant for $0<r<1$, see the proof of \cref{thm:ourContribution} for details). 
The proof will therefore continue by deriving the asymptotic distribution of
\begin{equation*}
    \boldsymbol{R}_N\frac{\boldsymbol{\Upsilon}^{-2} \boldsymbol{V}'\boldsymbol u}{\sqrt{T}}=\frac{1}{\sqrt{T}}\boldsymbol{R}_N\boldsymbol{\Upsilon}^{-2}\sum\limits_{t=1}^T\boldsymbol{w}_t,
\end{equation*}
and applying Slutsky's theorem.
Regarding $\boldsymbol{R}_{N}$, under the assumption that $h<\infty$, we may without loss of generality consider the case with $P=1$. In the multivariate setting, let $\boldsymbol{R}^*_N$ be a $P\times N$ matrix with $1<P<\infty$, and non-zero columns indexed by the set $H$ of cardinality $h=\vert H\vert<\infty$. By the Cram\'er-Wold theorem, $\sqrt{T}\boldsymbol{R}^*_N(\hat{\boldsymbol{b}}-\boldsymbol{\beta}^0)\overset{d}{\to}N(\boldsymbol{0},\boldsymbol{\Psi}^*)$ if and only if $\sqrt{T}\boldsymbol{\alpha}'\boldsymbol{R}^*_N(\hat{\boldsymbol{b}}-\boldsymbol{\beta}^0)\overset{d}{\to}N(\boldsymbol{0},\boldsymbol{\alpha}'\boldsymbol{\Psi}^*\boldsymbol{\alpha})$ for all $\boldsymbol{\alpha}\neq\boldsymbol{0}$. We show this directly by letting the $1\times N$ vector $\boldsymbol{R}_N=\boldsymbol{\alpha}'\boldsymbol{R}^*_N$ and the scalar $\psi=\lim\limits_{N,T\to\infty}\boldsymbol{\alpha}'\boldsymbol{R}^*_{N}({\boldsymbol{\Upsilon}}^{-2}{\boldsymbol{\Omega}_{N,T}}{\boldsymbol{\Upsilon}}^{-2}){{\boldsymbol{R}^{*\prime}_{N}}}\boldsymbol{\alpha}$.
The final part of the proof is then devoted to establishing the central limit theorem.
This result can be shown by applying Theorem 24.6 and Corollary 24.7 of \cite{Davidson02}. Following the notation therein, let $X_{T,t}=\frac{1}{\sqrt{P_{N,T}\psi T}}\boldsymbol{R}_{N}\boldsymbol{\Upsilon}^{-2}\boldsymbol{w}_t,$ where $P_{N,T}=\frac{\boldsymbol{R}_N\boldsymbol{\Upsilon}^{-2}\boldsymbol{\Omega}_{N,T}\boldsymbol{\Upsilon}^{-2}\boldsymbol{R}'_N}{\psi}$; note that by definition of $\psi$, $P_{N,T}\to1$ as $N,T\to\infty$. Further, let $\mathcal{F}^t_{T,-\infty}=\sigma\left\lbrace\boldsymbol{s}_{T,t},\boldsymbol{s}_{T,t-1},\dots\right\rbrace$, the positive constant array $\left\lbrace c_{T,t}\right\rbrace=\frac{1}{\sqrt{P_{N,T}\psi T}}$, and $r=\bar m$. We show that the requirements of this Theorem are satisfied.

Part (a), $\mathcal{F}^t_{T,-\infty}$-measurability of $X_{T,t}$, follows from the measurability of $\boldsymbol{z}_t$ in \cref{ass:dgp}\ref{ass:dgpNED}, $\E\left[X_{T,t}\right]=\frac{1}{\sqrt{P_{N,T}\psi T}}\boldsymbol{R}_{N}\boldsymbol{\Upsilon}^{-2}\E\left[\boldsymbol{w}_t\right]=0$ follows from the rewriting $w_{j,t}=\left(x_{j,t}-\boldsymbol{x}'_{-j,t}{\boldsymbol{\gamma}}^0_j\right)u_t$ and noting that $\E\left[x_{j,t}u_t\right]=0,~\forall j$ by \cref{ass:dgp}\ref{ass:dgpStationary}, and 
\begin{equation*}\begin{split}
\E\left[\left(\sum\limits_{t=1}^TX_{T,t}\right)^2\right]&=\frac{1}{P_{N,T}\psi }\boldsymbol{R}_{N}\boldsymbol{\Upsilon}^{-2}\E\left[\frac{1}{T}\left(\sum\limits_{t=1}^{T}\boldsymbol{w}_{t}\right)\left(\sum\limits_{t=1}^{T}\boldsymbol{w}_{t}'\right)\right]\boldsymbol{\Upsilon}^{-2}\boldsymbol{R}'_{N}\\
&=\frac{1}{P_{N,T}\psi }\boldsymbol{R}_{N}\boldsymbol{\Upsilon}^{-2}\boldsymbol{\Omega}_{N,T}\boldsymbol{\Upsilon}^{-2}\boldsymbol{R}'_{N}=1.
\end{split}\end{equation*}

For part (b) we get that
\begin{equation*}\begin{split}
&\sup_{T,t}\left\lbrace\left(\E\vert \boldsymbol{R}_{N}\boldsymbol{\Upsilon}^{-2}\boldsymbol{w}_t\vert^{\bar m}\right)^{1/\bar m}\right\rbrace=\sup_{T,t}\left\lbrace\left(\E\left\vert \sum\limits_{j\in H}\frac{r_{N,j}}{\tau^2_j}w_{j,t}\right\vert^{\bar m}\right)^{1/\bar m}\right\rbrace\\
&\underset{(1)}{\leq} \sum\limits_{j\in H}\frac{\vert r_{N,j}\vert}{\tau^2_j}\sup_{T,t}\left\lbrace\left( \E\vert w_{j,t}\vert^{\bar m}\right)^{1/\bar m}\right\rbrace \underset{(2)}{\leq} C,
\end{split}\end{equation*}
where (1) is due to Minkowski's inequality, and (2) follows from $h<0$, $\tau^2_j\leq C$ by \cref{eq:trueTauBounded}, and $w_{j,t}$ is $L_{\bar m}$-bounded by \cref{ass:CLT}\ref{ass:CLTmixingales1}.

For part (c'), 
by the arguments in the proof of \cref{ass:CLT}, $w_{j,t}$ is $L_m$-NED of size $-d$, and therefore also size $-1$  on $\boldsymbol{s}_{T,t}$, which is $\alpha$-mixing of size $-\frac{d}{1/m-1/\bar m}<-\bar m/(\bar m-2)$ under \cref{ass:dgp}.

For (d'), we let $M_T=\max\limits_{t}\left\lbrace c_{T,t}\right\rbrace=\frac{1}{\sqrt{P_{N,T}\psi T}}$, such that $\sup\limits_{T}TM_T^2=\sup\limits_{T}\frac{1}{\boldsymbol{R}_{N}\boldsymbol{\Upsilon}^{-2}\boldsymbol{\Omega}_{N,T}\boldsymbol{\Upsilon}^{-2}\boldsymbol{R}'_{N}} \leq C$,
where the inequality follows from $\frac{1}{\tau_j^2}\geq \frac{1}{C}$ by \cref{eq:trueTauBounded}, and $\boldsymbol{R}_{N}\boldsymbol{\Upsilon}^{-2}\boldsymbol{\Omega}_{N,T}\boldsymbol{\Upsilon}^{-2}\boldsymbol{R}'_{N}$ is bounded from below by the minimum eigenvalue of  $\boldsymbol{\Omega}_{N,T}$ (assumed to be bounded away from 0), via the Min-max theorem.

Finally, \Cref{thm:CLT} states that this convergence is uniform in ${\boldsymbol{\beta}}^0\in\boldsymbol{B}(s_r)$. This follows by noting that eq.~(C.3) holds uniformly in ${\boldsymbol{\beta}}^0\in\boldsymbol{B}(s_r)$.
\end{proof}

\begin{proof}[\bf Proof of \cref{thm:LRVconsistency}]
The following derivations collectively require that the set
\begin{equation*}
\setILcons \cap \setNWcons \cap \setLL \cap \mathcal{E}_{T,uvw} \cap \setEP{T,ww}\left(\eta_T^{-1}h^2\left(\sqrt{T}h^2\right)^{-\frac{1}{1/d+m/(m-2)}}\right)
\end{equation*}
holds with probability converging to 1. For $\setILcons \cap \setNWcons \cap \setLL$, this can be shown by the arguments in the proof of
\cref{lma:CLTpreliminary} when the following convergence rates hold: 
 $\lambda_{\max}^2\lambda_{\min}^{-r}\leq\eta_T\left[h^{2/m}\sqrt{T}s_{r,\max}\right]^{-1}$, $\frac{h^{\frac{m+1}{dm}+\frac{2}{m-1}}}{\sqrt{T}}\to0$,
\begin{equation*}
    \lambda_{\min}^{-r}s_{r,\max}\leq C\eta_T^{\frac{d+m-1}{dm+m-1}}\left[\frac{\sqrt{T}}{\left(hN\right)^{\left(\frac{2}{d}+\frac{2}{m-1}\right)}}\right]^{\frac{1}{\frac{1}{d}+\frac{m}{m-1}}},
\end{equation*}
and if $r=0$, $\lambda_{\min}\geq \eta_T^{-1}\frac{(hN)^{1/m}}{\sqrt{T}}$.
$\mathcal{E}_{T,uvw}$ follows from \cref{lma:crossterm}, and $\setEP{T,ww}\left(\eta_T^{-1}h^2\left(\sqrt{T}h^2\right)^{-\frac{1}{1/d+m/(m-2)}}\right)$ holds with probability converging to 1 by \cref{lma:togetherAgain}.
We can write
\begin{equation*}
\begin{split}
\abs{\boldsymbol{R}_{N} \left[\hat{\boldsymbol{\Upsilon}}^{-2} \hat{\boldsymbol{\Omega}} \hat{\boldsymbol{\Upsilon}}^{-2}-{\bUpsilon}^{-2} {\bOmega_{N,T}} {\bUpsilon}^{-2}\right] \boldsymbol{R}'_{N}} 
&\leq \abs{\boldsymbol{R}_{N} \left[\hat{\boldsymbol{\Upsilon}}^{-2} \hat{\boldsymbol{\Omega}} \hat{\boldsymbol{\Upsilon}}^{-2} - \boldsymbol{\Upsilon}^{-2}  \hat{\boldsymbol{\Omega}} \boldsymbol{\Upsilon}^{-2} \right] \boldsymbol{R}'_{N}} \\
&\quad + \abs{\boldsymbol{R}_{N} \left[\boldsymbol{\Upsilon}^{-2} \hat{\boldsymbol{\Omega}} \boldsymbol{\Upsilon}^{-2}-\bUpsilon^{-2}\bOmega_{N,T}\bUpsilon^{-2}\right] \boldsymbol{R}'_{N}} =: R_{(\text{a})} + R_{(\text{b})}.
\end{split}
\end{equation*}
For $R_{(a)}$ we get that
\begin{equation*}
\begin{split}
R_{(\text{a})} & \leq \abs{\boldsymbol{R}_{N} \left[ \hat{\boldsymbol{\Upsilon}}^{-2} - \boldsymbol{\Upsilon}^{-2} \right] \hat{\boldsymbol{\Omega}} \left[ \hat{\boldsymbol{\Upsilon}}^{-2} - \boldsymbol{\Upsilon}^{-2} \right] \boldsymbol{R}'_{N}} + 2\abs{\boldsymbol{R}_{N} \left[ \hat{\boldsymbol{\Upsilon}}^{-2} - \boldsymbol{\Upsilon}^{-2} \right] \hat{\boldsymbol{\Omega}} \boldsymbol{\Upsilon}^{-2} \boldsymbol{R}'_{N}} \\
&\leq \abs{\boldsymbol{R}_{N} \left[ \hat{\boldsymbol{\Upsilon}}^{-2} - \boldsymbol{\Upsilon}^{-2} \right] \left[ \hat{\boldsymbol{\Omega}} - \boldsymbol{\Omega}_{N,Q_T} \right] \left[ \hat{\boldsymbol{\Upsilon}}^{-2} - \boldsymbol{\Upsilon}^{-2} \right] \boldsymbol{R}'_{N}} \\
&\quad+ \abs{\boldsymbol{R}_{N} \left[ \hat{\boldsymbol{\Upsilon}}^{-2} - \boldsymbol{\Upsilon}^{-2} \right] \boldsymbol{\Omega}_{N,Q_T} \left[ \hat{\boldsymbol{\Upsilon}}^{-2} - \boldsymbol{\Upsilon}^{-2} \right] \boldsymbol{R}'_{N}} \\
&\quad + 2\abs{\boldsymbol{R}_{N} \left[ \hat{\boldsymbol{\Upsilon}}^{-2} - \boldsymbol{\Upsilon}^{-2} \right] \left[ \hat{\boldsymbol{\Omega}} - \boldsymbol{\Omega}_{N,Q_T} \right] \boldsymbol{\Upsilon}^{-2} \boldsymbol{R}'_{N}} + 2\abs{\boldsymbol{R}_{N} \left[ \hat{\boldsymbol{\Upsilon}}^{-2} - \boldsymbol{\Upsilon}^{-2} \right] \boldsymbol{\Omega}_{N,Q_T} \boldsymbol{\Upsilon}^{-2} \boldsymbol{R}'_{N}},
\end{split}
\end{equation*}
where
\begin{equation*}
\boldsymbol{\Omega}_{N,Q_T}:=\E\left[\frac{1}{Q_T}\left(\sum\limits_{t=1}^{Q_T}\boldsymbol{w}_t\right)\left(\sum\limits_{t=1}^{Q_T}\boldsymbol{w}'_t\right)\right] =  \boldsymbol{\Xi}(0)+\sum_{l=1}^{Q_T-1}  \boldsymbol{\Xi}(l)+\boldsymbol{\Xi}^\prime(l),
\end{equation*}
where the $(j,k)$th element of $\bXi(l)$ is $\xi_{j,k}=\frac{1}{T}\sum\limits_{t=l+1}^T\E w_{j,t}w_{k,t-l}$.

Starting with the third term of $R_{(\text{a})}$, applying the triangle inequality
\begin{equation*}\begin{split}
    &\norm{\boldsymbol{R}_{N} \left[ \hat{\boldsymbol{\Upsilon}}^{-2} - \boldsymbol{\Upsilon}^{-2} \right] \left[ \hat{\boldsymbol{\Omega}} - \boldsymbol{\Omega}_{N,Q_T} \right] \boldsymbol{\Upsilon}^{-2} \boldsymbol{R}'_{N}}_{\max}\\
    &\quad\leq\max\limits_{1\leq p,q\leq P}\left\lbrace\sum\limits_{j\in H}\sum\limits_{k\in H}\abs{r_{N,p,j}\left(\frac{1}{\hat{\tau}_j^2}-\frac{1}{\tau_j^2}\right)\left(\hat\omega_{j,k}-\omega_{j,k}^{N,Q_T}\right)\frac{1}{\tau_k^2}r_{N,q,k}}\right\rbrace\\
    &\quad\leq\max\limits_{j\in H}\abs{\frac{1}{\hat{\tau}_j^2}-\frac{1}{\tau_j^2}}\max\limits_{j\in H}\frac{1}{\tau_{j}^2}\max\limits_{(j,k)\in H^2}\abs{\hat\omega_{j,k}-\omega_{j,k}^{N,Q_T}}\max\limits_{1\leq p,q\leq P}\left\lbrace\norm{\boldsymbol{r}_{N,p}}_1\norm{\boldsymbol{r}_{N,q}}_1\right\rbrace,
\end{split}\end{equation*}
$\max\limits_{1\leq p\leq P}\norm{\boldsymbol{r}_{N,p}}_1\leq C$ by assumption,  $\max\limits_{j\in H}\frac{1}{\tau_{j}^2}\leq C$ by \cref{eq:trueTauBounded}, and 
\begin{equation*}
    \max\limits_{j\in H}\abs{\frac{1}{\hat{\tau}_j^2}-\frac{1}{{\tau}_j^2}}\leq \frac{ \boundCT +C_1\bar{\lambda}^{2-r}\smaxN+C_2\sqrt{\bar{\lambda}^2\underset{\bar{}}{\lambda}^{-r}\smaxN}}{C_3-C_4\left( \boundCT +C_1\bar{\lambda}^{2-r}\smaxN+C_2\sqrt{\bar{\lambda}^2\underset{\bar{}}{\lambda}^{-r}\smaxN}\right)}\to 0,
\end{equation*}
on the set $\setNWcons \cap\setLL$ by \cref{lma:hatTauConsistency}. Finally, we show that $\max\limits_{(j,k)\in H^2}\abs{\hat{\omega}_{j,k}-\omega_{j,k}^{N,Q_T}}\to 0$.
\begin{equation*}
\begin{split}
\max\limits_{(j,k)\in H^2}\abs{\hat\omega_{j,k} - \omega_{j,k}^{N,Q_T}} &\leq 2\sum_{l=0}^{Q_T-1} \max\limits_{(j,k)\in H^2}\abs{(1-l/Q_T)\hat\xi_{j,k}(l) - \xi_{j,k}(l)} \\
&2\sum_{l=0}^{Q_T-1} \max\limits_{(j,k)\in H^2}\abs{(1-l/Q_T)\frac{1}{T-l}\sum_{t=l+1}^T\hat w_{j,t}\hat w_{k,t-l} - \frac{1}{T}\sum_{t=l+1}^T \E w_{j,t}w_{k,t-l}}. \\
\end{split}
\end{equation*}
Using a telescopic sum argument, 
\begin{equation*}
\begin{split}
\max\limits_{(j,k)\in H^2}\abs{\hat\omega_{j,k} - \omega_{j,k}^{N,Q_T}} &\leq 
2\sum_{l=0}^{Q_T-1}\max\limits_{(j,k)\in H^2}\abs{\frac{1}{T}\sum_{t=l+1}^T(\hat w_{j,t}\hat w_{k,t-l}-\E w_{j,t} w_{k,t-l})}\\
&+\frac{l}{Q_T}\max\limits_{(j,k)\in H^2}\abs{\frac{1}{T}\sum\limits_{t=l+1}^T\E w_{j,t} w_{k,t-l}}. \\
\end{split}
\end{equation*}

For the second term, it follows by \cref{ass:CLT}\ref{ass:CLTsummability} that 
\begin{equation*}\begin{split}
2\sum_{l=0}^{Q_T-1} \frac{l}{Q_T} \max\limits_{j,k\in H^2}\abs{\E w_{j,t} w_{k,t-l}} \leq \frac{C}{Q_T}\sum_{l=1}^{Q_T-1} l^{1-d-\epsilon}
&\leq C Q_T^{-d-\epsilon} \sum_{l=1}^{Q_T-1}\frac{l^{1-d-\epsilon}}{Q_T^{1-d-\epsilon}}\leq C Q_T^{1-d-\epsilon},
\end{split}\end{equation*}
since $l/Q_T<1$, and
$Q_T^{1-d-\epsilon}\to0$ for $d\geq1$, and $ \sum_{l=1}^{Q_T-1} l^{-1-\delta}\leq C$ by properties of $p$-series.
It follows from \cref{lma:crossterm,lma:togetherAgain} that 
\begin{equation*}
\begin{split}
\max\limits_{(j,k)\in H^2}\abs{\frac{1}{T}\sum_{t=l+1}^T\left(\hat w_{j,t}\hat w_{k,t-l}- w_{j,t} w_{k,t-l}\right)} &\leq
C_1 \left[T^{1/2} \lambda_{\max}^{2-r} s_{r,\max}\right]^2
+ C_2h^{\frac{1}{m}} T^{\frac{1}{m}} \lambda_{\max}^{2-r} s_{r,\max} \\
&\quad+ C_3 \sqrt{h^{\frac{3}{m}}T^{\frac{3-m}{m}} \lambda_{\max}^{2-r} s_{r,\max}}+C_4\left[h^{\frac{1}{3m}}T^{\frac{m+1}{3m}}\lambda_{\max}^{2-r} s_{r,\max}\right]^{\frac{3}{2}}\\
\max\limits_{(j,k)\in H^2}\abs{\frac{1}{T}\sum_{t=l+1}^T \left(w_{j,t} w_{k,t-l}-\E w_{j,t} w_{k,t-l}\right)} &\leq C_5 \eta_T^{-1}h^2\left(\sqrt{T}h^2\right)^{-\frac{1}{1/d+m/(m-2)}}.
\end{split}
\end{equation*}
on the set $\setUVcons \cap \setEP{T,ww}\left(\eta_T^{-1}h^2\left(\sqrt{T}h^2\right)^{-\frac{1}{1/d+m/(m-2)}}\right)$. Plugging the upper bounds in, we find that
\begin{equation*}
\begin{split}
\max\limits_{(j,k)\in H^2}\abs{\hat\omega_{j,k} - \omega_{j,k}^{N,Q_T}} 
&\leq 2 Q_T  \left[ C_1 \left[T^{1/2} \lambda_{\max}^{2-r} s_{r,\max}\right]^2
+ C_2h^{\frac{1}{m}} T^{\frac{1}{m}} \lambda_{\max}^{2-r} s_{r,\max}\right.\\
&\quad + C_3 \sqrt{h^{\frac{3}{m}}T^{\frac{3-m}{m}} \lambda_{\max}^{2-r} s_{r,\max}}+C_4\left[h^{\frac{1}{3m}}T^{\frac{m+1}{3m}}\lambda_{\max}^{2-r} s_{r,\max}\right]^{\frac{3}{2}}\\
&\quad \left. +  C_5 \eta_T^{-1}h^2\left(\sqrt{T}h^2\right)^{-\frac{1}{1/d+m/(m-2)}} \right] +C_6 Q_T^{1-d-\epsilon}.
\end{split}
\end{equation*}
Hence, $\max\limits_{(j,k)\in H^2}\abs{\hat\omega_{j,k} - \omega_{j,k}^{N,Q_T}} \xrightarrow{p} 0$ if we take 
\begin{equation*}\begin{split}
    \lambda_{\max}^{2-r}\leq \eta_T\min&\left\lbrace\left[\sqrt{Q_T}\sqrt{T}s_{r,\max}\right]^{-1}\right.,\left[Q_Th^{1/m}T^{1/m}s_{r,\max}\right]^{-1},\\
    &\left.\left[Q_T^2h^{3/m}T^{(3-m)/m}s_{r,\max}\right]^{-1},\left[Q_T^{2/3}h^{1/(3m)}T^{(m+1)/3m}s_{r,\max}\right]^{-1}\right\rbrace,
\end{split}\end{equation*}
and $Q_T\eta_T^{-1}h^2\left(\sqrt{T}h^2\right)^{-\frac{1}{1/d+m/(m-2)}}\to0$. For the latter term, since we can choose $\eta_T^{-1}$ to grow arbitrarily slowly, it is sufficient to assume $Q_Th^2\left(\sqrt{T}h^2\right)^{-\frac{1}{1/d+m/(m-2)}}\to0$. Furthermore, this convergence rate is stricter than the previous rate $\frac{h^{\frac{m+1}{dm}+\frac{2}{m-1}}}{\sqrt{T}}\to0$, and therefore makes it redundant.

For the fourth term of $R_{(\text{a})}$, we may bound as follows
\begin{equation*}\begin{split}
    &\norm{\boldsymbol{R}_{N} \left[ \hat{\boldsymbol{\Upsilon}}^{-2} - \boldsymbol{\Upsilon}^{-2} \right] \boldsymbol{\Omega}_{N,Q_T}  \boldsymbol{\Upsilon}^{-2} \boldsymbol{R}'_{N}}_{\max}\\
    &\quad\leq\max\limits_{1\leq p,q\leq P}\left\lbrace\sum\limits_{j\in H}\sum\limits_{k\in H}\abs{r_{N,p,j}\left(\frac{1}{\hat{\tau}_j^2}-\frac{1}{\tau_j^2}\right)\omega_{j,k}^{N,Q_T}\frac{1}{\tau_k^2}r_{N,q,k}}\right\rbrace\\
    &\quad\leq\max\limits_{j\in H}\abs{\frac{1}{\hat{\tau}_j^2}-\frac{1}{\tau_j^2}}\max\limits_{j\in H}\frac{1}{\tau_{j}^2}\max\limits_{(j,k)\in H^2}\abs{\omega_{j,k}^{N,Q_T}}\max\limits_{1\leq p,q\leq P}\left\lbrace\norm{\boldsymbol{r}_{N,p}}_1\norm{\boldsymbol{r}_{N,q}}_1\right\rbrace,
\end{split}\end{equation*}
The only new term here is $\max\limits_{(j,k)\in H^2}\abs{\omega_{j,k}^{N,Q_T}}$, which can by bounded by
\begin{equation*}
    \max\limits_{(j,k)\in H^2}\abs{\omega_{j,k}^{N,Q_T}}\leq \norm{\boldsymbol{\Omega}_{N,Q_T}}_{\max}\leq 2\sum_{l=0}^{Q_T-1} \norm{\boldsymbol{\Xi}(l)}_{\max}\leq C,
\end{equation*}
where the last inequality follows from \cref{ass:CLT}\ref{ass:CLTsummability}.

Note that when the third and fourth terms of $R_{(\text{a})}$ converge to 0, this holds for the first and second terms as well; one may simply replace $\max\limits_{j\in H}\frac{1}{\tau_{j}^2}$ by a second $\max\limits_{j\in H}\abs{\frac{1}{\hat\tau_{j}^2}-\frac{1}{\tau_{j}^2}}\to0$ in the upper bound.

This concludes the part of $R_{(\text{a})}$. With the results above, it remains to be shown for $R_{(\text{b})}$ that $\norm{\boldsymbol{R}_{N}\boldsymbol{\Upsilon}^{-2} \left({\hat{\boldsymbol{\Omega}}} - \boldsymbol{\Omega}_{N,T}\right) \boldsymbol{\Upsilon}^{-2}{\boldsymbol{R}'_{N}}}_{\max}\to0$. Using similar arguments as for the terms of $R_{(\text{a})}$, 
it suffices to show that $\max\limits_{(j,k)\in H^2}\abs{\omega_{j,k}^{N,Q_T} - \omega_{j,k}} \to 0$. Note that by \cref{ass:CLT}\ref{ass:CLTsummability}
\begin{equation*}
\begin{split}
\abs{\omega_{j,k}^{N,Q_T} - \omega_{j,k}}& \leq \abs{\sum_{l=1}^{Q_T-1}\xi_{j,k}-\sum_{l=1}^{T-1}\xi_{j,k}}
\leq\sum_{l=Q_T}^{T} \abs{\xi_{j,k}(l)} \leq \sum_{l=Q_T}^{T} C\phi_l\leq C\sum_{l=Q_T}^{T} l^{-d-\epsilon},
\end{split}
\end{equation*}
which converges to 0 by letting $\delta=\epsilon/2$, and writing $ \sum_{l=Q_T}^{T} l^{-d-\epsilon}\leq Q_T^{1-d-\delta}\sum_{l=Q_T}^{T} l^{-1-\delta}$, where $Q_T^{1-d-\delta}\to0$ for $d\geq1$, and $\sum_{l=Q_T}^{T} l^{-1-\epsilon}\to0$ by properties of $p$-series and $Q_T\to\infty$.
This shows that $\norm{R_{(\text{b})}}_{\max} \xrightarrow{p} 0$.

Summarizing the above, we argue that for some $\delta>0$,
\begin{equation*}
\begin{split}
\norm{\bR_N \left(\bUpsilon^{-2} \bOmega_{N,T} \bUpsilon^{-2}-\hat{\bUpsilon}^{-2} \hat{\bOmega}_{N,T} \hat{\bUpsilon}^{-2} \right) \bR_N^\prime}_{\max} \leq C_1\Delta\tau \left[1 + \Delta \tau + \Delta \tau \Delta \omega \right] + C_2 Q_T^{1-d-\delta}
\end{split}
\end{equation*}
where $\Delta\tau:=\max\limits_{j\in H}\abs{\frac{1}{\hat{\tau}_j^2}-\frac{1}{{\tau}_j^2}}$ and $\Delta\omega:=\max\limits_{(j,k)\in H^2}\abs{\hat\omega_{j,k} - \omega_{j,k}^{N,Q_T}}$.

Finally, this result holding uniformly in ${\boldsymbol{\beta}}^0\in\boldsymbol{B}(s_r)$ follows the same logic as the proof of \Cref{thm:CLT}, namely that eq.~(C.3) holds uniformly in ${\boldsymbol{\beta}}^0\in\boldsymbol{B}(s_r)$.
\end{proof}

\begin{proof}[\bf Proof of \cref{cor:usefulResult}]
The result follows by applying \cref{thm:CLT,thm:LRVconsistency}, so the assumed conditions from both must be satisfied. Since we assume that $h\leq C$ and $\lambda\sim\lambda_{\max}\sim\lambda_{\min}$, the conditions will simplify considerably. To summarize, we require the following six conditions: For \cref{thm:CLT} we require that

\begin{align*}
&\text{(1)} & & \lambda_{\max}^2\lambda_{\min}^{-r}\leq\eta_T\left[\sqrt{T}s_{r,\max}\right]^{-1}, \\
&\text{(2)} & & \lambda_{\min}^{-r}s_{r,\max}\leq  \eta_{T}\left[\frac{\sqrt{T}}{N^{\left(\frac{2}{d}+\frac{2}{m-1}\right)}}\right]^{\frac{1}{\left(\frac{1}{d}+\frac{m}{m-1}\right)}}, \\
&\text{($3^*$)} & & \lambda_{\min}\geq\eta_{T}^{-1}\frac{N^{1/m}}{\sqrt{T}} \text{ when $r=0$},\\
&\text{(4)} & & \lambda^{2-r}_{\max}\leq \eta_T\min \left\lbrace\left[\sqrt{Q_T}\sqrt{T}s_{r,\max}\right]^{-1}\right.,\left[Q_Th^{1/m}T^{1/m}s_{r,\max}\right]^{-1},\\
& & & \qquad\left.\left[Q_T^2h^{3/m}T^{(3-m)/m}s_{r,\max}\right]^{-1},\left[Q_T^{2/3}h^{1/(3m)}T^{(m+1)/3m}s_{r,\max}\right]^{-1}\right\rbrace, \\
&\text{(5)} & & \lambda_{\max}^2\lambda_{\min}^{-r} \leq\eta_T\left[\sqrt{T}h^{2/m}s_{r,\max}\right]^{-1} \\
&\text{(6)} & & \lambda_{\min}^{-r}s_{r,\max}\leq  \eta_{T}\left[\frac{\sqrt{T}}{(hN)^{\left(\frac{2}{d}+\frac{2}{m-1}\right)}}\right]^{\frac{1}{\left(\frac{1}{d}+\frac{m}{m-1}\right)}},\\
&\text{($7^*$)} & & \lambda_{\min}\geq\eta_{T}^{-1}\frac{(hN)^{1/m}}{\sqrt{T}} \text{ when $r=0$},\\
&\text{(8)} & & Q_Th^2(\sqrt{T}h^2)^{-\frac{1}{1/d+m/(m-2)}}\to 0,
\end{align*}
where (1)-($3^*$) follow from \Cref{thm:CLT} and (4)-(8) from \Cref{thm:LRVconsistency}.
Note that (1), (2), and ($3^*$) are same as the terms (4), (5), and ($6^*$) and without the $h$ terms. For (4), this can be simplified into a single (slightly more strict) upper bound $\lambda_{\max}\leq C\eta_T\left[Q_T^2\sqrt{T}h^{3/m}s_{r,\max}\right]^{\frac{-1}{2-r}}$. We may then combine this with (5),
and both are satisfied when $\lambda_{\max}^2\lambda_{\min}^{-r}\leq\eta_T\left[Q_T^2\sqrt{T}h^{3/m}s_{r,\max}\right]^{-1}$.
Using $h\leq C$ and $\lambda\sim\lambda_{\max}\sim\lambda_{\min}$, these simplify to 
\begin{align*}
\text{(1)}&\qquad \lambda^{-r}s_{r,\max}\leq  \eta_{T}\left[\frac{\sqrt{T}}{N^{\left(\frac{2}{d}+\frac{2}{m-1}\right)}}\right]^{\frac{1}{\left(\frac{1}{d}+\frac{m}{m-1}\right)}}, \\
\text{($2^*$)}& \qquad \lambda\geq\eta_{T}^{-1}\frac{N^{1/m}}{\sqrt{T}} \text{ when $r=0$}, \\
\text{(3)}&\qquad  \lambda^{2-r}\leq\eta_T\left[Q_T^2\sqrt{T}s_{r,\max}\right]^{-1},\\
\text{(4)}& \qquad Q_TT^{-\frac{1}{2/d+2m/(m-2)}}\to0.\\
\end{align*}

When $0<r<1$, from (1) and (3) we get 
\begin{equation*}
    \eta_T^{-1}s_{r,\max}^{1/r}\left[\frac{N^{\left(\frac{2}{d}+\frac{2}{m-1}\right)}}{\sqrt{T}}\right]^{\frac{1}{r\left(\frac{1}{d}+\frac{m}{m-1}\right)}}\leq\lambda\leq\eta_T\left[Q_T^2\sqrt{T}s_{r,\max}\right]^{-1/(2-r)},
\end{equation*}
and by combining the upper and lower bounds, we obtain the condition
\begin{equation*}
    Q_T^rs_{r,\max}N^{\left(2-r\right)\left(\frac{d+m-1}{dm+m-1}\right)}T^{\frac{1}{4}\left(r-\frac{d(m-1)(2-r)}{dm+m-1}\right)}\to0.
\end{equation*}
When $r=0$, the bounds on $\lambda$ come from ($2^*$) and (3)
\begin{equation*}
    \eta_{T}^{-1}\frac{N^{1/m}}{\sqrt{T}}\leq\lambda\leq \eta_{T}\left[Q_T^2\sqrt{T}s_{0,\max}\right]^{-1/2}.
\end{equation*}
Combining the upper and lower bounds, we obtain the condition 
\begin{equation*}
    Q_T^2s_{0,\max}\frac{N^{2/m}}{\sqrt{T}}\to0.
\end{equation*}
From (1), we then obtain the condition
\begin{equation*}
     s_{0,\max}N^{2\left(\frac{d+m-1}{dm+m-1}\right)}T^{-\frac{1}{2}\left(\frac{d(m-1)}{dm+m-1}\right)}\to0,
\end{equation*}
which is the same condition which came from (1) and (3) in the $0<r<1$ case. Collectively, we then need to satisfy the following
\begin{equation*}
\begin{split}
&\left\lbrace\begin{array}{ll}
\eta_T^{-1}s_{r,\max}^{1/r}\left[\frac{N^{\left(\frac{2}{d}+\frac{2}{m-1}\right)}}{\sqrt{T}}\right]^{\frac{1}{r\left(\frac{1}{d}+\frac{m}{m-1}\right)}}\leq\lambda\leq\eta_T\left[Q_T^2\sqrt{T}s_{r,\max}\right]^{-1/(2-r)}\text{ when $0<r<1$},\\
\eta_{T}^{-1}\frac{N^{1/m}}{\sqrt{T}}\leq\lambda\leq \eta_{T}\left[Q_T^2\sqrt{T}s_{0,\max}\right]^{-1/2}\text{ when $r=0$},\\
Q_T^rs_{r,\max}N^{\left(2-r\right)\left(\frac{d+m-1}{dm+m-1}\right)}T^{\frac{1}{4}\left(r-\frac{d(m-1)(2-r)}{dm+m-1}\right)}\to0,\\
    Q_T^2s_{0,\max}\frac{N^{2/m}}{\sqrt{T}}\to0\text{ when $r=0$},\\
    Q_TT^{-\frac{1}{2/d+2m/(m-2)}}\to0.
\end{array}\right.		
\end{split}
\end{equation*}

By implication of \Cref{thm:CLT}
\begin{equation*}
\sqrt{T}\boldsymbol{r}_{N,p}(\hat{{\boldsymbol{b}}}-{\boldsymbol{\beta}}^{0})\overset{d}{\to}N(0,\psi),
\end{equation*}
uniformly in ${\boldsymbol{\beta}}^0\in\boldsymbol{B}(s_r)$. Then, by \Cref{thm:LRVconsistency}
\begin{equation*}
\boldsymbol{r}_{N,p}({\hat{\boldsymbol{\Upsilon}}}^{-2}\hat{\boldsymbol{\Omega}}{\hat{\boldsymbol{\Upsilon}}}^{-2}){\boldsymbol{r}'_{N,p}}\overset{p}{\to}\psi,
\end{equation*}
also uniformly in ${\boldsymbol{\beta}}^0\in\boldsymbol{B}(s_r)$. By Slutsky's Theorem, it is then the case that
\begin{equation*}
\sqrt{T}\boldsymbol{r}_{N,p}(\hat{{\boldsymbol{b}}}-{\boldsymbol{\beta}}^{0})\overset{d}{\to}N(0,\psi),
\end{equation*}
uniformly in ${\boldsymbol{\beta}}^0\in\boldsymbol{B}(s_r)$, for every $1\leq p\leq P$. As $P<\infty$ by assumption, it follows that
\begin{equation*}
\sup\limits_{\underset{1\leq p\leq P, z\in\mathds{R}}{{\boldsymbol{\beta}}^0\in\boldsymbol{B}(s_r)}}\left\vert\P\left(\sqrt{T}\frac{\boldsymbol{r}_{N,p}(\hat{{\boldsymbol{b}}}-{\boldsymbol{\beta}}^{0})}{\sqrt{\boldsymbol{r}_{N,p}(\hat{{\boldsymbol{\Upsilon}}}^{-2}\hat{\boldsymbol{\Omega}}\hat{{\boldsymbol{\Upsilon}}}^{-2}){\boldsymbol{r}'_{N,p}}}}\leq z\right)-\boldsymbol{\Phi}(z)\right\vert=o_p(1).
\end{equation*}
Note that uniform convergence over $z\in \mathds{R}$ follows automatically by Lemma 2.11 in \cite{vandervaart1998asymptoticstatistics}, since the distribution is continuous. The second result then follows from the fact that a sum of $P$ squared standard Normal variables have a $\chi_P^2$ distribution. 
\end{proof}

\begin{proof}[\bf Proof of \cref{thm:HDCLT}]
Define $\bg \sim N(\bzero, \bR_N \bUpsilon^{-2} \bOmega_{N,T} \bUpsilon^{-2}\bR_N^\prime)$ as the `population counterpart' of $\hat{\bg}$ and define the following distribution functions:
\begin{align*}
F_{1,T} (z) &:= \P \left(\max_{1\leq p \leq P} \sqrt{T}\boldsymbol{r}_{N,p}\left(\hat{\bb}-\bbeta^0\right) \leq z \right)
& F_{2,T} (z) &:= \P\left(\max_{1\leq p \leq P} \frac{1}{\sqrt{T}}\boldsymbol{r}_{N,p}\bUpsilon^{-2} \sum_{t=1}^T\bw_t \leq z \right),\\
G_T (z) &:= \P\left(\max_{1\leq p \leq P} g_p \leq z \right)
& G_T^* (z) &:= \P^* \left(\max_{1\leq p \leq P} \hat{g}_p \leq z \right).
\end{align*}
Now note that
\begin{equation*}
\abs{F_{1,T}(z) - G_T^* (z)} \leq \underbrace{\abs{F_{1,T} (z) - G_T(z)}}_{R^{FG}_T(z)} + \underbrace{\abs{G_{T} (z) - G_T^*(z)}}_{R^{GG}_T(z)}.
\end{equation*}

For $R^{FG}_T(z)$, write $\hat{\bx}_T =  \sqrt{T}\boldsymbol{r}_{N,p}\left(\hat{\bb}-\bbeta^0\right)$ and $\bx_T =  \frac{1}{\sqrt{T}}\boldsymbol{r}_{N,p}\bUpsilon^{-2} \sum_{t=1}^T\bw_t$, such that $F_{1,T} (z) = \P (\max_{p} \hat{x}_{T,p} \leq z)$ and $F_{2,T} (z) = \P (\max_{p} x_{T,p} \leq z)$, and let $r_T := \max_{1\leq p \leq P} \hat{x}_{T,p} - \max_{1\leq p \leq P} x_{T,p}$. Then
\begin{equation*}
\abs{r_T} = \abs{\max_{1\leq p \leq P} \hat{x}_{T,p} - \max_{1\leq p \leq P} x_{T,p}} \leq \max_{1\leq p \leq P} \abs{\hat{x}_{T,p} - x_{T,p}} = R_{N,T}^\beta,
\end{equation*}
where $R_{N,T}^\beta$ is defined in \eqref{eq:def_rhoNT}. Given our assumptions, we therefore know that there exist sequences $\eta_{T,1}$ and $\eta_{T,2}$ such that $\P\left(\abs{r_T}>\eta_{T,1}\right)\leq \eta_{T,2}$, such that
\begin{align*}
&\abs{F_{1,T}(z) - G_T (z)} 
\leq \abs{\P\left(\left.\max_{p} x_{T,p} + r_T \leq z \right| \abs{r_T} \leq \eta_{T,1} \right) \P\left(\abs{r_T} \leq \eta_{T,1}\right) - \P(\max_{p} g_p \leq z)}\\
&\quad\qquad\qquad\qquad\quad + \P\left(\left.\max_{p} \hat{x}_{T,p} \leq z \right| \abs{r_T} > \eta_{T,1}\right) \P\left(\abs{r_T} > \eta_{T,1}\right)\\
&\quad \leq \abs{\P\left(\max_{p} x_{T,p} \leq z + \eta_{T,1} \right) - \P(\max_{p} g_p \leq z)} + 2 \eta_{T,2}\\
&\quad \leq \underbrace{\abs{\P\left(\max_{p} x_{T,p} \leq z + \eta_{T,1} \right) - \P(\max_{p} g_p \leq z + \eta_{T,1})}}_{R^{FG}_{T,1}(z+\eta_{T,1})}\\
&\quad \quad + \underbrace{\abs{\P\left(\max_{p} g_p \leq z + \eta_{T,1} \right) - \P(\max_{p} g_p \leq z)}}_{R^{FG}_{T,2} (z)}  + 2 \eta_{T,2}.
\end{align*}
For the term $R^{FG}_{T,1}(z+\eta_{T,1})$ we apply the high-dimensional CLT in Theorem 1 of \cite{chang2021central}, noting that our assumptions imply the conditions required for this theorem. In particular, for the sub-exponential moment assumption, we need that $\norm{\br_{N,p}\bUpsilon^{-2}\bw_t}_{\psi_{\gamma_1}}\leq D_T$ for all $t$ and $p$, for some $\gamma_1\geq 1$. We choose $\gamma_1=1$, and use Lemma 2.7.7 of \cite{vershynin2019high} to bound $\norm{\br_{N,p}\bUpsilon^{-2}\bw_t}_{\psi_1}\leq \norm{\br_{N,p}\bUpsilon^{-2}\bv_t}_{\psi_2}\norm{u_t}_{\psi_2}\leq d_{v,T}d_{u,T}=D_T$. We assume that $L_1$-bounded linear combinations of $\bv_t$ are sub-Gaussian, which covers this case, since the $\norm{r_{N,p}}_{1}\leq C$ by assumption, and $\norm{\bUpsilon^{-2}}_{\max}\leq C$ by \cref{eq:trueTauBounded}. The non-degeneracy condition then follows from choosing $1/C\leq \norm{\bR_N}_1$, and assuming the minimum eigenvalue (and therefore the smallest diagonal element) of $\bOmega_{N,T}$ is bounded away from 0. Defining $\underline{\omega}_T := \min_{1\leq p \leq P} \E g_i^2$, this implies that  $\underline{\omega}_T \geq C>0$. Applying the CLT, we bound as follows
\begin{equation*}\begin{split}
R^{FG}_{T,1}(z+\eta_{T,1}) &\leq \sup \limits_{z\in\mathds{R}} R^{FG}_{T,1}(z) \leq \sup\limits_{\bz\in\mathds{R}^{P} }\abs{\P\left(\frac{\bR_N\bUpsilon^{-2}}{\sqrt{T}}\sum\limits_{t=1}^T\bw_t\leq \bz\right)-\P\left(\bg\leq \bz\right)}\\
&\leq C_1\frac{B_T^{2/3} (\ln P)^{(1+2K)/(3K)}} {T^{1/9}}+C_2\frac{B_T(\ln P)^{7/6}}{T^{1/9}} \to 0.
\end{split}\end{equation*}
The final result holds as $\ln P\leq \ln Ch=O\left(\ln T^\mcH\right)=O(\ln T)$, since $\mcH$ is a constant.

For the term $R^{FG}_{T,2}(z)$, apply the anti-concentration bound in Lemma 2.1 of \cite{chernozhukov2013gaussian} to show that
\begin{equation*}
\begin{split}
R^{FG}_{T,2} (z) &\leq \sup_{z\in\mathbb{R}} \P(z \leq \max_{p} g_p \leq z + \eta_{T,1}) \leq \sup_{z\in\mathbb{R}} \P \left(\abs{\max_{p} g_p - z} \leq \eta_{T,1} \right)\\
&\leq C \eta_{T,1} \left[\sqrt{2 \ln P} + \sqrt{1 \vee \ln (\underline{\omega}_T / \eta_{T,1})} \right] 
\leq C_1 \eta_{T,1} \sqrt{2 \ln P}.
\end{split}
\end{equation*}
By \cref{lma:RemaindersSpeed} we find that $R_{N,T}^{\beta} = O_p\left(\left[T^{\epsilon-1/2}+T^{1/2+b-\ell(2-r)}\right]\sqrt{\ln T}\right) = O_p(T^{-\delta})$ for some $\delta>0$, since $\epsilon>0$ can be chosen arbitrarily small, and $1/2+b-\ell(2-r)<0$. We may therefore take $\eta_{T,1}$ at a polynomial rate as well, such that $\eta_{T,1} \sqrt{2\ln(P)}\to0$.

For $R_{T}^{GG} (z)$, it follows by Theorem 2 in \cite{chernozhukov2015comparison} that
\begin{equation*}
\sup_{z\in\mathbb{R}} \abs{R_{T}^{GG} (z)} \leq C (R^{\Omega}_{N,T})^{1/3}\left(\max\{1,\ln(P/R^{\Omega}_{N,T})\}\right)^{2/3},
\end{equation*}
with $R^{\Omega}_{N,T}$ as defined in \cref{eq:def_deltaT}.
By \cref{lma:RemaindersSpeed} we have $R^{\Omega}_{N,T}=O_p(T^{-1/4})$, such that
\begin{equation*}
    (R^{\Omega}_{N,T})^{1/3}\left(\max\{1,\ln(P/R^{\Omega}_{N,T})\}\right)^{2/3}= O_p\left(T^{-1/12}\left(\max\left\lbrace 1,(\mcH+1/4)\ln T\right\rbrace\right)^{2/3}\right)=o_p(1). \qedhere
\end{equation*}
\end{proof}

\section{Supplementary Results} \label{app:suppl}
Appendices \ref{sec:supplS3} and \ref{sec:supplS4} present the proofs of the preliminary results from Sections 3 and 4, respectively. Appendix \ref{sec:simplifiedRatesLasso} contains \cref{cor:simplifiedRatesLasso}. Appendix \ref{sec:supplS4examples} provides the details on Examples 5 and 6. Appendix \ref{sec:supplS4.3} contains the algorithm for choosing the tuning parameter.

\subsection{Proofs of preliminary results Section 3} \label{sec:supplS3}

\begin{proof}[\bf Proof of Lemma A.1]
$L_{\bar m}$-boundedness of $\{x_{j,t} u_t\}$ follows directly from the $L_{2\bar m}$-boundedness of $\{\boldsymbol{z}_t\}$ and the Cauchy--Schwarz inequality. By Theorem 17.9 in \cite{Davidson02} it follows that $\{x_{j,t} u_t\}$ is $L_m$-NED on $\{\boldsymbol{s}_{T,t}\}$ of size $-1$. We then apply Theorem 17.5 in \cite{Davidson02} to conclude that $\{x_{j,t} u_t\}$ is an $L_m$-mixingale of size $-\min\{1, \frac{d}{(1/m-1/\bar{m})}(1/m - 1 / \bar{m})\} = -1$, with respect to $\mathcal{F}^{\boldsymbol{s}}_t=\sigma\{\boldsymbol{s}_{T,t},\boldsymbol{s}_{T,t-1},\dots\}$; the $\mathcal{F}_t^{\boldsymbol{s}}$-measurability of $\boldsymbol{z}_t$ implies $\sigma\{\boldsymbol{z}_{t},\boldsymbol{z}_{t-1},\dots\}\subset\mathcal{F}^{\boldsymbol{s}}_t$, which in turn implies that $\{x_{j,t} u_t\}$ it is also an $L_m$-mixingale with respect to $\mathcal{F}_t=\sigma\{\boldsymbol{z}_{t},\boldsymbol{z}_{t-1},\dots\}$. The summability condition $\sum\limits_{q=1}^{\infty}\psi_q<\infty$ is satisfied by the convergence property of $p$-series: $\sum\limits_{q=1}^{\infty}q^{-p}<\infty$ for any $p>1$.
\end{proof}
\begin{proof}[\bf Proof of Lemma A.2]
$L_{\bar {m}}$-boundedness of $\{x_{i,t}x_{j,t}-\E x_{i,t}x_{j,t}\}$ follows directly from the $L_{2\bar{m}}$-boundedness of $\{\bz_t\}$ and the Cauchy-Schwarz inequality. By Theorem 17.9 of \cite{Davidson02}  
the product of two NED processes is also NED, with the order halved. It follows that $\{x_{i,t} x_{j,t}\}$ is $L_m$-NED on $\{\boldsymbol{s}_{T,t}\}$ of size $-d$. Therefore, $\E x_{i,t}x_{j,t}$ is  trivially NED. Theorem 17.8 in \cite{Davidson02} implies that also $\{x_{i,t}x_{j,t}-\E x_{i,t}x_{j,t}\}$ is $L_m$-NED. We then apply Theorem 17.5 in \cite{Davidson02} to conclude that $\{x_{i,t}x_{j,t}-\E x_{i,t}x_{j,t}\}$ is an $L_m$-mixingale of size
$-\min\{d, \frac{d}{(1/m-1/\bar{m})}(1/m - 1 / \bar m)\} = -d$, with respect to $\mathcal{F}^{\boldsymbol{s}}_t=\sigma\{\boldsymbol{s}_{T,t},\boldsymbol{s}_{T,t-1},\dots\}$; the $\mathcal{F}_t^{\boldsymbol{s}}$-measurability of $\boldsymbol{z}_t$ implies $\sigma\{\boldsymbol{z}_{t},\boldsymbol{z}_{t-1},\dots\}\subset\mathcal{F}^{\boldsymbol{s}}_t$, which in turn implies that $\{x_{i,t}x_{j,t}-\E x_{i,t}x_{j,t}\}$ is also an $L_m$-mixingale with respect to $\mathcal{F}_t=\sigma\{\boldsymbol{z}_{t},\boldsymbol{z}_{t-1},\dots\}$.
The boundedness of mixingale constants comes from Theorem 17.5, noting that the NED constants of $\{\bz_{j,t}\}$ are bounded by Assumption 1(ii),
and $\{x_{i,t}x_{j,t}-\E x_{i,t}x_{j,t}\}$ is appropriately $L_{\bar m}$-bounded. 
\end{proof}

\begin{proof}[\bf Proof of Lemma A.3]
By the union bound
\begin{equation*}\begin{split}
\P\left(\norm{\hat\bSigma-\bSigma}_{\max}>C/\abs{S}\right)
\leq&\sum\limits_{i=1}^{N}\sum\limits_{j=1}^{N}\P\left(\abs{\sum\limits_{t=1}^T\left(x_{i,t}x_{j,t}-\E\left[x_{i,t}x_{j,t}\right]\right)}>CT/\abs{S}\right).
\end{split}\end{equation*}
Now apply the Triplex inequality (\citealp{Jiang09Triplex}) 
\begin{equation*}\begin{split}
&\P\left(\abs{\sum\limits_{t=1}^T\left(x_{i,t}x_{j,t}-\E\left[x_{i,t}x_{j,t}\right]\right)}>CT/\abs{S}\right)\leq 2q\exp\left(\frac{C^2}{288}\frac{-T}{\abs{S}^2 q^2\kappa_T^2}\right)\\
&\quad+\frac{6}{C}\frac{\abs{S}}{T}\sum\limits_{t=1}^T\E\left[\abs{ \E\left(x_{i,t}x_{j,t}\vert\mathcal{F}_{t-q}\right)-\E\left(x_{i,t}x_{j,t}\right)}\right]
+\frac{15}{C}\frac{\abs{S}}{T}\sum\limits_{t=1}^T\E\left[\abs{x_{i,t}x_{j,t}}\mathds{1}_{\left\lbrace\abs{x_{i,t}x_{j,t}}>\kappa_T\right\rbrace}\right]\\
&:=R_{(\text{i})}+R_{(\text{ii})}+R_{(\text{iii})}.
\end{split}\end{equation*}
For the first term, we have
\begin{equation*}
\sum\limits_{i=1}^N\sum\limits_{j=1}^{N}R_{(i)} = 2N^2q \exp\left(\frac{C^2}{288} \frac{-T}{\abs{S}^2 q^2\kappa_T^2}\right)
\end{equation*}
so we need $N^2q\exp\left(\frac{-T}{ \abs{S}^2q^2\kappa_T^2}\right)\to0$.
By Lemma A.2
and Jensen's inequality, 
we have that $\E\left[\abs{\E\left[x_{i,t}x_{j,t}\vert \mathcal{F}_{t-q}\right]-\E\left[x_{i,t}x_{j,t}\right]}\right]\leq c_t\psi_q$, and thus for the second term that
\begin{equation*}
R_{(\text{ii})}\leq \frac{6}{C} \frac{\abs{S}} {T} \sum\limits_{t=1}^T c_t \psi_q \leq C\abs{S} \psi_q 
, \qquad \sum\limits_{i=1}^{N}\sum\limits_{j=1}^{N}R_{(\text{ii})}\leq C N^2\abs{S} q^{-d},
\end{equation*}
so we need $N^2\abs{S} q^{-d}\to0$. For the third term, we have by H\"older's and Markov's inequalities
\begin{equation*}\begin{split}
&\E\left[\abs{x_{i,t}x_{j,t}}\boldsymbol{1}_{\left\lbrace \abs{x_{i,t}x_{j,t}}>\kappa_T\right\rbrace }\right]
\leq \left(\E\abs{x_{i,t}x_{j,t}}^{m}\right)^{1/m} \left(\frac{\E\abs{x_{i,t}x_{j,t}}^{m}}{\kappa_T^{m}}\right)^{1-1/m} \leq \kappa_T^{1-m} \E\left[ \abs{x_{i,t}x_{j,t}}^{m}\right],
\end{split}\end{equation*}
\begin{equation*}
    \sum\limits_{i=1}^{N}\sum\limits_{j=1}^{N}R_{(\text{iii})}\leq CN^2\abs{S} \kappa_T^{1-m}
\end{equation*}
so we need $N^2\abs{S}\kappa_T^{1-m}\to0$. We then jointly bound all three terms
\begin{equation*}
\begin{split}
&\text{(1)} \quad CN^2q\exp\left(\frac{-T}{\abs{S}^2 q^2\kappa_T^2}\right)\leq \eta_T,\\
&\text{(2)} \quad C N^2\abs{S} q^{-d}\leq \eta_T, \qquad \text{(3)} \quad CN^2\abs{S} \kappa_T^{1-m} \leq \eta_T.
\end{split}
\end{equation*}
by a sequence $\eta_T\to 0$. Note that in the Triplex inequality, $q$ is a positive integer, $\kappa_T>0$, and $\lambda^{-r}s_r>0$ is also satisfied.
We further assume that $\frac{\eta_T}{N^2}\leq\frac{1}{e}\implies \frac{\eta_T}{qN^2}\leq\frac{1}{e}$.  First, isolate $\kappa_T$ in (1), 
\begin{equation*}
CN^2q\exp\left(\frac{-T}{\abs{S}^2 q^2\kappa_T^2}\right)\leq \eta_T \qquad \Longleftrightarrow \quad
\kappa_T\leq C\frac{\sqrt{T}}{\abs{S} q}\frac{1}{\sqrt{\ln\left(qN^2/\eta_T\right)}}.
\end{equation*}
Similarly, isolating $\kappa_T$ from (3), gives
\begin{equation*}
CN^2\abs{S}\kappa_T^{1-m}\leq\eta_T \qquad \Longleftrightarrow \quad \kappa_T\geq C\left(N^2\abs{S}\right)^{\frac{1}{m-1}}\eta_T^{\frac{-1}{m-1}}.
\end{equation*}
Since we have a lower and upper bound on $\kappa_T$, we need to make sure both bounds are satisfied,
\begin{equation*}\begin{split}
&C_1\left(N^2\abs{S}\right)^{\frac{1}{m-1}}\eta_T^{\frac{-1}{m-1}}\leq C_2 \frac{\sqrt{T}}{\abs{S} q}\frac{1}{\sqrt{\ln\left(qN^2/\eta_T\right)}}\\ 
&\quad\Longleftrightarrow\quad q\sqrt{\ln\left(qN^2/\eta_T\right)} \leq C\sqrt{T}\abs{S}^{\frac{-m}{m-1}}N^{\frac{-2}{m-1}}\eta_T^{\frac{1}{m-1}}.
\end{split}\end{equation*}
Isolating $q$ from (2),
\begin{equation*}
C N^2\abs{S} q^{-d} \leq \eta_T \qquad\Longleftrightarrow\quad q\geq CN^{\frac{2}{d}}\abs{S}^{\frac{1}{d}}\eta_T^{\frac{-1}{d}}.
\end{equation*}
Assuming that $\frac{\eta_T}{qN^2}\leq\frac{1}{e}$, we have that $q\leq q\sqrt{\ln\left(qN^2/\eta_T\right)}$ and therefore we need to ensure 
\begin{equation*}
C N^{\frac{2}{d}}\abs{S}^{\frac{1}{d}}\eta_T^{\frac{-1}{d}}\leq C\sqrt{T}\abs{S}^{\frac{-m}{m-1}}N^{\frac{-2}{m-1}}\eta_T^{\frac{1}{m-1}}\qquad\Longleftrightarrow\quad \abs{S}\leq C\eta_T^{\frac{d+m-1}{dm+m-1}}\left[\frac{\sqrt{T}}{N^{\left(\frac{2}{d}+\frac{2}{m-1}\right)}}\right]^{\frac{1}{\frac{1}{d}+\frac{m}{m-1}}}.
\end{equation*}
For the set $S_\lambda$, we have the bound 
\begin{equation*}
    \vert S_\lambda\vert
\leq\sum\limits_{j=1}^N\mathds{1}_{\{\abs{\beta_j^0}>\lambda\}}\left(\frac{\abs{ \beta^0_j}}{\lambda}\right)^r
\leq\lambda^{-r}\sum\limits_{j=1}^N\mathds{1}_{\{\abs{\beta_j^0}>0\}} \abs{\beta^0_j}^r =\lambda^{-r}s_r,
\end{equation*}
and it is sufficient to assume that
\begin{equation*}
    \lambda^{-r}s_r\leq C\eta_T^{\frac{d+m-1}{dm+m-1}}\left[\frac{\sqrt{T}}{N^{\left(\frac{2}{d}+\frac{2}{m-1}\right)}}\right]^{\frac{1}{\frac{1}{d}+\frac{m}{m-1}}}.
\end{equation*}
 When this bound is satisfied, $\sum\limits_{i=1}^{N}\sum\limits_{j=1}^{N}(R_{(\text{i})}+R_{(\text{ii})}+R_{(\text{iii})})\leq 3\eta_T$, and $\P\left(\setCC(S_\lambda)\right)\geq 1-3\eta_T$. 
\end{proof}

\begin{proof}[\bf Proof of Lemma A.4]
By the union bound, Markov's inequality and the mixingale concentration inequality of \cite[Lemma 2]{Hansen91}, it follows that 
\begin{equation*}\begin{split}
&\P\left(\max\limits_{j\leq N,l\leq T} \left[\left\vert \sum\limits_{t=1}^{l}u_t x_{j,t} \right\vert\right] > z\right) \leq\sum\limits_{j=1}^N \P\left(\max\limits_{l\leq T} \left[\left\vert\sum\limits_{t=1}^l u_t x_{j,t} \right\vert\right] > z\right)\\
&\quad\leq z^{-m} \sum\limits_{j=1}^N \E\left[\max\limits_{l\leq T}\left\vert\sum\limits_{t=1}^l u_tx_{j,t}\right\vert^m\right]
\leq z^{-m} \sum\limits_{j=1}^N C_1^m\left(\sum\limits_{t=1}^T c_{t} ^2\right)^{m/2} \leq C N T^{m/2} z^{-m},
\end{split}\end{equation*}
as $\{x_{j,t} u_t\}$ is a mixingale of appropriate size by Lemma A.1. 
\end{proof}

\begin{proof}[\bf Proof of Lemma A.5]
This result follows directly by Corollary 6.8 in \cite{SHD11}.
\end{proof}

\begin{proof}[\bf Proof of Lemma A.6]
The proof largely follows Theorem 2.2 of \cite{vandeGeer2016book} applied to $\beta=\boldsymbol{\beta}^0$ with some modifications. For the sake of clarity and readability, we include the full proof here.
Consider two cases. First, consider the case where $\frac{\Vert\boldsymbol{X}(\hat{\boldsymbol{\beta}}-{\boldsymbol{\beta}}^0)\Vert_2^2}{T}<-\frac{\lambda}{4}\Vert\hat{\boldsymbol{\beta}}-{\boldsymbol{\beta}}^0\Vert_1+2\lambda\Vert{\boldsymbol{\beta}}^0_{S^c}\Vert_1$. Then 
\begin{equation*}
\frac{\Vert\boldsymbol{X}(\hat{\boldsymbol{\beta}}-{\boldsymbol{\beta}}^0)\Vert_2^2}{T}+\frac{\lambda}{4}\Vert\hat{\boldsymbol{\beta}}-{\boldsymbol{\beta}}^0\Vert_1 < 2\lambda\Vert{\boldsymbol{\beta}}^0_{S^c}\Vert_1 < 
\frac{8}{3}\lambda\Vert{\boldsymbol{\beta}}^0_{S^c}\Vert_1+C\lambda^2\vert S\vert,
\end{equation*}
which satisfies Lemma A.6. 

Next, consider the case where $\frac{\Vert\boldsymbol{X}(\hat{\boldsymbol{\beta}}-{\boldsymbol{\beta}}^0)\Vert_2^2}{T}\geq-\frac{\lambda}{4}\Vert\hat{\boldsymbol{\beta}}-{\boldsymbol{\beta}}^0\Vert_1+2\lambda\Vert{\boldsymbol{\beta}}^0_{S^c}\Vert_1$.
From the Lasso optimization problem in (3),
we have the Karush-Kuhn-Tucker conditions
$\frac{\boldsymbol{X}'({\boldsymbol{y}}-\boldsymbol{X}\hat{\boldsymbol{\beta}})}{T}=\lambda\hat\kappa,$
where $\hat\kappa$ is the subdifferential of $\Vert\hat{\boldsymbol{\beta}}\Vert_1$.
Premultiplying by $({\boldsymbol{\beta}}^0-\hat{\boldsymbol{\beta}})'$, we get 
\begin{equation*}\begin{split}
\frac{({\boldsymbol{\beta}}^0-\hat{\boldsymbol{\beta}})'\boldsymbol{X}'({\boldsymbol{y}}-\boldsymbol{X}\hat{\boldsymbol{\beta}})}{T} = &\lambda({\boldsymbol{\beta}}^0-\hat{\boldsymbol{\beta}})'\hat\kappa
=\lambda{{\boldsymbol{\beta}}^0}'\hat\kappa-\lambda\Vert\hat{\boldsymbol{\beta}}\Vert_1
\leq\lambda\Vert{\boldsymbol{\beta}}^0\Vert_1-\lambda\Vert\hat{\boldsymbol{\beta}}\Vert_1.
\end{split}\end{equation*}
By plugging in  ${\boldsymbol{y}}=\boldsymbol{X}{\boldsymbol{\beta}}^0+\boldsymbol{u}$, the left-hand-side can be re-written as
$\frac{\Vert \boldsymbol{X}(\hat{\boldsymbol{\beta}}-{\boldsymbol{\beta}}^0)\Vert_2^2}{T}+\frac{\boldsymbol{u}'\boldsymbol{X}({\boldsymbol{\beta}}^0-\hat{\boldsymbol{\beta}})}{T},$
and therefore 
\begin{equation*}\begin{split}
&\frac{\Vert \boldsymbol{X}(\hat{\boldsymbol{\beta}}-{\boldsymbol{\beta}}^0)\Vert_2^2}{T}
\leq \frac{\boldsymbol{u}'\boldsymbol{X}(\hat{\boldsymbol{\beta}}-{\boldsymbol{\beta}}^0)}{T}+\lambda\Vert{\boldsymbol{\beta}}^0\Vert_1-\lambda\Vert\hat{\boldsymbol{\beta}}\Vert_1 \\
&\quad \underset{(1)}{\leq}\frac{1}{T} \norm{ \boldsymbol{u}'\boldsymbol{X}}_{\infty}\Vert \hat{{\boldsymbol{\beta}}}-{\boldsymbol{\beta}}^0\Vert_1+\lambda\Vert{\boldsymbol{\beta}}^0\Vert_1-\lambda\Vert\hat{\boldsymbol{\beta}}\Vert_1\\
&\quad\underset{(2)}{\leq} \frac{\lambda}{4}\Vert \hat{{\boldsymbol{\beta}}}-{\boldsymbol{\beta}}^0\Vert_1+\lambda\Vert{\boldsymbol{\beta}}^0\Vert_1-\lambda\Vert\hat{\boldsymbol{\beta}}\Vert_1 
\underset{(3)}{\leq} \frac{5\lambda}{4} \Vert\hat{\boldsymbol{\beta}}_{S}-{\boldsymbol{\beta}}^0_{S}\Vert_1 - \frac{3\lambda}{4}\Vert\hat{\boldsymbol{\beta}}_{S^c}\Vert_1+ \frac{5\lambda}{4}\Vert{\boldsymbol{\beta}}^0_{S^c}\Vert_1 \\
&\quad\underset{(4)}{\leq}
\frac{5\lambda}{4} \Vert\hat{\boldsymbol{\beta}}_{S}-{\boldsymbol{\beta}}^0_{S}\Vert_1 - \frac{3\lambda}{4} \Vert\hat{\boldsymbol{\beta}}_{S^c} - \boldsymbol{\beta}^0_{S^c}\Vert_1 + 2\lambda \Vert{\boldsymbol{\beta}}^0_{S^c} \Vert_1,
\end{split}\end{equation*}
where 
(1) follows from the dual norm inequality,
(2) from the bound on the empirical process given by $\setEP{T}(T\frac{\lambda}{4})$, 
(3) from the property $\Vert \boldsymbol{\beta}\Vert_1=\Vert \boldsymbol{\beta}_S\Vert_1+\Vert \boldsymbol{\beta}_{S^c}\Vert_1$ with $\beta_{j,S}= \beta_j\mathds{1}_{\{j\in S\}}$, as well as several applications of the triangle inequality, and (4) follows from the fact that $\Vert\hat{\boldsymbol{\beta}}_{S^c}\Vert_1\leq\left[\Vert\hat{\boldsymbol{\beta}}_{S^c}-\boldsymbol{\beta}^0_{S^c}\Vert_1-\Vert\boldsymbol{\beta}^0_{S^c}\Vert_1\right]$.
Note that it follows from the condition $\frac{\Vert\boldsymbol{X}(\hat{\boldsymbol{\beta}}-{\boldsymbol{\beta}}^0)\Vert_2^2}{T}\geq-\frac{\lambda}{4}\Vert\hat{\boldsymbol{\beta}}-{\boldsymbol{\beta}}^0\Vert_1+2\lambda\Vert{\boldsymbol{\beta}}^0_{S^c}\Vert_1$ combined with the previous inequality that $\Vert\hat{\boldsymbol{\beta}}_{S^c}-{\boldsymbol{\beta}}^0_{S^c}\Vert_1\leq 3 \Vert\hat{\boldsymbol{\beta}}_{S}-{\boldsymbol{\beta}}^0_{S}\Vert_1$ such that Lemma A.5 
can be applied.
Adding $\frac{3\lambda}{4} \Vert\hat{\boldsymbol{\beta}}_{S}-{\boldsymbol{\beta}}^0_{S}\Vert_1$ to both sides and re-arranging, we get by applying Lemma A.5 
\begin{equation*}\begin{split}
\frac{4}{3}\frac{\Vert \boldsymbol{X}(\hat{\boldsymbol{\beta}} -{\boldsymbol{\beta}}^0)\Vert_2^2}{T}+\frac{\lambda}{4}\Vert\hat{\boldsymbol{\beta}}-{\boldsymbol{\beta}}^0\Vert_1\leq&\frac{8}{3}\lambda\Vert\hat{\boldsymbol{\beta}}_{S}-{\boldsymbol{\beta}}^0_{S}\Vert_1+\frac{8}{3}\lambda\Vert{\boldsymbol{\beta}}^0_{S^c}\Vert_1\\
\leq&\frac{8}{3}\lambda C\sqrt{\vert S\vert(\hat{{\boldsymbol{\beta}}}-{\boldsymbol{\beta}}^0)'\hat{{\boldsymbol{\Sigma}}} (\hat{{\boldsymbol{\beta}}}-{\boldsymbol{\beta}}^0)}+\frac{8}{3}\lambda\Vert{\boldsymbol{\beta}}^0_{S^c}\Vert_1.
\end{split}\end{equation*}
Using that $2uv \leq u^2+v^2$ with $u=\sqrt{\frac{1}{3}(\hat{{\boldsymbol{\beta}}}-{\boldsymbol{\beta}}^0)'\hat{{\boldsymbol{\Sigma}}} (\hat{{\boldsymbol{\beta}}}-{\boldsymbol{\beta}}^0)}$, $v=\frac{4}{\sqrt{3}}C\lambda\sqrt{\vert S\vert}$, we further bound the 
right-hand-side to arrive at
\begin{equation*}\begin{split}
\frac{4}{3}\frac{\Vert \boldsymbol{X}(\hat{\boldsymbol{\beta}}-{\boldsymbol{\beta}}^0)\Vert_2^2}{T}+\frac{\lambda}{4}\Vert\hat{\boldsymbol{\beta}}-{\boldsymbol{\beta}}^0\Vert_1\leq&\frac{1}{3}\frac{\Vert \boldsymbol{X}(\hat{\boldsymbol{\beta}}-{\boldsymbol{\beta}}^0)\Vert_2^2}{T}+C\lambda^2\vert S\vert+\frac{8}{3}\lambda\Vert{\boldsymbol{\beta}}^0_{S^c}\Vert_1,\\
\end{split}\end{equation*}
from which the result follows.
\end{proof}
	
\begin{proof}[\textbf{Proof of Lemma A.7}]
By Assumption 3
and Lemma A.6,
we have on the set	$\setEP{T}(T\frac{\lambda}{4})\cap\setCC(S_\lambda)$
\begin{equation*}\begin{split}
\frac{\Vert \boldsymbol{X}(\hat{\boldsymbol{\beta}}-{\boldsymbol{\beta}}^0)\Vert_2^2}{T}+\frac{\lambda}{4}\Vert\hat{\boldsymbol{\beta}}-{\boldsymbol{\beta}}^0\Vert_1\leq&C\lambda^2\vert S_\lambda\vert+\frac{8}{3}\lambda\Vert{\boldsymbol{\beta}}^0_{S_{\lambda}^c}\Vert_1.\\
\end{split}\end{equation*}
It follows directly from Assumption 2
that 
\begin{equation*}
    \norm{\boldsymbol{\beta}^0_{S_\lambda^c}}_1
=\sum\limits_{j=1}^N \mathds{1}_{\{0<\abs{\beta_j^0}\leq\lambda\}}\abs{\beta_j^0}
\leq\sum\limits_{j=1}^N\mathds{1}_{\{\abs{\beta_j^0}>0\}} \left(\frac{\lambda}{\abs{\beta_j^0}}\right)^{1-r}\abs{\beta_j^0} =\lambda^{1-r}\sum\limits_{j=1}^N\mathds{1}_{\{\abs{\beta_j^0}>0\}} \abs{\beta^0_j}^r
\leq\lambda^{1-r}s_r.
\end{equation*}
and by arguments in the proof of Lemma A.3,
Plugging these in, we obtain
\begin{align*}
\frac{\Vert \boldsymbol{X}(\hat{\boldsymbol{\beta}}-{\boldsymbol{\beta}}^0)\Vert_2^2}{T}+\frac{\lambda}{4}\Vert\hat{\boldsymbol{\beta}}-{\boldsymbol{\beta}}^0\Vert_1 &\leq C\lambda^2\lambda^{-r}s_r+\frac{8}{3}\lambda\lambda^{1-r}s_r =C\lambda^{2-r}s_{r}.\qedhere
\end{align*}
\end{proof}

\subsection{Proofs of preliminary results Section 4} \label{sec:supplS4}

\begin{proof}[\bf Proof of Lemma B.1]
As $v_{j,t}$ are the projection errors from projecting $x_{j,t}$ on all other $x_{k,t}$, it follows directly that $\E\left[v_{j,t}\right]=0$ and  $\E\left[v_{j,t}x_{k,t}\right]=0$. $L_{\bar m}$-boundedness of $\{v_{j,t}  x_{k,t}\},~\forall j,k$ follows from Assumption 1(i), 
Assumption 4,
and the Cauchy--Schwarz inequality.  
By Theorem 17.8 in \cite{Davidson02}, $\{v_{j,t}\}$ is $L_{2m}$-NED on $\{\boldsymbol{s}_{T,t}\}$ of size 
$-d$. The remainder of the proof follows as in the proof of Lemma A.1.
\end{proof}
\begin{proof}[\bf Proof of Lemma B.2]
It follows by the Cauchy--Schwarz inequality that $\left\lbrace w_{j,t}\right\rbrace$ is $L_{\bar m}$-bounded for all $j=1,\ldots, p$, and from the properties of $\{v_{j,t}\}$ by Theorem 17.9 in \cite{Davidson02} that $\{w_{j,t}\}$ is $L_{m}$-NED of size 
$-d$. Part (i) 
then follows by Theorem 17.5 in \cite{Davidson02}.
For part (ii), 
we adapt the proof of Theorem 17.7 in \cite{Davidson02}. Letting $Y_t=w_{j,t}$ and $X_t=w_{k,t}$, $\E w_{j,t}w_{k,t-l}=\E Y_{t}X_{t-l}$. By the triangle inequality, choosing $q=\left[l/2\right]$, and using $\mathcal{F}_{t-l-q}^{t-l+q}$ as in Definition A.1,
\begin{equation*}
    \abs{\E Y_{t}X_{t-l}}\leq\abs{\E\left[Y_{t}\left(X_{t-l}-\E\left\lbrace X_{t-l}\vert \mathcal{F}_{t-l-q}^{t-l+q}\right\rbrace\right)\right]}+\abs{\E\left[Y_{t}\E\left(X_{t-l}\vert\mathcal{F}_{t-l-q}^{t-l+q}\right) \right]}.
\end{equation*}
By H\"older's inequality, we can bound the first term
\begin{equation*}
    \abs{\E\left[Y_{t}\left(X_{t-l}-\E\left\lbrace X_{t-l}\vert \mathcal{F}_{t-l-q}^{t-l+q}\right\rbrace\right)\right]}\leq \left(\E\left[\abs{Y_{t+q}}^{\frac{m}{m-1}}\right]\right)^{\frac{m-1}{m}}\left(\E\left[\abs{X_{t-l}-\E\left\lbrace X_{t-l}\vert \mathcal{F}_{t-l-q}^{t-l+q}\right\rbrace}^{m}\right]\right)^{\frac{1}{m}}.
\end{equation*}
Since $\frac{m}{m-1}<m<\bar{m}$, $\left(\E\left[\abs{Y_{t+q}}^{\frac{m}{m-1}}\right]\right)^{\frac{m-1}{m}}\leq C$, and since $X_{t-l}$ is NED of size $-d$,\\ $\left(\E\left[\abs{X_{t-l}-\E\left\lbrace X_{t-l}\vert \mathcal{F}_{t-l-q}^{t-l+q}\right\rbrace}^{m}\right]\right)^{\frac{1}{m}}\leq C \psi_q$, where $\psi_q=O(q^{-d-\epsilon})$ for some $\epsilon>0$. For the second term, we use the tower property and H\"older's inequality again
\begin{equation*}\begin{split}
    \abs{\E\left[Y_{t}\E\left(X_{t-l}\vert\mathcal{F}_{t-l-q}^{t-l+q}\right) \right]}=&\abs{\E\left[\E\left(Y_{t}\vert\mathcal{F}_{t-l-q}^{t-l+q}\right)\E\left(X_{t-l}\vert\mathcal{F}_{t-l-q}^{t-l+q}\right) \right]}\\
    \leq& \left(\E\left[\abs{\E\left(Y_{t}\vert\mathcal{F}_{t-l-q}^{t-l+q}\right)}^m\right]\right)^{\frac{1}{m}}\left(\E\left[\abs{\E\left(X_{t-l}\vert\mathcal{F}_{t-l-q}^{t-l+q}\right)}^\frac{m}{m-1}\right]\right)^{\frac{m-1}{m}}.
\end{split}\end{equation*}
Since conditioning is a contractionary projection in $L_p$ spaces,
\begin{equation*}\begin{split}
    &\left(\E\left[\abs{\E\left(Y_{t}\vert\mathcal{F}_{t-l-q}^{t-l+q}\right)}^m\right]\right)^{\frac{1}{m}}\leq \left(\E\left[\abs{\E\left(Y_{t}\vert\mathcal{F}_{-\infty}^{t-l+q}\right)}^m\right]\right)^{\frac{1}{m}}\\
    &\left(\E\left[\abs{\E\left(X_{t-l}\vert\mathcal{F}_{t-l-q}^{t-l+q}\right)}^\frac{m}{m-1}\right]\right)^{\frac{m-1}{m}}\leq \left(\E\left[\abs{X_{t-l}}^\frac{m}{m-1}\right]\right)^{\frac{m-1}{m}}\leq C.
\end{split}\end{equation*}
Since $Y_{t}$ is a Mixingale of size $-d$, the first term can be bounded by $C\psi_{q-l}$, where $\psi_{q-l}= O((q-l)^{-d-\epsilon})$. The sequence $\phi_l$ is then obtained by recalling that we chose $q=[l/2]$, $\phi_l=O((l/2)^{-d-\epsilon})=O(l^{-d-\epsilon})$.  Absolute summability follows by properties of $p$-series, since $d\geq 1$. Note this results also holds for $\max\limits_{q\leq j,k\leq N,\ 1\leq t\leq T}\abs{\E\left[w_{j,t}w_{k,t-l}\right]}$ since $C$ and $\phi_{l}$ are independent of $j$, $k$, and $t$.
(iii)
follows by repeated application of Corollary 17.11 and Theorem 17.5 in \cite{Davidson02}, noting that $\E(w_{j,t}w_{k,t-l})$ is a non-random and bounded, so trivially NED.
\end{proof}

\begin{proof}[\bf Proof of Lemma B.3] By Lemma A.3,
$\P\left(\setCC(S_\lambda)\right)\geq 1-3\eta_T$ when 
\begin{equation*}
     \lambda^{-r}s_{r}\leq C\eta_T^{\frac{d+m-1}{dm+m-1}}\left[\frac{\sqrt{T}}{N^{\left(\frac{2}{d}+\frac{2}{m-1}\right)}}\right]^{\frac{1}{\frac{1}{d}+\frac{m}{m-1}}},
\end{equation*} 
 for a sequence $\eta_T\to0$ such that $\eta_T\leq\frac{N^2}{e}$.
We can similarly apply this lemma to the sets $\setCC(S_{\lambda, j})$; when
\begin{equation*}
    \lambda_{j}^{-r}s_{r,j}\leq C\eta_T^{\frac{d+m-1}{dm+m-1}}\left[\frac{\sqrt{T}}{N^{\left(\frac{2}{d}+\frac{2}{m-1}\right)}}\right]^{\frac{1}{\frac{1}{d}+\frac{m}{m-1}}},
\end{equation*}
$\P\left(\setCC(S_{\lambda,j})\right)\geq 1-3\eta_T$.  By the union bound,  $\P\left(\setCC(S_\lambda)\bigcap\limits_{j\in H}\setCC(S_{\lambda,j})\right)\geq 1-\left[1-\P\left(\setCC(S_\lambda)\right)\right]-\sum\limits_{j\in H}\left[1-\P\left(\setCC(S_{\lambda,j})\right)\right]\geq 1-3(1+h)\eta_T$, when the conditions above hold for all $j\in H$. These conditions are then jointly satisfied by the conditions this lemma, which are expressed in terms of $s_{r,\max}$ and $\lambda_{\min}$.
\end{proof}

\begin{proof}[\bf Proof of Lemma B.4]
By Lemmas A.4 and B.1,
we have $\P\left(\setEP{T}^{(j)}(x_j)\right) \leq CN(\sqrt{T}/x_j)^{m}$. Then
\begin{equation*}
\P\left(\bigcap\limits_{j\in H} \setEP{T}^{(j)}(x_j) \right)
\geq 1-\sum\limits_{j\in H}\P\left(\left\lbrace\setEP{T}^{(j)} x_j \right\rbrace^c\right)
\geq1-C\frac{h N T^{m/2}}{\min\limits_{j \in H} x_j^m}. \qedhere
\end{equation*}
\end{proof}

\begin{proof}[\bf Proof of Lemma B.5]
Note that
\begin{equation*}\begin{split}
\P(\setLL)&=\P\left(\bigcap\limits_{j\in H}\left\lbrace\abs{\frac{1}{T} \sum\limits_{t=1}^T v_{j,t}^2-\tau_j^2}\leq \frac{h} {\delta_T}\right\rbrace\right)=1-\P\left(\bigcup\limits_{j\in H}\left\lbrace\abs{\frac{1}{T} \sum\limits_{t=1}^T v_{j,t}^2-\tau_j^2}> \frac{h} {\delta_T}\right\rbrace\right)\\
&\geq 1-\sum\limits_{j\in H}\P\left(\abs{\frac{1}{T} \sum\limits_{t=1}^T v_{j,t}^2-\tau_j^2}> \frac{h} {\delta_T}\right).
\end{split}\end{equation*}
Recalling that $\tau_j^2=\frac{1}{T}\sum\limits_{t=1}^T\E\left[v_{j,t}^2\right]$, write $\P\left(\abs{\frac{1}{T} \sum\limits_{t=1}^T v_{j,t}^2-\tau_j^2}> \frac{h} {\delta_T}\right)=\P\left(\abs{ \sum\limits_{t=1}^T (v_{j,t}^2-\E v_{j,t}^2)}> T\frac{h} {\delta_T}\right)$. As in the proof of Lemma A.3, 
we use the Triplex inequality to bound this probability. 
\begin{equation*}\begin{split}
&\P\left(\abs{ \sum\limits_{t=1}^T (v_{j,t}^2-\E v_{j,t}^2)}> T\frac{h} {\delta_T}\right)\leq 2q \exp\left(-\frac{Th^2}{288q^2\kappa_T^2\delta_T^2}\right)\\
&\quad + 6\frac{\delta_T}{Th}\sum\limits_{t=1}^T\E\left[\abs{\E\left(v_{j,t}^2\vert\mathcal{F}_{t-q}\right)-\E v_{j,t}^2}\right]
+ 15\frac{\delta_T}{Th}\sum\limits_{t=1}^T\E\left[\abs{v_{j,t}^2}\mathds{1}_{\left\lbrace\abs{v_{j,t}^2}>\kappa_T\right\rbrace}\right]\\
    &:=R_{(\text{i})}+R_{(\text{ii})}+R_{(\text{iii})}. 
\end{split}\end{equation*}
For the second term, note by the proof of Lemma B.1
that $\left\lbrace v_{j,t}\right\rbrace$ is $L_{2m}$-NED on $\left\lbrace\boldsymbol{s}_{T,t}\right\rbrace$ of size 
$-d$. By Assumption 4,
$\left\lbrace v_{j,t}^2\right\rbrace$ is $L_{\bar m}$-bounded, and by Theorem 17.9 of \cite{Davidson02}, it is $L_{m}$-NED on $\left\lbrace\boldsymbol{s}_{T,t}\right\rbrace$ of size 
$-d$. By Theorem 17.5 $\left\lbrace v_{j,t}^2-\E v_{j,t}^2\right\rbrace$ is then an $L_m$-mixingale of size
$-d$. It then follows that $\E\left[\abs{\E\left(v_{j,t}^2\vert\mathcal{F}_{t-q}\right)-\E v_{j,t}^2}\right]\leq c_t\psi_q\leq Cq^{-d}$, and 
\begin{equation*}
    \sum\limits_{j\in H}R_{(\text{ii})}\leq\sum\limits_{j\in H}6\frac{\delta_T}{Th}\sum\limits_{t=1}^{T}Cq^{-d}=C\frac{\delta_T}{q^d}.
\end{equation*}
For the third term, we have by H\"older's and Markov's inequalities
\begin{equation*}\begin{split}
    \E\left[\abs{v_{j,t}^2}\mathds{1}_{\left\lbrace\abs{v_{j,t}^2}>\kappa_T\right\rbrace}\right]\leq
    C\kappa_T^{1-m}.
\end{split}\end{equation*}
and therefore
\begin{equation*}
    \sum\limits_{j\in H}R_{\text{(iii)}}\leq\sum\limits_{j\in H} 15\frac{\delta_T}{Th}\sum\limits_{t=1}^TC\kappa_T^{1-m}=C\frac{\delta_T}{\kappa_T^{m-1}}.
\end{equation*}
We jointly bound all three terms by a sequence $\eta_T\to 0$.
\begin{equation*}
\text{(1)} \quad Cqh \exp\left(-\frac{Th^2} {q^2\kappa_T^2 \delta_T^2}\right) \leq \eta_T,
\qquad \text{(2)} \quad C\frac{\delta_T}{q^d}\leq \eta_T, 
\qquad \text{(3)} \quad C\frac{\delta_T}{\kappa_T^{m-1}}\leq \eta_T.
\end{equation*}

For the steps below, we assume that $\frac{\eta_T}{h}\leq\frac{1}{e}\implies \sqrt{-\ln(\eta_T/(hq))}\geq1$.
Isolate $\kappa_T$ in (1) and (2),
\begin{equation*}
Cqh\exp\left(\frac{-Th^2}{ q^2\kappa_T^2\delta_T^2}\right)\leq \eta_T \Longleftrightarrow \kappa_T\leq C\frac{\sqrt{T}h}{q\delta_T},
\end{equation*}
\begin{equation*}
    C\frac{\delta_T}{\kappa_T^{m-1}}\leq\eta_T \Longleftrightarrow \kappa_T\geq C\left(\frac{\delta_T}{\eta_T}\right)^{1/(m-1)}.
\end{equation*}
Combining both bounds on $\kappa_T$,
\begin{equation*}\begin{split}
C_1\left(\frac{\delta_T}{\eta_T}\right)^{1/(m-1)} &\leq C_2\frac{\sqrt{T}h}{q\delta_T} \qquad \Longleftrightarrow \quad q\leq C\sqrt{T}h\eta_T^{1/(m-1)}\delta_T^{-m/(m-1)}.
\end{split}\end{equation*}
Isolating $q$ from (2), gives
\begin{equation*}
C \delta_T q^{-d}\leq \eta_T \qquad \Longleftrightarrow \quad q\geq C\eta_T^{-1/d}\delta_T^{1/d}.
\end{equation*}
Combining both bounds on $q$,
\begin{equation*}\begin{split}
& C_1\sqrt{T}h\eta_T^{1/(m-1)}\delta_T^{-m/(m-1)}\geq C_2\delta_T^{1/d}\eta_T^{-1/d} \qquad
\Longleftrightarrow \quad \delta_T\leq  C\eta_T^\frac{d+m-1}{dm+m-1}(\sqrt{T}h)^{\frac{1}{1/d+m/(m-1)}}.
\end{split}\end{equation*}
When $\delta_T$ satisfies this upper bound, $\sum\limits_{j\in H}(R_{(\text{i})}+R_{(\text{ii})}+R_{(\text{iii})})\leq 3\eta_T$, and $\P\left(\setLL\right)\geq 1-3\eta_T$, which completes the proof.

\end{proof}

\begin{proof}[\bf Proof of Lemma B.6]
Note that $\hat\tau_j^2$ can be rewritten as follows
\begin{equation}\label{eq:tauexpanded}\begin{split}
\hat{\tau}_j^2=&\frac{\norm{ \boldsymbol{x}_j-\boldsymbol{X}_{-j}{\boldsymbol{\gamma}}^0_j}_2^2}{T}+\frac{\norm{\boldsymbol{X}_{-j}\left(\hat{{\boldsymbol{\gamma}}}_j-{\boldsymbol{\gamma}}^0_j\right)}_2^2}{T} \\
&\quad - \frac{2\left(\boldsymbol{x}_j - \boldsymbol{X}_{-j}{\boldsymbol{\gamma}}^0_j \right)'\boldsymbol{X}_{-j}\left(\hat{{\boldsymbol{\gamma}}}_j-{\boldsymbol{\gamma}}^0_j\right)}{T}+\lambda_j\Vert\hat{{\boldsymbol{\gamma}}}_j\Vert_1\\
&= \frac{1}{T}\sum\limits_{t=1}^Tv_{j,t}^2+\frac{\norm{\boldsymbol{X}_{-j}\left(\hat{{\boldsymbol{\gamma}}}_j-{\boldsymbol{\gamma}}^0_j\right)}_2^2}{T} - 
\frac{2\left(\boldsymbol{x}_j-\boldsymbol{X}_{-j}{\boldsymbol{\gamma}}^0_j \right)'\boldsymbol{X}_{-j}\left(\hat{{\boldsymbol{\gamma}}}_j-{\boldsymbol{\gamma}}^0_j\right)}{T}+\lambda_j\Vert\hat{{\boldsymbol{\gamma}}}_j\Vert_1.
\end{split}\end{equation}
Then
\begin{equation*}\begin{split}
\vert\hat\tau_j^2-\tau_j^2\vert
&\leq\left\vert\frac{1}{T}\sum\limits_{t=1}^Tv_{j,t}^2-\tau_j^2\right\vert + \frac{\norm{\boldsymbol{X}_{-j}\left(\hat{{\boldsymbol{\gamma}}}_j-{\boldsymbol{\gamma}}^0_j\right)}_2^2}{T}\\
&\quad +\frac{2\left\vert\left(\boldsymbol{x}_j-\boldsymbol{X}_{-j}{\boldsymbol{\gamma}}^0_j \right)'\boldsymbol{X}_{-j} \left(\hat{{\boldsymbol{\gamma}}}_j-{\boldsymbol{\gamma}}^0_j\right) \right\vert}{T}+ \lambda_j\Vert\hat{{\boldsymbol{\gamma}}}_j\Vert_1\\
&=: R_{(\text{i})}+R_{(\text{ii})}+R_{(\text{iii})}+R_{(\text{iv})}.
\end{split}\end{equation*}
By the set $\setLL$, we have
$R_{(\text{i})}\leq\max\limits_{j\in H}\abs{\frac{1}{T}\sum\limits_{t=1}^Tv_{j,t}^2-\tau_j^2}\leq \boundCT$.		
By Corollary 1
applied to the nodewise regression, it holds that
$R_{(\text{ii})}\leq C_1\lambda_j^{2-r}{s}_{r}^{(j)}\leq C_1\bar{\lambda}^{2-r}\bar{{s}_{r}}$.
By the set $\bigcap\limits_{j\in H}\{\setEP{T}^{(j)}(T\frac{\lambda_j}{4})\}$ and the same error bound, we have
\begin{equation*}\begin{split}
R_{(\text{iii})} =& \frac{2\left\vert\boldsymbol{v}_j'\boldsymbol{X}_{-j}\left(\hat{{\boldsymbol{\gamma}}}_j-{\boldsymbol{\gamma}}^0_j\right) \right\vert}{T}
\leq C_2\lambda_j\left\Vert\hat{{\boldsymbol{\gamma}}}_j-{\boldsymbol{\gamma}}^0_j\right\Vert_1
\leq C_2\bar{\lambda}^{2-r}\smaxN.
\end{split}\end{equation*}
By the triangle inequality
$R_{(\text{iv})}\leq\lambda_j\Vert{{\boldsymbol{\gamma}}}^0_j\Vert_1+\lambda_j\Vert\hat{{\boldsymbol{\gamma}}}_j-{\boldsymbol{\gamma}}^0_j\Vert_1.$ Using the weak sparsity index for the nodewise regressions $S_{\lambda,j}=\{k\neq j:\vert\gamma_{j,k}\vert>\lambda_j\}$, write $\Vert{\boldsymbol{\gamma}}^0_j\Vert_1=\left\Vert({{\boldsymbol{\gamma}}^0_j})_{S^c_{\lambda,j}}\right\Vert_1+\left\Vert({{\boldsymbol{\gamma}}^0_j})_{S_{\lambda,j}}\right\Vert_1.$
These terms can then be bounded as follows
\begin{equation*}\begin{split}
\left\Vert({{\boldsymbol{\gamma}}^0_j})_{S^c_{\lambda,j}}\right\Vert_1=&\sum\limits_{k\neq j}\mathds{1}_{\{\vert\gamma^0_{j,k}\vert\leq\lambda_j\}}\vert\gamma^0_{j,k}\vert \leq\lambda_{j}^{1-r}{s}_{r}^{(j)}\leq\bar{\lambda}^{1-r}\smaxN.
\end{split}\end{equation*}
Bounding the $L_1$ norm by the $L_2$ norm, we get
\begin{equation*}\begin{split}
\left\Vert({{\boldsymbol{\gamma}}^0_j})_{S_{\lambda,j}}\right\Vert_1^2\leq&\vert S_{\lambda,j}\vert\Vert{\boldsymbol{\gamma}}^0_j\Vert_2^2 \leq\underset{\bar{}}{\lambda}^{-r}\smaxN\Vert{\boldsymbol{\gamma}}^0_j\Vert_2^2,
\end{split}\end{equation*} 
To further bound $\Vert{\boldsymbol{\gamma}}^0_j\Vert_2^2$, consider the matrix ${\boldsymbol{\Theta}}={\boldsymbol{\Sigma}}^{-1}=\left(\frac{1}{T}\sum_{t=1}^T\E\left[\boldsymbol{x}_t\boldsymbol{x}_t'\right]\right)^{-1}$ and the partitioning 
\begin{equation*}
{\boldsymbol{\Sigma}}=\begin{bmatrix}
\frac{1}{T}\sum_{t=1}^T\E\left(x_{j,t}^2\right) & \frac{1}{T}\sum_{t=1}^T\E\left(x_{j,t}\boldsymbol{x}_{-j,t}'\right)\\
\frac{1}{T}\sum_{t=1}^T\E\left(\boldsymbol{x}_{-j,t}x_{j,t}\right) & \frac{1}{T}\sum_{t=1}^T\E\left(\boldsymbol{x}_{-j,t}\boldsymbol{x}_{-j,t}'\right)
\end{bmatrix}.
\end{equation*}
By blockwise matrix inversion, we can write the $j$th row of $\boldsymbol{\Theta}$ as
\begin{equation}\label{eq:rowTheta}
{\boldsymbol{\Theta}}_{j} =\left[\frac{1}{\tau_j^2}, -\frac{1}{\tau_j^2}\frac{1}{T}\sum_{t=1}^T\E\left(x_{j,t}\boldsymbol{x}_{-j,t}'\right)\left[\frac{1}{T}\sum_{t=1}^T\E\left(\boldsymbol{x}_{-j,t}\boldsymbol{x}_{-j,t}'\right)\right]^{-1}\right]=\frac{1}{\tau_j^2}\left[1,({\boldsymbol{\gamma}}^0_j)'\right].
\end{equation}
It then follows that
\begin{equation*}
\Vert{\boldsymbol{\gamma}}^0_j\Vert_2^2=\sum\limits_{k\neq j}(\gamma^0_{j,k})^2\leq 1+\sum\limits_{k\neq j}(\gamma^0_{j,k})^2=\tau^4_j{\boldsymbol{\Theta}}^{}_j{\boldsymbol{\Theta}}_j'\leq\frac{\tau^4_j}{\Lambda_{\min}^2},
\end{equation*}
as $\frac{1}{\Lambda_{\min}}$ is the largest eigenvalue of ${\boldsymbol{\Theta}}$. For a bound on $\tau_j^2$, by the definition of ${\boldsymbol{\gamma}}^0_j$ from (7)
and Assumption 5(ii), 
it follows that
\begin{equation*}\begin{split}
\tau^2_j=&\min\limits_{{\boldsymbol{\gamma}}_j}\left\lbrace \E\left[\frac{1}{T}\sum_{t=1}^T\left(x_{j,t}-\boldsymbol{x}'_{-j,t}{\boldsymbol{\gamma}}_j\right)^2\right]\right\rbrace\\
&\leq \E\left[\frac{1}{T}\sum_{t=1}^T\left(x_{j,t}-\boldsymbol{x}_{-j,t}'\boldsymbol{0}\right)^2\right]=\frac{1}{T}\sum_{t=1}^T\E\left[x_{j,t}^2\right]={\Sigma}_{j,j} \leq 
C.
\end{split}\end{equation*}
Similar arguments can be used to  bound $\tau_j^2$ from below. By the proof of Lemma 5.3 in \cite{vandeGeer14}, $\tau_j^2=\frac{1}{\Theta_{j,j}}$, and therefore $\tau_j^2 \geq \Lambda_{\min}$. It then follows from Assumption 5(ii)
that
\begin{equation*}
\frac{1}{C} \leq \tau^2_j \leq C,\text{ uniformly over } j\in 1,\dots,N.
\end{equation*} 
We therefore have $\Vert{\boldsymbol{\gamma}}^0_j\Vert_2\leq\frac{\tau_j^2}{\Lambda_{\min}}\leq C^2,$ such that we can bound the fourth term as
\begin{equation*}\begin{split}
R_{(\text{iv})}&\leq\lambda_j\Vert{{\boldsymbol{\gamma}}}^0_j\Vert_1+\lambda_j\Vert\hat{{\boldsymbol{\gamma}}}_j-{\boldsymbol{\gamma}}^0_j\Vert_1
=\lambda_j\left\Vert({{\boldsymbol{\gamma}}^0_j})_{S^c_{\lambda,j}}\right\Vert_1+\lambda_j\left\Vert({{\boldsymbol{\gamma}}^0_j})_{S_{\lambda,j}}\right\Vert_1+\lambda_j\Vert\hat{{\boldsymbol{\gamma}}}_j-{\boldsymbol{\gamma}}^0_j\Vert_1\\
&\leq \bar{\lambda}^{2-r}\bar{s}_r+\bar{\lambda}\sqrt{\underset{\bar{}}{\lambda}^{-r}\smaxN}C_1^2+C_2\bar{\lambda}^{2-r}\smaxN
\end{split}\end{equation*}
Combining all bounds, we have 
\begin{equation*}\begin{split}
\vert\hat\tau_j^2-\tau_j^2\vert\leq& \boundCT+C_1\bar{\lambda}^{2-r}\bar{{s}_{r}}+C_2\bar{\lambda}^{2-r}\bar{{s}_{r}}+\bar{\lambda}^{2-r}\smaxN+\sqrt{\bar{\lambda}^2\underset{\bar{}}{\lambda}^{-r}\smaxN}C_3^2+C_4\bar\lambda^{2-r}\smaxN\\
=&\boundCT +C_5\bar{\lambda}^{2-r}\smaxN+C_6\sqrt{\bar{\lambda}^2\underset{\bar{}}{\lambda}^{-r}\smaxN}.
\end{split}\end{equation*}
For the second statement in Lemma B.6, 
we have by the triangle inequality and (B.1)
that 
\begin{align*}
\left\vert\frac{1}{\hat\tau_j^2}-\frac{1}{\tau_j^2}\right\vert &\leq 
\frac{\vert\hat\tau_j^2-\tau_j^2\vert}{\tau_j^4-\tau^2_j\vert\hat\tau_j^2-\tau_j^2\vert} \leq
\frac{\vert\hat\tau_j^2-\tau_j^2\vert}{\frac{1}{C^2}-C\vert\hat\tau_j^2-\tau_j^2\vert}\\
&\leq \frac{ \boundCT +C_5\bar{\lambda}^{2-r}\smaxN+C_6\sqrt{\bar{\lambda}^2\underset{\bar{}}{\lambda}^{-r}\smaxN}}{C_7-C_8\left( \boundCT +C_5\bar{\lambda}^{2-r}\smaxN+C_6\sqrt{\bar{\lambda}^2\underset{\bar{}}{\lambda}^{-r}\smaxN}\right)}.  \qedhere
\end{align*}
\end{proof}

\begin{proof}[\bf Proof of Lemma B.7]
First, note that since $\hat{\boldsymbol{\Sigma}}$ is a symmetric matrix
\begin{equation*}
\max\limits_{j\in H}\left\lbrace\Vert \boldsymbol{e}'_j-\hat{\boldsymbol{\Theta}}_j\hat{\boldsymbol{\Sigma}}\Vert_{\infty}\right\rbrace=\max\limits_{j\in H}\left\lbrace\Vert \hat{\boldsymbol{\Sigma}}\hat{\boldsymbol{\Theta}}'_j-\boldsymbol{e}_j\Vert_{\infty}\right\rbrace.
\end{equation*} 
By the extended KKT conditions (see Section 2.1.1 of \citealp{vandeGeer14}),
we have that
$\max\limits_{j\in H}\left\lbrace\Vert\hat{{\boldsymbol{\Sigma}}}\hat{{\boldsymbol{\Theta}}}'_{ j}-\boldsymbol{e}_j\Vert_{\infty}\right\rbrace\leq\max\limits_{j\in H}\left\lbrace\frac{\lambda_j}{\hat{\tau}_j^2}\right\rbrace
\leq\frac{\bar{\lambda}}{\min\limits_{j\in H}\left\lbrace\hat\tau_j^2\right\rbrace}$.
For a lower bound on $\min\limits_{j\in H}\left\lbrace\hat\tau_j^2\right\rbrace$, note that by \cref{eq:tauexpanded}, $\hat\tau_j^2$ can be rewritten as
\begin{equation*}
\begin{split}
\hat{\tau}_j^2 &=\frac{\Vert \boldsymbol{x}_j-\boldsymbol{X}_{-j}{\boldsymbol{\gamma}}^0_j\Vert_2^2}{T}+\frac{\Vert\boldsymbol{X}_{-j}\left(\hat{{\boldsymbol{\gamma}}}_j-{\boldsymbol{\gamma}}^0_j\right)\Vert_2^2}{T}
-\frac{2\left(\boldsymbol{x}_j-\boldsymbol{X}_{-j}{\boldsymbol{\gamma}}^0_j \right)'\boldsymbol{X}_{-j}\left(\hat{{\boldsymbol{\gamma}}}_j-{\boldsymbol{\gamma}}^0_j\right)}{T}+\lambda_j\Vert\hat{{\boldsymbol{\gamma}}}_j\Vert_1.
\end{split}
\end{equation*}
With $\frac{\Vert\boldsymbol{X}_{-j}\left(\hat{{\boldsymbol{\gamma}}}_j-{\boldsymbol{\gamma}}^0_j\right)\Vert_2^2}{T}\geq0$ and $\lambda_j\Vert\hat{\boldsymbol{\gamma}}_j\Vert_1\geq0$ by definition for all $j$, we have
\begin{equation*}\begin{split}
\hat{\tau}_j^2&\geq\frac{\Vert \boldsymbol{x}_j-\boldsymbol{X}_{-j}{\boldsymbol{\gamma}}^0_j\Vert_2^2}{T}-\frac{2\left(\boldsymbol{x}_j-\boldsymbol{X}_{-j}{\boldsymbol{\gamma}}^0_j \right)'\boldsymbol{X}_{-j}\left(\hat{{\boldsymbol{\gamma}}}_j-{\boldsymbol{\gamma}}^0_j\right)}{T}
=\frac{\sum\limits_{t=1}^T v_{j,t}^2}{T}-\frac{2\boldsymbol{v}_j '\boldsymbol{X}_{-j}\left(\hat{{\boldsymbol{\gamma}}}_j-{\boldsymbol{\gamma}}^0_j\right)}{T}.
\end{split}\end{equation*}
The dual norm inequality in combination with the triangle inequality then gives
\begin{equation*}
\begin{split}
\hat{\tau}_j^2&\geq\tau_j^2 -\left\vert\frac{1}{T}\sum\limits_{t=1}^T v_{j,t}^2-\tau_j^2\right\vert -\frac{2}{T}\max\limits_{k\neq j}\left\lbrace\vert\boldsymbol{v}_j'\boldsymbol{x}_k\vert\right\rbrace\Vert\hat{{\boldsymbol{\gamma}}}_j-{\boldsymbol{\gamma}}^0_j\Vert_{1},\\
&\geq\frac{1}{C} -\max\limits_{j}\left\lbrace\left\vert\frac{1}{T}\sum\limits_{t=1}^T v_{j,t}^2-\tau_j^2\right\vert\right\rbrace -\frac{2}{T}\max\limits_{k\neq j}\left\lbrace\vert\boldsymbol{v}_j'\boldsymbol{x}_k\vert\right\rbrace\Vert\hat{{\boldsymbol{\gamma}}}_j-{\boldsymbol{\gamma}}^0_j\Vert_{1},
\end{split}
\end{equation*}
where the second line follows from (B.1).
Then, on the sets $\setLL$ and $\setEP{T}^{(j)}(T\frac{\lambda_j}{4})$
\begin{equation*}\begin{split}
\hat{\tau}_j^2\geq C_1 - \boundCT -\frac{\lambda_{j}}{2}\Vert\hat{{\boldsymbol{\gamma}}}_j-{\boldsymbol{\gamma}}^0_j\Vert_{1} \geq C_1 - \boundCT-C_2\lambda_j^{2-r}s_{r}^{(j)}
\geq C_1 -\boundCT -C_2\bar{\lambda}^{2-r}\smaxN,
\end{split}\end{equation*}
where Corollary 1 
yields the second inequality. As $\bar{\lambda}^{2-r}\smaxN \to 0$, for a large enough $T$ we have that
\begin{equation*}\begin{split}
\min_j\frac{1}{\hat\tau_j^2}\leq&\frac{1}{C_1 - \boundCT -C_2\bar{\lambda}^{2-r}\smaxN}
\end{split}\end{equation*}
from which the result follows.
\end{proof}

\begin{proof}[\bf Proof of Lemma B.8] 
Note that the $j$th row of the matrix $I-\hat{\boldsymbol{\Theta}}\hat{\boldsymbol{\Sigma}}$ is $\boldsymbol{e}'_j-\hat{\boldsymbol{\Theta}}_j\hat{\boldsymbol{\Sigma}}$, where $\hat{\boldsymbol{\Theta}}_j$ is the $j$th row of $\hat{\boldsymbol{\Theta}}$.
Plugging in the definition of $\Delta$, we have
\begin{equation*}\begin{split}
\max\limits_{j\in H} \vert\Delta_j\vert=\sqrt{T}\max\limits_{j\in H}\left\vert\left(\boldsymbol{e}'_j-\hat{\boldsymbol{\Theta}}_j\hat{\boldsymbol{\Sigma}}\right)\left(\hat{\boldsymbol{\beta}}-{\boldsymbol{\beta}}^0\right)\right\vert
\leq\sqrt{T}\max\limits_{j\in H}\left\lbrace\Vert \boldsymbol{e}'_j-\hat{\boldsymbol{\Theta}}_j\hat{\boldsymbol{\Sigma}}\Vert_{\infty}\right\rbrace\Vert\hat{\boldsymbol{\beta}}-{\boldsymbol{\beta}}^0\Vert_1.
\end{split}\end{equation*}
By Lemma A.7, 
under Assumptions 2 and 5(ii), 
on the sets $\setEP{T}(T\frac{\lambda}{4})\cap \setCC(S_\lambda)$, we have
\begin{equation}\label{eq:improvedInitialOracle}
\frac{\Vert\boldsymbol{X}(\hat{{\boldsymbol{\beta}}}-{\boldsymbol{\beta}}^0)\Vert_2^2}{T}+\lambda\Vert\hat{{\boldsymbol{\beta}}}-{\boldsymbol{\beta}}^0\Vert_1\leq C\lambda^{2-r}s_{r},
\end{equation}
from which it follows that
$\Vert\hat{\boldsymbol{\beta}}-{\boldsymbol{\beta}}^0\Vert_1\leq C\lambda^{1-r}{s}_{r}.$
Combining this bound with Lemma B.7
gives
\begin{align*}
\max\limits_{j\in H} \vert\Delta_j\vert\leq&\sqrt{T}\lambda^{1-r}{s}_{r}\frac {\bar{\lambda}}{C_1-\boundCT -C_2\bar{\lambda}^{2-r}\smaxN}. \qedhere
\end{align*}
\end{proof}

\begin{proof}[\bf Proof of Lemma B.9]
Starting from the nodewise regression model, write
\begin{equation*}\begin{split}
\frac{1}{\sqrt{T}}\left\vert\hat{\boldsymbol{v}}_j'\boldsymbol{u}-\boldsymbol{v}_j'\boldsymbol{u}\right\vert
=\frac{1}{\sqrt{T}}\left\vert\boldsymbol{u}'\boldsymbol{X}_{-j}\left({\boldsymbol{\gamma}}^0_j-\hat{\boldsymbol{\gamma}}_j\right)\right\vert
\leq\frac{1}{\sqrt{T}} \norm{\boldsymbol{u}'\boldsymbol X}_{\infty} \norm{\hat{\boldsymbol{\gamma}}_j-{\boldsymbol{\gamma}}^0_j}_1.
\end{split}\end{equation*}
By the set $\setEP{T}(T\lambda)$ and Corollary 1,
\begin{align*}
\sqrt{T}\frac{\max\limits_{j}\left\lbrace\left\vert\boldsymbol{u}'X_{j}\right\vert\right\rbrace}{T}\left\Vert\hat{\boldsymbol{\gamma}}_j-{\boldsymbol{\gamma}}^0_j\right\Vert_1\leq&\sqrt{T}\lambda\left\Vert\hat{\boldsymbol{\gamma}}_j-{\boldsymbol{\gamma}}^0_j\right\Vert_1
\leq C\sqrt{T}\lambda \lambda_j^{1-r}{s}_{r}^{(j)}
\leq C\sqrt{T}\lambda_{\max}^{2-r}\smaxN, 
\end{align*}
where the upper bound is uniform over $j\in H$.
\end{proof}
	
\begin{proof}[\bf Proof of Lemma B.10]
By the union bound
\begin{equation*}
\P\left(\bigcap\limits_{j\in H}\left\lbrace\max\limits_{s\leq T}\abs{
\sum\limits_{t=1}^{s}v_{j,t}u_t}\leq x\right\rbrace\right)
\geq 1-\sum\limits_{j\in H}\P\left(\max\limits_{s\leq T}\abs{
\sum\limits_{t=1}^{s}v_{j,t}u_t}> x\right).
\end{equation*}
By the Markov inequality, Lemma B.2
and the mixingale concentration inequality of \cite[Lemma 2]{Hansen91},
\begin{equation*}\begin{split}
\P\left(\max\limits_{ s\leq T}\abs{ \sum\limits_{t=1}^{s}v_{j,t}u_t}> x\right)
\leq\frac{\E\left(\max\limits_{ s\leq T}\abs{\sum\limits_{t=1}^{s}v_{j,t}u_t}^{m}\right)}{ x^{m}}
\leq
\frac{C_1^{m}\left(\sum\limits_{t=1}^T\left(c_{t}^{(j)}\right)^2\right)^{m/2}}{  x^{m}}
=\frac{C T^{m/2}
}{x^{m}},
\end{split}\end{equation*}
from which the result follows.
\end{proof}

\begin{proof}[\bf Proof of Lemma B.11]
Start by writing
\begin{equation*}\begin{split}
&\left\vert\frac{1}{\sqrt{T}}\frac{\hat{\boldsymbol{v}}_{j}'\boldsymbol{u}}{\hat\tau_j^2}-\frac{1}{\sqrt{T}}\frac{\boldsymbol{v}_{j}'\boldsymbol{u}}{\tau_j^2}\right\vert
\leq
\frac{1}{\sqrt{T}}\left\vert\frac{\left(\hat{\boldsymbol{v}}_j'\boldsymbol{u}-\boldsymbol{v}_j'\boldsymbol{u}\right)}{\hat\tau_j^2}\right\vert+
\left\vert\frac{1}{\hat\tau_j^2}-\frac{1}{\tau_j^2}\right\vert\left\vert\frac{\boldsymbol{v}_j'\boldsymbol{u}}{\sqrt{T}}\right\vert
=:R_{(\text{i})} + R_{(\text{ii})}. \\
\end{split}\end{equation*}
For the first term, we can bound from above using Lemmas B.6, B.9 and equation (B.1), 
all providing bounds uniform over $j\in H$.
We then get 
\begin{equation*}\begin{split}
R_{(\text{i})}
\leq&\frac{\vert\hat{\boldsymbol{v}}_j'\boldsymbol{u}-\boldsymbol{v}_j'\boldsymbol{u}\vert}{\sqrt{T}}\frac{1}{\vert\tau_j^2\vert-\vert\hat\tau_j^2-\tau_j^2\vert}
\leq\frac{C_5\sqrt{T}\lambda_{\max}^{2-r}\smaxN}{1/C_6-\left(\boundCT +C_1\bar{\lambda}^{2-r}\smaxN+C_2\sqrt{\bar{\lambda}^2\underset{\bar{}}{\lambda}^{-r}\smaxN}\right)}.
\end{split}\end{equation*}
For the second term, we can bound from above using Lemma B.6
and the set $\bigcap\limits_{j\in H}\setEPvuj{T}(h^{1/m}T^{1/2}\eta_T^{-1})$ to get the uniform bound 
\begin{equation*}\begin{split}
R_{(\text{ii})}
\leq&\frac{ h^{1/m}\eta_{T}^{-1}\boundCT +C_7\bar{\lambda}^{2-r}\smaxN h^{1/m}\eta_T^{-1} + C_8\sqrt{\bar{\lambda}^2\underset{\bar{}}{\lambda}^{-r}\smaxN} h^{1/m}\eta_T^{-1}} {C_9-C_{10}\left( \boundCT +C_1\bar{\lambda}^{2-r}\smaxN+C_2\sqrt{\bar{\lambda}^2\underset{\bar{}}{\lambda}^{-r}\smaxN}\right)}. 
\end{split}\end{equation*}
Combining both bounds gives 
\begin{equation*}\begin{split}
&R_{(\text{i})} + R_{(\text{ii})}\leq \frac{h^{1/m}\eta_T^{-1}\boundCT+C_1h^{1/m}\eta_T^{-1}\sqrt{T}\lambda_{\max}^{2-r}\smaxN+C_2h^{1/m}\eta_T^{-1}\sqrt{\bar{\lambda}^2\underset{\bar{}}{\lambda}^{-r}\smaxN}}{C_3-C_4\left(\boundCT +C_1\bar{\lambda}^{2-r}\smaxN+C_2\sqrt{\bar{\lambda}^2\underset{\bar{}}{\lambda}^{-r}\smaxN}\right)}
\end{split}\end{equation*}
from which the result follows.
\end{proof}

\begin{proof}[\bf Proof of Lemma B.12]
The result follows directly from the Markov inequality
\begin{equation*}
\P \left(\norm{\boldsymbol{d}}_{\infty} > x \right) \leq x^{-p} \E \left[\max_t \abs{d_t}^p \right] \leq x^{-p} T \max_t \E \abs{d_t}^p \leq C x^{-p} T.\qedhere
\end{equation*}
\end{proof}

\begin{proof}[\bf Proof of Lemma B.13]
We can write
\begin{equation*}
\begin{split}
&\abs{\frac{1}{T}\sum_{t=l+1}^{T} \left(\hat{w}_{j,t}\hat{w}_{k,t-l} - w_{j,t} w_{k,t-l}\right)} \leq \abs{\frac{1}{T} \sum_{t=l+1}^{T} \left(\hat{w}_{j,t} - w_{j,t}\right) \left(\hat{w}_{k,t-l} - w_{k,t-l} \right)} \\
&\quad\quad + \abs{\frac{1}{T} \sum_{t=l+1}^{T} \left(\hat{w}_{j,t} - w_{j,t}\right) w_{k,t-l}} + \abs{\frac{1}{T}\sum_{t=l+1}^{T} w_{j,t} \left(\hat{w}_{k,t-l} - w_{k,t-l}\right)} \\
&\qquad=: \frac{1}{T} \left[R_{(\text{i})} + R_{(\text{ii})} + R_{(\text{iii})} \right].
\end{split}
\end{equation*}

Take $R_{(\text{i})}$ first. Using that $\hat{w}_{j,t-q} = \hat{u}_{t-q} \hat{v}_{j,t-q}$, straightforward but tedious calculations show that
\begin{equation*}
\begin{split}
&R_{(\text{i})} \leq \abs{\sum_{t=l+1}^T \left(\hat{u}_{t} - u_{t} \right) \left(\hat{u}_{t-l} - u_{t-l} \right) \left(\hat{v}_{j,t} - v_{j,t} \right) \left(\hat{v}_{k,t-l} - v_{k,t-l}\right)} \\
&\quad + \abs{\sum_{t=l+1}^T \left(\hat{u}_{t} - u_{t} \right) \left(\hat{u}_{t-l} - u_{t-l} \right) \left(\hat{v}_{j,t} - v_{j,t} \right)  v_{k,t-l}} + \abs{ \sum_{t=l+1}^T \left(\hat{u}_{t} - u_{t} \right) u_{t-l} \left(\hat{v}_{j,t} - v_{j,t} \right) \left(\hat{v}_{k,t-l} - v_{k,t-l}\right)} \\
&\quad + \abs{\sum_{t=l+1}^T \left(\hat{u}_{t} - u_{t} \right) \left(\hat{u}_{t-l} - u_{t-l} \right) v_{j,t} \left(\hat{v}_{k,t-l} - v_{k,t-l}\right)} + \abs{\sum_{t=l+1}^T \left(\hat{u}_{t} - u_{t} \right) \left(\hat{u}_{t-l} - u_{t-l} \right) v_{j,t} v_{k,t-l}} \\
&\quad + \abs{\sum_{t=l+1}^T \left(\hat{u}_{t} - u_{t} \right) u_{t-l} v_{j,t} \left(\hat{v}_{k,t-l} - v_{k,t-l}\right)} + \abs{\sum_{t=l+1}^T u_{t}  \left(\hat{u}_{t-l} - u_{t-l} \right) \left(\hat{v}_{j,t} - v_{j,t} \right) \left(\hat{v}_{k,t-l} - v_{k,t-l}\right)} \\
&\quad + \abs{\sum_{t=l+1}^T u_{t} \left(\hat{u}_{t-l} - u_{t-l} \right) \left(\hat{v}_{j,t} - v_{j,t} \right) v_{k,t-l}} + \abs{\sum_{t=l+1}^T u_{t} u_{t-l} \left(\hat{v}_{j,t} - v_{j,t} \right) \left(\hat{v}_{k,t-l} - v_{k,t-l}\right)} 
=: \sum_{i=1}^9 R_{\text{(i)},i}.
\end{split}
\end{equation*}
Using that $\norm{\hat{\boldsymbol{v}}_{j} - \boldsymbol{v}_{j}}_2 = \norm{\boldsymbol{X}_{-j} \left(\hat{\boldsymbol{\gamma}}_0 - \boldsymbol{\gamma}_j^0 \right)}_2 \leq C \sqrt{T \bar{\lambda}^{2-r} \smaxN}$ on the set $\setNWcons$ by Corollary 1, 
and $\norm{\hat{\boldsymbol{u}} - \boldsymbol{u}}_2 = \norm{\boldsymbol{X} \left(\hat{\boldsymbol{\beta}} - \boldsymbol{\beta}^0 \right)}_2 \leq C \sqrt{T \lambda^{2-r} s_r}$ on the set $\setILcons$ by Corollary 1,
we can use the Cauchy--Schwarz inequality to conclude that
\begin{equation*}
R_{\text{(i)},1} \leq \norm{\hat{\boldsymbol{u}} - \boldsymbol{u}}_2^2 \norm{\hat{\boldsymbol{v}}_j - \boldsymbol{v}_j}_2 \norm{\hat{\boldsymbol{v}}_k - \boldsymbol{v}_k}_2 \leq C T^2 \lambda^{2-r} s_r \bar{\lambda}^{2-r} \smaxN \leq C T^2 \left[\lambda_{\max}^{2-r} s_{r,\max}\right]^2.
\end{equation*}

On the set $\setTail{u}{T^{1/2m}} \bigcap\limits_{j\in H} \setTail{v_{j}}{T^{1/2m}}$, we have that $\norm{\boldsymbol{u}}_{\infty}\leq C T^{1/2m}$, and \\  $\norm{\boldsymbol{v}_j}_{\infty} \leq C (hT)^{1/2m}$, uniformly over $j\in H$. Then we can use this, plus the previous results to find that
\begin{equation*}
\begin{split}
R_{\text{(i)},2} 
&\leq \norm{\boldsymbol{v}_k}_{\infty} \sum_{t=l+1}^T \abs{\hat{u}_{t} - u_{t}} \abs{\hat{u}_{t-l} - u_{t-l} } \abs{\hat{v}_{j,t} - v_{j,t}} \\
&\leq \norm{\boldsymbol{v}_k}_{\infty} \norm{\hat{\boldsymbol{u}} - \boldsymbol{u}}_2^2 \norm{\hat{\boldsymbol{v}}_j - \boldsymbol{v}_j}_2 \leq C (hT)^{\frac{1}{2m}} T^{3/2} \left[\lambda_{\max}^{2-r} s_{r,\max}\right]^{3/2}.
\end{split}
\end{equation*}
We then find in the same way that
\begin{equation*}
\begin{split}
R_{\text{(i)},3} &\leq \norm{\boldsymbol{u}}_{\infty} \norm{\hat{\boldsymbol{u}} - \boldsymbol{u}}_2 \norm{\hat{\boldsymbol{v}}_j - \boldsymbol{v}_j}_2 \norm{\hat{\boldsymbol{v}}_k - \boldsymbol{v}_k}_2 \leq C T^{\frac{1}{2m}} T^{3/2} \left[\lambda_{\max}^{2-r} s_{r,\max}\right]^{3/2},\\
R_{\text{(i)},4} &\leq \norm{\hat{\boldsymbol{u}} - \boldsymbol{u}}_2^2 \norm{\boldsymbol{v}_j}_{\infty} \norm{\hat{\boldsymbol{v}}_k - \boldsymbol{v}_k}_2 \leq C (hT)^{\frac{1}{2m}} T^{3/2} \left[\lambda_{\max}^{2-r} s_{r,\max}\right]^{3/2},\\
R_{\text{(i)},5} &\leq \norm{\hat{\boldsymbol{u}} - \boldsymbol{u}}_2^2 \norm{\boldsymbol{v}_j}_{\infty}\norm{\boldsymbol{v}_k}_{\infty} \leq C (hT)^{\frac{1}{m}} T \lambda_{\max}^{2-r} s_{r,\max}. 
\end{split}
\end{equation*}
Defining $\tilde{\boldsymbol{w}}_{j,l} = (u_{1} v_{k,l+1}, \ldots, u_{T} v_{j,T})'$, $\tilde{\boldsymbol{w}}_{k,-l} = (u_{l+1} v_{k,1}, \ldots, u_{T} v_{k,T})'$ and $\tilde{\boldsymbol{u}}_{l} = (u_{1} u_{l+1}, \ldots, u_{T} u_{T})'$, all with $\bar m$ bounded moments, we find on the set 
\begin{equation*}
\setTail{u}{T^{1/2m}} \cap \setTail{\tilde{u}_l}{T^{1/m}} \bigcap\limits_{j\in H} \setTail{\tilde{w}_{j,l}}{T^{1/m}} \bigcap\limits_{k\in H} \setTail{\tilde{w}_{k,-l}}{T^{1/m}}
\end{equation*}
that
\begin{equation*}
\begin{split}
R_{\text{(i)},6} &\leq \norm{\tilde{\boldsymbol{w}}_{j,l}}_{\infty}
\norm{\hat{\boldsymbol{u}} - \boldsymbol{u}}_2 \norm{\hat{\boldsymbol{v}}_k - \boldsymbol{v}_k}_2 \leq C (hT)^{\frac{1}{m}} T \lambda_{\max}^{2-r} s_{r,\max}, \\
R_{\text{(i)},7} &\leq \norm{\boldsymbol{u}}_{\infty} \norm{\hat{\boldsymbol{u}} - \boldsymbol{u}}_2 \norm{\hat{\boldsymbol{v}}_j - \boldsymbol{v}_j}_2 \norm{\hat{\boldsymbol{v}}_k - \boldsymbol{v}_k}_2 \leq C T^{\frac{1}{2m}} T \left[\lambda_{\max}^{2-r} s_{r,\max}\right]^{3/2}, \\
R_{\text{(i)},8} &\leq \norm{\tilde{\boldsymbol{w}}_{k,-l}}_{\infty}
\norm{\hat{\boldsymbol{u}} - \boldsymbol{u}}_2 \norm{\hat{\boldsymbol{v}}_j - \boldsymbol{v}_j}_2 \leq C (hT)^{\frac{1}{m}} T \lambda_{\max}^{2-r} s_{r,\max}, \\
R_{\text{(i)},9} &\leq \norm{\tilde{\boldsymbol{u}}_l}_{\infty}^2 \norm{\hat{\boldsymbol{v}}_j - \boldsymbol{v}_j}_2 \norm{\hat{\boldsymbol{v}}_k - \boldsymbol{v}_k}_2 \leq C T^{\frac{1}{m}} T \lambda_{\max}^{2-r} s_{r,\max}. \\
\end{split}
\end{equation*}
It then follows that
\begin{equation*}
\begin{split}
\frac{1}{T} R_{\text{(i)}} &\leq C_1 T\left[\lambda_{\max}^{2-r} s_{r,\max}\right]^2+C_2 h^{1/2m}T^{(m+1)/2m}\left[\lambda_{\max}^{2-r} s_{r,\max}\right]^{3/2}\\
&\quad+C_3h^{1/m}T^{1/m}\lambda_{\max}^{2-r} s_{r,\max}.
\end{split}
\end{equation*}

For $R_{\text{(ii)}}$ we get analogously on the set $\setTail{u}{T^{1/2m}} \bigcap\limits_{j\in H} \setTail{v_j}{(hT)^{1/2m}} \bigcap\limits_{j\in H} \setTail{w_j}{(hT)^{1/m}}$
\begin{equation*}
\begin{split}
R_{\text{(ii)}} &\leq \abs{\frac{1}{T} \sum_{t=l+1}^{T} \left(\hat{u}_{t} - u_{t}\right) \left(\hat{v}_{j,t} - v_{j,t} \right) w_{k,t-l}} \\
&\quad + \abs{\frac{1}{T} \sum_{t=l+1}^{T} \left(\hat{u}_{t} - u_{t}\right) v_{j,t} w_{k,t-l}} + \abs{\frac{1}{T} \sum_{t=l+1}^{T} u_{t} \left(\hat{v}_{j,t} - v_{j,t} \right) w_{k,t-l}} \\
&\leq \norm{\hat{\boldsymbol{u}} - \boldsymbol{u}}_2 \norm{\hat{\boldsymbol{v}}_j - \boldsymbol{v}_j}_2 \norm{\boldsymbol{w}_k}_{\infty} + \norm{\hat{\boldsymbol{u}} - \boldsymbol{u}}_2 \norm{\boldsymbol{v}_j}_{\infty} \norm{\boldsymbol{w}_k}_{\infty} + \norm{\boldsymbol{u}}_{\infty} \norm{\hat{\boldsymbol{v}}_j - \boldsymbol{v}_j}_2 \norm{\boldsymbol{w}_k}_{\infty},\\
&\leq C_1 (hT)^{\frac{1}{m}} T \lambda_{\max}^{2-r} s_{r,\max} + C_2 (hT)^{\frac{3}{2m}} T^{1/2} \sqrt{\lambda_{\max}^{2-r} s_{r,\max}} + C_3h^{\frac{1}{m}} T^{\frac{3}{2m}} T^{1/2} \sqrt{\lambda_{\max}^{2-r} s_{r,\max}}.
\end{split}
\end{equation*}
It then follows that $\frac{1}{T}R_{\text{(ii)}}\leq C_1h^{1/m}T^{1/m}\lambda_{\max}^{2-r} s_{r,\max}+C_2h^{3/2m}T^{(3-m)/2m}\sqrt{\lambda_{\max}^{2-r} s_{r,\max}}$.
Finally, $R_{\text{(iii)}}$ follows identically to $R_{\text{(ii)}}$.

Collect all sets in the set
\begin{equation*}
\begin{split}
\mathcal{E}_{T,uvw}^{(j,k)} &:= \setTail{u}{T^{1/2m}} \bigcap\limits_{j\in H} \setTail{v_j}{(hT)^{1/2m}} \\
&\qquad \cap \setTail{\tilde{u}}{T^{1/m}} \bigcap\limits_{j\in H} \setTail{\tilde{w}_{j,l}}{(hT)^{1/m}} \bigcap\limits_{k\in H} \setTail{\tilde{w}_{k,-l}}{(hT)^{1/m}}.
\end{split}
\end{equation*}
Now note that by application of Lemma B.12,
we can show that all sets, and by extension their intersection, have a probability of at least $1 - CT^{-c}$ for some $c>0$. Take for instance the sets with $x = T^{1/m}$. In that case we can apply Lemma B.12
with $p = \bar m$ moments to obtain a probability of $1 - C \left(T^{1/m} \right)^{-\bar m} T = 1 - C T^{1-\bar m/m}$, so $c=\bar{m}/m-1>0$. The sets for $p=2\bar m$ moments can be treated similarly. For the sets involving intersections over $j\in H$, Lemma B.12
can be used with an additional union bound argument: $\P\left(\bigcap\limits_{j\in H}\setTail{d}{x}\right)\geq1-Cx^{-p}hT$. These sets therefore hold with probability at least $1-C(hT)^{-c}$. Since $h$ is non-decreasing, this probability converges no slower than $1-CT^{-c}$.
\end{proof}

\begin{proof}[\bf Proof of Lemma B.14]
Consider the set $\left\lbrace\max\limits_{(j,k)\in H^2}\left\vert\frac{1}{T}\sum\limits_{t=l+1}^{T}\left({w}_{j,t}{w}_{k,t-l}-\E{w}_{j,t}{w}_{k,t-l}\right)\right\vert\leq h^2\chi_T\right\rbrace$. As in Lemma A.3
we use the Triplex inequality \citep{Jiang09Triplex} to show under which conditions this set holds with probability converging to 1. By the union bound,
\begin{equation*}\begin{split}
    &\P\left(\max\limits_{(j,k)\in H^2}\left\vert\frac{1}{T}\sum\limits_{t=l+1}^{T}\left({w}_{j,t}{w}_{k,t-l}-\E{w}_{j,t}{w}_{k,t-l}\right)\right\vert
    \leq h^2\chi_T\right)\\
    &\geq 1-\sum\limits_{(j,k)\in H^2}\P\left(\left\vert\frac{1}{T}\sum\limits_{t=l+1}^{T}\left({w}_{j,t}{w}_{k,t-l}-\E{w}_{j,t}{w}_{k,t-l}\right)\right\vert> h^2\chi_T\right).
\end{split}\end{equation*}
Let $z_{t}={w}_{j,t}{w}_{k,t-l}$:
\begin{equation*}
\begin{split}
&\P\left(\left\vert\sum\limits_{t=l+1}^{T}[z_{t}-\E z_{t}]\right\vert>h^2\chi_T(T)\right)\leq 2q \exp\left( \frac{-Th^4\chi_T^2}{288q^2\kappa_T^2} \right)\\
&\quad + \frac{6}{h^2T\chi_T} \sum\limits_{t=1}^{T} \E\left\vert \E\left(z_t\left\vert\mathcal{F}_{t-q}\right.\right)-\E(z_t) \right\vert	+\frac{15}{h^2T\chi_T}\sum\limits_{t=1}^{T} \E\left[\left\vert z_t \right\vert\boldsymbol{1}_{\left\lbrace \vert z_t\vert>\kappa_T\right\rbrace }\right]\\
&=:R_{(\text{i})}+R_{(\text{ii})}+R_{(\text{iii})}.
\end{split}
\end{equation*}

We treat the first term last, as we first need to establish the restrictions put on $\chi_T$, $q$ and $\kappa_T$ from $R_{\text{(ii)}}$ and $R_{\text{(iii)}}$. For the second term, by Lemma B.2(iii)
\begin{equation*}\begin{split}
\E\left\vert \E\left(z_t\left\vert\mathcal{F}_{t-q}\right.\right)-\E(z_t) \right\vert\leq c_{t}\psi_{q}\leq C{\psi}_{q} \leq C_1 q^{-d},
\end{split}\end{equation*}
such that $R_{(\text{ii})} \leq C h^{-2}\chi_T^{-1} q^{-d}$. Hence we need $\chi_T^{-1} q^{-1} \to 0$ as $T\to \infty$, such that $\sum\limits_{(j,k)\in H^2}R_{(\text{ii})}\to0$.

For the third term, we have by H\"older's and Markov's inequalities
\begin{equation*}\begin{split}
&\E\left[\left\vert z_t \right\vert\boldsymbol{1}_{\left\lbrace \vert z_t\vert>\kappa_T\right\rbrace }\right]
\leq 
\kappa_T^{1-m/2} \E \abs{z_t}^{m/2}
\end{split}\end{equation*}
so $R_{(\text{iii})} \leq C h^{-2}\chi_T^{-1} \kappa_T^{1-m/2}$. Hence we know that we need to take $\kappa_T$ and $\chi_T$ such that $\chi_T^{-1} \kappa_T^{1-m/2} \to 0$ as $T \to \infty$, giving $\sum\limits_{(j,k)\in H^2}R_{(\text{iii})}\to0$.

Our goal is to minimize $\chi_T$ while ensuring all conditions are satisfied. 
We jointly bound all three terms by a sequence $\eta_T\to 0$: 
\begin{equation*}
\text{(1)} \quad \sum\limits_{(j,k)\in H^2}R_{(\text{i})}\leq Cqh^2 \exp\left( \frac{-Th^4\chi_T^2}{q^2\kappa_T^2} \right)\leq \eta_T, \qquad \text{(2)} \quad C\chi_T^{-1}q^{-d}\leq \eta_T,
\qquad \text{(3)} \quad C\chi_T^{-1} \kappa_T^{1-m/2}\leq \eta_T.
\end{equation*}
For the steps below, we assume that $\frac{\eta_T}{h^2}\leq\frac{1}{e}\implies \sqrt{-\ln(\eta_T/(qh^2))}\geq1$.  First, isolate $\kappa_T$ in (1) and (2),
\begin{equation*}
Cqh^2 \exp\left( \frac{-Th^4\chi_T^2}{q^2\kappa_T^2} \right)\leq \eta_T \qquad\Longleftrightarrow \quad\kappa_T\leq C\frac{\sqrt{T}h^2\chi_T}{q}.
\end{equation*}
\begin{equation*}
C\chi_T^{-1} \kappa_T^{1-m/2}\leq \eta_T \qquad \Longleftrightarrow \quad\kappa_T\geq C\left(\frac{1}{\chi_T\eta_T}\right)^{2/(m-2)}.
\end{equation*}
Combining both bounds,
\begin{equation*}\begin{split}
&C_1\left(\frac{1}{\chi_T\eta_T}\right)^{2/(m-2)} \leq C_2\frac{\sqrt{T}h^2\chi_T}{q}\qquad \Longleftrightarrow \quad q \leq C\sqrt{T}h^2\chi_T^{m/(m-2)}\eta_T^{2/(m-2)},
\end{split}\end{equation*}
Isolating $q$ from (2),
\begin{equation*}
C \chi_T^{-1}q^{-d}\leq \eta_T \qquad\Longleftrightarrow \quad q\geq C\left(\frac{1}{\eta_T\chi_T}\right)^{1/d}.
\end{equation*}
Satisfying both bounds on $q$,
\begin{equation*}\begin{split}
C_1\sqrt{T}h^2\chi_T^{m/(m-2)}\eta_T^{2/(m-2)}\geq C_2\left(\frac{1}{\eta_T\chi_T}\right)^{1/d}\qquad
\Longleftrightarrow \quad\chi_T\geq  C\eta_T^{-\frac{2d+m-2}{dm+m-2}} (\sqrt{T}h^2)^{-\frac{1}{1/d+m/(m-2)}}.
\end{split}\end{equation*}
When $\chi_T$ satisfies this lower bound, $\sum\limits_{(j,k)\in H^2}(R_{(\text{i})}+R_{(\text{ii})}+R_{(\text{iii})})\leq 3\eta_T$, and
\begin{equation*}
   \P\left(\max\limits_{(j,k)\in H^2}\left\vert\frac{1}{T}\sum\limits_{t=l+1}^{T}\left({w}_{j,t}{w}_{k,t-l}-\E{w}_{j,t}{w}_{k,t-l}\right)\right\vert\leq h^2\chi_T\right)\geq 1-3\eta_T,
\end{equation*}
Which completes the proof. 
\end{proof}

\begin{proof}[\bf Proof of Lemma B.15]
By the definition of $\hat{\boldsymbol{\Theta}}$, it follows directly that $\hat{\boldsymbol{\Theta}}\boldsymbol{X}' = \hat{\boldsymbol{\Upsilon}}^{-2}\hat{\boldsymbol{V}}'$, where $\hat{\boldsymbol{V}} = (\hat{\boldsymbol{v}}_1, \ldots, \hat{\boldsymbol{v}}_N)$, 
such that $\hat{\boldsymbol{\Theta}}\boldsymbol{X}'\boldsymbol{u}/\sqrt{T}=\hat{\boldsymbol{\Upsilon}}^{-2} \hat{\boldsymbol{V}}'\boldsymbol u / \sqrt{T}$.

The proof will now proceed by showing that $\max\limits_{1\leq p\leq P}\abs{\boldsymbol{r}_{N,p} \left(\hat{\boldsymbol{\Theta}}\boldsymbol{X}'\boldsymbol{u} - \boldsymbol{\Upsilon}^{-2} \boldsymbol{V}'\boldsymbol u\right)}/\sqrt{T} \xrightarrow{p} 0$ and $\max\limits_{1\leq p\leq P}\abs{\boldsymbol{r}_{N,p} \Delta} \xrightarrow{p} 0$.
By Lemma B.8, 
it holds that 
\begin{equation*}
\max\limits_{j\in H} \vert\Delta_j\vert\leq\sqrt{T}\lambda^{1-r}{s}_{r}\frac{\bar{\lambda}}{C_1-\eta_T-C_2\bar{\lambda}^{2-r}\smaxN} =: U_{\Delta,T},
\end{equation*}
on the set $\setILcons \cap \setNWcons \cap \setLL$. 
First note that $U_{\Delta,T}\to0$ as the assumption $\lambda_{\max}^2\lambda_{\min}^{-r}\leq\eta_T\left[h^{2/m}\sqrt{T}s_{r,\max}\right]^{-1}$ implies that  $\sqrt{T}\bar{\lambda}\lambda^{1-r}{s}_{r}\to0$ and $\bar{\lambda}^{2-r}\smaxN\to0$.
Regarding $\setILcons \cap \setNWcons \cap \setLL$, it follows from Lemma A.4
that
$\P\left(\setEP{T}(T\lambda/4)\right)\geq 1-C\frac{N}{T^{m/2}\lambda^m}$, and 
from Lemma B.4 
that
$\P\left(\bigcap\limits_{j\in H}\left\lbrace\setEP{T}^{(j)}(T\frac{\lambda_j}{4})\right\rbrace\right)\geq 1-C\frac{hN}{T^{m/2}\underset{\bar{}}{\lambda}^{m}}$; both of these probabilities converge to 1 when $\lambda_{\min}\geq \eta_T^{-1}\frac{(hN)^{1/m}}{\sqrt{T}}$.
By Lemma B.3,
$\P\left(\setCC{(S_\lambda)}\bigcap\limits_{j\in H}\setCC{(S_{\lambda,j})}\right)\geq
1-3(1+h)\eta_T^\prime\to1
$  when $h{\eta_T^\prime}\to0$ and 
    \begin{equation*}
    \lambda_{\min}^{-r}s_{r,\max}\leq C\eta_T^{\frac{d+m-1}{dm+m-1}}\left[\frac{\sqrt{T}}{N^{\left(\frac{2}{d}+\frac{2}{m-1}\right)}}\right]^{\frac{1}{\frac{1}{d}+\frac{m}{m-1}}}.
\end{equation*}
For the former condition, we may let
$h\eta_T^\prime\leq\eta_T\implies\eta_T^\prime\leq\eta_Th^{-1}$ and $\eta_T^{\prime -1}\geq\eta_T^{-1} h$, and combining this with the latter condition we require that
\begin{equation*}
    \lambda_{\min}^{-r}s_{r,\max}\leq C\eta_T^{\frac{d+m-1}{dm+m-1}}\left[\frac{\sqrt{T}}{\left(hN\right)^{\left(\frac{2}{d}+\frac{2}{m-1}\right)}}\right]^{\frac{1}{\frac{1}{d}+\frac{m}{m-1}}},
\end{equation*}
which we assume in this lemma. Note that this bound makes redundant the previous bound $\lambda_{\min}\geq \eta_T^{-1}\frac{(hN)^{1/m}}{\sqrt{T}}$ when $0<r<1$, by arguments similar to those in the proof of Theorem 1.
The probability of $\setLL$ converges to 1 by Lemma B.5
when $\delta_T\leq  C\eta_{T,1}(\sqrt{T}h)^{\frac{1}{1/d+m/(m-1)}}$. We may therefore let $\delta_T=  C\eta_{T,1}(\sqrt{T}h)^{\frac{1}{1/d+m/(m-1)}}$, where $\eta_{T,1}$ will be addressed later in the proof.  
We assume that $\max\limits_{1\leq p\leq P}\norm{\boldsymbol{r}_{N,p}}_1<C$, from which it follows that $\max\limits_{1\leq p\leq P}\abs{\boldsymbol{r}_{N,p}\Delta} \leq \norm{\boldsymbol{r}_{N,p}}_1 \max\limits_{j\in H} \vert\Delta_j\vert \rightarrow 0$. Similarly
\begin{equation*}
    \max\limits_{1\leq p\leq P}\abs{\boldsymbol{r}_{N,p} \left(\hat{\boldsymbol{\Theta}}\boldsymbol{X}'\boldsymbol{u} - \boldsymbol{\Upsilon}^{-2} \boldsymbol{V}'\boldsymbol u\right)}/\sqrt{T}
    \leq \max\limits_{1\leq p\leq P}\norm{\boldsymbol{r}_{N,p}}_1\max\limits_{j\in H}\frac{1}{\sqrt{T}} \left\vert\frac{\hat{\boldsymbol{v}}_{j}'\boldsymbol{u}}{\hat\tau_j^2}-\frac{\boldsymbol{v}_{j}'\boldsymbol{u}}{\tau_j^2}\right\vert. 
\end{equation*}
By Lemma B.11,
on the set
\begin{equation*}
\mathcal{E}_{V,T} := \setEP{T}(T\lambda/4)\cap \setNWcons\cap\setLL\bigcap\limits_{j\in H} 
\setEPvuj{T}(h^{1/m}T^{1/2}\eta_T^{-1})
\end{equation*}
it holds that
\begin{equation*}\begin{split}
\max\limits_{j\in H}\frac{1}{\sqrt{T}} \left\vert\frac{\hat{\boldsymbol{v}}_{j}'\boldsymbol{u}}{\hat\tau_j^2}-\frac{\boldsymbol{v}_{j}'\boldsymbol{u}}{\tau_j^2}\right\vert
\leq \frac{h^{1/m}\eta_{T,2}^{-1}\boundCT+C_1h^{1/m}\eta_{T,2}^{-1}\sqrt{T}\lambda_{\max}^{2-r}\smaxN+C_2h^{1/m}\eta_{T,2}^{-1}\sqrt{\bar{\lambda}^2\underset{\bar{}}{\lambda}^{-r}\smaxN}}{C_3-C_4\left(\boundCT +C_1\bar{\lambda}^{2-r}\smaxN+C_2\sqrt{\bar{\lambda}^2\underset{\bar{}}{\lambda}^{-r}\smaxN}\right)}=: U_{V,T}.
\end{split}\end{equation*}
Plugging in our choice of $\delta_T$ into the first term in the numerator, 
\begin{equation*}
    h^{1/m}\eta_{T,2}^{-1}\boundCT=C(\eta_{T,1}\eta_{T,2})^{-1} h^{1+1/m}(\sqrt{T}h)^{-\frac{1}{1/d+m/(m-1)}}=C(\eta_{T,1}\eta_{T,2})^{-1} \left(\frac{h^{\frac{m+1}{dm}+\frac{2}{m-1}}}{\sqrt{T}}\right)^{\frac{1}{1/d+m/(m-1)}}.
\end{equation*} 
We may choose $\eta_{T,1}$ and $\eta_{T,2}$ such that $(\eta_{T,1}\eta_{T,2})^{-1}$ grows arbitrarily slowly. Therefore, this term converges to 0 when $\frac{h^{\frac{m+1}{dm}+\frac{2}{m-1}}}{\sqrt{T}}\to0$. The two other terms in the numerator then converge to 0 when $\lambda_{\max}^2\lambda_{\min}^{-r}\leq\eta_T\left[h^{2/m}\sqrt{T}s_{r,\max}\right]^{-1}$. Under these rates the denominator then converges to $C_3$, which gives  $U_{V,T} \to 0$.
The only new set appearing in $\mathcal{E}_{V,T}$ is $\bigcap\limits_{j\in H}\setEPvuj{T}(h^{1/m}T^{1/2}\eta_T^{-1})$, whose probability converges to 1 by Lemma B.10.
It follows directly that 
\begin{equation*}
\abs{\boldsymbol{R}_N \left(\hat{\boldsymbol{\Theta}}\boldsymbol{X}'\boldsymbol{u} - \boldsymbol{\Upsilon}^{-2} \boldsymbol{V}'\boldsymbol u\right)}/\sqrt{T}\xrightarrow{p} 0. \qedhere
\end{equation*}
\end{proof}

\begin{proof}[\bf Proof of Lemma B.16]
The following bounds on $R^{\Omega}_{N,T}$ and $R^{\beta}_{N,T}$ hold on the set
\begin{equation*}
\setILcons \cap \setNWcons \cap \setLL \cap \mathcal{E}_{T,uvw} \cap \setEP{T,ww}\left(\eta_T^{-1}h^2\left(\sqrt{T}h^2\right)^{-\frac{1}{1/d+m/(m-2)}}\right),
\end{equation*}
which holds with probability converging to 1 when 
$\lambda_{\max}^2\lambda_{\min}^{-r}\leq\eta_T\left[h^{2/m}\sqrt{T}s_{r,\max}\right]^{-1}$, $\frac{h^{\frac{m+1}{dm}+\frac{2}{m-1}}}{\sqrt{T}}\to0$,
$\lambda_{\min}^{-r}s_{r,\max}\leq C\eta_T^{\frac{d+m-1}{dm+m-1}}\left[\frac{\sqrt{T}}{\left(hN\right)^{\left(\frac{2}{d}+\frac{2}{m-1}\right)}}\right]^{\frac{1}{\frac{1}{d}+\frac{m}{m-1}}}$,
and, if $r=0$, $\lambda_{\min}\geq \eta_T^{-1}\frac{(hN)^{1/m}}{\sqrt{T}}$, see the proof of Theorem 3 
for details. Under Assumption 6,
{$m$ and $d$ may be arbitrarily large}, and assuming polynomial growth rates allows us to  simplify these conditions to the following:
\begin{equation*}
    \begin{split}
    0<r<1:&\ \frac{1/2+b}{2-r}<\ell<\frac{1/2-b}{r},\\
    r=0:&\ \frac{1/2+b}{2-r}<\ell<1/2.
    \end{split}
\end{equation*}
These bounds are feasible when $b<\frac{1-r}{2}$.
By (B.2)
\begin{equation*} 
R_{N,T}^{\Omega}\leq C_1\Delta\tau \left[1 + \Delta \tau + \Delta \tau \Delta \omega \right] + C_2 Q_T^{1-d-\delta},
\end{equation*}
where $\delta>0$,
\begin{equation*}
    \Delta\tau=\max\limits_{j\in H}\abs{\frac{1}{\hat{\tau}_j^2}-\frac{1}{{\tau}_j^2}}\leq \frac{ \boundCT +C_1\bar{\lambda}^{2-r}\smaxN+C_2\sqrt{\bar{\lambda}^2\underset{\bar{}}{\lambda}^{-r}\smaxN}}{C_3-C_4\left( \boundCT +C_1\bar{\lambda}^{2-r}\smaxN+C_2\sqrt{\bar{\lambda}^2\underset{\bar{}}{\lambda}^{-r}\smaxN}\right)},
\end{equation*}
with $\delta_T=  C\eta_{T,1}(\sqrt{T}h)^{\frac{1}{1/d+m/(m-1)}}$, and
\begin{equation*}
\begin{split}
\Delta\omega=\max\limits_{(j,k)\in H^2}\abs{\hat\omega_{j,k} - \omega_{j,k}^{N,Q_T}} 
&\leq \left(2 Q_T + 1 \right) \left[ C_1 \left[T^{1/2} \lambda_{\max}^{2-r} s_{r,\max}\right]^2
+ C_2h^{\frac{1}{m}} T^{\frac{1}{m}} \lambda_{\max}^{2-r} s_{r,\max}\right.\\
&\quad + C_3 \sqrt{h^{\frac{3}{m}}T^{\frac{3-m}{m}} \lambda_{\max}^{2-r} s_{r,\max}}+C_4\left[h^{\frac{1}{3m}}T^{\frac{m+1}{3m}}\lambda_{\max}^{2-r} s_{r,\max}\right]^{\frac{3}{2}}\\
&\quad \left. +  C_5 \eta_T^{-1}h^2\left(\sqrt{T}h^2\right)^{-\frac{1}{1/d+m/(m-2)}} \right].
\end{split}
\end{equation*}

$Q_T^{1-d-\delta}$ is dominated by the term $ C_1\Delta\tau \left[1 + \Delta \tau + \Delta \tau \Delta \omega \right]$, since $d$ may be arbitrarily large, and we can limit the analysis to $\Delta\tau$ and $\Delta\omega$. 

For $\Delta\tau$, we first consider the numerator of the upper bound
\begin{equation*}\begin{split}
    \boundCT +C_1\bar{\lambda}^{2-r}\smaxN+C_2\sqrt{\bar{\lambda}^2\underset{\bar{}}{\lambda}^{-r}\smaxN}=&O\left(T^{\mcH-(\mcH+1/2)\frac{1}{1/d+m/(m-1)}}+T^{b-\ell(2-r)}+T^{\frac{1}{2}(b-\ell(2-r))}\right)\\
    =&O\left(T^{\epsilon-1/2}+T^{b-\ell(2-r)}+T^{\frac{1}{2}(b-\ell(2-r))}\right),
\end{split}\end{equation*}
for some arbitrarily small $\epsilon>0$. From the earlier conditions, $\frac{1/2+b}{2-r}<\ell\implies b-\ell(2-r)<-1/2$, which implies that the numerator converges to 0, and that it converges at the rate of $O\left(T^{\frac{1}{2}(b-\ell(2-r))}\right)$, since the two other terms have a smaller exponent of $T$. The same expression from the numerator also appears in the denominator, so the latter 
converges to a non-zero constant, and $\Delta\tau=O\left(T^{\frac{1}{2}(b-\ell(2-r))}\right)$. 

For $\Delta\omega$, we may simplify the upper bound as follows
\begin{equation*}
\begin{split}
& \left(2 Q_T + 1 \right) \left[ C_1 \left[T^{1/2} \lambda_{\max}^{2-r} s_{r,\max}\right]^2
+ C_2h^{\frac{1}{m}} T^{\frac{1}{m}} \lambda_{\max}^{2-r} s_{r,\max}\right.\\
&\quad + C_3 \sqrt{h^{\frac{3}{m}}T^{\frac{3-m}{m}} \lambda_{\max}^{2-r} s_{r,\max}}+C_4\left[h^{\frac{1}{3m}}T^{\frac{m+1}{3m}}\lambda_{\max}^{2-r} s_{r,\max}\right]^{\frac{3}{2}}
\left. +  C_5 \eta_T^{-1}h^2\left(\sqrt{T}h^2\right)^{-\frac{1}{1/d+m/(m-2)}} \right]\\
&=O\left(T^{\mathcal{Q}}\left[T^{2(1/2+b-\ell(2-r))}+T^{\epsilon+b-\ell(2-r)}+T^{\epsilon+\frac{1}{2}(-1+b-\ell(2-r))}+T^{\epsilon+\frac{3}{2}\left(1/3+b-\ell(2-r)\right)}+T^{\epsilon-1/2}\right]\right)\\
&=O\left(T^{\mathcal{Q}+2(1/2+b-\ell(2-r))}+T^{\mathcal{Q}+\epsilon-1/2}\right).
\end{split}
\end{equation*}
Since $\Delta\tau\to0$,
\begin{equation*}\begin{split}
    \Delta\tau \left[1 + \Delta \tau + \Delta \tau \Delta \omega \right]=&O\left(\Delta\tau+[\Delta\tau]^2\Delta\omega\right)\\
    =&O\left(T^{\frac{1}{2}(b-\ell(2-r))}+T^{\mathcal{Q}+1+3(b-\ell(2-r))}+T^{\mathcal{Q}-1/2+(b-\ell(2-r))}\right). 
\end{split}\end{equation*}
When $\mathcal{Q}<\min\left\lbrace-1-\frac{5}{6}(b-\ell(2-r)), \frac{1}{2}-\frac{1}{2}(b-\ell(2-r))\right\rbrace$, the first term dominates the others, and $R^{\Omega}_{N,T}=O\left(T^{\frac{1}{2}(b-\ell(2-r))}\right)$. Note that since $b-\ell(2-r)<-1/2$, this bound on $\mathcal{Q}$ is satisfied when $\mathcal{Q}<2/3$.
Following the proof of Lemma B.15,
\begin{equation}\begin{split}
R_{N,T}^\beta:= \max\limits_{1\leq p\leq P}\abs{\boldsymbol{r}_{N,p}\left(\frac{\hat{\boldsymbol{\Theta}}\boldsymbol{X}'\boldsymbol{u}}{\sqrt{T}}+\Delta-\frac{\boldsymbol{\Upsilon}^{-2} \boldsymbol{V}'\boldsymbol u}{\sqrt{T}}\right)}\leq U_{\Delta,T}+U_{V,T},
\end{split}\end{equation}
where
\begin{equation*}
U_{\Delta,T}=\sqrt{T}\lambda^{1-r}{s}_{r}\frac{\bar{\lambda}}{C_1-\eta_T-C_2\bar{\lambda}^{2-r}\smaxN}, 
\end{equation*}
and 
\begin{equation*}
    U_{V,T}=\frac{h^{1/m}\eta_{T}^{-1}\boundCT+C_1h^{1/m}\eta_{T}^{-1}\sqrt{T}\lambda_{\max}^{2-r}\smaxN+C_2h^{1/m}\eta_{T}^{-1}\sqrt{\bar{\lambda}^2\underset{\bar{}}{\lambda}^{-r}\smaxN}}{C_3-C_4\left(\boundCT +C_1\bar{\lambda}^{2-r}\smaxN+C_2\sqrt{\bar{\lambda}^2\underset{\bar{}}{\lambda}^{-r}\smaxN}\right)},
\end{equation*}
with $\delta_T=  C\eta_{T,1}(\sqrt{T}h)^{\frac{1}{1/d+m/(m-1)}}$.
For $U_{\Delta,T}$, the numerator is of order $O\left(T^{1/2+b-\ell(2-r)}\right)$, and the denominator of order $O\left(1+T^{b-\ell(2-r)}\right)=O(1)$, so $U_{\Delta,T}=O\left(T^{1/2+b-\ell(2-r)}\right)$. For $U_{V,T}$, note that each term in the numerator is multiplied by $h^{1/m}\eta_T^{-1}$, which we can take to be $O(T^{\epsilon})$ for an arbitrarily small $\epsilon>0$. The remainder of the numerator is then 
\begin{equation*}\begin{split}
    \boundCT +C_1\sqrt{T}{\lambda}_{\max}^{2-r}\bar{s_{r}}+C_2\sqrt{\bar{\lambda}^2\underset{\bar{}}{\lambda}^{-r}\smaxN}=&O\left(T^{\mcH-(\mcH+1/2)\frac{1}{1/d+m/(m-1)}}+T^{1/2+b-\ell(2-r)}+T^{\frac{1}{2}(b-\ell(2-r))}\right)\\
    =&O\left(T^{\epsilon-1/2}+T^{1/2+b-\ell(2-r)}+T^{\frac{1}{2}(b-\ell(2-r))}\right),\\
    =&O\left(T^{\epsilon-1/2}+T^{1/2+b-\ell(2-r)}\right).\\
\end{split}\end{equation*}
Since the denominator contains the same expression as $\Delta\tau$, it 
converges to a non-zero constant, and $U_{V,T}=O\left(T^{\epsilon}\left[T^{-1/2}+T^{\frac{1}{2}(b-\ell(2-r))}\right]\right)$. Combining these terms,
\begin{equation*}
    R^{\beta}_{N,T}=O\left(T^{1/2+b-\ell(2-r)}+T^{\epsilon}\left[T^{-1/2}+T^{\frac{1}{2}(b-\ell(2-r))}\right]\right)=O\left(T^{\epsilon-1/2}+T^{1/2+b-\ell(2-r)}\right).
\end{equation*}
Finally, as mentioned at the start of the proof, these results hold on a set whose probability converges to 1. We therefore replace $O(\cdot)$ with $O_p(\cdot)$ and the proof is complete.
\end{proof}

\subsection{Illustration of conditions for Corollary 1}\label{sec:simplifiedRatesLasso}

\begin{example}\label{cor:simplifiedRatesLasso}
    The requirements of Corollary 1 are satisfied when 
    $N\sim T^{a}$ for $a>0$, $s_{r}\sim T^{b}$ for $b>0$, and $\lambda\sim T^{-\ell}$ for \begin{equation*}
    \begin{split}
    0<r<1:&\quad \frac{b}{1-r}<\ell<\frac{1}{r(\frac{1}{d}+\frac{m}{m-1})}\left[\frac{1}{2}-b\left(\frac{1}{d}+\frac{m}{m-1}\right)-2a\left(\frac{1}{d}+\frac{1}{m-1}\right)\right],\\
    r=0:&\quad\frac{b}{1-r}<\ell<\frac{1}{2}-\frac{a}{m}.
    \end{split}
\end{equation*}
    This choice of $\ell$ is feasible when 
    \begin{equation}\label{eq:exampleRequirementsLasso}
        \left(\frac{2b}{1-r}\right)\left(\frac{1}{d}+\frac{m}{m-1}\right)+4a\left(\frac{1}{d}+\frac{1}{m-1}\right)<1.
    \end{equation}

\Cref{fig:momentsRequiredLasso} demonstrates which values of $a$, $b$, $m$, $d$, and $r$ are feasible, as well as how many moments $m$ are required for different combinations of the other parameters.
\end{example}
\begin{figure}[ht]
\caption{Required moments $m$ implied by \cref{eq:exampleRequirementsLasso}. Contours mark intervals of 10 moments, and values above $m=100$ are truncated to 100. Non-shaded areas indicate infeasible regions.}\label{fig:momentsRequiredLasso}
\centering
\includegraphics[width=\textwidth]{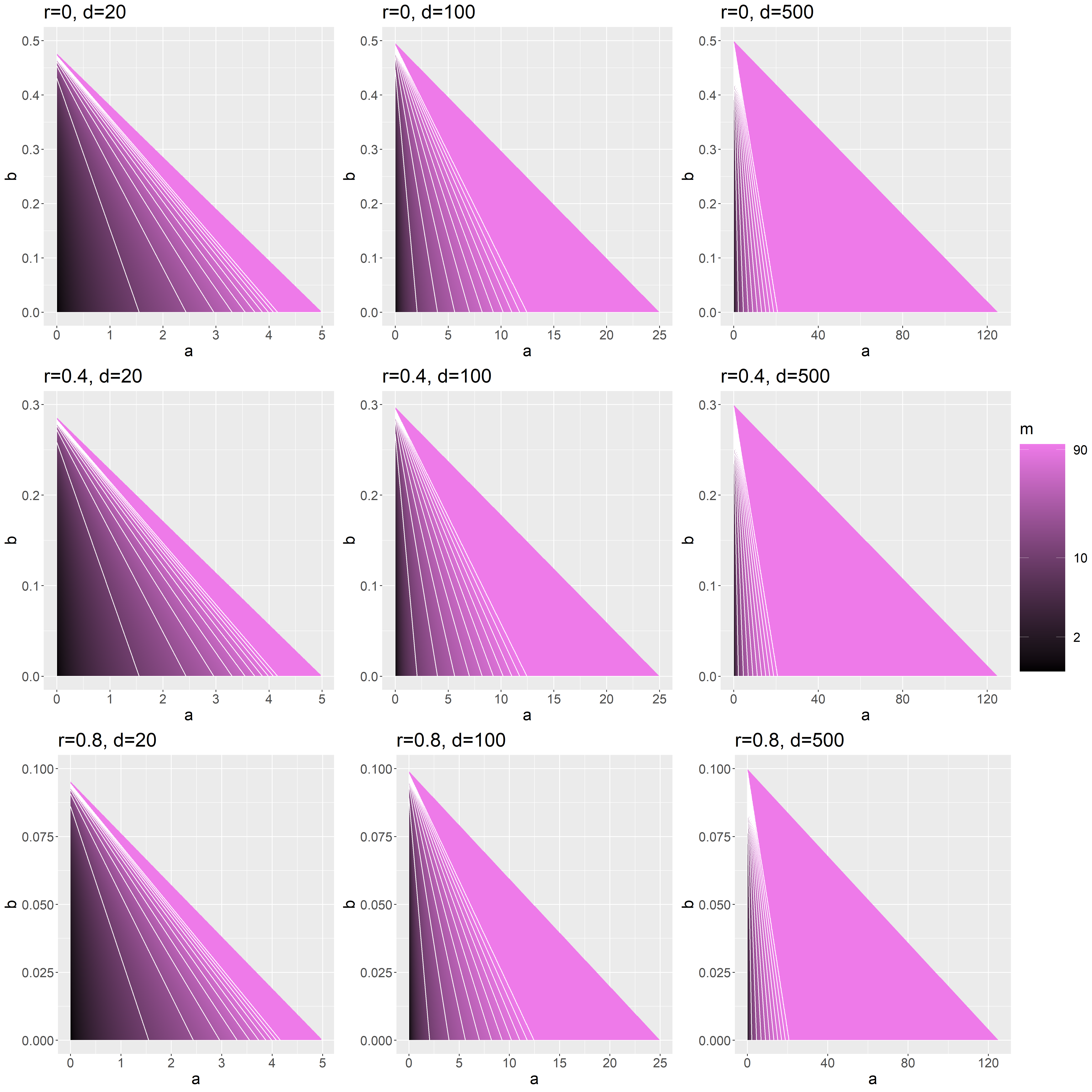}
\end{figure}

\subsection{Additional notes on Examples 5 and 6}\label{sec:supplS4examples}
We start with a lemma on useful properties of the matrix pseudo-norm induced by the $\norm{\cdot}_r$ pseudo-norm. Its proof is omitted, but available upon request. We then provide further details on Examples 5 and 6.

\begin{lemma}\label{lma:inducedNormProp}
    For matrices $\bA,\bB\in \mathds{R}^{n\times m}$ with column vectors $\ba_j$ and $\bb_j$,  define the induced matrix pseudo-norm $\norm{\bA}_r=\max\limits_{\bx\neq\bzero}\frac{\norm{\bA\bx}_r}{\norm{\bx}_r}=\max\limits_{\norm{\bx}_r=1}\norm{\bA\bx}_r$, where for a vector $\bx$ the pseudo-norm $\norm{\bx}_r=\left(\sum_{j}\abs{x_j}^r\right)^{1/r}$. For $0<r<1$, the following hold
    \begin{enumerate}[label=(\arabic*)]
        \item $\norm{c A}_r=\abs{c}\norm{A}_r$,
        \item $\norm{\bA}_r=\max_j\norm{\ba_j}_r$,
        \item $\norm{\bA\bB}_r\leq\norm{\bA}_r\norm{\bB}_r$,
        \item $\norm{\bA+\bB}_r^r\leq\norm{\bA}_r^r+\norm{\bB}_r^r$.
        \item $m^{1/2-1/r} \norm{\bA}_2 \leq \norm{\bA}_r\leq n^{1/r-1/2}\norm{\bA}_2$
    \end{enumerate}
    For $r=0$, the induced matrix pseudo-norm $\norm{\bA}_0=\max\limits_{\bx\neq\bzero}\frac{\norm{\bA\bx}_0}{\norm{\bx}_0}$, with $\norm{\bx}_0=\sum_{j}\mathds{1}_{\{\abs{x_j}>0\}}$. The above properties also hold for this norm, except property (5), and (1), which is replaced by 
    \begin{enumerate}
        \item[(1')] $\norm{c \bA}_0=\norm{\bA}_0$, for $c\neq 0$.
    \end{enumerate}
\end{lemma}

\bigskip
\subsubsection{\label{supp:SparseFactor} Example 5: Sparse factor model}
\noindent Recall the factor model
\begin{equation*}\begin{split}
y_t&=\boldsymbol{\beta^0}'\boldsymbol{x}_{t}+u_t,\ u_t\sim IID(0,1)\\
\boldsymbol{x}_t&=\underset{N\times k}{\bLambda}\underset{k\times 1}{\bbf_t}+\boldsymbol{\nu}_t,\ \boldsymbol{\nu}_t \sim IID(\boldsymbol{0},\bSigma_{\bnu}),\qquad
\bbf_t \sim IID(\bzero,\bSigma_{\bbf}),
\end{split}\end{equation*}
where $\bLambda$ has bounded elements, $\bSigma_{\bbf}$ and $\bSigma_{\bnu}$ are positive definite with bounded eigenvalues, and $\bnu_t$ and $\bbf_t$ uncorrelated. We make the following assumptions on the factor loadings:
\begin{equation}
C_1 N^a \leq \lambda_{\min} (\bLambda' \bLambda) \leq \lambda_{\max} (\bLambda' \bLambda) \leq C_2 N^{b}, \qquad 0 < a \leq b \leq 1.
\end{equation}
These assumptions imply that the $k$ largest eigenvalues of $\bSigma=\bLambda\bSigma_{\bbf}\bLambda^\prime+\bSigma_{\bnu}$ diverge at rates between $N^a$ and $N^b$, while the remaining $N-k+1$ eigenvalues do not diverge. This holds as we can bound the largest eigenvalue $\lambda_{\max} (\bSigma)$ from above by
\begin{equation*}
\begin{split}
\lambda_{\max} (\bSigma) &\leq \lambda_{\max} (\bLambda \bSigma_{\bbf} \bLambda') + \lambda_{\max} (\bSigma_{\bnu})
\leq \lambda_{\max} (\bSigma_{\bbf}) \lambda_{\max} (\bLambda' \bLambda) + \lambda_{\max} (\bSigma_{\bnu}) \leq C_1  N^b + C_2.
\end{split}
\end{equation*}
Similarly, we can bound bound the $k$-th largest eigenvalue $\lambda_{k} (\bSigma)$ using Weyl's inequality and the min-max theorem from below by
\begin{equation*}
\begin{split}
\lambda_{k} (\bSigma) &\geq \lambda_k (\bLambda \bSigma_{\bbf} \bLambda') + \lambda_{\min} (\bSigma_{\bnu}) 
= \max_{\mathcal{U}} \left\{\left.\min_{\bx\in\mathcal{U}\setminus\bzero} \frac{\bx'\bLambda \bSigma_{\bbf} \bLambda' \bx}{\bx'\bx} \right| \dim(\mathcal{U}) = N-k+1 \right\} + \lambda_{\min} (\bSigma_{\bnu})
\\
&\geq \lambda_{\min} (\bSigma_{\bbf}) \max_{\mathcal{U}} \left\{\left.\min_{\bx\in\mathcal{U}\setminus\bzero} \frac{\bx'\bLambda \bLambda' \bx}{\bx'\bx} \right| \dim(\mathcal{U}) = N-k+1 \right\} + \lambda_{\min} (\bSigma_{\bnu}) \\
&= \lambda_{\min} (\bSigma_{\bbf}) \lambda_{k} (\bLambda \bLambda') + \lambda_{\min} (\bSigma_{\bnu}) = \lambda_{\min} (\bSigma_{\bbf}) \lambda_{\min} (\bLambda \bLambda') + \lambda_{\min} (\bSigma_{\bnu}) \geq C_1 N^a + C_2,
\end{split}
\end{equation*}
where we used that $ \lambda_{k} (\bLambda \bLambda') = \lambda_{k} (\bLambda' \bLambda) = \lambda_{\min} (\bLambda' \bLambda)$.

Therefore, this assumption generates a weak factor model if $b<1$, while if $b=1$ but $a<1$ some factors, but not all, are weak; see e.g.~\cite{uematsu2022estimation, uematsu2022inference} and the references therein.\footnote{Our setup corresponds to the framework with factors of varying strength as proposed by \citet{uematsu2022estimation, uematsu2022inference} by setting $\lambda_{j} (\bLambda' \bLambda) \sim N^{a_j}$ where $b = a_1 \geq \ldots \geq a_k = a$.} If $a=b=1$ we have the standard strong factor model with dense loadings.

Sparse factor loadings satisfy these assumptions. In particular, from \cref{lma:inducedNormProp}(5) we find that $\lambda_{\max} (\bLambda' \bLambda)=\norm{\bLambda}_2^2 \leq k^{2/r - 1} \norm{\bLambda}_r^2$; thus, with a fixed number $k$ of factors, the sparsity of $\bLambda$ provides an upper bound for the strength of divergence of the largest eigenvalues.\footnote{This bound only holds for $r>0$. \citet{uematsu2022estimation} consider the case $r=0$.} Sparse factor models may provide accurate descriptions of various economic and financial datasets. For example, \citet{uematsu2022inference} find strong evidence of sparse factor loadings in the FRED-MD macroeconomic dataset \citep{mccracken2016fred}, as well as of firm-level excess returns of the S\&P500 beyond the market return factor. \citet{freyaldenhoven2021identification} uses sparsity in the loadings to identify the factors, motivating the sparsity empirically through the presence of ``local'' factors in economic and financial data. Further empirical evidence for sparse factor models is reviewed in \citet{uematsu2022estimation}.

We now derive the sparsity bound of Example 5.
We bound $\norm{\bgamma_j^0}_r^r$ based on the fact that $\bTheta=\bUpsilon^{-2}\bGamma$, where $\bUpsilon^{-2}=\diag(1/\tau_1^2,\dots, 1/\tau_N^2)$, and 
\begin{equation*}\label{ChatDef} 
{\boldsymbol{\Gamma}}:=\begin{bmatrix}
1      & -{\gamma}_{1,2} & \dots & -{\gamma}_{1,N} \\
-{\gamma}_{2,1}      & 1 & \dots & -{\gamma}_{2,N} \\
\vdots & \vdots &\ddots & \vdots\\
-{\gamma}_{N,1}      & -{\gamma}_{N,2}  & \dots & 1
\end{bmatrix}.
\end{equation*}
This result follows from the definition of $\bgamma_j^0$ as linear projection coefficients, and the block matrix inverse identity for $\bTheta$. Then
\begin{equation*}\begin{split}
\max_j\norm{\bgamma_j^0}^r_r\leq& 1+\max_j\norm{\bgamma_j^0}^r_r=\max_j\norm{(1,\bgamma_j^{0\prime})^\prime}_r^r=\max_j\norm{(1,-\bgamma_j^{0\prime})^\prime}_r^r=\norm{\bGamma}_r^r\\
=&\norm{(\bUpsilon^{-2})^{-1}\bTheta}_r^r\leq\norm{(\bUpsilon^{-2})^{-1}}_r^r\norm{\bTheta}_r^r\leq \max_j \tau_j^{2r} \norm{\bTheta}_r^r\leq C\norm{\bTheta}_r^r,
\end{split}\end{equation*}
where  $\max_j \tau_j^{2r}\leq C$ follows from (B.1)
Note that when $r=0$, these steps follow similarly, noting that $\norm{(\bUpsilon^{-2})^{-1}}_0=1$, and therefore $C=1$.

By the Woodbury matrix identity
\begin{equation*}
\begin{split}
\bTheta &= \bSigma_{\bnu}^{-1}- \bSigma_{\bnu}^{-1}\bLambda / N^a \left(\bSigma_{\bbf}^{-1}/N^a + \bLambda^\prime \bSigma_{\bnu}^{-1} \bLambda/N^a \right)^{-1} \bLambda^{\prime}\bSigma_{\bnu}^{-1}.
\end{split}
\end{equation*}
Then
\begin{equation*}
\begin{split}
\norm{\bTheta}_r^r &\leq \norm{\bSigma_{\bnu}^{-1}}_r^r + \norm{\bSigma_{\bnu}^{-1}}_r^r \norm{\bLambda / N^a}_r^r \norm{\left(\bSigma_{\bbf}^{-1}/N^a + \bLambda^\prime \bSigma_{\bnu}^{-1} \bLambda/N^a \right)^{-1}}_r^r \norm{\bLambda^{\prime}}_r^r \norm{\bSigma_{\bnu}^{-1}}_r^r.
\end{split}
\end{equation*}
As for positive semidefinite symmetric matrices $\bA$ and $\bB$ we have that 
\begin{equation*}
\norm{(\bA+\bB)^{-1}}_2 \leq \frac{1}{\lambda_{\min}(\bA + \bB) } \leq \frac{1} {\lambda_{\min} (\bA) + \lambda_{\min} (\bB)} \leq \frac{1}{\lambda_{\min} (\bB)},
\end{equation*}
it follows that
\begin{equation*}
\begin{split}
\norm{\left(\bSigma_{\bbf}^{-1}/N^a + \bLambda^\prime \bSigma_{\bnu}^{-1} \bLambda/N^a \right)^{-1}}_2 \leq \frac{1}{\lambda_{\min} \left(\bLambda^\prime \bSigma_{\bnu}^{-1} \bLambda/N^a \right)}\leq \frac{1}{\lambda_{\min} (\bSigma_{\bnu}^{-1}) \lambda_{\min} \left(\bLambda^\prime  \bLambda/N^a \right)}.
\end{split}
\end{equation*}
As $\lambda_{\min} (\bSigma_{\bnu}^{-1}) = 1/\lambda_{\max} (\bSigma_{\bnu}) \geq 1/C$, it follows from our assumptions that $\lambda_{\min} \left(\bLambda^\prime  \bLambda/N^a \right) \geq C$ and therefore $\norm{\left(\bSigma_{\bbf}^{-1}/N + \bLambda^\prime \bSigma_{\bnu}^{-1} \bLambda/N \right)^{-1}}_2 \leq C$. It then also follows from \cref{lma:inducedNormProp}(5) that $\norm{\left(\bSigma_{\bbf}^{-1}/N^a + \bLambda^\prime \bSigma_{\bnu}^{-1} \bLambda/N^a \right)^{-1}}_r^r \leq C k^{1-r/2}$ and
\begin{equation} \label{eq:boundSparseFactor}
\begin{split}
\norm{\bTheta}_r^r &\leq \norm{\bSigma_{\bnu}^{-1}}_r^r + C k^{1-r/2} \norm{\bSigma_{\bnu}^{-1}}_r^r \norm{\bLambda / N^a}_r^r  \norm{\bLambda^{\prime}}_r^r \norm{\bSigma_{\bnu}^{-1}}_r^r.
\end{split}
\end{equation}
With $\norm{\bLambda^{\prime}}_r^r \leq C k$, we then find the bound
\begin{equation*}
\begin{split}
\norm{\bTheta}_r^r &\leq \norm{\bSigma_{\bnu}^{-1}}_r^r + C k^{2-r/2} N^{-r a} \norm{\bSigma_{\bnu}^{-1}}_r^{2r} \norm{\bLambda}_r^r.
\end{split}
\end{equation*}

We provide two examples of $\bSigma_{\bnu}$ such that $\bSigma_{\bnu}^{-1}$ is sparse. For block diagonal structures, this follows trivially, since the inverse maintains the same block diagonal structure. For a Toeplitz structure $\bSigma_{\bnu,i,j}=\rho^{\abs{i-j}}$, by Section 8.8.4 of \cite{gentle2007matrix},  
\begin{equation*}
    \bSigma_{\bnu}^{-1}=\frac{1}{1-\rho^2}\begin{bmatrix}
    1 & -\rho & 0 & \dots & 0 \\
    -\rho & 1+\rho^2 & -\rho & \dots & 0\\
    0 & -\rho & 1+\rho^2 & \dots & 0\\
    \vdots & \vdots & \vdots & \vdots & \vdots\\
    0 & 0 & 0 & \dots & 1
    \end{bmatrix},
\end{equation*}
and we can bound 
\begin{equation*}
    \norm{\bSigma_{\bnu}^{-1}}_r^r=\max\limits_{j}\norm{\bSigma_{\bnu,\cdot,j}^{-1}}_r^r=\norm{\bSigma^{-1}_{\bnu,\cdot,\Ntwo}}_r^r=\frac{\abs{1+\rho^2}^{r}+2\abs{\rho}^r}{\abs{1-\rho^2}^r}\leq C,
\end{equation*}
or simply $\max\limits_{j}\norm{\bSigma_{\bnu,\cdot,j}^{-1}}_0=3$ for $r=0$.

Note that a (potentially weak) factor model without sparse loadings does not yield a sufficiently sparse matrix $\bTheta$ for all values of $r$. In \cref{eq:boundSparseFactor} we may try to bound $\norm{\bLambda}_r^r$ directly using \cref{lma:inducedNormProp}(5) to bound $
\norm{\bLambda / N^a}_r^r \leq  N^{1 + (b- 2a - 1)r/2} \left[\lambda_{\max} (\bLambda' \bLambda/ N^b) \right]^{r/2}$, such that
$\norm{\bTheta}_r^r \leq \norm{\bSigma_{\bnu}^{-1}}_r^r \left(1 + C k^{2-r/2} N^{1 + (b- 2a - 1)r/2} \right)$. This is not a tight enough bound to guarantee sparsity of $\bTheta$. To illustrate, for the standard dense factor model with $a=b=1$ and $k$ fixed, we get $\norm{\bTheta}_r^r \leq C N^{1-r}$. Weaker divergence of the eigenvalues even increases the power of $N$.

\subsubsection{\label{supp:sparseVAR1} Example 6: Sparse VAR(1)}
Recall the sparse VAR(1) model
    \begin{equation*}
        \bz_t=\bPhi \bz_{t-1}+\bu_t,\ \E\bu_t\bu_t^\prime:=\bOmega,\ \E\bu_t\bu_{t-l}^\prime=\bzero,\ \forall l\neq0, 
    \end{equation*}
    with our regression of interest being 
$y_t=\bphi_1\bz_{t-1}+u_{1,t}$.
For Example 6(a)
with a symmetric block-diagonal coefficient matrix $\bPhi$ and the error covariance matrix $\bOmega$ being the identity, we can simplify $\bSigma=\sum\limits_{q=0}^{\infty}\bPhi^{q}\bOmega\bPhi^{\prime q}=\sum\limits_{q=0}^{\infty}\bPhi^{2q}=\left(\bI-\bPhi^2\right)^{-1}$, where $\bPhi^{0}=\bPhi^{\prime 0}=\bI$, and $\bTheta=\bSigma^{-1}=\bI-\bPhi^2$. Note that $\bI-\bA$ is invertible iff 1 is not an eigenvalue of $\bA$. Since the eigenvalues of $\bPhi^2$ are between (and not including) 0 and 1, $\bSigma$ exists. $\bI-\bPhi^2$ inherits the block diagonal structure of $\bPhi$, so we may bound $\max\limits_{j}\norm{\bgamma_j^0}_r^r\leq C \norm{\bTheta}_r^r\leq Cb$.

This result can be extended to the case where
$\bOmega$ has the same block diagonal structure as the VAR coefficient matrix $\bPhi$. 
While the simplified expression for $\bSigma$ provided above no longer holds, both 
$\bSigma$ and $\bSigma^{-1}$ remain block diagonal when $\bOmega$ and $\bPhi$ share  the same block structure.
As a result, the nonzero structure of $\bgamma_j^0$ remains unaltered.

For Example 6(b)
with a diagonal $\bPhi$ and Toeplitz $\bOmega$, we can simplify $ \bSigma=\sum\limits_{q=0}^{\infty}\bPhi^{q}\bOmega\bPhi^{\prime q}=\sum\limits_{q=0}^{\infty}\phi^{2q}\bOmega=\frac{1}{1-\phi^2}\bOmega$ and by similar arguments to \cref{supp:SparseFactor},
\begin{equation*}
    \bTheta=\frac{1-\phi^2}{1-\rho^2}\begin{bmatrix}
    1 & -\rho & 0 & \dots & 0 \\
    -\rho & 1+\rho^2 & -\rho & \dots & 0\\
    0 & -\rho & 1+\rho^2 & \dots & 0\\
    \vdots & \vdots & \vdots & \vdots & \vdots\\
    0 & 0 & 0 & \dots & 1
    \end{bmatrix}
\end{equation*}.
The precision matrix is clearly sparse in this case, and $\max\limits_{j}\norm{\bgamma_j^0}_r^r\leq C \norm{\bTheta}_r^r\leq C$.

Finally, we numerically investigated the extension where the VAR coefficient matrix also has a Toeplitz structure, namely
$\Phi_{i,j}=0.4^{1+\abs{i-j}}$. 
We vary the sample size between $N=10$ and $N=1000$ and display the boundedness in $r$-norm of the parameter vector in the nodewise regressions in Figure \ref{fig:NumericalSparsity} for different values of $r$.
We use a log-scale since this sparsity grows by orders of magnitude for decreasing $r$.
\begin{figure}[ht]
\caption{Example 6(b): We display  $\ln\left(\max\limits_{j}\norm{\bgamma_j^0}_r^r\right)$ for $N$ between 10 and 1000, and $r$ between 0.1 and 0.9.}
\centering
\includegraphics[width=\textwidth]{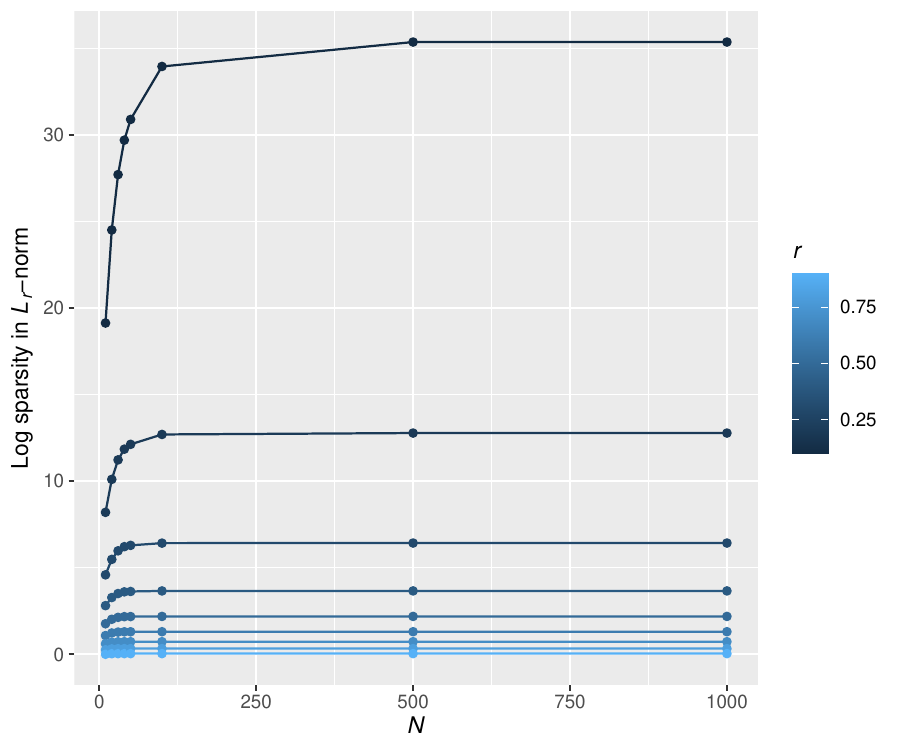}\label{fig:NumericalSparsity}
\end{figure}
\clearpage
\subsection{Algorithmic details for choosing the lasso tuning parameter  
} \label{sec:supplS4.3}
\begin{algorithm}[ht!]
\caption{Plug-in choice of $\lambda$}
\BlankLine
At $k=0$, initialize $\lambda^{(0)}\leftarrow\norm{\bX^\prime \by}_{\infty}/T$ and $\hat\bu^{(0)}\leftarrow\by-\frac{1}{T}\sum_{t=1}^Ty_t$\;
\While{$1\leq k\leq K$}{
Obtain the estimated long-run covariance matrix $\hat\bOmega^{(k)}$ as in (9),
with $\hat\bXi(l)=\frac{1}{T-l}\sum\limits_{t=l+1}^{T}\bx_t\hat u^{(k-1)}_t\hat u^{(k-1)}_{t-l}\bx_{t-l}^\prime$\;
\While{$1\leq b \leq B$}{
Draw $\hat\bg^{(b)}$ from $N\left(\boldsymbol{0}, \hat\bOmega^{(k)}\right)$\; 
$m_b \leftarrow \norm{\hat\bg^{(b)}}_{\infty}$\;
}
$\lambda^{(k)}\leftarrow c\frac{1}{\sqrt{T}} q_{(1-\alpha)}$, where $q_{(1-\alpha)}$ is the $(1-\alpha)$-quantile of $m_1, \ldots, m_B$\;
\eIf{$\abs{\lambda^{(k)}-\lambda^{(k-1)}}/\lambda^{(k-1)}< \epsilon$} {$\lambda\leftarrow\lambda^{(k)}$\;
\textbf{break}\;
}{
Estimate $\hat\bbeta^{(k)}$ with the lasso using $\lambda^{(k)}$ as the tuning parameter\; $\hat\bu^{(k)}\leftarrow\by-\bX\hat\bbeta^{(k)}$\;
}
}
$\lambda\leftarrow\lambda^{(K)}$\;
\end{algorithm}

\bigskip

We set $K=15$, $\epsilon=0.01$,
$B=1000$, 
$\alpha=0.05$, and $c=0.8$ throughout the simulation study. 
\newpage
\subsection{Additional simulation details}\label{supp:additionalsim} 
\begin{figure}[ht!]
\caption{Model A, $\rho$ heat map coverage: Contours mark the coverage thresholds at 5\% intervals, from 75\% to the nominal 95\%, from dark green to white respectively. Units on the axes are not proportional to the $\lambda$-value but rather its position in the grid. The value of $\lambda$ is $(10T)^{-1}$ at 0, and increases exponentially to a value that sets all parameters to zero at 50. Plots are based on 100 replications, with colored dots representing combinations of $\lambda$'s selected by PI (purple), AIC (red), BIC (blue), EBIC (yellow).} 
\centering
\includegraphics[width=\textwidth]{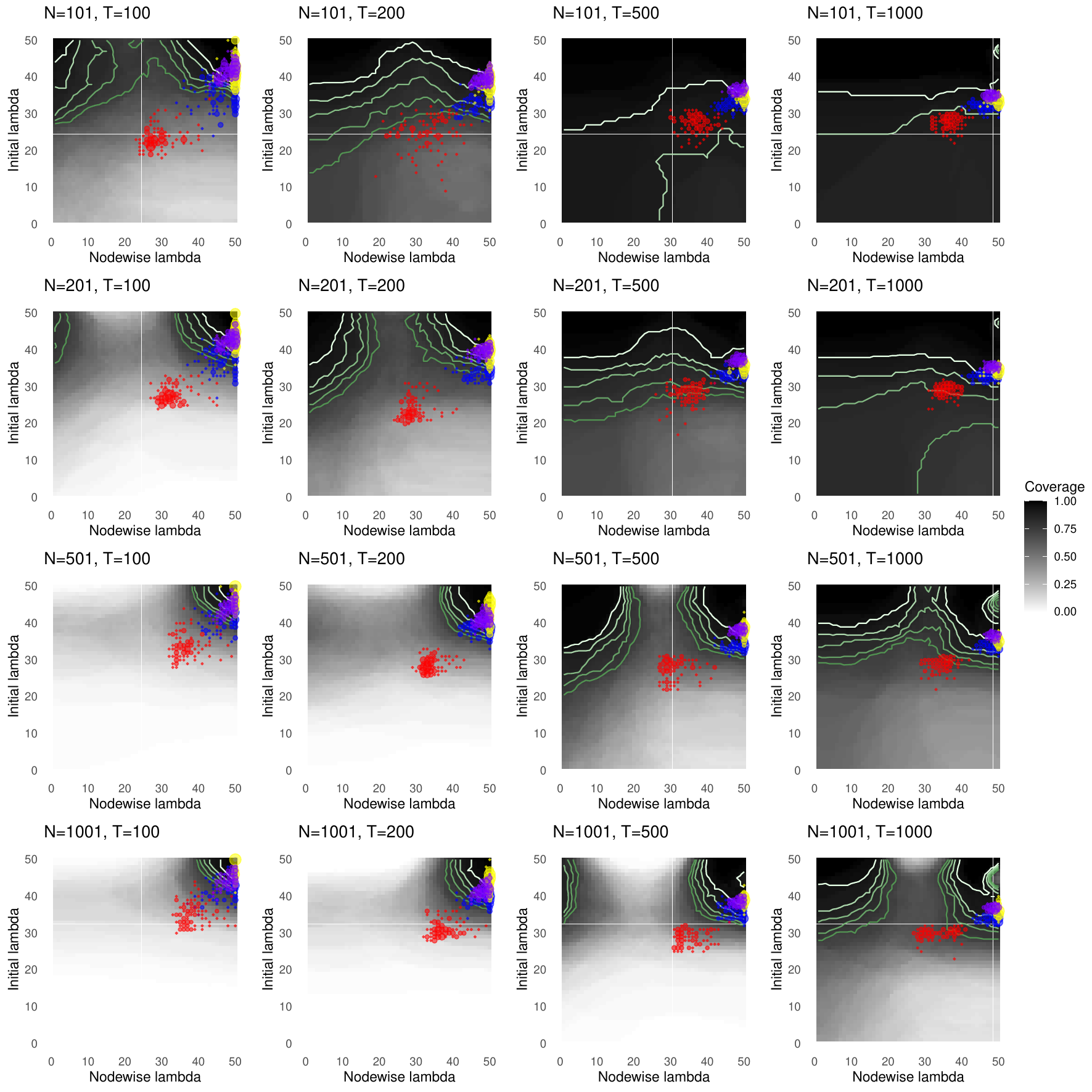}\label{fig:iidrhocoverage}
\end{figure}

\begin{figure}[ht!]
\caption{Model A, $\beta_1$ heat map coverage: Contours mark the coverage thresholds at 5\% intervals, from 75\% to the nominal 95\%, from dark green to white respectively. Units on the axes are not proportional to the $\lambda$-value but rather its position in the grid. The value of $\lambda$ is $(10T)^{-1}$ at 0, and increases exponentially to a value that sets all parameters to zero at 50. Plots are based on 100 replications, with colored dots representing combinations of $\lambda$'s selected by PI (purple), AIC (red), BIC (blue), EBIC (yellow).}
\centering
\includegraphics[width=\textwidth]{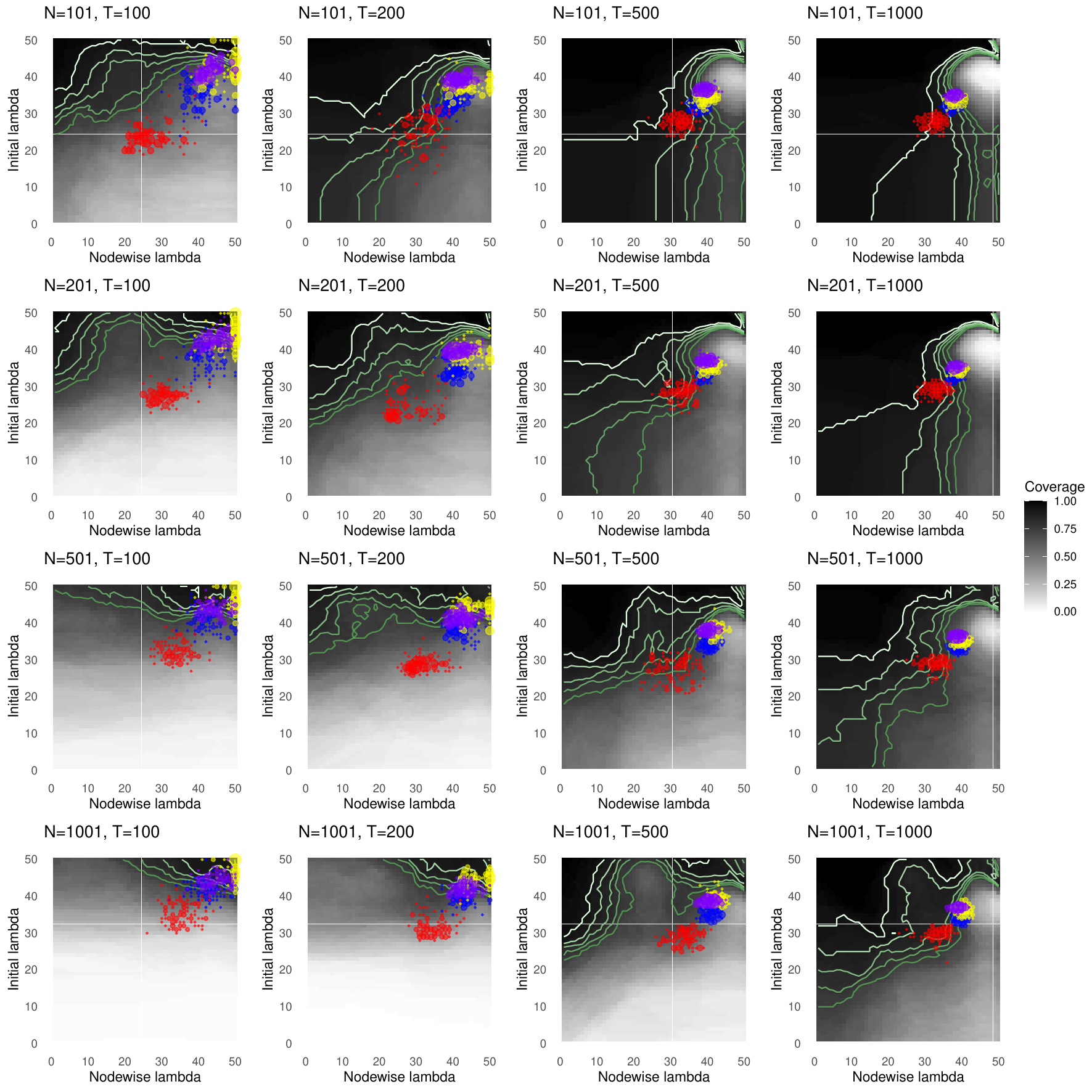}\label{fig:iidbeta1coverage}
\end{figure}

\begin{figure}[ht!]
\caption{Model B, $\rho$ heat map coverage: Contours mark the coverage thresholds at 5\% intervals, from 75\% to the nominal 95\%, from dark green to white respectively. Units on the axes are not proportional to the $\lambda$-value but rather its position in the grid. The value of $\lambda$ is $(10T)^{-1}$ at 0, and increases exponentially to a value that sets all parameters to zero at 50. Plots are based on 100 replications, with colored dots representing combinations of $\lambda$'s selected by PI (purple), AIC (red), BIC (blue), EBIC (yellow).}
\centering
\includegraphics[width=\textwidth]{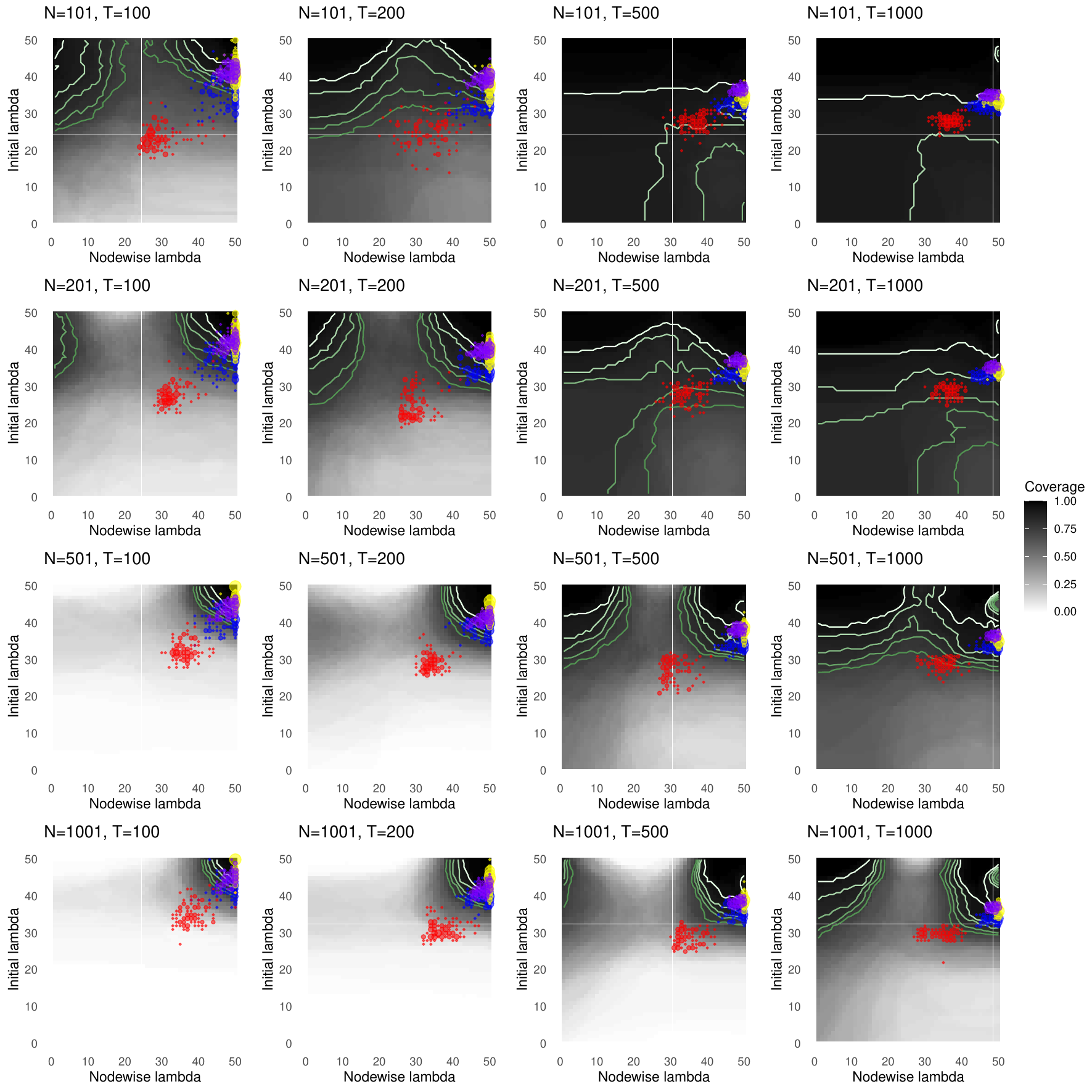}\label{fig:garchrhocoverage}
\end{figure}

\begin{figure}[ht!]
\caption{Model B, $\beta_1$ heat map coverage: Contours mark the coverage thresholds at 5\% intervals, from 75\% to the nominal 95\%, from dark green to white respectively. Units on the axes are not proportional to the $\lambda$-value but rather its position in the grid. The value of $\lambda$ is $(10T)^{-1}$ at 0, and increases exponentially to a value that sets all parameters to zero at 50. Plots are based on 100 replications, with colored dots representing combinations of $\lambda$'s selected by PI (purple), AIC (red), BIC (blue), EBIC (yellow).}
\centering
\includegraphics[width=\textwidth]{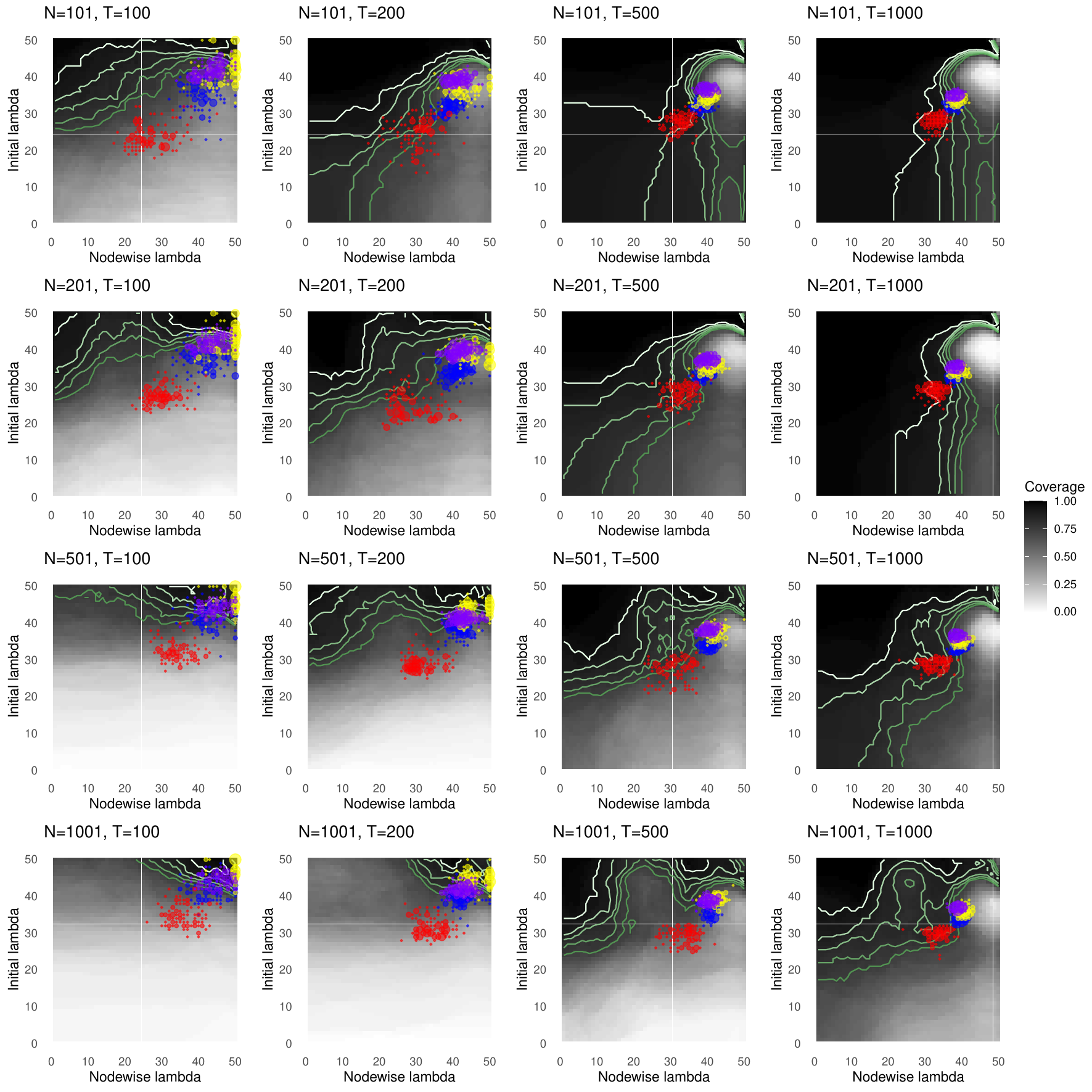}\label{fig:garchbeta1coverage}
\end{figure}

\begin{figure}[ht!]
\caption{Model C, $\rho$ heat map coverage: Contours mark the coverage thresholds at 5\% intervals, from 75\% to the nominal 95\%, from dark green to white respectively. Units on the axes are not proportional to the $\lambda$-value but rather its position in the grid. The value of $\lambda$ is $(10T)^{-1}$ at 0, and increases exponentially to a value that sets all parameters to zero at 50. Plots are based on 100 replications, with colored dots representing combinations of $\lambda$'s selected by PI (purple), AIC (red), BIC (blue), EBIC (yellow).}
\centering
\includegraphics[width=\textwidth]{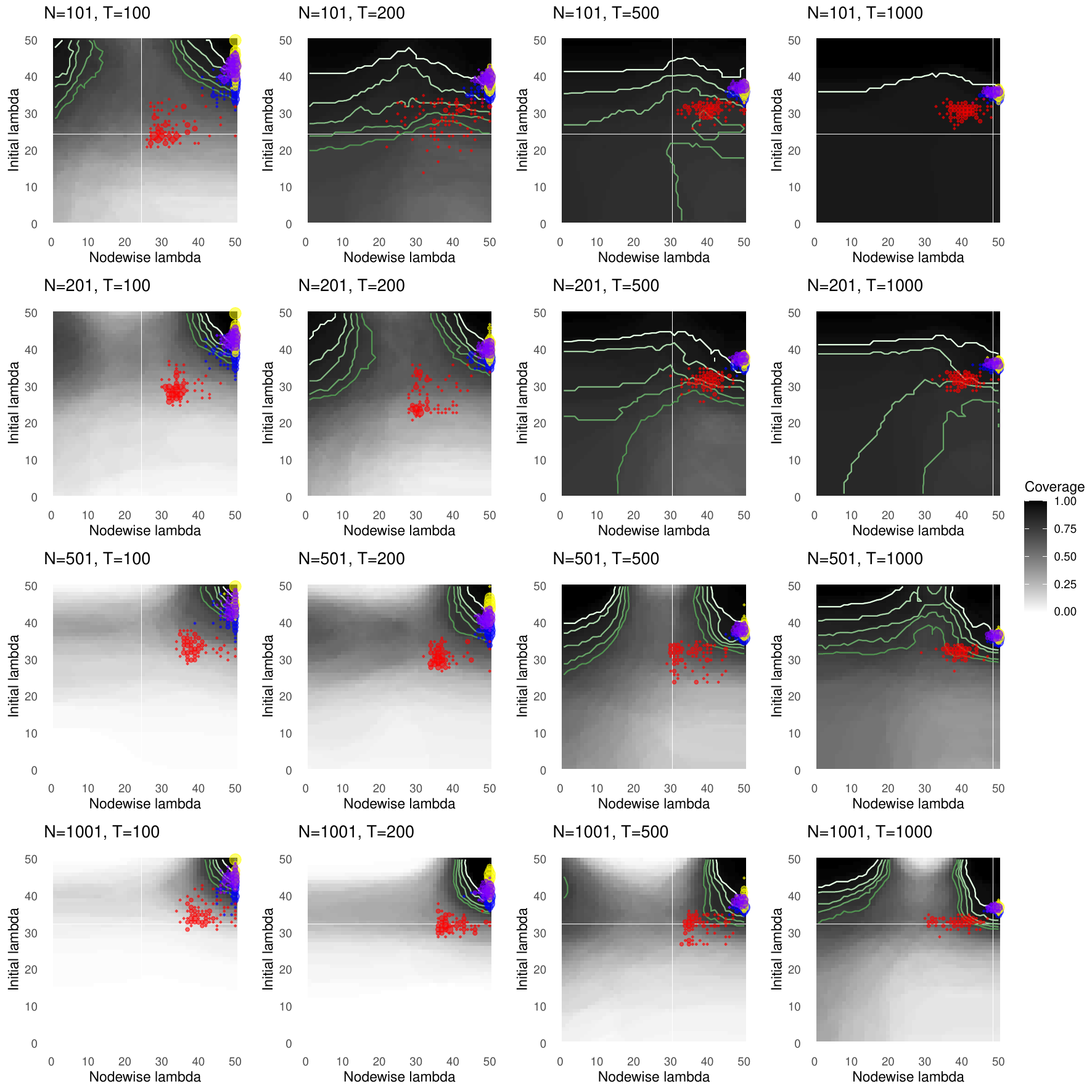}\label{fig:corrrhocoverage}
\end{figure}

\begin{figure}[ht!]
\caption{Model C, $\beta_1$ heat map coverage: Contours mark the coverage thresholds at 5\% intervals, from 75\% to the nominal 95\%, from dark green to white respectively. Units on the axes are not proportional to the $\lambda$-value but rather its position in the grid. The value of $\lambda$ is $(10T)^{-1}$ at 0, and increases exponentially to a value that sets all parameters to zero at 50. Plots are based on 100 replications, with colored dots representing combinations of $\lambda$'s selected by PI (purple), AIC (red), BIC (blue), EBIC (yellow).}
\centering
\includegraphics[width=\textwidth]{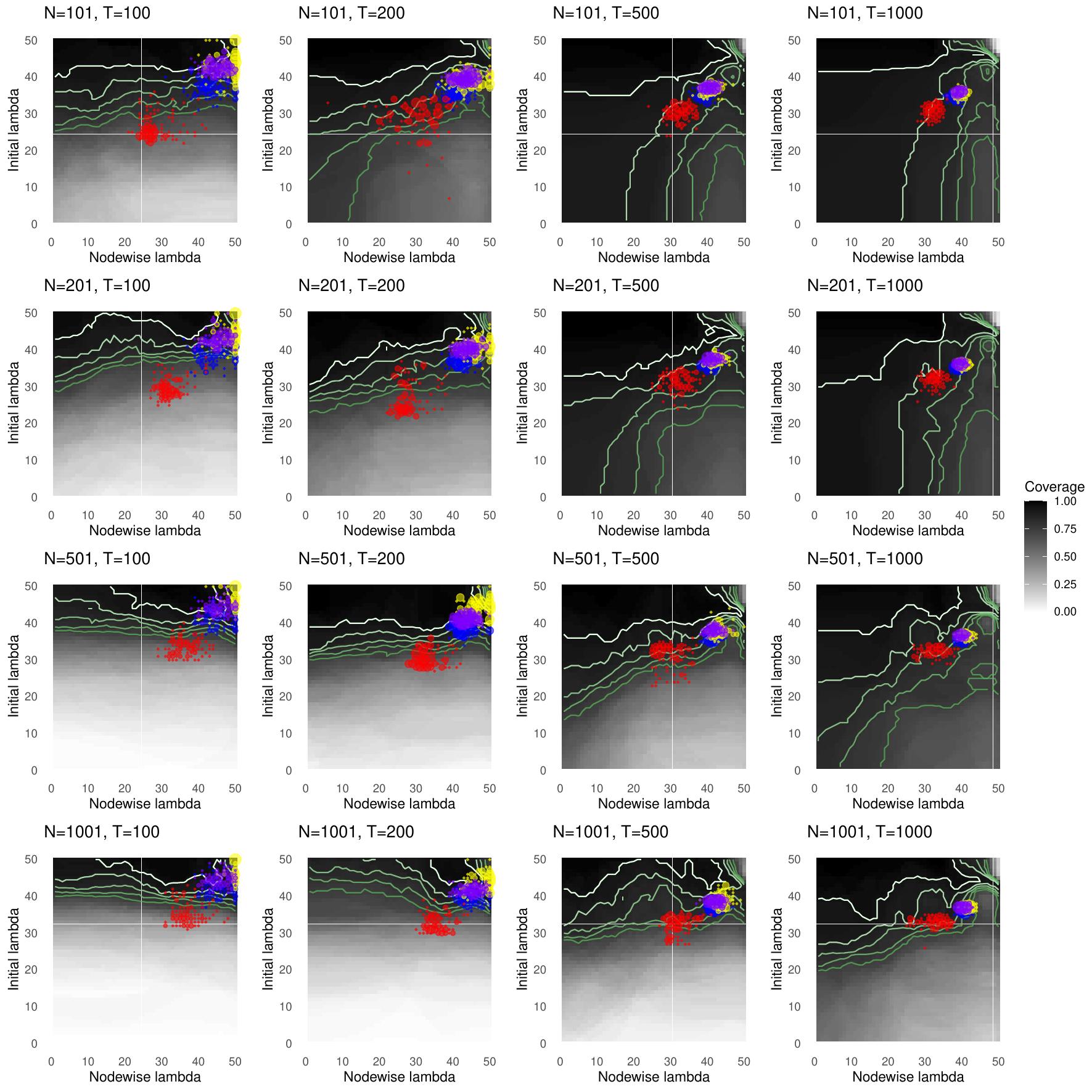}\label{fig:corrbeta1coverage}
\end{figure}

\end{appendices}

\end{document}